\newtheorem{assumption}{Assumption}[section]
\newcounter{assumptionnum}[section]
\newcounter{resultnum}[section]
\newtheorem{consequence}{Consequence}[section]
\newcounter{consequencenum}[section]
\newtheorem{conclusion}{Conclusion}[section]
\newcounter{conclusionnum}[section]
\newcounter{conditionnum}[section]
\newcounter{conjecturenum}[section]
\newtheorem{example}{Example}[section]
\newcounter{examplenum}[section]
\newcounter{exercisenum}[section]
\newtheorem{lemma}{Lemma}[section]
\newcounter{lemmanum}[section]
\newcounter{notationnum}[section]
\newtheorem{theorem}{Theorem}[section]
\newcounter{theoremnum}[section]
\newtheorem{definition}{Definition}[section]
\newcounter{definitionnum}[section]
\newtheorem{corollary}{Corollary}[section]
\newcounter{corollarynum}[section]
\newtheorem{remark}{Remark}[section]
\newcounter{remarknum}[section]
\newtheorem{proposition}{Proposition}[section]
\newcounter{propositionnum}[section]
\newcounter{acknowledgementnum}[section]
\newcounter{algorithmnum}[section]
\newcounter{axiomnum}[section]
\newcounter{casenum}[section]
\newcounter{claimnum}[section]
\newtheorem{conseq}{Consequence}[section]
\newcounter{conseqnum}[section]
\newtheorem{convention}{Convention}[section]
\newcounter{conventionnum}[section]
\newcounter{summarynum}[section]
\newtheorem{principle}{Principle}[section]
\newcounter{principlenum}[section]
\newcounter{problemnum}[section]
\newenvironment{proof}[1][]{\textbf{Proof.} }{}
\begin{document}

\title{On axiomatic formulation of gravity and matter field theories with
MDRs and Finsler-Lagrange-Hamilton geometry on (co)tangent Lorentz bundles}
\date{January 16, 2018}
\author{ \textbf{Sergiu I. Vacaru}\thanks{\textit{The address for post
correspondence:\ 67 Lloyd Street South, Manchester, M14 7LF, the UK;}
\newline
emails: sergiu.vacaru@gmail.com ; sergiuvacaru@mail.fresnostate.edu } \\
{\qquad } \\
{\small \textit{Physics Department, California State University at Fresno,
Fresno, CA 93740, USA}} \\
{\small and } {\small \textit{\ Project IDEI, University "Al. I. Cuza" Ia\c
si, Romania}}{\qquad } }
\maketitle

\begin{abstract}
We develop an axiomatic geometric approach and provide an unconventional review  of modified gravity theories, MGTs, with modified dispersion relations, MDRs, encoding Lorentz invariance violations, LIVs, classical and quantum random effects, anisotropies etc. There are studied Lorentz-Finsler like theories elaborated as extensions of general relativity, GR, and
 quantum gravity, QG, models and constructed on (co) tangent Lorentz bundles, i.e. (curved) phase spaces or locally anisotropic spacetimes. An indicator of MDRs is considered as a functional on various type functions depending on phase space coordinates and physical constants. It determines respective generating functions and fundamental physical objects (generalized metrics, connections and nonholonomic frame structures) for relativistic models of Finsler, Lagrange and/or Hamilton spaces. We show that there are canonical almost symplectic differential forms and adapted (non) linear connections which allow us to formulate equivalent almost K\"{a}hler-Lagrange / - Hamilton geometries. This way, it is possible to unify geometrically various classes of (non) commutative MGTs with locally anisotropic gravitational, scalar, non-Abelian gauge field, and Higgs interactions. We elaborate on theories with Lagrangian densities containing massive graviton terms and bi-connection and bi-metric modifications which can be modelled as Finsler-Lagrange-Hamilton geometries. An example of short-range locally anisotropic gravity on (co) tangent Lorentz bundles is analysed. We conclude that a large class of such MGTs admits a self-consistent causal axiomatic formulation which is similar to GR but involving generalized (non) linear connections, Finsler metrics and adapted frames on phase spaces. Such extensions of the standard model of particle physics and gravity offer a comprehensive guide to classical formulation of MGTs with MDRs, their quantization, applications in modern astrophysics and cosmology, and search for observable phenomena and experimental verifications. An appendix contains historical remarks on elaborating Finsler MGTs and a summary of author's results  in twenty directions of research on (non) commutative/ supersymmetric Finsler geometry and gravity; nonholonomic geometric flows, locally anisotropic superstrings and cosmology, etc. 

\vskip2pt

\textbf{Keywords:} Modified dispersion relations; Lorentz invariance violation; Finsler-Lagrange-Hamilton geometry; almost K\"{a}hler structures; generalized Einstein-Finsler gravity; modified Einstein-Yang-Mills-Higgs systems; massive and bi-metric gravity; short range gravity, history of geometry and physics. %
%
%
\end{abstract}

\tableofcontents

\bigskip


\section{Introduction}

Over the past two decades, the literature on classical and quantum gravity, QG, and accelerating cosmology has been increased substantially involving spacetime models with non-Riemannian geometries and modified gravity theories, MGTs. Various approaches to commutative and noncommutative theories, when Planck-scale features and deformed classical and quantum symmetries are described by modified dispersion relations, MDRs, and possible (local) Lorentz invariance violations, LIVs, have been developed in
\cite{v83,vstoch96,vjmp96,hooft96,vap97,vnp97,castro97,amelino98,vapny01,magueijo04,girelli06,castro07,kostelecky11,kostelecky12,
mavromatos11,mavromatos13b,vcqg11,vgrg12} and references therein. Such works have provided a series of important results on
QG and string phenomenology and physics of relativistic particles propagating and interacting in effective media with MDRs. There were elaborated models for doubly-special and/or deformed-special relativity; theories with LIVs, deformed curved phase spacetimes with (non) commutative and/or (non) associative variables etc. A subclass of MGTs was formulated as models of (generalized) Finsler geometry, see Refs. \cite{vplb10,vijmpd12,vijgmmp08,vcqg10,vmon98,vmon02,vmon06,laemmerzahl08,rosati15,basilakos13,
hohmann13,hohmann16,kouretsis10,pfeifer12} for reviews, critical remarks and applications.

The Einstein gravity theory (i.e. general relativity, GR) has to be modified into a theory on (co) tangent bundles to Lorentz manifolds if MDRs are stated, for instance, as small deformations of standard quadratic dispersion relations in special relativity theory, SRT, and (locally) in GR, by some indicator functions depending on velocity/ momentum variables. Such
constructions were elaborated for generalized spinor, gauge and string gravity models with Finsler--Lagrange-Hamilton configurations \cite{vjmp96,vap97,vnp97,vhsp98,svcqg13,svvijmpd14}. There were studied also models on higher order (co) tangent bundles with possible supersymmetric / noncommutative variables and almost symplectic structures. A series of results on MGTs with local anisotropy and Finsler like modifications were reviewed and presented in \cite{vmon98,vmon02,vmon06,vrev08,vplb10}. It was
shown that such constructions involve MDRs which can be derived for propagation of point mass classical particles and/or models with quantum variables and (non)commutative relations describing quantum fundamental field interactions
\cite{vplbnc01,vncs01,ingarden54,ingarden03,ingarden04,ingarden08,ingarden08a,vjmp05,castro07,vpla08,vjmp09,yang09,yang10}. Various classes of locally anisotropic (non) commutative/ associative classical and quantum field and spacetime theories can be described
naturally in terms of nonholonomic variables as some models of commutative and/or noncommutative geometries. For corresponding parameterizations and distributions of geometric/ physical objects, we can construct (effective) Lagrange--Finsler, (dual) Hamilton--Cartan, and/or other type phase space and spacetime models. We cite also some results on Hamilton geometry for
generalizations of standard physical theories and general relativity, GR, contained in Part I of monograph \cite{vmon02} and  article \cite{avjmp09}.\footnote{In our works, we use various terms like pseudo Lagrange / Finsler space/geometry and/or, equivalently, relativistic Lagrange geometry/ mechanics. There are also elaborated the concepts of "pseudo, i.e. relativistic" Finsler, Hamilton or other types non-Riemannian geometries. The term pseudo-Lagrange is used for a corresponding Hessian (Lagrange metric) like in pseudo-Euclidean (locally) and/or pseudo-Riemannian geometry.  Mathematicians in many cases write semi-Riemannian (instead of pseudo-Riemannian) when such a space is endowed, at least locally and/or effectively, with metric structures of pseudo-Euclidean signature, which in this work are labeled $(+,+,+,-)$, on Lorentz manifolds and $(+,+,+,-;+,+,+,-)$, on (co) tangent Lorentz bundle. This encodes a very important experimental fact that the physical interactions of fields and propagation of particles are described by a (maximal) constant speed of light and possible polarizations in certain effective media. (Pseudo) Finsler-Lagrange-Hamilton geometries (which will be defined and studied rigorously in next sections) are determined by nonlinear quadratic elements for which the concept of signature cannot be defined in a general form. Nevertheless, we assume and prove that for certain very general conditions and realistic physical models we can always introduce an associated (effective) quadratic form uniquely determined by a fundamental nonlinear quadratic form following certain physically motivated geometric principles. For such effective quadratic forms (metrics), we can consider the concept of signature, classify some small perturbations of physical fields and propagation of particles to be relativistic, or non-relativistic, ones in an effective media (aether, or vacuum spacetime model). In a number of recent work, the terms "pseudo" and "relativistic" are used for MGTs with local properties described by an associated quadratic form are like in Lorentz geometry. In a more general context, there are considered nonlinear and linear connection structures, nonholonomic constraints and generalized symmetries, which are different from those in (pseudo) Riemannian geometry.} Those works contain a number of original results on relativistic models on (co) tangent bundles and/or on nonholonomic manifolds, with higher order extensions and, in particular, for constructing new classes exact solutions in Einstein gravity by introducing nonholonomic 2+2 (co) fibred structures, see \cite{vexsol98,vjhep01,vijgmmp07,vijtp10,vijtp10a,
vijgmmp11,vjpcs13,vepjc14,vvyijgmmp14a,gvvepjc14}. Recently, a model of Hamilton geometry with MDRs was studied in \cite{barcaroli15}. 

Planck-scale modifications of dispersion relations are studied in phenomenological particle models, cosmology, and astrophysics. There are some indications that Finsler like generalizations allows us to account for the Planckian structures of relativistic classical and quantum field interactions by extending theories on conventional configuration space. Various ideas and
geometric and physical models were analyzed in \cite{amelino96,amelino98,amelino02,amelino02a,amelino14,castro08,castro08a,
castro14,girelli06,gvvcqg15,kostelecky11,kostelecky12,kouretsis10,kouretsis14,kouretsis14a,mavromatos10,mavromatos10a,mavromatos13b,
russell15,schreck15,schreck15a,schreck16,svcqg13, svvijmpd14,vjpcs11,vcqg11,vgrg12}. In such approaches, the standard-model extensions originate from the idea of spontaneous breaking of the Lorentz and CPT symmetries in the string theory and involved vacuum expectation values of tensor fields with spacetime indices, all resulting in certain models with MDRs and LIVs.

The goal of this work is to formulate a self-consistent geometric approach to MGTs constructed on phase spaces encoding MDRs and LIV when such theories extend GR on (co) tangent Lorentz bundles. We show that such models can be elaborated equivalently as Finsler-Lagrange-Hamilton gravity theories. Applying N-adapted variational and nonholonomic geometric methods, there are proved/ formulated generalized Einstein equations. There are derived also dynamical equations and (effective) matter sources for locally anisotropic scalar fields, non--Abelian gauge and Higgs fields and distortions of (non) linear connection structures. This article provides a general theoretical background for a series of future partner papers on exact solutions with Lagrange-Hamilton variables, quantization of such theories, and applications in modern cosmology and astrophysics. Such works will develop the anholonomic frame deformation method, AFDM, for constructing exact solutions in MGTs formulated in explicit Hamilton like variables. Various applications of AFDM and examples of Finsler like generic off-diagonal commutative and/or noncommutative, supersymmetric/ brane/string and other types of black ellipsoid/ hole, wormhole, solitonic and/or cosmological solutions have been elaborated and reviewed in Refs. \cite{vexsol98,vmon98,vjhep01,vpcqg01,vsbd01,vmon02,vtnpb02,vsjmp02,vijmpd03,vijmpd03a,vjmp05,
vmon06,avjgp09,vijtp10,vcqg10,vijtp10a,vijgmmp11,vcqg11,vepl11,vijtp13,vjpcs13, vport13,vepjc14,vepjc14a,vvyijgmmp14a,
gvvepjc14,gvvcqg15}. Alternative approaches on Finsler gravity theories and attempts to construct physical models and find approximate solutions can be studied in \cite{horvath50,takano68,takano74,tavakol09,tayebi11,tadmon12,chang08,chang12,
havaloyes13,laemmerzahl12,itin14,li14,li14a,lin14,li16,minguzzi15,minguzzi15a,minguzzi16,silagadze15,rahman15,rahman16,caponio16}. It is considered that readers are familiar with standard results on mathematical relativity, geometry of (non) linear connections in fiber bundles, spinor differential geometry, and exact classical solutions in GR (see, for instance, \cite{hawking73,kramer03,misner73,wald82}). On certain technical topics under consideration, we shall refer to specific monographs and reviews where details and ample biographic information can be found. This article is also oriented to open-minded scholars and young researchers on geometry and physics who are curious about application of advanced geometric methods for solving new extraordinary open problems in modern classical and quantum gravity, cosmology and astrophysics, and standard and nonstandard particle physics.

The paper is organized as follows:\ In section \ref{mgttmt}, we formulate a geometric approach to gravity theories with MDRs and LIVs modelled on relativistic Lagrange-Finsler and Hamilton-Cartan phase spaces. There are provided important motivations and main assumptions which are necessary for elaborating such geometric and physical models. There are defined canonical (non) linear connections, metrics and almost symplectic structures determined by MDRs encoding LIVs and possible contributions from QG, massive
gravity and bi-metric and bi-connection terms. Corresponding curvature, torsion and nonmetricity tensors (and related Ricci and Einstein tensors, scalar curvatures of various classes of Finsler like connections and their distortions) are constructed in N-adapted forms for (co) tangent Lorentz bundles.

Section \ref{saxiom} is devoted to a study of general principles for formulating MGTs with MDRs on (co)tangent Lorentz bundles. It is proven that such extensions of the Einstein gravity can be performed in canonical forms for Lagrange and Hamilton like variables. There are defined minimal actions and Lagrange densities for Einstein-Yang-Mills-Higgs systems with MDRs and analyzed possible contributions by massive gravitons and theories with bi-metric locally anisotropic structure. Actions and sources are considered for short-range gravity models with LIVs and anisotropic interactions.

Following geometric and N-adapted variational methods, we study (in section \ref{smodensteq}) possible generalizations of the Einstein equations for theories with MDRs on (co) tangent Lorentz bundles. There are derived gravitational and matter field equations for Lagrange-Hamilton spaces, EYMH systems, massive and bi-metric locally anisotropic MGTs and short-range phenomenological gravitational theories with LIVs resulting in locally anisotropic configurations. The constructions are provided both in coordinate-free and N-adapted coefficient forms. We speculate on axiomatic approaches to modeling and geometrizing MGTs on (co) tangent Lorentz bundles.

In section \ref{scd}, there are concluded the main results. We speculate on geometrization of MGTs with MDRs and the physical picture of Finsler-Lagrange-Hamilton gravity theories. In brief, we summarize what has been achieved with Finsler modifications of the GR and discuss what is missing. Open problems and further perspectives in study cosmological Einstein--Hamilton and locally anisotropic Einstein-Dirac configurations determined by MDRs, elaborating quantum gravity models on generalized (non) commutative and (non) associative phase spaces are analyzed.

Proofs of theorems are sketched in some forms being accessible both for researchers on mathematical physics and phenomenology of particle physics. We prefer the so-called abstract geometric method for proofs and emphasize the possibility of alternative N-adapted variational approaches. Certain technical details are given in Appendix \ref{aspt}. Throughout the main part
of this article and Appendix \ref{ashr}, there are also provided brief historical comments on "relativistic" Finsler like theories (with effective metrics of local Lorentzian signature) and generalizations of the Einstein gravity with MDRs on Lagrange-Hamilton spaces.\footnote{\label{fn01a} For reviews on relativistic developments and MGTs, we cite \cite{vmon98,vmon02,vmon06,avjmp09,vrev08,vijgmmp08,vplb10,vijmpd12} and references therein. Here we note that for local Euclidean signatures, the main ideas on Lagrange-Hamilton geometries formulated as generalized Finsler spaces on (co) tangent bundles were proposed and studied in a series of works due to J. Kern, M. Matsumoto, S. Ikeda, see
\cite{kern74,matsumoto66,matsumoto86,ikeda76,ikeda77,ikeda78,ikeda79,ikeda81,ishikawa80,ishikawa81}. There were elaborated  on supersymmetric and higher order generalizations, almost K\"{a}hler models and applications in particle physics and gravity. Later, similar constructions and some applications in non-relativistic geometric mechanics classical field models were developed in a series of monographs \cite{miron94,miron00,miron03}, see critical remarks in Appendix \ref{ssectromania}. Those books were based on some hundreds of papers containing as the main author the name Radu Miron (a member of the Romanian Academy of Sciences), who falsified and omitted to stipulate a series of original and important former contributions of a number of researchers from Japan, former USSR, Hungary, Germany, USA, China etc. and his former co-authors like G. Atanasiu, A. Bejancu, V. Oproiu, and tenths others. During last 60 years, more than 250 articles were published (the first papers contained co-authors who "disappeared" in furhter developments) in "not-accessible" Romanian journals and local preprints.  Later those results were re-published by R. Miron (as an individual or first author, and with various modifications) in more than 30 monographs. This can be checked, for instance, tracking the names in MathSciNet. Contributions of Western authors and even from the former USSR were censored in Romania. A series of monographs and collections of works were published by Kluwer, Hadronic Press, Romanian Academy etc. under the names R. Miron and a few other politically and ideologically selected (by communists and Ceau\c{s}escu's secret service) co-authors who sworn as doctor-docents an "absolute devote" to Romanian dictator's wife, Elena Ceau\c{s}escu. Those works were on developments and mechanical applications of (higher order) Lagrange-Finsler-Hamilton-Cartan geometries. No matter where those articles and books were published (in Romania, Japan, or in some Western Countries), the content was  elaborated  in a style of "higher order hidden plagiarism", with falsification of results and politically / ideologically screened teams of authors. Unfortunately, none ideological and communist political lustration of former communist and secrete service leaders, and administrators in science, was in Romania (like it was, for instance, in Eastern Germany or other former "socialist" Countries). There is not a complete access even to files on the history of physics and mathematics when "ideology and secret service" were deeply involved.} Such a survey of author's research (together with his former students and co-authors) performed beginning 1982 in (former) USSR, the R. Moldova, and (later) in Romania and Western Countries, and supported by a number of NATO, DAAD, UNESCO and CERN fellowships and grants, is important because it provides evidences that those results anticipated a series of recent publications on modified Finsler gravity theories and applications in modern cosmology.

\section{Phase Spaces with MDRs and Finsler--Lagrange--Hamilton Geometry}

\label{mgttmt} This section contains an introduction to the geometry of nonholonomic (co) tangent Lorentz bundles which is used for elaborating relativistic Finsler-Lagrange and Hamilton-Cartan gravity theories. There are developed a series of concepts and constructions considered in Part I of monograph \cite{vmon02}, on relativistic and modified Hamilton spaces, and  for MGTs defined by MDRs and LIVs generalizations of GR as generalized Finsler theories. For reviews and critical remarks on Finsler-Lagrange-Hamilton geometry and gravity theories with nonholonomic Lorentzian manifolds and tangent bundles, and applications in modern cosmology and astrophysics, we cite \cite{vplb10,vijmpd12,vijgmmp08,avjmp09,vcqg10,vmon98,vmon02,vmon06} and references therein.

\subsection{Motivations and assumptions}

\label{ssassumpts}

In semi-classical commutative and/or noncommutative models of MGTs and/or QG theories, a  MDR can be written locally in a general form
\begin{equation}
c^{2}\overrightarrow{\mathbf{p}}^{2}-E^{2}+c^{4}m^{2}=\varpi (E,%
\overrightarrow{\mathbf{p}},m;\ell _{P}).  \label{mdrg}
\end{equation}%
An \textbf{indicator} of deformations/ modifications $\varpi (...)$ encodes in a functional form possible contributions of modified geometric and physical theories and LIVs. Such MDRs can be extended to dependencies on spacetime coordinates $x^{i}=(x^{1},x^{2},x^{3},x^{4}=ct)$ (one can be introduced extra dimensions) on a standard Lorentz manifold $V$ and/or
generalizations with metric-affine structures. In explicit form, certain classes of
$\varpi (x^{i},E,\overrightarrow{\mathbf{p}},m;\ell _{P})$ are chosen and studied following theoretical/ phenomenological arguments, determined experimentally, and/or computed in a generalized classical/quantum theory of gravity and matter filed interactions. The formula (\ref{mdrg}) transforms into a standard quadratic dispersion relation for a relativistic point particle with mass $m, $ energy $E,$ and momentum $p_{\acute{\imath}}$ (for $\acute{\imath}=1,2,3 $) propagating in a four dimensional, 4-d, flat Minkowski spacetime of signature $(+,+,+,-)$ if $\varpi =0.$ It is supposed that certain modifications of the special relativity theory, SRT, and GR could be consequences of some (deformed) modified symmetries, for instance, with (non) commutative deformed Poincar\'{e} transforms, quantum groups and interactions, Lorentz invariance violation etc. Such values involve redefinitions of the physical energy-momentum
$p_{a}=(p_{\acute{\imath}},p_{4}=E),\overrightarrow{\mathbf{p}}=\{p_{\acute{\imath}}\},$ (for $a=1,2,3,4$), at the Planck scale $\ell_{p}:=\sqrt{\hbar G/c^{3}}\sim 10^{-33}cm.$ In this work, the light velocity is fixed $c=1$ for a respective system of physical units. The type of modification $\varpi $ is considered differently in various approaches to QG and (non) commutative MGTs, supergravity and (super) string models etc. Modifications of quantum mechanics and certain QG theories have been studied
also for models when $\ell _{p}$ is replaced by $\ell {s}$ (a string length which is found in the analysis of high-energy string scattering), for soliton-like structures known as Dirichlet p-brane and "D-particles" (Dirichlet 0-branes) could probe the structure of spacetime down to scales higher than $\ell _{s}$. Here we note that locally anistropic MGTs with MDRs are studied also as candidates for explaining acceleration cosmology and dark energy, DE, and dark matter, DM, physics, see
\cite{vijgmmp08,vijtp10a,vjpcs11,kouretsis10,vijmpd12,basilakos13,vepjc14a,svcqg13,stavr99,stavr04,stavr05}
and references therein.

\begin{example}
\label{exmdrenergy}\textsf{[MDRs in QG and cosmology] } In QG and various cosmological scenarios for which the Hamiltonian equations of motion $\dot{x}^{\acute{\imath}}=\partial H/\partial p^{\acute{\imath}}$ (we can consider similarly some Lagrange equations with possible nonholonomic constraints) are still valid at least approximately, quantum effects are modeled by a
deformed dispersion relation for photons, $c^{2}\mathbf{p}^{2}=E^{2}[1+f(E/\ ^{qg}E)],$ where $\ ^{qg}E$ is an effective quantum gravity scale and the function $f$ is model--dependent. For $E\ll \ ^{qg}E,$ one expects a series extension of MDRs, $c^{2}\mathbf{p}^{2}=E^{2}[1+\xi E/\ ^{qg}E+ O [E/\ ^{qg}E]^{2})],$ where $\xi =\pm 1$ should be fixed in a dynamical
framework for a corresponding theory. Such extensions correspond to energy-dependent velocities
$v=\frac{\partial E}{\partial p}\sim c(1-\xi E/\ ^{qg}E).$ These formulas are analogous to those for a conventional
relativistic medium, such as a gravy-electromagnetic plasma, modeling a gravitational aether which is believed to contain microscopic quantum fluctuations which may occur for Planck values $\ell _{p},t_{p}\sim 1/E_{p},$ where $E_{p}\sim 10^{19}Gev.$
\end{example}
The vacuum in above example is viewed as a non-trivial medium containing "foamy" quantum-gravity fluctuations which may include pair creations of virtual black holes, wormholes etc. One considers that $\xi =1$ and
$c^{2}p^{2}= (\ ^{qg}E)^{2}[1-e^{E/\ ^{qg}E}]^{2}$ for theories with $\kappa $-deformation of Poincar\'{e} symmetries. MDRs can be also found in models with quantization of point particles in a discrete spacetime \cite{hooft96}. The modifications emerging at the level of the $\kappa $-Poincar\'{e} (Hopf) algebra \cite{lukierski95}, and other types noncommutative black holes
\cite{vjmp05,vcqg10}, require physical interpretations on generalized Lorentz manifolds.

Our approach to geometrizing classical and quantum theories on (co) tangent Lorentz bundles (in this work, there are considered 3-d and, in the bulk, 4-d pseudo-Riemannian base manifolds) will be elaborated following three assumptions:

\begin{assumption}
\label{assumptqelorentz} \textsf{[Background quadratic elements on total spaces of (co) tangent bundles] } The are standard gravity and particle physics theories based on the special relativity and Einstein gravity principles and axioms. In such theories, the spacetime geometry is described by a four dimensional, 4-d, Lorentz spacetime manifold $V$ and respective (co) tangent bundles, $TV$ and/or $T^{\ast }V$, enabled with corresponding quadratic elements determined by total phase space metrics with signature $(+++-;+++-)$,
\begin{eqnarray}
ds^{2} &=&g_{\alpha \beta }(x^{k})du^{\alpha }du^{\beta
}=g_{ij}(x^{k})dx^{i}dx^{j}+\eta _{ab}dy^{a}dy^{b}, \mbox{ for }y^{a}\sim
dx^{a}/d\tau ;\mbox{ and/ or }  \label{lqe} \\
d\ ^{\shortmid }s^{2} &=&\ ^{\shortmid }g_{\alpha \beta }(x^{k})d\
^{\shortmid }u^{\alpha }d\ ^{\shortmid }u^{\beta
}=g_{ij}(x^{k})dx^{i}dx^{j}+\eta ^{ab}dp_{a}dp_{b}, \mbox{ for }p_{a}\sim
dx_{a}/d\tau .  \label{lqed}
\end{eqnarray}
\end{assumption}

In formulas (\ref{lqe}) and (\ref{lqed}), the local frame and dual frame (co-frame) coordinates are labeled in the forms: $u^{\alpha }=(x^{i},y^{a}),$ (or in brief, $u=(x,y)),$ on the tangent bundle $TV;$ and $\ ^{\shortmid }u^{\alpha }=(x^{i},p_{a}),$
(or in brief, $\ ^{\shortmid }u=(x,p)),$ on the cotangent bundle $T^{\ast}V. $ The pseudo--Riemannian spacetime metric $g=\{g_{ij}(x)\}$ can be a solution of the Einstein equations for the Levi-Civita connection $\nabla$. In diagonal form, the vertical metric $\eta _{ab}$ and its dual $\eta ^{ab}$ are standard Minkowski metrics, $\eta _{ab}=diag[1,1,1,-1]$ used for
computations in typical fibers for certain bundle geometric and physical models. The geometric and physical models are elaborated for general frame/ coordinate transforms in total spaces when the metric structures can be parameterized equivalently by the same h-components of $g_{\alpha \beta}(x^{k})$ and $\ ^{\shortmid }g_{\alpha \beta }(x^{k})=g_{\alpha \beta}(x^{k})$ in quadratic elements (\ref{lqe}) and (\ref{lqed}). Curves $x^{a}(\tau )$ on $V$ are parameterized by a positive parameter $\tau .$%
\footnote{In this work, there are such conventions for indices:\ the "horizontal" indices, h--indices, run values $i,j,k,...=1,2,3,4;$ the vertical indices, v-vertical, run values $a,b,c...=5,6,7,8$;  respectively, the v-indices can be identified/ contracted  with h-indices $1,2,3,4$ for lifts on total (co) tangent Lorentz bundles, when $\alpha =(i,a),\beta =(j,b),\gamma =(k,c),...=1,2,3,...8.$ We shall use letters labelled by an abstract left up/low symbol "$\ ^{\shortmid }$" (for
instance, $\ ^{\shortmid }u^{\alpha }$ and $\ ^{\shortmid }g_{\alpha \beta }) $ in order to emphasize that certain geometric/ physical objects are defined on $T^{\ast }V.$}

\begin{assumption}
\label{assumptnonlinq} \textsf{[nonlinear quadratic elements for modeling Finsler-Lagrange-Hamilton geometries on (co) tangent bundles ] } MGTs and quasi-classical limits of QG are characterized by MDRs (\ref{mdrg}) with possible small values of indicator $\varpi $ are described by basic Lorentzian and non-Riemannian total phase space geometries determined by nonlinear quadratic line elements
\begin{eqnarray}
ds_{L}^{2} &=&L(x,y),\mbox{ for models on  }TV;  \label{nqe} \\
d\ ^{\shortmid }s_{H}^{2} &=&H(x,p),\mbox{ for models on  }T^{\ast }V.
\label{nqed}
\end{eqnarray}
\end{assumption}
This assumption involves geometric/physical theories with an effective phase spacetime modelled by generalized frame, metric and linear and nonlinear connection structures defined on (co) tangent Lorentz bundles. For localized zero indicators in (\ref{mdrg}), $\varpi =0,$ the nonlinear quadratic line elements (\ref{nqe}) and (\ref{nqed}) transform correspondingly into linear quadratic elements (\ref{lqe}) and (\ref{lqed}).

\vskip5pt In modern literature on geometric mechanics, kinetics and statistical mechanics of locally anisotropic processes, classical MGTs, and QG, (see \cite{vjmp96,vstoch96,vap97,vnp97,vapny01,vsym13,vjmp13,vijgmmp14}), there were studied a series of important examples involving such concepts:

\begin{example}
\textsf{[relativistic models of Hamilton geometry and phase spaces ] } A 4-d relativistic model of Hamilton space
$H^{3,1}=(T^{\ast }V,H(x,p))$ is determined by a fundamental function (equivalently, generating Hamilton function) on a Lorentz manifold $V$, constructed as $T^{\ast }V\ni(x,p)\rightarrow H(x,p)\in \mathbb{R}$, which defines a real valued function being differentiable on $\widetilde{T^{\ast }V}:=T^{\ast }V/\{0^{\ast }\},$ for $\{0^{\ast }\}$ being the null section of $T^{\ast }V,$ and continuous on the null section of $\pi ^{\ast }:\ T^{\ast }V\rightarrow V.$ Such a model is regular if the Hessian (cv-metric)
\begin{equation}
\ \ ^{\shortmid }\widetilde{g}^{ab}(x,p):=\frac{1}{2}\frac{\partial ^{2}H}{\partial p_{a}\partial p_{b}}  \label{hesshs}
\end{equation}%
is non-degenerate, i.e. $\det |\ ^{\shortmid }\widetilde{g}^{ab}|\neq 0,$ and of constant signature.
\end{example}

For elaborating physical and mechanical models, one follows different geometric and physical principles in order to define a (non) relativistic Hamiltonian, i.e generating function, $H(x,p).$ Such a function may describe propagation of test particles, or perturbations of scalar fields in an effective phase space, various noncommutative generalizations, quantum fluctuations etc.

\begin{remark}
\textsf{[from MDR-indicators to Hamilton spaces ] } For any MDR of type (\ref{mdrg}), we can construct a Hamilton space $H^{3,1}$ if the effective Hamilton function is defined
\begin{equation}
H(p):=E=\pm (c^{2}\overrightarrow{\mathbf{p}}^{2}+c^{4}m^{2}-\varpi (E,\overrightarrow{\mathbf{p}},m;\ell _{P}))^{1/2}.  \label{hamfp}
\end{equation}%
This describes a Hamilton like geometry of relativistic point particles propagating in a cv--space with MDRs. In general, such a propagation is in an effective phase space endowed with local coordinates $(x^{i},p_{a})$ and generalized indicator
$\varpi (x^{i},E,\overrightarrow{\mathbf{p}},m;\ell_{P})$ (\ref{mdrg}).
\end{remark}

Changing the system of frames/coordinates on total space, we obtain generating functions of type $H(x,p)$. We can use for geometric modeling certain general (for simplicity, regular) generating functions $H(x,p)$ on $T^{\ast }V$.

\begin{example}
\textsf{[Relativistic Lagrange spaces ] } A relativistic 4-d model of Lagrange space $L^{3,1}=(TV,L(x,y))$ is defined by a fundamental function (equivalently, generating function) $TV\ni (x,y)\rightarrow L(x,y)\in \mathbb{R},$ i.e. a real valued function which is differentiable on $\widetilde{TV}:=TV/\{0\},$ for $\{0\}$ being the null section of $TV,$ and continuous on the null section of $\pi :TV\rightarrow V. $ Such a model is regular if the Hessian (v-metric)
\begin{equation}
\widetilde{g}_{ab}(x,y):=\frac{1}{2}\frac{\partial ^{2}L}{\partial y^{a}\partial y^{b}}  \label{hessls}
\end{equation}%
is non-degenerate, i.e. $\det |\widetilde{g}_{ab}|\neq 0,$ and of constant signature.
\end{example}

The values $\widetilde{g}_{ab}$ and $\ ^{\shortmid }\widetilde{g}^{ab}$ are labeled by tilde "\symbol{126}" in order to emphasize that such conventional v--metrics are defined canonically by respective Lagrange and Hamilton generating functions, which may encode various types of MDRs and LVs terms. Considering general frame/ coordinate transforms on $TV $ and/or $T^{\ast}V, $ we can express any "tilde" Hessian in a general quadratic form, respectively as a vertical metric (v-metric), $g_{ab}(x,y),$ and/or
co-vertical metric (cv-metric), $\ ^{\shortmid }g^{ab}(x,p).$ Inversely, prescribing any v-metric (cv-metric), we can introduce respective (co) frame /coordinate systems, when such values can transformed into certain canonical ones, with "tilde" values. In general, a $g_{ab}$ is different from the inverse of $\ ^{\shortmid }g^{ab},$ i.e. from $\ ^{\shortmid }g_{ab}$. We shall omit tildes on geometrical/ physical objects if certain formulas hold in general (not only canonical) forms and/or that will not result in ambiguities.

\begin{remark}
\textsf{[Legendre transforms and $\mathcal{L}$-duality of Lagrange and Hamilton spaces ] } There are Legendre transforms $L\rightarrow H,$ with $H(x,p):=p_{a}y^{a}-L(x,y)$ and $y^{a}$ determining solutions of the equations
$p_{a}=\partial L(x,y)/\partial y^{a}.$ In a similar manner, the inverse Legendre transforms can be introduced, $H\rightarrow L,$ for
\begin{equation}
L(x,y):=p_{a}y^{a}-H(x,p)  \label{invlegendre}
\end{equation}%
and $p_{a}$ determining solutions of the equations $y^{a}=\partial H(x,p)/\partial p_{a}.$
\end{remark}

For further details on relativistic Lagrange and Hamilton spaces, with reviews of non-relativistic original results, readers may consult \cite{vmon02,vjmp07,vjgp10,avjmp09}, where theories with degenerate Hessians are also considered (we omit such constructions in this work). Here we cite a very important paper \cite{kern74} where an original geometrization of Lagrange mechanics was proposed. The main idea was to drop the homogeneity condition for Finsler generating functions and apply in
studies of properties of mechanical systems certain methods of Finselr and almost K\"{a}hler geometry  \cite{matsumoto66,matsumoto86}. The geometry of Hamilton spaces consists a natural dualization on $T^{\ast }V$ of the
Lagrange geometry on $TV.$ Nevertheless, such constructions are not completely dual because of Legendre transforms and "symplectomophisms", see details in the mentioned references. In general, the Hamilton mechanics is not equivalent to the Lagrange mechanics and different almost symplectic realizations of such theories can be elaborated. Using Lagrange or Hamilton geometries, we can model different types of MGTs on (co) tangent bundles. Only for certain well-defined conditions such theories and respective
classes of solutions of motion/ evolution equations are equivalent.

\begin{example}
\label{rfgenf}\textsf{[Finsler-Cartan geometries as particular cases of Lagrange-Hamilton spaces] } A relativistic 4-d model of Finsler space is an example of Lagrange space when a regular $L=F^{2}$ is defined by a fundamental (generating) Finsler function subjected to the conditions: 1) $F$ is a real positive valued function which is differential on $\widetilde{TV}$ and continuous on the null section of the projection $\pi :TV\rightarrow V;$ 2) it is satisfied the homogeneity condition $F(x,\lambda y)=|\lambda |$ $F(x,y),$ for a nonzero real value $\lambda ;$ and 3) the Hessian (\ref{hessls}) is defined by $F^{2}$ in such a form that in any point $(x_{(0)},y_{(0)})$ the v-metric is of signature $(+++-).$ In a similar form, there are defined relativistic 4-d Cartan spaces $C^{3,1}=(V,C(x,p)),$ when $H=C^{2}(x,p)$ is 1-homogeneous on co-fiber coordinates $p_{a}.$
\end{example}

For simplicity, the bulk of geometric constructions in this work will be performed for (effective and/or generalized) Lagrange and Hamilton spaces considering that via corresponding frame and Legendre transforms, or homogeneity conditions, we can generate necessary type Lagrange/ Finsler/ Cartan configurations. Nevertheless, a series of important formulas and proofs will be presented both on tangent and cotangent bundles in order to emphasize the generality of our geometric methods which can be applied to a
large class of MGRs with different types of geometrization of classical and quantum physical theories. Another argument to dub the formulas will be that in many cases the (non) linear connections and/or almost symplectic structures are constructed differently on tangent and cotangent bundles. This results in different geometric and physical models (the (non) linear symmetries and related conservation laws are also different) even being written in abstract geometric forms the formulas are very similar. For certain well-defined geometric/ physical conditions, it is possible to establish certain equivalence and/or duality of constructions but this is an issue of additional assumptions and a more rigorous analysis for geometric structures and fundamental geometric/physical equations.

\begin{definition}
\textsf{[Nonlinear connections and nonholonomic h-v and/or h-cv splitting ] } \label{defnc}A nonlinear connection, N--connection, structure for $TV,$ or $T^{\ast }V,$ is defined as a Whitney sum of conventional $h$ and $v$--distributions, or $h$ and $cv$--distributions,
\begin{equation}
\mathbf{N}:TTV=hTV\oplus vTV\mbox{ or }\ \ ^{\shortmid }\mathbf{N}:TT^{\ast }V=hT^{\ast }V\oplus vT^{\ast }V.  \label{ncon}
\end{equation}
\end{definition}

There were formulated different equivalent definitions of N-connections on (co) tangent bundles and fibred manifolds, studied in \cite{vmon02,vjmp07,vjgp10,avjmp09}.%
\footnote{\label{fnanhc} The concept of N--connection is equivalent to that of Ehresmann connection in differential geometry \cite{ehresmann55}. In coordinate form, N-connections are used in the first monograph on Finsler geometry \cite{cartan35}. The N--connection and Finsler geometries were studied originally in \cite{kawaguchi37,kawaguchi52}. The formalism of N-connections was introduced and developed in the Einstein / string / brane / gauge theories \cite{vap97,vnp97,vexsol98,vjhep01,dvgrg03,vdgrg03,vjmp05,vijgmmp07,vijgmmp08}. On 4-d pseudo--Riemannian manifolds, a N--connection can be defined as a conventional nonholonomic 2+2 splitting by considering certain local (N--adapted) bases $\mathbf{e}_{\mu }=(\mathbf{e}_{i},e_{a})$ and their duals $\mathbf{e}^{\nu }=(e^{j},\mathbf{e}^{b}),$
\begin{equation*}
\mathbf{e}_{i}=\frac{\partial }{\partial x^{i}}-N_{i}^{c}\frac{\partial }{\partial y^{c}},e_{a}=\partial _{a}=\frac{\partial }{\partial y^{a}}\mbox{and }e^{j}=dx^{j},\mathbf{e}^{b}=dy^{b}+N_{k}^{b}dx^{k}.
\end{equation*}%
In this footnote, indices run values $i,j,k...=1,2$; $a,b,c,...=3,4$ and $\alpha ,\beta ,...=1,2,3,4$ for $u^{4}=t$ being a time like coordinate. Such frames are called nonholonomic satisfying, in general, the relations $[\mathbf{e}_{\alpha },\mathbf{e}_{\beta }]=\mathbf{e}_{\alpha }\mathbf{e}_{\beta }-\mathbf{e}_{\beta }\mathbf{e}_{\alpha }=W_{\alpha \beta }^{\gamma} \mathbf{e}_{\beta }$. If the anholonomy coefficients $W_{ia}^{b}=\partial _{a}N_{i}^{b},W_{ji}^{b}=\Omega {ij}^{b}=\mathbf{e}_{j}(N_{i}^{b})-\mathbf{e}_{i}(N_{j}^{b})$ are zero, we get holonomic bases which allows to consider some
coordinate transforms when $\mathbf{e}_{\alpha }\rightarrow \mathbf{\partial }%
_{\alpha }$ and $\mathbf{e}^{\beta }\rightarrow du^{\beta }.$ On 8-d (co) tangent Lorentz bundles, the N-connections and respective N-adapted frames are defined in the forms (\ref{ncon}) and (\ref{nadapb}) and (\ref{cnadap}). With respect to N--adapted bases, one say that a vector, a tensor and other geometric objects are represented by N-adapted coefficients and called correspondingly as a distinguished vector (d--vector), a distinguished tensor (d--tensor) and distinguished objects (d--object).
\par
The geometry of N--connections is related to the geometry of nonholonomic manifolds in the sense considered by G. Vr\v{a}nceanu \cite{vranc31,vranc31}, see a review of former results in \cite{bejancu03}. In brief, a nonholonomic manifold is a usual one endowed with a nonholonomic distribution, for instance, defined by a nonholonomic frame structure. N-connection and nonholonomic geometric methods were used for elaborating the AFDM. Coordinate free and global approaches to Finsler geometry and various relativistic/ string / brane / gauge generalizations were developed in \cite{vmon98,vmon02,vmon06,youssef09}.} In local form (parameterizing the corresponding N-connections by coefficients $\mathbf{N}=\{N_{i}^{a}\}$ and
$\ ^{\shortmid }\mathbf{N}=\{\ ^{\shortmid }N_{ia}\}$ with respect to coordinate (dual) bases), one prove by explicit constructions:

\begin{lemma}
\textsf{[N-adapted (co) frames ] } \label{lemadap}A N--connection (\ref{ncon}) defines respective systems of N--adapted bases%
\begin{eqnarray}
\mathbf{e}_{\alpha } &=&(\mathbf{e}_{i}=\frac{\partial }{\partial x^{i}}-N_{i}^{a}(x,y)\frac{\partial }{\partial y^{a}},e_{b}=\frac{\partial }{\partial y^{b}}),  \label{nadapb} \\
\mathbf{e}^{\alpha } &=&(e^{i}=dx^{i},\mathbf{e}^{a}=dy^{a}+N_{i}^{a}(x,y)dx^{i}),  \notag
\end{eqnarray}%
\begin{eqnarray}
\mbox{ and/ or }\ ^{\shortmid }\mathbf{e}_{\alpha } &=&(\ ^{\shortmid }\mathbf{e}_{i}=\frac{\partial }{\partial x^{i}}-\ ^{\shortmid }N_{ia}(x,p)\frac{\partial }{\partial p_{a}},\ ^{\shortmid }e^{b}=\frac{\partial }{\partial p_{b}}),  \label{cnadap} \\
\ \ ^{\shortmid }\mathbf{e}^{\alpha } &=&(\ ^{\shortmid }e^{i}=dx^{i},\ ^{\shortmid }\mathbf{e}_{a}=dp_{a}+\ ^{\shortmid }N_{ia}(x,p)dx^{i}).  \notag
\end{eqnarray}
\end{lemma}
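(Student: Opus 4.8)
The plan is to read off the N-adapted (co)frames (\ref{nadapb}) and (\ref{cnadap}) directly from the Whitney-sum splitting of Definition~\ref{defnc}, exploiting the fact that on a (co)tangent bundle the vertical distribution is canonical while the horizontal one carries all the data of the N-connection, and then to obtain the dual (co)frames from the biorthogonality conditions. The argument is a local \emph{proof by explicit construction}, as the statement of the Lemma anticipates.

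First I would work on $TV$ with fibered coordinates $u^{\alpha}=(x^{i},y^{a})$. The vertical subbundle $vTV=\ker d\pi$ (for $\pi:TV\rightarrow V$) is spanned, independently of any connection, by the coordinate vectors $e_{a}=\partial /\partial y^{a}$. The splitting $\mathbf{N}:TTV=hTV\oplus vTV$ furnishes a complement $hTV$, and since $d\pi$ restricts to a fibrewise isomorphism $hTV\rightarrow TV$, each base field $\partial /\partial x^{i}$ admits a unique horizontal lift, necessarily of the form $\mathbf{e}_{i}=\partial /\partial x^{i}-N_{i}^{a}(x,y)\,\partial /\partial y^{a}$ for functions $N_{i}^{a}$ determined by the complement. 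These are, by definition, the local N-connection coefficients, so $\{\mathbf{e}_{i},e_{a}\}$ is the adapted frame of (\ref{nadapb}). To obtain the dual coframe I would impose $\langle \mathbf{e}^{\alpha },\mathbf{e}_{\beta }\rangle =\delta _{\beta }^{\alpha }$: the choice $e^{i}=dx^{i}$ gives $e^{i}(\mathbf{e}_{j})=\delta _{j}^{i}$ and $e^{i}(e_{a})=0$, while the ansatz $\mathbf{e}^{a}=dy^{a}+N_{i}^{a}\,dx^{i}$ gives $\mathbf{e}^{a}(e_{b})=\delta _{b}^{a}$ and $\mathbf{e}^{a}(\mathbf{e}_{j})=-N_{j}^{a}+N_{j}^{a}=0$, which is exactly (\ref{nadapb}).

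The cotangent case is strictly parallel and I would carry it out with coordinates $\ ^{\shortmid }u^{\alpha }=(x^{i},p_{a})$. Here the vertical subbundle $vT^{\ast }V$ is spanned by $\ ^{\shortmid }e^{b}=\partial /\partial p_{b}$, the horizontal lift of $\partial /\partial x^{i}$ reads $\ ^{\shortmid }\mathbf{e}_{i}=\partial /\partial x^{i}-\ ^{\shortmid }N_{ia}(x,p)\,\partial /\partial p_{a}$ (the fibre index is now lowered, inherited from the dual fibre coordinates $p_{a}$), and the same biorthogonality computation forces $\ ^{\shortmid }e^{i}=dx^{i}$ together with $\ ^{\shortmid }\mathbf{e}_{a}=dp_{a}+\ ^{\shortmid }N_{ia}\,dx^{i}$, since $\ ^{\shortmid }\mathbf{e}_{a}(\ ^{\shortmid }\mathbf{e}_{j})=-\ ^{\shortmid }N_{ja}+\ ^{\shortmid }N_{ja}=0$ and $\ ^{\shortmid }\mathbf{e}_{a}(\ ^{\shortmid }e^{b})=\delta _{a}^{b}$. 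This recovers (\ref{cnadap}).

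The pairing computations are routine, so the only point genuinely requiring care is the well-posedness of the horizontal lift: one must verify that the complement supplied by $\mathbf{N}$ (resp. $\ ^{\shortmid }\mathbf{N}$) projects isomorphically onto the base tangent space, which is what guarantees that the coefficients $N_{i}^{a}$ (resp. $\ ^{\shortmid }N_{ia}$) exist and are unique. If one further wants the locally defined $\mathbf{e}_{i}$ to glue into a globally defined horizontal distribution, the substantive step is to record that under a change of fibered coordinates the $N_{i}^{a}$ transform by the inhomogeneous (connection) law rather than tensorially — this is precisely what distinguishes a genuine N-connection from an arbitrary choice of vertical-valued one-forms. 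I expect this transformation bookkeeping, and not the duality check, to be the only mildly delicate part of the argument.
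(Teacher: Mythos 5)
Your proposal is correct and follows essentially the same route as the paper, which offers no detailed argument but simply asserts the Lemma "by explicit constructions" in local coordinates once the N-connection is parameterized by the coefficients $N_{i}^{a}$ and $\ ^{\shortmid }N_{ia}$. Your construction of the horizontal lifts from the Whitney splitting, the biorthogonality check for the dual coframes, and the remark on the inhomogeneous transformation law of the coefficients supply precisely the details the paper leaves implicit.
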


In our works, boldface symbols are used in order to emphasize that certain geometric/ physical objects are considered in N--adapted form for certain spaces enabled with N--connection structure and when the coefficients of tensors, spinors, and fundamental geometric objects can be computed with respect to N-elongated bases of type (\ref{nadapb}) and/or (\ref{cnadap}). A splitting (\ref{ncon}), and respective N--adapted bases, defines corresponding non-integrable (equivalently, nonholonomic, or anholonomic)
structures on (co) tangent Lorentz bundles and transforms such spaces into nonholonomic manifolds, see details in Refs. \cite%
{vranc31,vranc57,bejancu03,vijgmmp07,vijgmmp08,vijtp10a}.

The sets of N--connection coefficients and necessary types of (co) frame/ coordinate transforms can be used for constructing lifts of metric structures $(V,g)$ to respective nonholonomic (co)tangent bundles, $(\mathbf{TV,N,g})$ and $(\mathbf{T}^{\ast }\mathbf{V,\ ^{\shortmid }N,\ ^{\shortmid }g})$.

\begin{assumption}
\textsf{[d-metrics on (co) tangent Lorentz bundles ] } \label{assumpt3} The total spaces of tangent, $\mathbf{TV},$ and cotangent, $\mathbf{T}^{\ast }\mathbf{V},$ Lorentz bundles used for elaborating physical theories with MDRs and LVs generalizations of the Einstein gravity are enabled, respectively, with pseudo-Riemannian metric, $\mathbf{g},$ and $\ ^{\shortmid }\mathbf{g,}$ structures. Such metrics can be parameterized by frame transforms in N--adapted form, i.e. as distinguished metrics (d-metrics)
\begin{eqnarray}
\mathbf{g} &=&\mathbf{g}_{\alpha \beta }(x,y)\mathbf{\mathbf{e}}^{\alpha }\mathbf{\otimes \mathbf{e}}^{\beta}=g_{ij}(x)e^{i}\otimes e^{j}+\mathbf{g}_{ab}(x,y)\mathbf{e}^{a}\otimes \mathbf{e}^{a}\mbox{ and/or }  \label{dmt} \\
\ ^{\shortmid }\mathbf{g} &=&\ ^{\shortmid }\mathbf{g}_{\alpha \beta }(x,p)\ ^{\shortmid }\mathbf{\mathbf{e}}^{\alpha }\mathbf{\otimes \ ^{\shortmid }\mathbf{e}}^{\beta }=g_{ij}(x)e^{i}\otimes e^{j}+\ ^{\shortmid }\mathbf{g}^{ab}(x,p)\ ^{\shortmid}\mathbf{e}_{a}\otimes \ ^{\shortmid }\mathbf{e}_{b}.
\label{dmct}
\end{eqnarray}
\end{assumption}

In this paper, we shall work with metrics on 8-d manifolds of signature (+,+,+,-,+,+,+,-)). A pseudo--Riemannian metric $g_{ij}(x)$ can be subjected to the condition that it defines a solution of the standard Einstein equations in GR and a corresponding Lorentz manifold $\mathbf{V}.$ Such constructions for the base spacetime manifold $V$ are possible for the Levi--Civita connection completely determined by $g_{ij}(x)$ by imposing the metric compatibility and zero torsion conditions. Working with more general
classed of geometric/ physical models with general MDRs and nonholonomic (co) frame structures elaborated on (co) tangent Lorentz bundles, we have to introduce into consideration non-Riemannian geometries with more general metric and connection structures. There are necessary additional geometrically and physically motivated assumptions on how nonlinear quadratic elements of type (\ref{nqe}), or (\ref{nqed}), and/or (\ref{dmt}), or (\ref{dmct}), encode MDRs.

\subsection{Canonical N--connections, metrics and almost symplectic structures}

We shall follow a coordinate-free formalism for nonholonomic manifolds and bundles enabled with N--connection structure. Certain important formulas and results will be formulated in coefficient forms, with respect to N--adapted frames, which is important for constructing in explicit form exact and parametric solutions following the AFDM.

\subsubsection{MDR, Hamilton--Lagrange generating functions and N--connections}

Let us consider a spacetime Lorentzian manifold $\mathbf{V}$ modeled as a pseudo--Riemannian manifold endowed with a metric $hg=\{g_{ij}(x)\}$ of signature $(3,1)$. Such metrics can be deformed by off-diagonal / nonholonomic transforms to metrics depending on velocity/ momentum coordinates.

\begin{proposition}
\label{prophs}A MDR (\ref{mdrg}) defines naturally canonical data for a v-metric (Hessian) $\widetilde{g}_{ab}(x,p)$ (\ref{hessls}) and a nonlinear quadratic element (\ref{nqed}) determining a relativistic model of Hamilton space $\widetilde{H}^{3,1}=(T^{\ast}V,\widetilde{H}(x,p))$.
\end{proposition}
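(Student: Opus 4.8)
The plan is to read the modified dispersion relation (\ref{mdrg}) not merely as an equation to be solved for the energy, but as the vanishing of a mass-shell function on $T^{\ast}V$, and then to take a normalization of that function as the generating Hamiltonian. Concretely, with $c=1$ and $\eta^{ab}=diag[1,1,1,-1]$ I would set
\begin{equation}
\widetilde{H}(x,p):=\eta^{ab}p_{a}p_{b}+m^{2}-\varpi(x^{i},p_{a},m;\ell_{P}),
\label{planH}
\end{equation}
so that (\ref{mdrg}) is exactly the level set $\widetilde{H}=0$ (the relativistic dispersion relation of Example \ref{exmdrenergy}); the extension to an $x$-dependent indicator $\varpi(x^{i},p_{a},m;\ell_{P})$, and to an $x$-dependent fibre metric $\eta^{ab}\rightarrow g^{ab}(x)$, is permitted by the discussion following (\ref{mdrg}). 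I would prefer (\ref{planH}) to the solved form $\widetilde{H}=E=\pm(\overrightarrow{\mathbf{p}}^{2}+m^{2}-\varpi)^{1/2}$ of (\ref{hamfp}), because the latter yields a function of the three spatial momenta only, hence a Hessian degenerate in the $p_{4}$ direction, whereas a genuine $4$-dimensional-fibre Hamilton space requires nontrivial dependence on all four co-fibre coordinates $p_{a}$.

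Next I would compute the canonical cv-metric from the defining Hessian (\ref{hesshs}),
\begin{equation}
\ ^{\shortmid}\widetilde{g}^{ab}(x,p)=\frac{1}{2}\frac{\partial^{2}\widetilde{H}}{\partial p_{a}\partial p_{b}}=\eta^{ab}-\frac{1}{2}\frac{\partial^{2}\varpi}{\partial p_{a}\partial p_{b}},
\label{planHess}
\end{equation}
which is manifestly symmetric, depends on $(x,p)$ only through the second momentum-derivatives of the indicator, and reduces to the flat Minkowski cv-metric $\eta^{ab}$ of signature $(+,+,+,-)$ precisely when $\varpi=0$. This is consistent with the background assumption (\ref{lqed}) and reproduces the reduction of (\ref{nqed}) to (\ref{lqed}) in the quasi-classical limit.

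The substantive step, which I expect to be the main obstacle, is the regularity clause: establishing that $\ ^{\shortmid}\widetilde{g}^{ab}$ is non-degenerate, $\det|\ ^{\shortmid}\widetilde{g}^{ab}|\neq0$, and of constant signature. I would prove this by a continuity/perturbation argument that uses precisely the smallness of the indicator built into Assumption \ref{assumptnonlinq}. Since $\eta^{ab}$ is invertible of signature $(3,1)$, since non-degeneracy is an open condition, and since the numbers of positive and negative eigenvalues are locally constant under small symmetric perturbations (stability of Sylvester's law of inertia), there is a domain of $\widetilde{T^{\ast}V}=T^{\ast}V/\{0^{\ast}\}$, bounded away from the null section, and a $C^{2}$-smallness threshold on $\frac{1}{2}\partial^{2}\varpi/\partial p_{a}\partial p_{b}$, on which no eigenvalue of (\ref{planHess}) can cross zero; hence $\ ^{\shortmid}\widetilde{g}^{ab}$ stays invertible and retains the signature $(+,+,+,-)$. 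The honest caveat, which I would state explicitly, is that for a large or strongly nonlinear $\varpi$ this may fail (caustics, degeneracies and signature changes are then possible), so regularity is asserted only in the small-indicator regime singled out by the assumptions; moreover, if one insists on the solved form (\ref{hamfp}) one must additionally impose $2E+\partial_{E}\varpi\neq0$ (monotonicity of the dispersion relation in $E$, away from $E=0$) for the implicit function theorem to furnish a single-valued differentiable energy branch.

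Finally I would assemble the data. The pair $(T^{\ast}V,\widetilde{H}(x,p))$, with $\widetilde{H}$ differentiable on $\widetilde{T^{\ast}V}$ and continuous on the null section, together with the regular Hessian $\ ^{\shortmid}\widetilde{g}^{ab}$ of (\ref{planHess}), is by definition a relativistic Hamilton space $\widetilde{H}^{3,1}$; the nonlinear quadratic line element (\ref{nqed}), $d\ ^{\shortmid}s_{H}^{2}=\widetilde{H}(x,p)$, is thereby determined, which is exactly the asserted canonical datum. The tildes are kept throughout to record that these are the objects generated canonically by the dispersion relation, in accordance with the convention fixed after (\ref{hessls}).
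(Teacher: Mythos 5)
Your proposal is correct, but it takes a genuinely different route from the paper's own proof of Proposition \ref{prophs}. The paper works with the solved energy form (\ref{hamfp}): it fixes a point $x_{[0]}\in \mathbf{V}$, takes the fiber-wise effective Hamiltonian $H(x_{[0]},p)$, defines $\widetilde{H}(x,p)$ by gluing these fiber data over an atlas of charts covering $\mathbf{V}$, stipulates rather than proves regularity (the space is declared regular at a point precisely where the Hessian is nondegenerate), and closes by noting the reduction of (\ref{nqed}) to (\ref{lqed}) when $\varpi =0$ together with the passage to non-tilde objects under general frame transforms. You instead take the mass-shell function $\widetilde{H}=\eta ^{ab}p_{a}p_{b}+m^{2}-\varpi $, whose vanishing locus is exactly the MDR (\ref{mdrg}), compute the Hessian (\ref{hesshs}) in closed form as $\eta ^{ab}-\tfrac{1}{2}\partial ^{2}\varpi /\partial p_{a}\partial p_{b}$, and then establish the regularity clause as a theorem, via openness of nondegeneracy and stability of the Sylvester signature under small symmetric perturbations in the small-indicator regime of Assumption \ref{assumptnonlinq}. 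Your route (essentially that of \cite{barcaroli15}) buys three things the paper's does not: a globally defined generating function with no chart-gluing, an actual proof of regularity instead of a definition, and a resolution of a real defect of (\ref{hamfp}) that you correctly identify -- the solved form depends on the four co-fiber coordinates essentially only through $\overrightarrow{\mathbf{p}}$, so its $4\times 4$ Hessian is degenerate (or, if $E$ is kept inside $\varpi $ as an independent variable, nearly degenerate), a point the paper's proof never confronts. What the paper's choice buys in exchange is that $\widetilde{H}$ literally equals the physical energy, which is the object the subsequent Hamilton--Jacobi / semi-spray statements and the canonical N-connection of Theorem \ref{thcnc} act on; your mass-shell Hamiltonian generates the same particle worldlines only after restriction to the shell $\widetilde{H}=0$ and up to reparameterization, so the two constructions produce related but not identical canonical Hamilton geometries. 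Both are legitimate instantiations of "canonical data determined by a MDR" in the sense of the proposition, and yours is the one for which the regularity assertion is actually demonstrable.
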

\begin{proof}
Let us fix a point $\ x_{[0]}\in \mathbf{V}$, with a fiber space of co-vectors in this point $\ ^{\shortmid }u[x_{[0]}]=(x_{[0]},p)\in
T_{x_{[0]}}^{\ast }V$, and consider an effective Hamilton function $%
H(x_{[0]},p)$ (\ref{hamfp}) determined by a nontrivial MDR (\ref{mdrg}) with
nontrivial indicator $\varpi .$ The generating Hamilton function is defined
as a union of all points $x\in U_{x_{[0]}}\subset \mathbf{V}$, $\widetilde{H}%
(x,p):=\bigcup\limits_{x,U}H(x_{[0]},p)$ for an atlas of unions of carts $%
U_{x}$ covering $\mathbf{V}$. Tilde "$\widetilde{...}$" on symbols will be
used in this work in order to emphasize that certain spaces and
geometric/physical values are written in a "canonical" adapted form
determined by certain MDR and associated Hamilton, $\widetilde{H},$ and/or
Lagrange, $\widetilde{L},$ structures. In a constructive approach, one
prescribes a necessary smooth class (for instance, analytic or of some
finite class of differentiability) fundamental function $H(x,p)$ on $T^{\ast
}V$ determining for any fixed $x=x_{[0]}$ a MDR. Such a relativistic
Hamilton space is regular a point $\ ^{\shortmid }u=(x,p)$ if $\ \widetilde{g%
}_{ab}$ is not degenerate in this point. For $\varpi =0$ (\ref{mdrg}), the
cotangent bundle admits frame/coordinate transforms when the nonlinear
quadratic element (\ref{nqed}) can be parameterized as a linear quadratic
element (\ref{lqed}). Considering arbitrary frame/coordinate transform on $%
\mathbf{V}$ and/or $T^{\ast }V,$ the geometric objects transforms into
general ones which can be parameterized by symbols without tilde.

$\square $ (end proof)
\end{proof}

\vskip5pt

\begin{conseq}
\textsf{[Canonical Lagrange-Hamilton spaces determined by MDRs] } A MDR (\ref{mdrg}) defines $L$--dual, i.e. related via Legendre transforms, canonical relativistic models of Hamilton space $\widetilde{H}^{3,1}=(T^{\ast }V,\widetilde{H}(x,p))$ and Lagrange space $\widetilde{L}^{3,1}=(TV,\widetilde{L}(x,y))$.
\end{conseq}

\begin{proof}
It follows from a inverse Legendre transform (\ref{invlegendre}), when $\widetilde{L}(x,y):=p_{a}y^{a}-\widetilde{H}(x,p)$ and $p_{a}$ taken as a solution of the equations $y^{a}=\partial \widetilde{H}(x,p)/\partial p_{a}.$

$\square $
\end{proof}

\vskip5pt

Let us consider a regular curve $c(\tau )$ defined $c:\tau \in \lbrack
0,1]\rightarrow x^{i}(\tau )\subset U\subset V,$ for a real parameter $\tau
. $ Such a curve can be lifted to $\pi ^{-1}(U)\subset \widetilde{TV}$
defining a curve in the total space, when $\widetilde{c}(\tau ):\tau \in
\lbrack 0,1]\rightarrow \left( x^{i}(\tau ),y^{i}(\tau )=dx^{i}/d\tau
\right) $ with a nonvanishing v-vector field $dx^{i}/d\tau .$

There are on $T^{\ast }V$ a canonical symplectic structure $\theta
:=dp_{i}\wedge dx^{i}$ and a unique vector filed
\begin{equation*}
\widetilde{X}_{H}:=\frac{\partial \widetilde{H}}{\partial p_{i}}\frac{%
\partial }{\partial x^{i}}-\frac{\partial \widetilde{H}}{\partial x^{i}}%
\frac{\partial }{\partial p_{i}}
\end{equation*}
defined by $\widetilde{H},$ following the equation $i_{\widetilde{X}%
_{H}}\theta =-d\widetilde{H}.$ In above formulas $\wedge $ is the
antisymmetric product and $i_{\widetilde{X}_{H}}$ denotes the interior
produce defined by $\widetilde{X}_{H}.$ In result, we can formulate and
prove using an explicit calculus for any functions $\ ^{1}f(x,p)$ and $\
^{2}f(x,p)$ on $T^{\ast }V:$

\begin{conclusion}
A MDR (\ref{mdrg}) determines a canonical Poisson structure $\{\ ^{1}f,\
^{2}f\}:=\theta (\widetilde{X}_{^{1}f},\widetilde{X}_{^{2}f}).$
\end{conclusion}

Motion of probing point particles in a phase space modeled by an effective $%
\widetilde{H}^{3,1}$ are described by an effective relativistic Hamilton
mechanics induced by MDRs as follows from

\begin{corollary}
For any effective Hamilton phase space model on $T^{\ast }V, $ one holds the
canonical Hamilton-Jacobi equations
\begin{equation*}
\frac{dx^{i}}{d\tau }=\{\widetilde{H},x^{i}\}\mbox{ and }\frac{dp_{a}}{d\tau
}=\{\widetilde{H},p_{a}\}.
\end{equation*}
\end{corollary}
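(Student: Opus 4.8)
The plan is to recognize the asserted identities as Hamilton's canonical equations of motion written in Poisson-bracket form, and to derive them directly from the defining relation $i_{\widetilde{X}_{H}}\theta=-d\widetilde{H}$ of the canonical Hamiltonian vector field together with the Poisson structure $\{\ ^{1}f,\ ^{2}f\}:=\theta(\widetilde{X}_{^{1}f},\widetilde{X}_{^{2}f})$ established in the preceding Conclusion. No use is made of the N--connection or of the explicit form of the indicator $\varpi$: the result depends only on the canonical symplectic form $\theta=dp_{i}\wedge dx^{i}$ on $T^{\ast}V$ and on the regularity of $\widetilde{H}$ guaranteed by Proposition \ref{prophs}, so it holds for every effective Hamilton model generated by an MDR (\ref{mdrg}).

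First I would establish the auxiliary identity $\{\widetilde{H},f\}=\widetilde{X}_{H}(f)$ for an arbitrary smooth function $f$ on $T^{\ast}V$. Contracting the defining relation $i_{\widetilde{X}_{f}}\theta=-df$ with any vector field $Y$ gives $\theta(\widetilde{X}_{f},Y)=-Y(f)$. Invoking the antisymmetry of the two--form $\theta$ and then this last relation with $Y=\widetilde{X}_{H}$, one obtains
\begin{equation*}
\{\widetilde{H},f\}=\theta(\widetilde{X}_{H},\widetilde{X}_{f})=-\theta(\widetilde{X}_{f},\widetilde{X}_{H})=\widetilde{X}_{H}(f).
\end{equation*}
Thus the Poisson bracket with $\widetilde{H}$ coincides with the directional derivative along the canonical Hamiltonian vector field $\widetilde{X}_{H}$.

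Next I would identify the flow. By construction $\widetilde{X}_{H}=\frac{\partial\widetilde{H}}{\partial p_{i}}\frac{\partial}{\partial x^{i}}-\frac{\partial\widetilde{H}}{\partial x^{i}}\frac{\partial}{\partial p_{i}}$, so its integral curves $\tau\mapsto(x^{i}(\tau),p_{a}(\tau))$ satisfy the standard Hamilton equations $dx^{i}/d\tau=\partial\widetilde{H}/\partial p_{i}$ and $dp_{a}/d\tau=-\partial\widetilde{H}/\partial x^{a}$; along such a trajectory $df/d\tau=\widetilde{X}_{H}(f)$ for every $f$. Applying the auxiliary identity to the coordinate functions $f=x^{i}$ and $f=p_{a}$ then yields $dx^{i}/d\tau=\{\widetilde{H},x^{i}\}$ and $dp_{a}/d\tau=\{\widetilde{H},p_{a}\}$, as claimed. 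As an equivalent N--adapted check one may compute the coordinate Hamiltonian fields $\widetilde{X}_{x^{i}}=-\partial/\partial p_{i}$ and $\widetilde{X}_{p_{a}}=\partial/\partial x^{a}$ and insert them into $\theta(\widetilde{X}_{H},\widetilde{X}_{x^{i}})=\partial\widetilde{H}/\partial p_{i}$ and $\theta(\widetilde{X}_{H},\widetilde{X}_{p_{a}})=-\partial\widetilde{H}/\partial x^{a}$, reproducing the same two expressions.

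The argument is essentially routine symplectic geometry, so there is no genuine obstacle; the only points requiring care are the sign conventions, which fix whether the bracket returns $+\widetilde{X}_{H}(f)$ or $-\widetilde{X}_{H}(f)$ and must be matched to the convention $i_{\widetilde{X}_{H}}\theta=-d\widetilde{H}$ adopted above, and the tacit identification of the flow parameter $\tau$ with the curve parameter of the lift $\widetilde{c}(\tau)$ introduced earlier. One should also note that the fiber index $a$ on $p_{a}$ is identified with the base index $i$ through the lift convention of the footnote, so that $\partial/\partial x^{a}$ is meaningful; with this understood the computation for $p_{a}$ is identical to that for $p_{i}$.
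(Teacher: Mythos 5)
Your proposal is correct and takes essentially the same route as the paper: the paper's own proof is the one-line remark that the corollary ``follows from the previous Conclusion'', i.e.\ from the canonical Poisson structure $\{\ ^{1}f,\ ^{2}f\}:=\theta (\widetilde{X}_{^{1}f},\widetilde{X}_{^{2}f})$ together with the defining relation $i_{\widetilde{X}_{H}}\theta =-d\widetilde{H}$, which is exactly what you invoke. Your write-up merely fills in the details the paper leaves implicit --- the identity $\{\widetilde{H},f\}=\widetilde{X}_{H}(f)$ and its evaluation on the coordinate functions $x^{i}$ and $p_{a}$ --- with signs consistent with the paper's conventions and with the Hamilton-Jacobi equations stated in the Theorem that follows.
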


\begin{proof}
It follows from the previous Conclusion.

$\square $
\end{proof}

\vskip5pt

Following a standard variational calculus (see similar details, for
instance, in Ref. \cite{avjmp09}), one obtains the proof of

\begin{theorem}
\textsf{[Semi-sprays induced by MDRs and canonical Hamilton-Jacoby and
Euler-Lagrange equations] } The dynamics of a probing point particle in $L$%
-dual effective phase spaces $\widetilde{H}^{3,1}$ and $\widetilde{L}^{3,1}$
is described by fundamental generating functions $\widetilde{H}$ and $%
\widetilde{L}$ determined canonically by MDRs (\ref{mdrg}) and satisfy the
Hamilton-Jacobi equations written equivalently as
\begin{equation*}
\frac{dx^{i}}{d\tau }=\frac{\partial \widetilde{H}}{\partial p_{i}}%
\mbox{
and }\frac{dp_{i}}{d\tau }=-\frac{\partial \widetilde{H}}{\partial x^{i}},
\end{equation*}%
or as Euler-Lagrange equations,
\begin{equation*}
\frac{d}{d\tau }\frac{\partial \widetilde{L}}{\partial y^{i}}-\frac{\partial
\widetilde{L}}{\partial x^{i}}=0,
\end{equation*}%
which, in their turn, are equivalent to the \textbf{nonlinear geodesic
(semi-spray) equations}
\begin{equation}
\frac{d^{2}x^{i}}{d\tau ^{2}}+2\widetilde{G}^{i}(x,y)=0,  \label{ngeqf}
\end{equation}%
for $\widetilde{G}^{i}=\frac{1}{2}\widetilde{g}^{ij}(\frac{\partial ^{2}%
\widetilde{L}}{\partial y^{i}}y^{k}-\frac{\partial \widetilde{L}}{\partial
x^{i}}),\,\ $ with $\widetilde{g}^{ij}$ being inverse to $\widetilde{g}_{ij}$
(\ref{hessls}).
\end{theorem}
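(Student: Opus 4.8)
The plan is to realize the statement as a chain of three equivalences, each reversible under the standing regularity hypothesis that the Hessians $\widetilde{g}_{ij}$ and $\widetilde{g}_{ab}$ are non-degenerate and of constant signature. The starting point is already supplied by the preceding Conclusion and Corollary: the canonical symplectic form $\theta=dp_{i}\wedge dx^{i}$ on $T^{\ast}V$ together with its Poisson bracket yields the equations of motion $dx^{i}/d\tau=\{\widetilde{H},x^{i}\}$ and $dp_{a}/d\tau=\{\widetilde{H},p_{a}\}$. First I would evaluate these brackets. From $i_{\widetilde{X}_{H}}\theta=-d\widetilde{H}$ one reads off $\widetilde{X}_{H}=\frac{\partial\widetilde{H}}{\partial p_{i}}\frac{\partial}{\partial x^{i}}-\frac{\partial\widetilde{H}}{\partial x^{i}}\frac{\partial}{\partial p_{i}}$ as in the text, so the canonical bracket takes the form $\{f,g\}=\frac{\partial f}{\partial p_{i}}\frac{\partial g}{\partial x^{i}}-\frac{\partial f}{\partial x^{i}}\frac{\partial g}{\partial p_{i}}$; applied to $g=x^{i}$ and $g=p_{i}$ this gives $\{\widetilde{H},x^{i}\}=\partial\widetilde{H}/\partial p_{i}$ and $\{\widetilde{H},p_{i}\}=-\partial\widetilde{H}/\partial x^{i}$, precisely the Hamilton--Jacobi form.

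Second, I would pass to the Lagrange picture via the inverse Legendre transform (\ref{invlegendre}), $\widetilde{L}(x,y)=p_{a}y^{a}-\widetilde{H}(x,p)$, with $p=p(x,y)$ determined implicitly by $y^{a}=\partial\widetilde{H}/\partial p_{a}$. Differentiating $\widetilde{L}$ and inserting this defining relation, the terms proportional to $\partial p_{a}/\partial x^{i}$ and $\partial p_{a}/\partial y^{i}$ cancel in pairs, leaving the two Legendre identities
\[
\frac{\partial\widetilde{L}}{\partial y^{i}}=p_{i},\qquad \frac{\partial\widetilde{L}}{\partial x^{i}}=-\frac{\partial\widetilde{H}}{\partial x^{i}}.
\]
Substituting into the Hamilton--Jacobi equations, the first one $dx^{i}/d\tau=\partial\widetilde{H}/\partial p_{i}=y^{i}$ merely restates the canonical lift $y^{i}=dx^{i}/d\tau$ of the curve $\widetilde{c}(\tau)$, while the second, $dp_{i}/d\tau=-\partial\widetilde{H}/\partial x^{i}$, becomes $\frac{d}{d\tau}\frac{\partial\widetilde{L}}{\partial y^{i}}=\frac{\partial\widetilde{L}}{\partial x^{i}}$, i.e. the Euler--Lagrange equations. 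Non-degeneracy of the Hessian guarantees that the Legendre map is a local diffeomorphism on the slit bundles, so this step is reversible.

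Third, I would recast the Euler--Lagrange equations in semi-spray form. Expanding the total $\tau$-derivative by the chain rule and using $y^{j}=dx^{j}/d\tau$ gives
\[
\frac{\partial^{2}\widetilde{L}}{\partial y^{i}\partial y^{j}}\frac{d^{2}x^{j}}{d\tau^{2}}+\frac{\partial^{2}\widetilde{L}}{\partial y^{i}\partial x^{k}}y^{k}-\frac{\partial\widetilde{L}}{\partial x^{i}}=0.
\]
Identifying $\partial^{2}\widetilde{L}/\partial y^{i}\partial y^{j}=2\widetilde{g}_{ij}$ from (\ref{hessls}) and contracting with the inverse $\widetilde{g}^{li}$ isolates the acceleration and produces $\frac{d^{2}x^{l}}{d\tau^{2}}+2\widetilde{G}^{l}=0$ (\ref{ngeqf}) with $\widetilde{G}^{l}$ the stated combination; invertibility of $\widetilde{g}^{li}$ again makes this reversible.

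The genuinely delicate point is not any individual computation but the consistency of the Legendre duality in the relativistic, Lorentzian-signature setting. One must verify that the $\widetilde{H}$ built from a MDR (\ref{mdrg}) through (\ref{hamfp}) and Proposition \ref{prophs} is regular on the domain of interest, that the conjugate variables $y\leftrightarrow p$ are matched consistently with the chosen affine-type parameterization $\tau$ of $\widetilde{c}(\tau)$, and that degenerate null directions are excluded by restricting to the slit bundles $\widetilde{TV}$ and $\widetilde{T^{\ast}V}$. Since the whole argument is carried out coordinate-free through $\theta$, the equivalences are manifestly frame-independent, and once regularity is in force the three systems become interchangeable, closing the chain. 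Alternatively, one may obtain the same equations directly from the variational principle for $\int\widetilde{L}(x,dx/d\tau)\,d\tau$, which is the N-adapted route alluded to in the text.
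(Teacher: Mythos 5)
Your proof is correct, and it supplies considerably more than the paper itself does: the paper's ``proof'' of this theorem is a single sentence deferring to ``a standard variational calculus'' with a citation to Ref.~\cite{avjmp09}, so the chain of three equivalences you establish is precisely what the paper leaves to the reader. Your organization also differs from the route the paper gestures at. Instead of starting from the action principle $\delta\int\widetilde{L}\,d\tau=0$, you begin on the Hamiltonian side, evaluating the Poisson brackets of the preceding Corollary against the canonical form $\theta=dp_{i}\wedge dx^{i}$ to obtain the Hamilton--Jacobi system; you then pass to the Euler--Lagrange equations through the inverse Legendre transform (\ref{invlegendre}) via the two Legendre identities $\partial\widetilde{L}/\partial y^{i}=p_{i}$ and $\partial\widetilde{L}/\partial x^{i}=-\partial\widetilde{H}/\partial x^{i}$; and you finally reach the semi-spray form by the chain rule and Hessian inversion. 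This has the advantage of being self-contained within the symplectic machinery the paper has already set up (the Conclusion on the Poisson structure and its Corollary), whereas the variational route requires a separate appeal to the action principle; you correctly note both are available, and regularity of the Hessian on the slit bundles makes every step reversible, as you observe. One caution on the last step: with the paper's normalization $\widetilde{g}_{ij}=\frac{1}{2}\,\partial^{2}\widetilde{L}/\partial y^{i}\partial y^{j}$ from (\ref{hessls}), your contraction literally yields
\begin{equation*}
\widetilde{G}^{l}=\frac{1}{4}\,\widetilde{g}^{li}\Bigl(\frac{\partial^{2}\widetilde{L}}{\partial y^{i}\partial x^{k}}\,y^{k}-\frac{\partial\widetilde{L}}{\partial x^{i}}\Bigr),
\end{equation*}
i.e.\ coefficient $\tfrac{1}{4}$ rather than the $\tfrac{1}{2}$ printed after (\ref{ngeqf}); since the paper's displayed formula for $\widetilde{G}^{i}$ is in any case garbled (the mixed derivative lacks the $\partial x^{k}$ and the index $i$ appears both free and summed), your derivation should be read as the corrected statement, and this discrepancy is a defect of the paper's typography, not a gap in your argument.
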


It should be noted that the equations (\ref{ngeqf}) emphasize that point
like probing particles move not along usual geodesics as on Lorentz
manifolds but follow some nonlinear geodesic equations determined by MDRs
and/or LIVs.

Using above Theorem and by construction on open sets covering $V,TV$ and $%
T^{\ast }V$ (see Definition \ref{defnc}), one proves

\begin{theorem}
\textsf{[existence of canonical N-connections determined by MDRs and pseudo
Hamilton and/or Lagrange generating functions] } \label{thcnc}There are
canonical N--connections determined by MDRs in $L$--dual form following
formulas
\begin{equation*}
\ \ \ \ ^{\shortmid }\widetilde{\mathbf{N}} = \left\{ \ ^{\shortmid }%
\widetilde{N}_{ij}:=\frac{1}{2}\left[ \{\ \ ^{\shortmid }\widetilde{g}_{ij},%
\widetilde{H}\}-\frac{\partial ^{2}\widetilde{H}}{\partial p_{k}\partial
x^{i}}\ ^{\shortmid }\widetilde{g}_{jk}-\frac{\partial ^{2}\widetilde{H}}{%
\partial p_{k}\partial x^{j}}\ ^{\shortmid }\widetilde{g}_{ik}\right]
\right\} \mbox{ and } \widetilde{\mathbf{N}} = \left\{ \widetilde{N}%
_{i}^{a}:=\frac{\partial \widetilde{G}}{\partial y^{i}}\right\},
\end{equation*}%
where $\ \ ^{\shortmid }\widetilde{g}_{ij}$ is inverse to $\ \ ^{\shortmid }%
\widetilde{g}^{ab}$ (\ref{hesshs}).
\end{theorem}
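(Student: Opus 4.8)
The plan is to construct both connections from the dynamical data the preceding Theorem has already extracted from the MDR, treating the tangent--bundle (Lagrange) case first and then transporting the result to the cotangent bundle by the $L$--duality of (\ref{invlegendre}). That Theorem supplies the $L$--dual semi--spray coefficients $\widetilde{G}^{i}$ together with the Hessians $\widetilde{g}_{ab}$ (\ref{hessls}) and $\ ^{\shortmid }\widetilde{g}^{ab}$ (\ref{hesshs}). On $TV$ the candidate is $\widetilde{N}_{i}^{a}:=\partial \widetilde{G}^{a}/\partial y^{i}$; on $T^{\ast }V$ it is the combination displayed in the statement. To prove existence in the sense of Definition \ref{defnc}, it suffices to show that each family of coefficients obeys the inhomogeneous (non--tensorial) transformation rule characterising an N--connection, so that the associated distributions furnish a genuine Whitney splitting (\ref{ncon}) and the N--adapted (co)frames of Lemma \ref{lemadap} are well defined off the null section.

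First I would treat the Lagrange side. Starting from the nonlinear geodesic (semi--spray) equations (\ref{ngeqf}), I would compute how $\widetilde{G}^{i}$ transforms under an arbitrary change of base coordinates $x^{i}\to x^{i'}(x)$ with the induced fibre change $y^{a'}=(\partial x^{a'}/\partial x^{a})\,y^{a}$. Covariance of (\ref{ngeqf}) forces $\widetilde{G}^{i}$ to transform as a spray, namely with a homogeneous tensorial part plus an inhomogeneous term built from $\partial^{2}x^{a'}/\partial x^{a}\partial x^{b}$. Differentiating in $y^{i}$ then yields exactly the N--connection law for $\widetilde{N}_{i}^{a}=\partial \widetilde{G}^{a}/\partial y^{i}$, and regularity of $\widetilde{g}_{ab}$ on $\widetilde{TV}$ guarantees these coefficients exist off the null section.

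For the Hamilton side I would use the $L$--duality established above: the inverse Legendre transform (\ref{invlegendre}) is a diffeomorphism intertwining $\widetilde{L}$ and $\widetilde{H}$ with $y^{a}=\partial \widetilde{H}/\partial p_{a}$. Pushing the horizontal distribution of $\widetilde{N}_{i}^{a}$ forward along this map, and re--expressing the mixed derivatives of $\widetilde{L}$ through those of $\widetilde{H}$ and the canonical Poisson bracket $\{\cdot ,\cdot \}$ attached to $\theta =dp_{i}\wedge dx^{i}$, I expect to recover precisely $\tfrac{1}{2}[\{\ ^{\shortmid }\widetilde{g}_{ij},\widetilde{H}\}-\ ^{\shortmid }\widetilde{g}_{jk}\,\partial^{2}\widetilde{H}/\partial p_{k}\partial x^{i}-\ ^{\shortmid }\widetilde{g}_{ik}\,\partial^{2}\widetilde{H}/\partial p_{k}\partial x^{j}]$. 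As an independent check, in the abstract geometric spirit preferred here, I would verify directly that this expression satisfies the cotangent--bundle N--connection transformation rule, noting that the term $\{\ ^{\shortmid }\widetilde{g}_{ij},\widetilde{H}\}$ is what supplies the required inhomogeneous piece.

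The main obstacle is the bookkeeping of these inhomogeneous terms: one must confirm that the non--tensorial contributions arising from the second derivatives of the coordinate change in $\widetilde{G}^{i}$ (Lagrange) and from the Poisson term $\{\ ^{\shortmid }\widetilde{g}_{ij},\widetilde{H}\}$ (Hamilton) are exactly those demanded by Definition \ref{defnc}, and that the Legendre map genuinely intertwines the two constructions rather than relating them only up to a symplectomorphism. The remaining technical point is to ensure that regularity, i.e. non--degeneracy of the Hessians and hence existence of the inverses $\widetilde{g}_{ij}$ and $\ ^{\shortmid }\widetilde{g}_{ij}$, persists on the punctured bundles, so that both $\widetilde{\mathbf{N}}$ and $\ ^{\shortmid }\widetilde{\mathbf{N}}$ are globally defined.
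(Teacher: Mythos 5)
Your proposal is correct and follows essentially the same route as the paper: the paper's (very terse) justification likewise invokes the preceding semi-spray theorem to obtain $\widetilde{N}_{i}^{a}=\partial \widetilde{G}^{a}/\partial y^{i}$ on $TV$ and the construction on open covers together with $L$-duality (Legendre transforms and Definition \ref{defnc}) to obtain the bracket formula for $\ ^{\shortmid }\widetilde{N}_{ij}$ on $T^{\ast }V$. Your additional verification of the inhomogeneous N-connection transformation laws and of regularity off the null section is exactly the standard detail the paper leaves implicit by citing the canonical Lagrange--Hamilton constructions.
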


Hereafter we shall consider that using necessary type frame/coordiante
transforms we can always establish (if necessary) certain nonholonomic
frames and geometric variables determined by MDRs and respective $L$-dual
relations between geometric/ physical values on tangent and cotangent
Lorentz bundles.

Introducing the canonical N--connection coefficients defined by this
Theorem, respectively, into formulas (\ref{nadapb}) and (\ref{cnadap}) \
introduced by Lemma \ref{lemadap}, we prove

\begin{proposition}
\textsf{[Canonical N-adapted frames determined by MDRs and/or
Lagrange-Hamilton generating functions ]} \label{prcnadapb}The N--connection
structures $\widetilde{\mathbf{N}}$ and $\ \ ^{\shortmid }\widetilde{\mathbf{%
N}}$ define respective systems of N--adapted (co) frames
\begin{eqnarray}
\widetilde{\mathbf{e}}_{\alpha } &=&(\widetilde{\mathbf{e}}_{i}=\frac{%
\partial }{\partial x^{i}}-\widetilde{N}_{i}^{a}(x,y)\frac{\partial }{%
\partial y^{a}},e_{b}=\frac{\partial }{\partial y^{b}}),\mbox{ on }TV;
\label{cnddapb} \\
\widetilde{\mathbf{e}}^{\alpha } &=&(\widetilde{e}^{i}=dx^{i},\widetilde{%
\mathbf{e}}^{a}=dy^{a}+\widetilde{N}_{i}^{a}(x,y)dx^{i}),\mbox{ on }%
(TV)^{\ast };  \notag \\
\mbox{and \ } \ ^{\shortmid }\widetilde{\mathbf{e}}_{\alpha } &=&(\
^{\shortmid }\widetilde{\mathbf{e}}_{i}=\frac{\partial }{\partial x^{i}}-\
^{\shortmid }\widetilde{N}_{ia}(x,p)\frac{\partial }{\partial p_{a}},\
^{\shortmid }e^{b}=\frac{\partial }{\partial p_{b}}),\mbox{ on }T^{\ast }V;
\label{ccnadap} \\
\ \ ^{\shortmid }\widetilde{\mathbf{e}}^{\alpha } &=&(\ ^{\shortmid
}e^{i}=dx^{i},\ ^{\shortmid }\mathbf{e}_{a}=dp_{a}+\ ^{\shortmid }\widetilde{%
N}_{ia}(x,p)dx^{i})\mbox{ on }(T^{\ast }V)^{\ast }.  \notag
\end{eqnarray}
\end{proposition}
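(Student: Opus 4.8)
The plan is to obtain this Proposition as a direct composition of the two preceding results: Theorem \ref{thcnc} supplies the explicit canonical N-connection coefficients, while Lemma \ref{lemadap} already establishes how \emph{any} such coefficients generate N-adapted (co)frames. Thus the proof reduces to a substitution followed by a verification of the duality relations. I would begin by recalling that Lemma \ref{lemadap} is stated for an arbitrary N-connection structure $\mathbf{N}=\{N_i^a\}$ on $TV$, respectively $\ ^{\shortmid}\mathbf{N}=\{\ ^{\shortmid}N_{ia}\}$ on $T^{\ast}V$, and that the frame formulas (\ref{nadapb}) and (\ref{cnadap}) depend only algebraically on these coefficients, through the N-elongation of the coordinate partial derivatives and differentials; no property beyond the mere existence of the coefficients is used there.

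Next I would invoke Theorem \ref{thcnc}, which guarantees that a prescribed MDR determines, in $L$-dual form, the canonical coefficients $\widetilde{N}_i^a=\partial\widetilde{G}/\partial y^i$ on $TV$ and $\ ^{\shortmid}\widetilde{N}_{ia}$ on $T^{\ast}V$. The regularity assumption on the associated Hamilton and Lagrange spaces, namely non-degeneracy of the Hessians (\ref{hessls}) and (\ref{hesshs}), ensures that the inverse v- and cv-metrics entering these coefficients exist and that $\widetilde{N}_i^a$, $\ ^{\shortmid}\widetilde{N}_{ia}$ are well-defined smooth functions on $\widetilde{TV}$ and $\widetilde{T^{\ast}V}$ respectively. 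Substituting these canonical coefficients into (\ref{nadapb}) and (\ref{cnadap}) then produces precisely the bases and cobases displayed in (\ref{cnddapb})--(\ref{ccnadap}).

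To complete the argument I would check the N-adaptation (duality) identities $\widetilde{\mathbf{e}}^{\alpha}(\widetilde{\mathbf{e}}_{\beta})=\delta^{\alpha}_{\beta}$ and the analogous relation on $T^{\ast}V$. This follows by a short direct computation exploiting that the N-elongated vector field $\widetilde{\mathbf{e}}_i$ annihilates the ``wrong'' differential precisely because the same coefficient $\widetilde{N}_i^a$ enters the dual cobasis with the opposite sign, so the $+N$ term in the cobasis cancels the $-N$ term in the basis. Everything at this stage is routine substitution of the output of one earlier result into another.

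The only point demanding genuine care, and hence the main (if modest) obstacle, is the bookkeeping of index placement together with the $L$-duality across the two bundles: the tangent construction uses the contravariant fiber coefficients $\widetilde{N}_i^a$ while the cotangent construction uses the covariant ones $\ ^{\shortmid}\widetilde{N}_{ia}$, and one must confirm that these are images of one another under the Legendre transform relating $\widetilde{L}$ and $\widetilde{H}$, so that the two frame systems (\ref{cnddapb}) and (\ref{ccnadap}) are genuinely $L$-dual and not merely formally similar.
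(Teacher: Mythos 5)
Your proposal is correct and follows essentially the same route as the paper, which proves this Proposition in one line by substituting the canonical N-connection coefficients of Theorem \ref{thcnc} into the general N-adapted frame formulas (\ref{nadapb}) and (\ref{cnadap}) of Lemma \ref{lemadap}. Your additional verification of the duality identities and of the $L$-duality bookkeeping is sound but already implicit in Lemma \ref{lemadap} and Theorem \ref{thcnc}, so it adds rigor without changing the argument.
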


We note that we can introduce canonical N-splitting
\begin{equation*}
\widetilde{\mathbf{N}}:TTV=hTV\oplus vTV \mbox{ and/or }\ \ ^{\shortmid }%
\widetilde{\mathbf{N}}:TT^{\ast }V=hT^{\ast }V\oplus vT^{\ast }V,
\end{equation*}
respectively, on any tangent Lorentz bundle and cotangent Lorentz bundle by
prescribing a system of so-called nonholonomic variables with N-adapted
frames of type (\ref{cnddapb}) and (\ref{ccnadap}). In such cases, probing
point like particles are described by some effective models of Lagrange
and/or Hamilton mechanics for the total phase spaces. If arbitrary frame and
coordinate transforms are considered on such spacetime manifolds and phase
(co) tangent bundle, the values with tilde transform correspondingly into
arbitrary ones, i.e. into (\ref{ncon}), (\ref{nadapb}) and (\ref{cnadap}).

\begin{conclusion}
\textsf{[Canonical geometric data for Lagrange-Hamilton spaces] } The
nonholonomic structure of a Lorentz manifolds and respective (co) tangent
bundles can be described in equivalent forms using canonical data $(%
\widetilde{L},\ \widetilde{\mathbf{N}};\widetilde{\mathbf{e}}_{\alpha},%
\widetilde{\mathbf{e}}^{\alpha }),$ with effective Largange density $%
\widetilde{L}$ (correspondingly, $(\widetilde{H},\ ^{\shortmid }\widetilde{%
\mathbf{N}};\ ^{\shortmid }\widetilde{\mathbf{e}}_{\alpha },\ ^{\shortmid }%
\widetilde{\mathbf{e}}^{\alpha }),$ with effective Hamilton density $%
\widetilde{H}$ ) or by a general N-splitting without effective Lagrangians
(Hamiltonians), i.e. in terms of geometric data $(\mathbf{N};\mathbf{e}%
_{\alpha },\mathbf{e}^{\alpha })$ (correspondingly $(\ ^{\shortmid }\mathbf{N%
};\ ^{\shortmid }\mathbf{e}_{\alpha },\ ^{\shortmid }\mathbf{e}^{\alpha })$%
). Such structures are determined by a MDR (\ref{mdrg}) if, for instance, $%
\widetilde{H}$ is chosen following the procedure stated by Proposition \ref%
{prophs}. Such an analogous Lagrange (Hamilton) like interpretation of the
geometry of phase spaces is "hidden" into the structure of nonholonomic
manifolds/ bundles endowed with N-connection splitting if general
frame/coordinate transforms are considered on respective relativistic
spacetimes and phase spaces.
\end{conclusion}

The geometric constructions can be performed in equivalent forms for various
nonholonomic data. For instance, we can consider arbitrary frame/coordinate
transforms on (co) bundle spaces and write formulas without "tilde", i.e. in
general (non canonical) forms, emphasizing certain N-splitting, or in
coordinate forms. Nevertheless, some classes nonholonomic variables can be
more convenient for decoupling physically important systems of nonlinear
PDEs, generating new classes of exact/ parametric solutions, and other
classes of nonholonomic variables can be more convenient for elaborating and
effective/analogous geometric mechanics interpretation, or (for instance)
for deformation quantization of MGTs and the Einstein gravity, see Refs.
\cite{vpla08,vjgp10,avjmp09}, and next sections.

Vector fields on nonholonomic (co) tangent bundles are called d--vectors if
they are written in a form adapted to a prescribed N--connection structure.
For instance, we decompose
\begin{eqnarray*}
\mathbf{X} &=&\widetilde{\mathbf{X}}^{\alpha }\widetilde{\mathbf{e}}_{\alpha
}=\widetilde{\mathbf{X}}^{i}\widetilde{\mathbf{e}}_{i}+X^{b}e_{b}=\mathbf{X}%
^{\alpha }\mathbf{e}_{\alpha }=\mathbf{X}^{i}\mathbf{e}_{i}+X^{b}e_{b}\in T%
\mathbf{TV}, \\
\ ^{\shortmid }\mathbf{X} &=&\ ^{\shortmid }\widetilde{\mathbf{X}}^{\alpha }%
\widetilde{\mathbf{e}}_{\alpha }=\ ^{\shortmid }\widetilde{\mathbf{X}}^{i}\
^{\shortmid }\widetilde{\mathbf{e}}_{i}+\ ^{\shortmid }X_{b}\ ^{\shortmid
}e^{b}=\ ^{\shortmid }\mathbf{X}^{\alpha }\ ^{\shortmid }\mathbf{e}_{\alpha
}=\ ^{\shortmid }\mathbf{X}^{i}\ ^{\shortmid }\mathbf{e}_{i}+\ ^{\shortmid
}X_{b}\ ^{\shortmid }e^{b}\in T\mathbf{T}^{\ast }\mathbf{V,}
\end{eqnarray*}%
for decompositions with respect to canonical, or arbitrary, N-adapted bases.
In brief, one considers such h-v and/or h-cv decompositions, $\mathbf{X}%
^{\alpha }=\widetilde{\mathbf{X}}^{\alpha }=(\widetilde{\mathbf{X}}%
^{i},X^{b})=(\mathbf{X}^{i},X^{b}),\ ^{\shortmid }\mathbf{X}^{\alpha }=\
^{\shortmid }\widetilde{\mathbf{X}}^{\alpha }=(\ ^{\shortmid }\widetilde{%
\mathbf{X}}^{i},\ ^{\shortmid }X_{b})=(\ ^{\shortmid }\mathbf{X}^{i},\
^{\shortmid }X_{b})$.

We can write $\mathbf{X}$ and $\ ^{\shortmid }\mathbf{X}$ as 1-forms and
N-adapted canonical, or arbitrary, coefficients,
\begin{eqnarray*}
\mathbf{X} &=&\widetilde{\mathbf{X}}_{\alpha }\ \mathbf{e}^{\alpha }=X_{i}\
e^{i}+\widetilde{\mathbf{X}}^{a}\widetilde{\mathbf{e}}_{a}=\widetilde{%
\mathbf{X}}_{\alpha }\mathbf{e}^{\alpha }=X_{i}e^{i}+\mathbf{X}^{a}\mathbf{e}%
_{a}\ \in T^{\ast }\mathbf{TV} \\
\ ^{\shortmid }\mathbf{X} &=&\ ^{\shortmid }\widetilde{\mathbf{X}}_{\alpha
}\ ^{\shortmid }\mathbf{e}^{\alpha }=\ ^{\shortmid }X_{i}\ ^{\shortmid
}e^{i}+\ ^{\shortmid }\widetilde{\mathbf{X}}^{a}\ ^{\shortmid }\widetilde{%
\mathbf{e}}_{a}=\ ^{\shortmid }\widetilde{\mathbf{X}}_{\alpha }\ ^{\shortmid
}\mathbf{e}^{\alpha }=\ ^{\shortmid }X_{i}\ ^{\shortmid }e^{i}+\ ^{\shortmid
}\mathbf{X}^{a}\ ^{\shortmid }\mathbf{e}_{a}\ \in T^{\ast }\mathbf{T}^{\ast }%
\mathbf{V,}
\end{eqnarray*}%
or, in brief, $\mathbf{X}_{\alpha }=\widetilde{\mathbf{X}}_{\alpha }=(X_{i},%
\widetilde{\mathbf{X}}^{a})=(X_{i},\mathbf{X}^{a}),\ ^{\shortmid }\mathbf{%
X_{\alpha }=}\ ^{\shortmid }\widetilde{\mathbf{X}}_{\alpha }=(\ ^{\shortmid
}X_{i},\ ^{\shortmid} \widetilde{\mathbf{X}}^{a})=(\ ^{\shortmid }X_{i},\
^{\shortmid}\mathbf{X}^{a})$.

Considering tensor products of N-adapted (co) frames, we can parameterized in N-adapted forms (canonical or general ones) arbitrary tensors fields, called as d-tensors.

\subsubsection{Locally anisotropic metrics on (co) tangent Lorentz bundles}

We can introduce canonical metric structures determined by MDRs and respective Lagrange and/or Hamilton effective functions using the so-called Sasaki lifts \cite{yano73} from a base manifold to the total (co) tangent bundles. Using formulas (\ref{hessls}) and (\ref{hesshs}), Theorem \ref{thcnc} (on existence of canonical N-connections ) and Proposition \ref{prcnadapb} (on canonical geometric data for Lagrange-Hamilton spaces) and by construction, we prove the

\begin{theorem}
\textsf{[existence of canonical d-metrics completely defined by MDR and generating Lagrange-Hamilton functions]} There are canonical d-metric structures $\widetilde{\mathbf{g}}$ and $\ ^{\shortmid }\widetilde{\mathbf{g}}$ completely determined by a MDR (\ref{mdrg}), respectively for data $(\widetilde{L},\ \ \widetilde{\mathbf{N}};\widetilde{\mathbf{e}}_{\alpha },%
\widetilde{\mathbf{e}}^{\alpha };\widetilde{g}_{jk},\widetilde{g}^{jk})$ and/or $(\widetilde{H},\ ^{\shortmid }\widetilde{\mathbf{N}};\ ^{\shortmid }\widetilde{\mathbf{e}}_{\alpha },\ ^{\shortmid }\widetilde{\mathbf{e}}^{\alpha };\ \ ^{\shortmid }\widetilde{g}^{ab},\ \ ^{\shortmid }\widetilde{g}_{ab}),$ metrics (d-metrics)%
\begin{eqnarray}
\widetilde{\mathbf{g}} &=&\widetilde{\mathbf{g}}_{\alpha \beta }(x,y)%
\widetilde{\mathbf{e}}^{\alpha }\mathbf{\otimes }\widetilde{\mathbf{e}}%
^{\beta }=\widetilde{g}_{ij}(x,y)e^{i}\otimes e^{j}+\widetilde{g}_{ab}(x,y)%
\widetilde{\mathbf{e}}^{a}\otimes \widetilde{\mathbf{e}}^{a}\mbox{
and/or }  \label{cdms} \\
\ ^{\shortmid }\widetilde{\mathbf{g}} &=&\ ^{\shortmid }\widetilde{\mathbf{g}%
}_{\alpha \beta }(x,p)\ ^{\shortmid }\widetilde{\mathbf{e}}^{\alpha }\mathbf{%
\otimes \ ^{\shortmid }}\widetilde{\mathbf{e}}^{\beta }=\ \ ^{\shortmid }%
\widetilde{g}_{ij}(x,p)e^{i}\otimes e^{j}+\ ^{\shortmid }\widetilde{g}%
^{ab}(x,p)\ ^{\shortmid }\widetilde{\mathbf{e}}_{a}\otimes \ ^{\shortmid }%
\widetilde{\mathbf{e}}_{b}.  \label{cdmds}
\end{eqnarray}
\end{theorem}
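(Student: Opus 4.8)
The plan is to prove this by explicit construction (the abstract geometric / Sasaki-lift method), assembling the two d-metrics from ingredients that the previous results have already shown to be canonically fixed by a given MDR (\ref{mdrg}). First I would collect those ingredients. By Proposition \ref{prophs} the MDR determines the vertical Hessians $\widetilde{g}_{ab}(x,y)$ (\ref{hessls}) on $TV$ and $\ ^{\shortmid}\widetilde{g}^{ab}(x,p)$ (\ref{hesshs}) on $T^{\ast}V$, together with the nonlinear quadratic elements (\ref{nqe}) and (\ref{nqed}); by Theorem \ref{thcnc} it determines the canonical N-connection coefficients $\widetilde{N}_{i}^{a}$ and $\ ^{\shortmid}\widetilde{N}_{ia}$; and by Proposition \ref{prcnadapb} it determines the associated N-adapted (co)frames (\ref{cnddapb}) and (\ref{ccnadap}). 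Since none of these steps introduces data beyond the choice of MDR, whatever metric I build out of them will automatically be ``completely determined by a MDR''.

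Second, I would carry out the Sasaki-type lift of \cite{yano73}. On $TV$ the horizontal block $\widetilde{g}_{jk}(x,y)$ is obtained by transporting the Hessian across the canonical identification of fiber and base indices recorded in the index conventions (so that $\widetilde{g}_{jk}$ and $\widetilde{g}_{ab}$ share the same components), the vertical block is the Hessian itself, and the two blocks are glued by the dual N-adapted coframe $\widetilde{\mathbf{e}}^{\alpha}=(e^{i}=dx^{i},\ \widetilde{\mathbf{e}}^{a}=dy^{a}+\widetilde{N}_{i}^{a}dx^{i})$ to yield (\ref{cdms}). The cotangent expression (\ref{cdmds}) is built dually, using the inverse Hessian $\ ^{\shortmid}\widetilde{g}^{ab}$ on the co-fiber and the coframe $\ ^{\shortmid}\widetilde{\mathbf{e}}_{a}=dp_{a}+\ ^{\shortmid}\widetilde{N}_{ia}dx^{i}$. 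The crucial structural point is that, because everything is written in N-adapted coframes, the resulting tensor is \emph{automatically} block-diagonal with respect to the Whitney splitting (\ref{ncon}); i.e. there are no mixed $e^{i}\otimes\widetilde{\mathbf{e}}^{a}$ terms and the lift is a genuine d-metric by construction.

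Third, I would verify that (\ref{cdms}) and (\ref{cdmds}) really are pseudo-Riemannian metrics of the declared signature. Non-degeneracy and constancy of signature of each block follow from the regularity hypothesis $\det|\widetilde{g}_{ab}|\neq 0$ with fiber signature $(+,+,+,-)$; because the N-connection gives a true direct-sum splitting, the horizontal and vertical contributions cannot interfere, so the total $8$-d signature is $(+,+,+,-,+,+,+,-)$ as required. As a consistency check I would confirm the correct limit: for vanishing indicator $\varpi=0$ in (\ref{mdrg}) the Hessians reduce to $\eta_{ab}$, the canonical N-connection degenerates, and (\ref{cdms}), (\ref{cdmds}) collapse to the background quadratic elements (\ref{lqe}), (\ref{lqed}). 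Finally, the $\mathcal{L}$-duality of the tangent and cotangent constructions is inherited from the Legendre transform (\ref{invlegendre}) relating $\widetilde{L}$ and $\widetilde{H}$, under which Hessians, N-connections, frames, and hence d-metrics correspond.

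I expect the main obstacle to be not the assembly but controlling nondegeneracy and signature of the full lifted metric in the \emph{generic} Lagrange-Hamilton regime, where (unlike the Finsler case of Example \ref{rfgenf}) the generating functions need not be positively homogeneous and the Hessian varies over the fibers. One must argue that the pointwise constant-signature assumption propagates to the whole total space and, moreover, is preserved under precisely the admissible frame/coordinate and Legendre transforms that link the tilde to the non-tilde and the $TV$ to the $T^{\ast}V$ descriptions, so that the canonical d-metric is well defined independently of the chart in which the lift is performed.
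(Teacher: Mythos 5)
Your proposal is correct and follows essentially the same route as the paper: the paper's proof is precisely the ``by construction'' assembly of the Sasaki lift \cite{yano73} from the Hessians (\ref{hessls}), (\ref{hesshs}), the canonical N-connections of Theorem \ref{thcnc}, and the canonical N-adapted frames of Proposition \ref{prcnadapb}. Your additional verifications (block-diagonality with respect to the Whitney splitting, non-degeneracy and signature, the $\varpi =0$ limit, and the Legendre-transform duality) simply make explicit what the paper leaves implicit in its regularity and constant-signature assumptions on the generating functions $\widetilde{L}$ and $\widetilde{H}$.
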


One follows:
\begin{corollary}
By frame transforms, the canonical d-metric structures (\ref{cdms}) and (\ref%
{cdmds}) [with tildes] can be written, respectively, in general d-metric
forms (\ref{dmt}) and (\ref{dmct}) [without tildes].
\end{corollary}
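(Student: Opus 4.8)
The plan is to exhibit an explicit N-adapted frame transform carrying the canonical tetradic data into a general one and then to invoke the tensorial (frame-independent) character of the d-metric. Since $\widetilde{\mathbf{g}}$ in (\ref{cdms}) is a genuine symmetric nondegenerate second-rank d-tensor on $\mathbf{TV}$ of signature $(+,+,+,-;+,+,+,-)$, its value as a geometric object does not depend on the frame in which its coefficients are displayed; only the coefficients and the cobasis change. I would therefore look for nondegenerate vielbein coefficients $e^{\alpha}_{\ \alpha'}(x,y)$, with inverse $e_{\alpha}^{\ \alpha'}$, relating the canonical N-adapted cobasis $\widetilde{\mathbf{e}}^{\alpha}$ of Proposition \ref{prcnadapb} to a general N-adapted cobasis $\mathbf{e}^{\alpha'}$ of Lemma \ref{lemadap}, by $\widetilde{\mathbf{e}}^{\alpha}=e^{\alpha}_{\ \alpha'}\,\mathbf{e}^{\alpha'}$. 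Substituting this into (\ref{cdms}) and collecting coefficients immediately yields
\[
\widetilde{\mathbf{g}}=\widetilde{\mathbf{g}}_{\alpha\beta}\,e^{\alpha}_{\ \alpha'}e^{\beta}_{\ \beta'}\,\mathbf{e}^{\alpha'}\otimes\mathbf{e}^{\beta'}=:\mathbf{g}_{\alpha'\beta'}\,\mathbf{e}^{\alpha'}\otimes\mathbf{e}^{\beta'},
\]
which is exactly the general form (\ref{dmt}) with $\mathbf{g}_{\alpha'\beta'}:=e^{\alpha}_{\ \alpha'}e^{\beta}_{\ \beta'}\widetilde{\mathbf{g}}_{\alpha\beta}$. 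The cotangent case (\ref{cdmds})$\to$(\ref{dmct}) is handled by the dual ($\mathcal{L}$-dual) argument, transforming $\ ^{\shortmid}\widetilde{\mathbf{e}}_{a}$ and the cv-block $\ ^{\shortmid}\widetilde{g}^{ab}$ by the corresponding $\ ^{\shortmid}e^{\alpha}_{\ \alpha'}(x,p)$.

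The content of the statement is thus reduced to exhibiting such a transform that respects the N-connection splitting and, in particular, reproduces the specific feature of (\ref{dmt})--(\ref{dmct}) that the h-block $g_{ij}(x)$ depends on the base coordinates alone. I would choose the vielbein block-diagonally, with an h-block $e^{i}_{\ i'}(x,y)$ and a v-block $e^{a}_{\ a'}(x,y)$ acting separately on the distributions $hTV$ and $vTV$, and solve the pointwise algebraic congruence relations
\[
g_{i'j'}(x)=e^{i}_{\ i'}\,e^{j}_{\ j'}\,\widetilde{g}_{ij}(x,y),\qquad \mathbf{g}_{a'b'}(x,y)=e^{a}_{\ a'}\,e^{b}_{\ b'}\,\widetilde{g}_{ab}(x,y).
\]
For the v-block there is nothing to prove, as one may take $e^{a}_{\ a'}=\delta^{a}_{a'}$ and rename $\widetilde{g}_{ab}\to\mathbf{g}_{ab}$; for the h-block, the existence of a smooth $e^{i}_{\ i'}(x,y)$ follows because $\widetilde{g}_{ij}(x,y)$ (the Hessian v-metric (\ref{hessls}) lifted by the Sasaki construction) and the prescribed base metric $g_{i'j'}(x)$ are both symmetric, nondegenerate and of the same signature $(3,1)$, so at every point they are congruent and a square-root / polar-decomposition (Gram--Schmidt) construction produces a transition matrix depending smoothly on $(x,y)$.

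Finally, I would verify compatibility with the N-connection structure. Under the frame transform the canonical coefficients $\widetilde{N}_{i}^{a}$ of Theorem \ref{thcnc} go over into general coefficients $N_{i}^{a}$ through the elongation terms in (\ref{nadapb}); since in the non-canonical (tilde-free) parameterization the N-connection coefficients are themselves free data, one simply reads off the required $N_{i}^{a}$ so that $\mathbf{e}^{a'}=dy^{a'}+N_{i}^{a'}dx^{i}$ is the image of $\widetilde{\mathbf{e}}^{a}$ under the chosen vielbein, and likewise $\ ^{\shortmid}N_{ia}$ on $T^{\ast}V$. The hard part is precisely this simultaneous adjustment: the h- and v-vielbein blocks together with the new N-connection coefficients must be chosen consistently so that the transformed cobasis is again N-adapted in the sense of (\ref{nadapb}). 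Because the block structure is preserved by the construction and the $N$'s are unconstrained in the general setting, this system is always solvable, and the frame-independent equality $\widetilde{\mathbf{g}}=\mathbf{g}$ of geometric objects then gives the claim. $\square$
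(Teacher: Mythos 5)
Your proposal is correct and supplies precisely the argument that the paper leaves unstated: in the paper this Corollary carries no proof at all --- it is introduced by ``One follows:'' and is immediately used to motivate Assumption \ref{assumpt3} --- the implicit justification being the frame-transform (tensoriality) law for d-metrics that you invoke at the outset, together with the general vierbein transforms discussed just afterwards, which lead to the off-diagonal parameterization (\ref{offd}). What your write-up adds is the explicit construction: pointwise congruence of the two symmetric, nondegenerate, equal-signature h-blocks, realized by a smooth (local) Gram--Schmidt transition matrix $e^{i}_{\ i'}(x,y)$, the identity choice on the v-block, and the bookkeeping showing that the N-connection coefficients may be kept equal to the canonical $\widetilde{N}^{a}_{i}$. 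One point you should phrase more carefully: after the $y$-dependent rotation of the h-block, the coframe $e^{i'}=e^{i'}_{\ i}(x,y)\,dx^{i}$ is horizontal but anholonomic, hence not literally of the form $e^{i}=dx^{i}$ demanded by (\ref{nadapb}); your transformed cobasis is ``distinguished'' (block-preserving) rather than N-adapted in the strict sense of Lemma \ref{lemadap}. This is not a defect of your argument so much as a looseness inherited from the paper itself: if the coordinate coframe $dx^{i}$ were held fixed, comparing coefficients as in (\ref{offd}) would force $g_{ij}(x)=\widetilde{g}_{ij}(x,y)$, which is impossible when the canonical h-block genuinely depends on $y$; so the Corollary only holds in the paper's sense of d-metric forms identified up to block-preserving frame transforms --- exactly the sense in which your proof succeeds.
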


This Corollary motivates the Assumption \ref{assumpt3} on metric properties
of (co) tangent Lorentz bundles endowed with MDRs.

There are also possible general vierbein transforms $e_{\alpha }=e_{\ \alpha
}^{\underline{\alpha }}(u)\partial /\partial u^{\underline{\alpha }}$ and $%
e^{\beta }=e_{\ \underline{\beta }}^{\beta }(u)du^{\underline{\beta }}$,
where the local coordinate indices are underlined in order to distinguish
them from arbitrary abstract ones. In such formulas, the matrix $e_{\
\underline{\beta }}^{\beta }$ is inverse to $e_{\ \alpha }^{\underline{%
\alpha }}$ for orthonormalized bases. \ For Hamilton like configurations,
one writes $\ ^{\shortmid }e_{\alpha }=\ ^{\shortmid }e_{\ \alpha }^{%
\underline{\alpha }}(\ ^{\shortmid }u)\partial /\partial \ ^{\shortmid }u^{%
\underline{\alpha }}$ and $\ ^{\shortmid }e^{\beta }=\ ^{\shortmid }e_{\
\underline{\beta }}^{\beta }(\ ^{\shortmid }u)d\ ^{\shortmid }u^{\underline{%
\beta }}.$ It should be noted that there are not used boldface symbols for
such transforms because an arbitrary decomposition (for instance, one can be
considered as particular cases certain diadic 2+2+2+2 splitting) is not
adapted to a N--connection structure. Introducing formulas of type (\ref%
{cnddapb}) and (\ref{ccnadap}), respectively, into (\ref{dmt}) and (\ref%
{dmct}) and regrouping with respect to local coordinate bases, one proves

\begin{corollary}
\textsf{[equivalent re-writing of d-metrics as off-diagonal metrics]} With
respect to local coordinate frames, any d--metric structures on $\mathbf{TV}$
and/or $\mathbf{T}^{\ast }\mathbf{V,}$%
\begin{eqnarray*}
\mathbf{g} &=&\mathbf{g}_{\alpha \beta }(x,y)\mathbf{e}^{\alpha }\mathbf{%
\otimes e}^{\beta }=g_{\underline{\alpha }\underline{\beta }}(x,y)du^{%
\underline{\alpha }}\mathbf{\otimes }du^{\underline{\beta }}\mbox{
and/or } \\
\ ^{\shortmid }\mathbf{g} &=&\ ^{\shortmid }\mathbf{g}_{\alpha \beta }(x,p)\
^{\shortmid }\mathbf{e}^{\alpha }\mathbf{\otimes \ ^{\shortmid }e}^{\beta
}=\ ^{\shortmid }g_{\underline{\alpha }\underline{\beta }}(x,p)d\
^{\shortmid }u^{\underline{\alpha }}\mathbf{\otimes }d\ ^{\shortmid }u^{%
\underline{\beta }},
\end{eqnarray*}%
can be parameterized via frame transforms, $\mathbf{g}_{\alpha \beta }=e_{\
\alpha }^{\underline{\alpha }}e_{\ \beta }^{\underline{\beta }}g_{\underline{%
\alpha }\underline{\beta }}$ and $\ ^{\shortmid }\mathbf{g}_{\alpha \beta
}=\ ^{\shortmid }e_{\ \alpha }^{\underline{\alpha }}\ ^{\shortmid }e_{\
\beta }^{\underline{\beta }}\ ^{\shortmid }g_{\underline{\alpha }\underline{%
\beta }},$ in respective off-diagonal forms:
\begin{eqnarray}
g_{\underline{\alpha }\underline{\beta }} &=&\left[
\begin{array}{cc}
g_{ij}(x)+g_{ab}(x,y)N_{i}^{a}(x,y)N_{j}^{b}(x,y) & g_{ae}(x,y)N_{j}^{e}(x,y)
\\
g_{be}(x,y)N_{i}^{e}(x,y) & g_{ab}(x,y)%
\end{array}%
\right] \mbox{
and/or }  \notag \\
\ ^{\shortmid }g_{\underline{\alpha }\underline{\beta }} &=&\left[
\begin{array}{cc}
\ ^{\shortmid }g_{ij}(x)+\ ^{\shortmid }g^{ab}(x,p)\ ^{\shortmid
}N_{ia}(x,p)\ ^{\shortmid }N_{jb}(x,p) & \ ^{\shortmid }g^{ae}\ ^{\shortmid
}N_{je}(x,p) \\
\ ^{\shortmid }g^{be}\ ^{\shortmid }N_{ie}(x,p) & \ ^{\shortmid
}g^{ab}(x,p)\
\end{array}%
\right] .  \label{offd}
\end{eqnarray}
\end{corollary}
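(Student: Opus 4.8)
The plan is to prove the claim by direct substitution of the N--adapted coframes of Lemma \ref{lemadap} into the d--metrics (\ref{dmt}) and (\ref{dmct}), followed by regrouping of the resulting tensor products with respect to the coordinate coframe $du^{\underline{\alpha}}=(dx^i,dy^a)$ on $\mathbf{TV}$ (respectively $d\ ^{\shortmid}u^{\underline{\alpha}}=(dx^i,dp_a)$ on $\mathbf{T}^{\ast}\mathbf{V}$). The only nontrivial ingredient is that the vertical coframe is N--elongated, $\mathbf{e}^a=dy^a+N_i^a(x,y)dx^i$ (respectively $\ ^{\shortmid}\mathbf{e}_a=dp_a+\ ^{\shortmid}N_{ia}(x,p)dx^i$), whereas the horizontal part $e^i=dx^i$ is already holonomic; hence the passage from N--adapted to coordinate coefficients is a purely algebraic expansion with no analytic content.

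For the tangent bundle I would substitute $\mathbf{e}^a=dy^a+N_i^a dx^i$ into $\mathbf{g}=g_{ij}e^i\otimes e^j+g_{ab}\mathbf{e}^a\otimes\mathbf{e}^b$ and expand the vertical tensor product into the four pieces proportional to $dx^i\otimes dx^j$, $dx^i\otimes dy^b$, $dy^a\otimes dx^j$, and $dy^a\otimes dy^b$. Collecting coefficients gives the horizontal block $g_{ij}+g_{ab}N_i^aN_j^b$, the mixed blocks $g_{ae}N_j^e$ and $g_{be}N_i^e$, and the vertical block $g_{ab}$, which is precisely the first matrix in (\ref{offd}). To recast this as a frame transform $\mathbf{g}_{\alpha\beta}=e_{\ \alpha}^{\underline{\alpha}}e_{\ \beta}^{\underline{\beta}}g_{\underline{\alpha}\underline{\beta}}$, I would read off the nontrivial vierbein components directly from (\ref{nadapb}), namely $e_{\ i}^{\underline{j}}=\delta_i^j$, $e_{\ i}^{\underline{a}}=-N_i^a$, and $e_{\ b}^{\underline{a}}=\delta_b^a$; the off-diagonal parameterization and the frame-transform identity are then two readings of the same computation, since the coframe transform $\mathbf{e}^\alpha=e_{\ \underline{\alpha}}^\alpha du^{\underline{\alpha}}$ and its inverse $e_{\ \alpha}^{\underline{\alpha}}$ are explicit and mutually inverse.

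For the cotangent bundle the argument is identical mutatis mutandis: substituting $\ ^{\shortmid}\mathbf{e}_a=dp_a+\ ^{\shortmid}N_{ia}dx^i$ from (\ref{cnadap}) into (\ref{dmct}) and regrouping with respect to $(dx^i,dp_a)$ produces the second matrix in (\ref{offd}), with the cv--metric $\ ^{\shortmid}g^{ab}$ playing the role of the fiber metric and the lower-index coefficients $\ ^{\shortmid}N_{ia}$ those of the N--connection. I do not expect a genuine obstacle here; the one point requiring care is the index bookkeeping on $\mathbf{T}^{\ast}\mathbf{V}$, where the fiber metric carries upper indices $\ ^{\shortmid}g^{ab}$ while the N--coefficients carry lower ones $\ ^{\shortmid}N_{ia}$, and where one must pair N--adapted and coordinate coframes correctly so that the transform matrices $e_{\ \alpha}^{\underline{\alpha}}$ and their inverses $e_{\ \underline{\beta}}^{\beta}$ are not interchanged. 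Since every step is an expansion of tensor products of one-forms, no hypothesis beyond those already guaranteeing the existence of the d--metrics (\ref{dmt})--(\ref{dmct}) is invoked, and both matrix forms in (\ref{offd}) follow by inspection.
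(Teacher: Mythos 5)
Your proposal is correct and follows essentially the same route as the paper, which proves the corollary precisely by introducing the N-adapted (co)frame formulas of type (\ref{nadapb}) and (\ref{cnadap}) into the d-metrics (\ref{dmt}) and (\ref{dmct}) and regrouping with respect to local coordinate bases. Your additional explicit reading of the vierbein coefficients $e_{\ i}^{\underline{j}}=\delta_i^j$, $e_{\ i}^{\underline{a}}=-N_i^a$, $e_{\ b}^{\underline{a}}=\delta_b^a$ is a correct, slightly more detailed bookkeeping of the same purely algebraic expansion.
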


Parameterizations of type (\ref{offd}) are considered, for instance, in the
Kaluza--Klein theory. Such metrics are generic off-diagonal if the
corresponding N-adapted structure is not integrable (see footnote \ref%
{fnanhc}). For MDR-generalizations of the Einstein gravity, we can consider
that the h-metrics $g_{ij}(x)=\ ^{\shortmid }g_{ij}(x)$ are determined by a
solution of standard Einstein equations but the terms with $N $%
--coefficients are determined by solutions of certain generalized
gravitational field equations on noholonomic phase spaces. In general, such
solutions are not compactified on velocity/ momentum like coordinates, $%
y^{a} $ / $p_{a}$ like in standard Kaluza-Klein models. For decompositions
with respect to coordinate bases of canonical d-metrics (\ref{cdms}) and (%
\ref{cdmds}), we obtain coefficients of type (\ref{offd}) when (following a
formal notation procedure) the geometric objects are labeled with tilde and $%
\widetilde{g}_{ij}(x,y)\neq \ ^{\shortmid }\widetilde{g}_{ij}(x,p).$ In
result, we obtain

\begin{conclusion}
\textsf{[definition by MDRs of canonical frames and noncompactified
Kaluza-Klein metrics for phase spacetims with equivalent Lagrange-Hamilton
interpretation] } A MDR-structure determines on (co) tangent Lorentz bundles
nonholonomic generalized frame structures and equivalent d-metric and
off-diagonal metric structures with dependencies on velocity/ momentum type
coordinates. For respective sets of nonholonomic variables, such locally
anisotropic gravitational models admit analogous Lagrange and/or Hamilton
mechanics interpretation, or as a generalized (nonholonomic) Kaluza-Klein
theory without compactification on extra dimension (phase) coordinates.
\end{conclusion}

\begin{remark}
\textsf{[two general classes of phase spaces and MGTs encoding MDTs and
LIVs] } There are two general classes of MGTs (on tangent and/or cotangent
bundles) constructed with total metric structures determined by MDRs. In
this work, we shall study with priority theories on cotangent Lorentz
bundles $\mathbf{T}^{\ast }\mathbf{V}$ but also provide and compare
important formulas for tangent Lorentz bundles (on $\mathbf{TV}$ and
conventional fibered nonholonomic manifolds, such theories were studied in
Refs. \cite{vepjc14,gvvepjc14} and \cite%
{castro05,castro11,castro12,castro14,castro16}.
\end{remark}

The type of MGT on a nonholonomic (co) tangent bundle, or nonholonomic
manifold, depend on the type of metric and nonlinear and linear connection
structures (see below) are involved for such constructions.

\begin{remark}
\textsf{[frame transforms, nonholonomic frame transforms and equivalence of
canonical and noncanonical d-metric structures] } If we fix a metric
structure of type $\ \ ^{\shortmid }\widetilde{\mathbf{g}}$ (\ref{cdmds}),
we can elaborate equivalent models with $\ ^{\shortmid }\mathbf{g}$ (\ref%
{dmct}) determined by certain classes of frame transforms. Inversely,
prescribing a d-metric $\ ^{\shortmid }\mathbf{g,}$ we can define
nonholonomic variables when this metric structure can be represented as a $\
^{\shortmid }\widetilde{\mathbf{g}}.$ In such a model, $\ ^{\shortmid }%
\mathbf{g=}\ ^{\shortmid }\widetilde{\mathbf{g}}.$ Nevertheless, we can
elaborate on bi-metric (and even multi-metric theories) if we consider that $%
\ ^{\shortmid }\widetilde{\mathbf{g}}$ and $\ ^{\shortmid }\mathbf{g}$ are
related via certain generalized transforms considered, for instance, in
bi-metric, bi-gravity, and/or massive gravity \cite%
{vijgmmp14,vepjc14,vepjc14a,gvvepjc14,gvvcqg15}.
\end{remark}

Instead of (canonical) metric structures, we can define on nonholonomic (co)
tangent bundles equivalent (canonical) almost symplectic structures (in both
cases, the constructions are determined by a MDR (\ref{mdrg})) as in next
subsection.

\subsection{Almost K\"{a}hler Lagrange--Hamilton structures and MDRs}

Spacetime models encoding MDRs and LIVs and formulated as almost K\"{a}hler geometries for \textsf{relativistic} Lagrange-Hamilton configurations were studied in \cite{vmon02,avjmp09}. Further developments were elaborated for almost symplectic (algebroid, commutative and noncommutative) models of deformation/ geometric quantization and/or geometric flow theories
\cite{vjmp07,vpla08,vijgmmp09,vjgp10,vjmp13,vmjm15,vch2416,tayebi10} and references therein. Fundamental ideas on almost K\"{a}hler realisation of Finsler and Lagrange geometry were proposed in \cite{matsumoto66,matsumoto86,kern74}. In nonrelativistic form, K. Matsumoto and J. Kern results were applied in geometric mechanics by a series of Romanian authors whose works were summarized in monographs \cite{miron94,miron00} (see footnote \ref{fn01a} on ethical and political issues related to
publication of those books).

\subsubsection{Canonical almost complex structures and Neijenhuis fields}

Fundamental (generating) Lagrange-Hamilton functions and MDRs determine canonical models of almost K\"{a}hler geometry. Such nonholonomic variables can be introduced in classical and quantum MGTs on (co) tangent bundles. One holds:

\begin{proposition}
\textsf{[existence of canonical almost complex structures for Lagrange-Hamilton spaces] } MDRs (\ref{mdrg}) determining canonical
N--connections $\widetilde{\mathbf{N}}$ and $\ ^{\shortmid }\widetilde{\mathbf{N}}$ following conditions of  Theorem \ref{thcnc} define respectively canonical almost complex structures $\widetilde{\mathbf{J}}\mathbf{,}$ on $\mathbf{TV},$ and
$\ ^{\shortmid }\widetilde{\mathbf{J}},$ on $\mathbf{T}^{\ast }\mathbf{V}.$
\end{proposition}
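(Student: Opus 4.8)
The plan is to build $\widetilde{\mathbf{J}}$ and $\ ^{\shortmid}\widetilde{\mathbf{J}}$ directly from the canonical $h$-$v$ and $h$-$cv$ splittings of Theorem \ref{thcnc}, using the adapted (co)frames of Proposition \ref{prcnadapb}, and then to verify the single defining property of an almost complex structure, namely that its square equals minus the identity. First I would work on $\mathbf{TV}$. Because the canonical N-connection $\widetilde{\mathbf{N}}$ makes the horizontal and vertical distributions equidimensional, and because the fiber coordinates $y^{a}$ are contravariant, the vertical frame $e_{a}=\partial/\partial y^{a}$ transforms under base-coordinate changes exactly like the horizontal frame $\widetilde{\mathbf{e}}_{i}$ (both covariantly). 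I therefore define $\widetilde{\mathbf{J}}$ by its action on the frame (\ref{cnddapb}),
\begin{equation*}
\widetilde{\mathbf{J}}(\widetilde{\mathbf{e}}_{i})=-e_{a}\delta_{i}^{a},\qquad \widetilde{\mathbf{J}}(e_{a})=\widetilde{\mathbf{e}}_{i}\delta_{a}^{i},
\end{equation*}
where $\delta_{i}^{a}$ is the canonical identification of $h$- and $v$-indices fixed in the conventions above. A one-line computation on the frame then gives $\widetilde{\mathbf{J}}^{2}(\widetilde{\mathbf{e}}_{i})=-\widetilde{\mathbf{e}}_{i}$ and $\widetilde{\mathbf{J}}^{2}(e_{a})=-e_{a}$, i.e. $\widetilde{\mathbf{J}}^{2}=-\mathrm{Id}$.

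The $\mathbf{T}^{\ast}\mathbf{V}$ case is the genuinely non-trivial step and I expect it to be the main obstacle. Here the fiber coordinates $p_{a}$ are covariant, so the vertical frame $\ ^{\shortmid}e^{a}=\partial/\partial p_{a}$ transforms contravariantly, that is oppositely to the horizontal frame $\ ^{\shortmid}\widetilde{\mathbf{e}}_{i}$; a bare index-swap therefore fails to produce a globally defined tensor. To repair this I would use the canonical d-metric $\ ^{\shortmid}\widetilde{\mathbf{g}}$ (\ref{cdmds}) --- itself determined by the same MDR --- to match the two distributions, setting
\begin{equation*}
\ ^{\shortmid}\widetilde{\mathbf{J}}(\ ^{\shortmid}\widetilde{\mathbf{e}}_{i})=\ ^{\shortmid}\widetilde{g}_{ia}\ ^{\shortmid}e^{a},\qquad \ ^{\shortmid}\widetilde{\mathbf{J}}(\ ^{\shortmid}e^{a})=-\ ^{\shortmid}\widetilde{g}^{ai}\ ^{\shortmid}\widetilde{\mathbf{e}}_{i},
\end{equation*}
with $\ ^{\shortmid}\widetilde{g}_{ab}$ the inverse of the cv-Hessian (\ref{hesshs}) and the indices read through the $h$-$cv$ identification. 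Since $\ ^{\shortmid}\widetilde{g}_{ia}$ carries two covariant indices while $\ ^{\shortmid}e^{a}$ is contravariant, the image of the covariant $\ ^{\shortmid}\widetilde{\mathbf{e}}_{i}$ again transforms covariantly, so this assignment is chart-independent; and $\ ^{\shortmid}\widetilde{\mathbf{J}}^{2}=-\mathrm{Id}$ follows at once from $\ ^{\shortmid}\widetilde{g}_{ia}\ ^{\shortmid}\widetilde{g}^{aj}=\delta_{i}^{j}$.

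Finally I would confirm that both assignments glue into global d-tensor fields. The decisive check is that the adapted (co)frames of Proposition \ref{prcnadapb} transform tensorially under admissible coordinate and N-adapted frame changes, the inhomogeneous pieces in $dy^{a}$ and $dp_{a}$ cancelling against the N-connection transformation terms; granted this, $\widetilde{\mathbf{J}}$ and $\ ^{\shortmid}\widetilde{\mathbf{J}}$ are independent of the chart. I expect the essential difficulty to be exactly this bookkeeping on $\mathbf{T}^{\ast}\mathbf{V}$: verifying that the metric-twisted swap is the unique canonical choice compatible with both the N-splitting and the tensorial transformation law, and that it is consistent whether one starts from $\widetilde{H}$ or, via the Legendre $\mathcal{L}$-duality, from $\widetilde{L}$. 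The structures are then \emph{canonical} because every ingredient --- $\widetilde{\mathbf{N}}$ and $\ ^{\shortmid}\widetilde{\mathbf{N}}$, the adapted frames, and $\ ^{\shortmid}\widetilde{\mathbf{g}}$ --- is itself canonically fixed by the MDR (\ref{mdrg}); on $\mathbf{TV}$ this recovers the classical Matsumoto--Kern almost-Kähler realization cited earlier.
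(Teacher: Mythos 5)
Your proof is correct, and on $\mathbf{TV}$ it is essentially the paper's own argument: the paper also defines $\widetilde{\mathbf{J}}$ by the bare index swap $\widetilde{\mathbf{J}}(\widetilde{\mathbf{e}}_{i})=-e_{n+i}$, $\widetilde{\mathbf{J}}(e_{n+i})=\widetilde{\mathbf{e}}_{i}$ on the canonical frame (\ref{cnddapb}) and verifies $\widetilde{\mathbf{J}}\circ \widetilde{\mathbf{J}}=-\mathbf{I}$. Where you genuinely diverge is on $\mathbf{T}^{\ast }\mathbf{V}$: the paper's proof applies the \emph{same} bare swap there, $\ ^{\shortmid }\widetilde{\mathbf{J}}(\ ^{\shortmid }\mathbf{e}_{i})=-\ ^{\shortmid }e^{n+i}$ and $\ ^{\shortmid }\widetilde{\mathbf{J}}(\ ^{\shortmid }e^{n+i})=\ ^{\shortmid }\mathbf{e}_{i}$, whereas you twist the swap by the cv-metric. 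Your reason for doing so is well founded: under a base coordinate change $x^{i}\rightarrow x^{i'}$ one has $\ ^{\shortmid }\widetilde{\mathbf{e}}_{i'}=\frac{\partial x^{i}}{\partial x^{i'}}\ ^{\shortmid }\widetilde{\mathbf{e}}_{i}$ but $\ ^{\shortmid }e^{a'}=\frac{\partial x^{a'}}{\partial x^{a}}\ ^{\shortmid }e^{a}$ (since $p_{a}$ is covariant), so the bare-swapped image of $\ ^{\shortmid }\widetilde{\mathbf{e}}_{i'}$ acquires the factor $\sum_{j}\frac{\partial x^{j}}{\partial x^{i'}}\frac{\partial x^{j}}{\partial x^{b'}}$, which is a Kronecker delta only for orthogonal Jacobians; hence the chart-wise formula does not glue to a global $(1,1)$ tensor, in contrast to the $\mathbf{TV}$ case where a Jacobian contracts against an inverse Jacobian. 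Your metric-twisted operator, $\ ^{\shortmid }\widetilde{\mathbf{J}}(\ ^{\shortmid }\widetilde{\mathbf{e}}_{i})=\ ^{\shortmid }\widetilde{g}_{ia}\ ^{\shortmid }e^{a}$, $\ ^{\shortmid }\widetilde{\mathbf{J}}(\ ^{\shortmid }e^{a})=-\ ^{\shortmid }\widetilde{g}^{ai}\ ^{\shortmid }\widetilde{\mathbf{e}}_{i}$, is (up to an overall sign) exactly the d-tensor the paper itself writes later in Proposition \ref{propctf}, $\ ^{\shortmid }\mathbf{J}=-\ ^{\shortmid }g_{ia}\ ^{\shortmid }e^{a}\otimes \ ^{\shortmid }e^{i}+\ ^{\shortmid }g^{ia}\ ^{\shortmid }\mathbf{e}_{i}\otimes \ ^{\shortmid }\mathbf{e}_{a}$, and is the standard choice in Hamilton geometry. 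What your route buys is genuine chart-independence of the cv-half of the construction, at the harmless price of invoking the Hessian (\ref{hesshs}) — which is canonically fixed by the same MDR, so canonicity is not weakened; conversely, the paper's stated cv-formula should be read as implicitly composing the index identification "$n+i$" with this metric soldering, since taken literally it is only frame-dependent.
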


\begin{proof}
Let us introduce the linear operator $\widetilde{\mathbf{J}}$ acting on $%
\widetilde{\mathbf{e}}_{\alpha }=(\widetilde{\mathbf{e}}_{i},e_{b})$ (\ref%
{cnddapb}) a as follows: $\widetilde{\mathbf{J}}(\mathbf{e}_{i})=-\widetilde{%
\mathbf{e}}_{n+i}$ and $\widetilde{\mathbf{J}}(e_{n+i})=\widetilde{\mathbf{e}%
}_{i}$. This operator defines globally an almost complex structure ($%
\widetilde{\mathbf{J}}\mathbf{\circ \widetilde{\mathbf{J}}=-I}$ for $\mathbf{%
I}$ being the unity matrix) on $\mathbf{TV}$ completely determined for
Lagrange spaces by a $\widetilde{L}(x,y).$

On $\mathbf{T}^{\ast }\mathbf{V,}$ we can consider a linear operator $\
^{\shortmid }\widetilde{\mathbf{J}}$ acting on$\ ^{\shortmid }\mathbf{e}%
_{\alpha }=(\ ^{\shortmid }\mathbf{e}_{i},\ ^{\shortmid }e^{b})$ (\ref%
{ccnadap}) following formulas $\ ^{\shortmid }\widetilde{\mathbf{J}}(\
^{\shortmid }\mathbf{e}_{i})=-\ ^{\shortmid }e^{n+i}$ and $\ ^{\shortmid }%
\widetilde{\mathbf{J}}(\ ^{\shortmid }e^{n+i})=\ ^{\shortmid }\mathbf{e}_{i}$%
. Then $\ \ ^{\shortmid }\widetilde{\mathbf{J}}$ defines globally an almost
complex structure ( $\ \ ^{\shortmid }\widetilde{\mathbf{J}}\mathbf{\circ \ }%
\ ^{\shortmid }\widetilde{\mathbf{J}}=$ $-\mathbf{\ I}$ for $\mathbf{I}$
being the unity matrix) on $\mathbf{T}^{\ast }\mathbf{V}$ completely
determined for Hamilton spaces by a $\widetilde{H}(x,p).$
\end{proof}

$\square $ \vskip5pt

We note that $\widetilde{\mathbf{J}}$ and $\ ^{\shortmid }\widetilde{\mathbf{J}}$ are standard almost complex structures only for the Euclidean signatures, respectively, on $\mathbf{TV}$ and $\mathbf{T}^{\ast }\mathbf{V}$. Considering arbitrary frame/coordinate transforms, we can omit tildes
and write $\mathbf{J}$ and $\ ^{\shortmid }\mathbf{J.}$ Inversely, we can state certain almost complex nonholonomic variables prescribing a MDR inducing respective canonical structures.

\begin{definition}
\textsf{[curvatures of canonical N-connections and almost complex structures]%
} The canonical Neijenhuis tensor fields determined by MDRs for respective
canonical almost complex structures $\widetilde{\mathbf{J}}$ on $\mathbf{TV}$
and/or $\ ^{\shortmid }\widetilde{\mathbf{J}}$ on $\mathbf{T}^{\ast }\mathbf{%
V},$ are considered as curvatures of respective N--connections
\begin{eqnarray}
\widetilde{\mathbf{\Omega }}\mathbf{(}\widetilde{\mathbf{X}}\mathbf{,}%
\widetilde{\mathbf{Y}})&:=&\mathbf{\ -[\widetilde{\mathbf{X}}\mathbf{,}%
\widetilde{\mathbf{Y}}]+[\widetilde{\mathbf{J}}\widetilde{\mathbf{X}},%
\widetilde{\mathbf{J}}\widetilde{\mathbf{Y}}]-\widetilde{\mathbf{J}}[%
\widetilde{\mathbf{J}}\widetilde{\mathbf{X}},\widetilde{\mathbf{Y}}]-%
\widetilde{\mathbf{J}}[\widetilde{\mathbf{X}},\widetilde{\mathbf{J}}%
\widetilde{\mathbf{Y}}]}\mbox{ and/or }  \notag \\
\ ^{\shortmid }\widetilde{\mathbf{\Omega }}(\ ^{\shortmid }\widetilde{%
\mathbf{X}}\mathbf{,}\ ^{\shortmid }\widetilde{\mathbf{Y}}\mathbf{)} &:=&%
\mathbf{\ -[\ ^{\shortmid }\widetilde{\mathbf{X}}\mathbf{,}\ ^{\shortmid }%
\widetilde{\mathbf{Y}}]+[\ ^{\shortmid }\widetilde{\mathbf{J}}\ ^{\shortmid }%
\widetilde{\mathbf{X}},\ ^{\shortmid }\widetilde{\mathbf{J}}\ ^{\shortmid }%
\widetilde{\mathbf{Y}}]-\ ^{\shortmid }\widetilde{\mathbf{J}}[\ ^{\shortmid }%
\widetilde{\mathbf{J}}\ ^{\shortmid }\widetilde{\mathbf{X}},\ ^{\shortmid }%
\widetilde{\mathbf{Y}}]-\ ^{\shortmid }\widetilde{\mathbf{J}}[\ ^{\shortmid }%
\widetilde{\mathbf{X}},\ ^{\shortmid }\widetilde{\mathbf{J}}\ ^{\shortmid }%
\widetilde{\mathbf{Y}}],}  \label{neijt}
\end{eqnarray}%
for any d--vectors $\mathbf{X,}$ $\mathbf{Y}$ and $\ ^{\shortmid }\mathbf{%
X,\ ^{\shortmid }Y.}$
\end{definition}

For arbitrary N--connection splitting, the formulas (\ref{neijt}) can be
written in general form without tilde values. Applying the left label "$%
\mathbf{\ ^{\shortmid }",}$ we can rewrite geometric formulas on $\mathbf{TV%
}$ into respective ones on $\mathbf{T}^{\ast }\mathbf{V}$ (if necessary, for
$\mathcal{L}$--dual values and with, or not "tildes"). One follows:

\begin{corollary}
In local coordinate form, a N--connection on $\mathbf{TV,}$ or $\mathbf{T}%
^{\ast }\mathbf{V,}$ is characterized by such coefficients of Neijenhuis
tensors (\ref{neijt}), i.e. N--connection curvature,
\begin{equation}
\Omega _{ij}^{a} =\frac{\partial N_{i}^{a}}{\partial x^{j}}-\frac{\partial
N_{j}^{a}}{\partial x^{i}}+N_{i}^{b}\frac{\partial N_{j}^{a}}{\partial y^{b}}%
-N_{j}^{b}\frac{\partial N_{i}^{a}}{\partial y^{b}},\mbox{\ or\ } \ \
\mathbf{\ ^{\shortmid }}\Omega _{ija} = \frac{\partial \mathbf{\ ^{\shortmid
}}N_{ia}}{\partial x^{j}}-\frac{\partial \mathbf{\ ^{\shortmid }}N_{ja}}{%
\partial x^{i}}+\ \mathbf{^{\shortmid }}N_{ib}\frac{\partial \mathbf{\
^{\shortmid }}N_{ja}}{\partial p_{b}}-\mathbf{\ ^{\shortmid }}N_{jb}\frac{%
\partial \mathbf{\ ^{\shortmid }}N_{ia}}{\partial p_{b}}.  \label{neijtc}
\end{equation}
\end{corollary}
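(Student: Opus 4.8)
The plan is to reduce the coordinate-free Neijenhuis tensors (\ref{neijt}) to a single Lie-bracket computation and then expand that bracket explicitly in the coordinate frame. First I would evaluate $\widetilde{\mathbf{\Omega}}$ on the horizontal N-adapted vectors $\mathbf{e}_i,\mathbf{e}_j$ of (\ref{nadapb}). Since $\widetilde{\mathbf{J}}\mathbf{e}_i=-e_{n+i}=-\partial /\partial y^{n+i}$ and $\widetilde{\mathbf{J}}\mathbf{e}_j=-e_{n+j}$ are pure vertical coordinate vector fields, they commute and the term $[\widetilde{\mathbf{J}}\mathbf{e}_i,\widetilde{\mathbf{J}}\mathbf{e}_j]$ drops out, while the two terms carrying an outer $\widetilde{\mathbf{J}}$ are mapped by $\widetilde{\mathbf{J}}$ into the horizontal distribution. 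Hence the vertical component of the Neijenhuis tensor -- which is exactly the obstruction to integrability of the horizontal distribution, i.e. the curvature of the N--connection -- is carried entirely by the pure horizontal bracket $[\mathbf{e}_i,\mathbf{e}_j]$.

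The key step is the explicit evaluation of $[\mathbf{e}_i,\mathbf{e}_j]$ with $\mathbf{e}_i=\partial /\partial x^{i}-N_i^{a}\partial /\partial y^{a}$. Expanding by the Leibniz rule, the second-order operators cancel: the $\partial ^{2}/\partial x^{i}\partial x^{j}$ pieces cancel, the cross terms $N_j^{b}\,\partial ^{2}/\partial x^{i}\partial y^{b}$ cancel against their $i\leftrightarrow j$ images, and the $N_i^{a}N_j^{b}\,\partial ^{2}/\partial y^{a}\partial y^{b}$ pieces cancel by symmetry. Only four first-order terms survive, giving
\begin{equation*}
[\mathbf{e}_i,\mathbf{e}_j]=\left( \frac{\partial N_i^{a}}{\partial x^{j}}-\frac{\partial N_j^{a}}{\partial x^{i}}+N_i^{b}\frac{\partial N_j^{a}}{\partial y^{b}}-N_j^{b}\frac{\partial N_i^{a}}{\partial y^{b}}\right) e_{a}=\Omega _{ij}^{a}e_{a},
\end{equation*}
which reads off the coefficients $\Omega _{ij}^{a}$ and is equivalently $\Omega _{ij}^{a}=\mathbf{e}_j(N_i^{a})-\mathbf{e}_i(N_j^{a})$, in agreement with footnote \ref{fnanhc}.

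For the cotangent case I would run the identical argument for $\ ^{\shortmid }\widetilde{\mathbf{\Omega}}$ using $\ ^{\shortmid }\mathbf{e}_i=\partial /\partial x^{i}-\ ^{\shortmid }N_{ia}\partial /\partial p_{a}$ from (\ref{cnadap}); the only change is that the fiber derivatives become $\partial /\partial p_{a}$, so the same cancellations leave $[\ ^{\shortmid }\mathbf{e}_i,\ ^{\shortmid }\mathbf{e}_j]=\ ^{\shortmid }\Omega _{ija}\ ^{\shortmid }e^{a}$ with exactly the stated coefficients. Finally, since none of these manipulations uses the canonical (tilde) origin of the frames beyond their structural shape (\ref{nadapb})--(\ref{cnadap}), dropping the tildes immediately yields the formulas for an arbitrary N--connection splitting, as asserted.

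I expect no genuine obstacle here: the content is the routine bracket expansion together with the bookkeeping needed to justify that the N--connection curvature is precisely the vertical projection of (\ref{neijt}). The one point deserving care is keeping the h- and v-parts separate -- this is what makes the outer $\widetilde{\mathbf{J}}$ terms irrelevant to the vertical coefficients and lets the single bracket $[\mathbf{e}_i,\mathbf{e}_j]$ produce the claimed $\Omega _{ij}^{a}$.
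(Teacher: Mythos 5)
Your proposal is correct and coincides with the paper's own (largely implicit) argument: the corollary is stated there as an immediate consequence of the definition (\ref{neijt}) by direct N-adapted coordinate computation, and your evaluation of the Neijenhuis tensor on the horizontal frames (the term $[\widetilde{\mathbf{J}}\mathbf{e}_{i},\widetilde{\mathbf{J}}\mathbf{e}_{j}]$ vanishing, the outer-$\widetilde{\mathbf{J}}$ terms being horizontal) followed by the bracket expansion $[\mathbf{e}_{i},\mathbf{e}_{j}]=\Omega_{ij}^{a}e_{a}$, with its h--cv analogue, is exactly that computation, consistent with footnote \ref{fnanhc} and Consequence \ref{anhr}. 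The one point worth flagging is the overall sign: with the convention of (\ref{neijt}) the vertical part of $\widetilde{\mathbf{\Omega}}(\mathbf{e}_{i},\mathbf{e}_{j})$ equals $-[\mathbf{e}_{i},\mathbf{e}_{j}]=-\Omega_{ij}^{a}e_{a}$, so the Neijenhuis coefficients reproduce the stated curvature coefficients only up to sign---a convention the paper itself leaves loose.
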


Some almost complex structures $\mathbf{J}$ and $\ ^{\shortmid }\mathbf{J}$
transform into standard complex structures for Euclidean signatures if $%
\mathbf{\Omega }=0$ and/or $\ ^{\shortmid }\mathbf{\Omega }=0.$

\begin{remark}
For almost complex structures determined by MDRs, formulas (\ref{neijtc})
can be written (using frame transforms) in respective canonical forms with
"tilde" values determined by $\widetilde{\mathbf{N}}=\{\widetilde{N}%
_{j}^{b}\}$ and $\ ^{\shortmid }\widetilde{\mathbf{N}}=\{\mathbf{\
^{\shortmid }}\widetilde{N}_{ia}\}.$
\end{remark}

Using Proposition \ref{prcnadapb} (see also the footnote \ref{fnanhc} on
anholonomic frames and N--connections), by straightforward N-adapted
calculus using formulas $\widetilde{\mathbf{e}}_{\alpha }=(\widetilde{%
\mathbf{e}}_{i},e_{b})$ (\ref{cnddapb}), $\ ^{\shortmid }\widetilde{\mathbf{e%
}}_{\alpha }=(\ ^{\shortmid }\widetilde{\mathbf{e}}_{i},\ ^{\shortmid
}e^{b}) $ (\ref{ccnadap}) and (\ref{neijtc}), we prove

\begin{consequence}
\textsf{[existence of MDR-induced canonic anholonomic frame structures ]} %
\label{anhr}MDRs and respective canonical N-connections induce canonical
nonholonomic frame structures on $\mathbf{TV}$ and/or $\mathbf{T}^{\ast }%
\mathbf{V}$ characterized by corresponding anholonomy relations
\begin{equation}
\lbrack \widetilde{\mathbf{e}}_{\alpha },\widetilde{\mathbf{e}}_{\beta }]=%
\widetilde{\mathbf{e}}_{\alpha }\widetilde{\mathbf{e}}_{\beta }-\widetilde{%
\mathbf{e}}_{\beta }\widetilde{\mathbf{e}}_{\alpha }=\widetilde{W}_{\alpha
\beta }^{\gamma }\widetilde{\mathbf{e}}_{\gamma }  \label{anhrelc}
\end{equation}%
with (antisymmetric) anholonomy coefficients $\widetilde{W}%
_{ia}^{b}=\partial _{a}\widetilde{N}_{i}^{b}$ and $\widetilde{W}_{ji}^{a}=%
\widetilde{\Omega }_{ij}^{a},$ and
\begin{equation}
\lbrack \ ^{\shortmid }\widetilde{\mathbf{e}}_{\alpha },\ ^{\shortmid }%
\widetilde{\mathbf{e}}_{\beta }]=\ ^{\shortmid }\widetilde{\mathbf{e}}%
_{\alpha }\ ^{\shortmid }\widetilde{\mathbf{e}}_{\beta }-\ ^{\shortmid }%
\widetilde{\mathbf{e}}_{\beta }\ ^{\shortmid }\widetilde{\mathbf{e}}_{\alpha
}=\ ^{\shortmid }\widetilde{W}_{\alpha \beta }^{\gamma }\ ^{\shortmid }%
\widetilde{\mathbf{e}}_{\gamma }  \label{anhrelcd}
\end{equation}%
with anholonomy coefficients $\ ^{\shortmid }\widetilde{W}_{ib}^{a}=\partial
\ ^{\shortmid }\widetilde{N}_{ib}/\partial p_{a}$ and $\ ^{\shortmid }%
\widetilde{W}_{jia}=\ \mathbf{\ ^{\shortmid }}\widetilde{\Omega }_{ija}.$
\end{consequence}

We note that we obtain holonomic (integrable) frame configurations if
respective anholonomy coefficients (\ref{anhrelc}) and/or (\ref{anhrelcd})
are zero.

\begin{remark}
\textsf{[generic off-diagonal metric structures canonically induced by MDRs ]%
} Canonical d-metric structures $\ \widetilde{\mathbf{g}}$ (\ref{cdms}) and $%
\ ^{\shortmid }\widetilde{\mathbf{g}}$ (\ref{cdmds}) are described by
generic off--diagonal metrics (\ref{offd}) if respective anholonomy
coefficients (\ref{anhrelc}) and (\ref{anhrelcd}) are not trivial. This
means that MDRs (\ref{mdrg}) generate off-diagonal metric structures on $%
\mathbf{TV}$ and $\mathbf{T}^{\ast }\mathbf{V}$ if certain special
conditions for integrability of respective frame structures and
diagonalization (on some finite, or infinite phase space regions) are not
imposed. It is necessary to elaborate more advanced and sophisticate
geometric and numeric methods for constructing exact, parametric and
approximate solutions with generic off-diagonal metrics and generalized
connections (for instance, BH, or cosmological type) in MGTs with MDRs.
\end{remark}

\subsubsection{Canonical almost symplectic structures determined by MDRs}

The geometry of N-connections and d-metric structures determined by MDRs and LIVs on nonholonomic (co) tangent Lorentz bundles  can be described equivalently in terms of canonical almost symplectic variables. For relativistic generalizations and models of geometric flows and deformation and A-brane complexified quantization of Einstein-Lagrange-Hamilton and generalized Finsler spaces, such constructions were elaborated in a series of our works
\cite{vjmp07,vpla08,vijgmmp09,vjgp10,bvnd11,vjmp13,vmjm15,vch2416,vmon02,avjmp09}; on preliminary geometric ideas see
\cite{matsumoto66,matsumoto86,kern74,etayo05}.

\begin{definition}
Almost symplectic structures on $\mathbf{TV}$ and $\mathbf{T}^{\ast }\mathbf{%
V}$ are defined by respective nondegenerate N-adapted 2--forms
\begin{equation*}
\ \theta =\frac{1}{2}\ \theta _{\alpha \beta }(u)\ \mathbf{e}^{\alpha
}\wedge \mathbf{e}^{\beta }\mbox{ and }\ \ ^{\shortmid }\theta =\frac{1}{2}\
\ ^{\shortmid }\theta _{\alpha \beta }(u)\ \ ^{\shortmid }\mathbf{e}^{\alpha
}\wedge \ ^{\shortmid }\mathbf{e}^{\beta }.
\end{equation*}
\end{definition}

One holds the following
\begin{proposition}
\label{pr01}For any $\mathbf{\theta }$ and $\ ^{\shortmid }\mathbf{\theta }$
on respective tangent and cotangent Lorentz bundle, there are unique
N--connections $\mathbf{N}=\{N_{j}^{b}\}$ and $\ ^{\shortmid }\mathbf{N}=\{%
\mathbf{\ ^{\shortmid }}N_{ia}\}$ satisfying the conditions:%
\begin{eqnarray}
\ \theta &=&(h\ \mathbf{X},v\mathbf{Y})=0,\mbox{ when }\ \theta =h\theta
+v\theta, \mbox{ and }  \notag \\
\ \ ^{\shortmid }\theta &=&(h\ ^{\shortmid }\mathbf{X},cv\ ^{\shortmid }%
\mathbf{Y})=0,\mbox{ when }\ ^{\shortmid }\theta \doteq h\ ^{\shortmid
}\theta +cv\ ^{\shortmid }\theta ,  \label{aux02v}
\end{eqnarray}%
for any $\mathbf{X}=h\mathbf{X}+v\mathbf{X,Y}=h\mathbf{Y}+v\mathbf{Y}$ and $%
\ ^{\shortmid }\mathbf{X}=h\ ^{\shortmid }\mathbf{X}+cv\ ^{\shortmid }%
\mathbf{X,}\ \ ^{\shortmid }\mathbf{Y}=h\ ^{\shortmid }\mathbf{Y}+cv\
^{\shortmid }\mathbf{Y,}$ where
\begin{eqnarray*}
h\theta (\mathbf{X,Y}) &:=&\theta (h\mathbf{X,}h\mathbf{Y}),v\theta (\mathbf{%
X,Y}):=\theta (v\mathbf{X,}v\mathbf{Y});\mbox{ and }\  \\
h\ ^{\shortmid }\theta (\ ^{\shortmid }\mathbf{X,\ ^{\shortmid }Y}) &:=&\
^{\shortmid }\theta (h\ ^{\shortmid }\mathbf{X,}h\ ^{\shortmid }\mathbf{Y}%
),\ cv\ ^{\shortmid }\theta (\ ^{\shortmid }\mathbf{X,\ ^{\shortmid }Y}):=\
^{\shortmid }\theta (cv\ ^{\shortmid }\mathbf{X,}cv\ ^{\shortmid }\mathbf{Y}%
).
\end{eqnarray*}
\end{proposition}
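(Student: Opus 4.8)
The plan is to turn the coordinate-free compatibility conditions into a linear algebraic system for the N-connection coefficients and to solve it using the nondegeneracy of the almost symplectic forms. First I would pass to local coordinate (co)frames and expand $\theta=\frac{1}{2}\theta_{\underline{\alpha}\,\underline{\beta}}\,du^{\underline{\alpha}}\wedge du^{\underline{\beta}}$ into its horizontal, mixed and vertical blocks $\theta_{\underline{i}\,\underline{j}}$, $\theta_{\underline{i}\,\underline{a}}$, $\theta_{\underline{a}\,\underline{b}}$, and similarly for $\ ^{\shortmid}\theta$ on $\mathbf{T}^{\ast}\mathbf{V}$. Using the N--adapted frames $\mathbf{e}_{i}=\partial_{x^{i}}-N_{i}^{a}\partial_{y^{a}},\ e_{b}=\partial_{y^{b}}$ of Lemma \ref{lemadap}, I would then express the N--adapted components $\theta(\mathbf{e}_{\alpha},\mathbf{e}_{\beta})$ as explicit polynomials in the coordinate components and in the unknown $N_{i}^{a}$.

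Next I would impose the stated splitting condition $\theta(h\mathbf{X},v\mathbf{Y})=0$, which is equivalent to the vanishing of every mixed N--adapted component, i.e. $\theta(\mathbf{e}_{i},e_{a})=\theta_{\underline{i}\,\underline{a}}-N_{i}^{b}\,\theta_{\underline{b}\,\underline{a}}=0$. This is the linear system $N_{i}^{b}\,\theta_{\underline{b}\,\underline{a}}=\theta_{\underline{i}\,\underline{a}}$ whose matrix is the vertical block of $\theta$; provided this block is nondegenerate, the system has the unique solution $N_{i}^{a}=\theta_{\underline{i}\,\underline{b}}\,(\theta^{-1})^{\underline{b}\,\underline{a}}$, which is exactly the statement that $hTV=(vTV)^{\perp_{\theta}}$ is the $\theta$--orthogonal complement of the canonical vertical distribution. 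The cotangent case is treated identically: $\ ^{\shortmid}\theta(h\ ^{\shortmid}\mathbf{X},cv\ ^{\shortmid}\mathbf{Y})=0$ yields a linear system for $\ ^{\shortmid}N_{ia}$ governed by the co--vertical block of $\ ^{\shortmid}\theta$, whose unique solvability rests on the same nondegeneracy.

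Finally I would check that the functions $N_{i}^{a}$ (respectively $\ ^{\shortmid}N_{ia}$) so obtained obey the transformation law of N--connection coefficients under changes of fibred coordinates, so that they genuinely define a Whitney splitting (\ref{ncon}) and not merely a frame-dependent array of functions. This is immediate from the geometric description above: both defining data — the canonical vertical distribution and the $\theta$--orthogonality requirement — are coordinate-free, so the horizontal complement $(vTV)^{\perp_{\theta}}$ they single out is intrinsic, and its coefficients relative to coordinate frames automatically transform as an N--connection. Uniqueness then follows at once, since any N--connection meeting the condition must have its horizontal distribution equal to that same orthogonal complement.

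I expect the genuine obstacle to be the existence/invertibility step, not the uniqueness: one must guarantee that the relevant (co)vertical block of $\theta$ is nondegenerate on each fiber, equivalently that the canonical vertical distribution is a \emph{symplectic} subspace for $\theta$ rather than (co)isotropic, so that the orthogonal complement really is complementary. This property is not automatic from nondegeneracy of $\theta$ on the total space alone and must be extracted from the specific structure of the almost symplectic forms produced by the MDR--induced data of Theorem \ref{thcnc}; once it is secured, the linear solve, the smoothness of $N_{i}^{a}$, and the transformation check are all routine.
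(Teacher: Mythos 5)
Your proposal is correct and takes essentially the same route as the paper's own proof: both expand the compatibility condition in frames to obtain the linear system $N_{i}^{b}\,\theta_{\underline{b}\,\underline{a}}=\theta_{\underline{i}\,\underline{a}}$ (respectively $\ ^{\shortmid }N_{ib}\,\ ^{\shortmid }\theta (\partial /\partial p_{b},\partial /\partial p_{a})=\ ^{\shortmid }\theta (\partial /\partial x^{i},\partial /\partial p_{a})$ on $\mathbf{T}^{\ast }\mathbf{V}$) and solve it uniquely by inverting the (co)vertical block. Your closing caveat is in fact sharper than the paper's wording: the paper phrases the rank-4 condition on the vertical block as if it were synonymous with nondegeneracy of $\ ^{\shortmid }\theta $ itself, whereas, as you correctly observe, the two are distinct (for the canonical symplectic form $dp_{i}\wedge dx^{i}$ the vertical distribution is Lagrangian and that block vanishes), so nondegeneracy of the (co)vertical block is a genuine additional hypothesis on the class of almost symplectic forms considered.
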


\begin{proof}
Let us sketch the proof on $\mathbf{T}^{\ast }\mathbf{V}$ for $\ ^{\shortmid
}\theta $ (the constructions are similar on $\mathbf{TV}).$

For $\ ^{\shortmid }\mathbf{X=\ ^{\shortmid }\mathbf{e}_{\alpha }=}(\
^{\shortmid }\mathbf{e}_{i},\ ^{\shortmid }e^{a})$ and $\ \ ^{\shortmid }%
\mathbf{Y=\ ^{\shortmid }e}_{\beta }=(\ ^{\shortmid }\mathbf{e}_{j},\
^{\shortmid }e^{b}),$ as in (\ref{ccnadap}), where $\ \mathbf{\ ^{\shortmid }%
\mathbf{e}_{\alpha }}$ is a N--adapted basis\ of type (\ref{cnadap}), we
write the first equation in (\ref{aux02v}) in the form%
\begin{equation*}
\ ^{\shortmid }\theta =\ ^{\shortmid }\theta (\ ^{\shortmid }\mathbf{e}%
_{i},\ ^{\shortmid }e^{a})=\ ^{\shortmid }\theta (\frac{\partial }{\partial
x^{i}},\frac{\partial }{\partial p_{a}})-\ ^{\shortmid }N_{ib}\mathbf{\ }\
^{\shortmid }\theta (\frac{\partial }{\partial p_{b}},\frac{\partial }{%
\partial p_{a}})=0.
\end{equation*}%
These conditions uniquely define $\ ^{\shortmid }N_{ib}$ if $\ ^{\shortmid
}\theta $ is non--degenerate, i.e. $rank|\ ^{\shortmid }\theta (\frac{%
\partial }{\partial p_{b}},\frac{\partial }{\partial p_{a}})|=4.$ Setting
locally
\begin{equation}
\mathbf{\ }\ ^{\shortmid }\theta =\frac{1}{2}\mathbf{\ }\ ^{\shortmid
}\theta _{ij}(u)e^{i}\wedge e^{j}+\frac{1}{2}\mathbf{\ }\ ^{\shortmid
}\theta ^{ab}(u)\mathbf{\ ^{\shortmid }e}_{a}\wedge \mathbf{\ }\ ^{\shortmid
}\mathbf{e}_{b},  \label{aux03}
\end{equation}%
where the first term is for $h\ ^{\shortmid }\theta $ and the second term is
$cv\ ^{\shortmid }\theta ,$ we get the second formula in (\ref{aux02v}).
\end{proof}

$\square $\vskip5pt

In above Proposition, the constructed N--connections are not canonical ones
as in Theorem \ref{thcnc}. There are necessary additional nonholonomic frame
deformations/ transforms in order to generate almost symplectic structures
in canonical N-adapted forms:\ A N--connection $\ ^{\shortmid }\mathbf{N}$
defines a unique decomposition of a d--vector $\ ^{\shortmid }\mathbf{X=\ }%
X^{h}+\ ^{\shortmid }X^{cv}$ on $T^{\ast }\mathbf{V},$ for $\mathbf{\ }%
X^{h}=h\ ^{\shortmid }\mathbf{X}$ and $\mathbf{\ }^{\shortmid }X^{cv}=cv\ \
^{\shortmid }\mathbf{X},$ where the projectors $h$ and $cv$ defines
respectively the dual distribution $\ ^{\shortmid }\mathbf{N}$ on $\mathbf{V}
$. They have the properties $h+cv=\mathbf{I},h^{2}=h,(cv)^{2}=cv,h\circ
cv=cv\circ h=0.$ This allows us to construct on $T^{\ast }\mathbf{V}$ the
almost product operator $\ ^{\shortmid }\mathbf{P:=}I-2cv=2h-I$ acting on $\
^{\shortmid }\mathbf{e}_{\alpha }=(\ ^{\shortmid }\mathbf{e}_{i},\
^{\shortmid }e^{b})$ following formulas
\begin{equation*}
\ \mathbf{\ }^{\shortmid }\mathbf{P}(\ ^{\shortmid }\mathbf{e}_{i})=\
^{\shortmid }\mathbf{e}_{i}\mbox{\ and  } \ ^{\shortmid }\mathbf{P}(\
^{\shortmid }e^{b})=-\ \ ^{\shortmid }e^{b}.
\end{equation*}

In a similar form, a N--connection $\ \mathbf{N}$ induces an almost product
structure $\mathbf{P}$ on $T\mathbf{V}.$

In geometric and physical models, there are used also the almost tangent
(co) operators
\begin{eqnarray*}
\mathbb{J(}\mathbf{e}_{i}\mathbb{)} &=&e_{4+i}\mbox{\ and
\ }\ \mathbb{J}\left( e_{a}\right) =0,\mbox{ \ or \ }\mathbb{J=}\frac{%
\partial }{\partial y^{i}}\otimes dx^{i}; \\
\mathbf{\ }^{\shortmid }\mathbb{J}(\mathbf{\ }^{\shortmid }\mathbf{e}_{i}%
\mathbb{)} &=&\mathbf{\ }^{\shortmid }g_{ib}\mathbf{\ }^{\shortmid }e^{b}%
\mbox{\ and
\ }\ \mathbf{\ }^{\shortmid }\mathbb{J}\left( \mathbf{\ }^{\shortmid
}e^{b}\right) =0,\mbox{ \ or \ }\mathbf{\ }^{\shortmid }\mathbb{J=}\mathbf{\
}^{\shortmid }g_{ia}\frac{\partial }{\partial p_{a}}\otimes dx^{i}.
\end{eqnarray*}%
The operators $\ ^{\shortmid }\mathbf{P,}\ \mathbf{\ }^{\shortmid}\mathbf{J}$
and $\ ^{\shortmid }\mathbb{J}$ are respectively $\mathcal{L}$--dual to $\
\mathbf{P,}\ \mathbf{J}$ and $\ \mathbb{J}$ if and only if $\ \mathbf{\ }%
^{\shortmid }\mathbf{N}$ and $\ \mathbf{N}$ are $\mathcal{L}$--dual and
there are constructed respective (co) frame transforms to canonical values $[%
\mathbf{\ }^{\shortmid }\widetilde{\mathbf{P}}\mathbf{,}\ \mathbf{\ }%
^{\shortmid }\widetilde{\mathbf{J}}\mathbf{,\ }^{\shortmid }\widetilde{%
\mathbb{J}}]$ and $[\widetilde{\mathbf{P}}\mathbf{,}\widetilde{\mathbf{J}}%
\mathbf{,}\widetilde{\mathbb{J}}].$

For the above--introduced almost complex and almost product operators, we
can check by straightforward computations the properties:

\begin{proposition}
\label{propctf}Let $\left( \mathbf{N,\ \mathbf{\ }^{\shortmid }N}\right) $
be a pair of $\mathcal{L}$--dual N--connections. Then, we can construct
canonical d--tensor fields (defined respectively by $L(x,y)$ and $H(x,p)$
related by Legendre transforms, see Remark on (\ref{invlegendre})):%
\begin{equation*}
\mathbf{J}=-\delta _{i}^{a}e_{a}\otimes e^{i}+\delta _{a}^{i}\mathbf{e}%
_{i}\otimes \mathbf{e}^{a},\ \ \mathbf{\ }^{\shortmid }\mathbf{J}=-\
^{\shortmid }g_{ia}\ ^{\shortmid }e^{a}\otimes \ ^{\shortmid }e^{i}+\
^{\shortmid}g^{ia} \ ^{\shortmid }\mathbf{e}_{i}\otimes \ ^{\shortmid }%
\mathbf{e}_{a}
\end{equation*}%
corresponding to the $\mathcal{L}$--dual pair of almost complex structures $%
\left( \mathbf{J},\ ^{\shortmid }\mathbf{J}\right) ,$
\begin{equation*}
\mathbf{P}=\mathbf{e}_{i}\otimes e^{i}-e_{a}\otimes \mathbf{e}^{a},\ \
^{\shortmid }\mathbf{P}= \ ^{\shortmid } \mathbf{e}_{i}\otimes \ ^{\shortmid
}e^{i}-\ ^{\shortmid }e^{a}\otimes \ ^{\shortmid }\mathbf{e}_{a}
\end{equation*}%
corresponding to the $\mathcal{L}$--dual pair of almost product structures $%
\left( \mathbf{P},\ \ ^{\shortmid }\mathbf{P}\right)$, and almost symplectic
structures
\begin{equation}
\theta =g_{aj}(x,y)\mathbf{e}^{a}\wedge e^{i}\mbox{ and } \ ^{\shortmid
}\theta =\delta _{i}^{a}\ ^{\shortmid }\mathbf{e}_{a}\wedge \ ^{\shortmid
}e^{i}  \label{sympf}
\end{equation}
\end{proposition}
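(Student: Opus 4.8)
The plan is to verify each displayed expression by letting the operator act on the N--adapted (co)frame of Proposition \ref{prcnadapb}, reading off its d--tensor coefficients, and then confirming the defining algebraic identities; the whole argument is a chain of index computations glued together by the $\mathcal{L}$--duality associated with the Legendre transform (\ref{invlegendre}). First I would treat the tangent-bundle column. Applying the stated tensor $\mathbf{J}$ to a basis vector and using $e^{i}(\mathbf{e}_{j})=\delta_{j}^{i}$ and $\mathbf{e}^{a}(\mathbf{e}_{j})=0$ gives $\mathbf{J}(\mathbf{e}_{j})=-\delta_{j}^{a}e_{a}$ and $\mathbf{J}(e_{b})=\delta_{b}^{i}\mathbf{e}_{i}$; under the h--v index identification $a=n+i$ used throughout, these reproduce exactly the abstract rule $\mathbf{J}(\mathbf{e}_{i})=-e_{n+i}$, $\mathbf{J}(e_{n+i})=\mathbf{e}_{i}$ fixed in the Proposition on canonical almost complex structures, and a one-line composition gives $\mathbf{J}\circ\mathbf{J}=-\mathbf{I}$. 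The almost product operator is handled identically: the displayed $\mathbf{P}$ yields $\mathbf{P}(\mathbf{e}_{i})=\mathbf{e}_{i}$ and $\mathbf{P}(e_{a})=-e_{a}$, i.e. $\mathbf{P}=2h-\mathbf{I}$, whence $\mathbf{P}\circ\mathbf{P}=\mathbf{I}$.

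Next I would repeat these evaluations on $\mathbf{T}^{\ast}\mathbf{V}$, where the only new feature is that the canonical v--coframe carries the cv--metric. Acting with $\ ^{\shortmid}\mathbf{J}$ on $\ ^{\shortmid}\mathbf{e}_{i}$ and $\ ^{\shortmid}e^{a}$ and contracting with $\ ^{\shortmid}g_{ia}\ ^{\shortmid}g^{aj}=\delta_{i}^{j}$ recovers $\ ^{\shortmid}\mathbf{J}(\ ^{\shortmid}\mathbf{e}_{i})=-\ ^{\shortmid}e^{n+i}$ and $\ ^{\shortmid}\mathbf{J}(\ ^{\shortmid}e^{n+i})=\ ^{\shortmid}\mathbf{e}_{i}$, so that $\ ^{\shortmid}\mathbf{J}\circ\ ^{\shortmid}\mathbf{J}=-\mathbf{I}$; the product structure $\ ^{\shortmid}\mathbf{P}$ is immediate. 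For the $2$--forms I would first check nondegeneracy: $\theta=g_{aj}\mathbf{e}^{a}\wedge e^{i}$ is nondegenerate because $\det|g_{ab}|\neq0$ (regularity of the Hessian (\ref{hessls})), while $\ ^{\shortmid}\theta=\delta_{i}^{a}\ ^{\shortmid}\mathbf{e}_{a}\wedge\ ^{\shortmid}e^{i}$ reduces, after expanding $\ ^{\shortmid}\mathbf{e}_{a}=dp_{a}+\ ^{\shortmid}\widetilde{N}_{ja}dx^{j}$, to an N--adapted version of the canonical form $dp_{i}\wedge dx^{i}$, for which nondegeneracy is manifest. Evaluating $\theta$ on pairs of frame vectors shows that its purely horizontal and purely vertical parts vanish, which is precisely the compatibility condition of Proposition \ref{pr01}. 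The one structural identity worth isolating is the almost-Hermitian relation $\theta(\mathbf{X},\mathbf{Y})=\mathbf{g}(\mathbf{J}\mathbf{X},\mathbf{Y})$ tying $\mathbf{g}$, $\mathbf{J}$ and $\theta$ into a single almost K\"{a}hler triple; this again reduces to matching the metric coefficient on both sides over the basis.

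Finally I would establish the $\mathcal{L}$--duality between the two columns. Using the Legendre transform $H(x,p)=p_{a}y^{a}-L(x,y)$ together with the duality of the N--connections, I would track how a covariant v--index on $\mathbf{TV}$ is carried to a contravariant cv--index on $\mathbf{T}^{\ast}\mathbf{V}$: lowering and raising with $g_{ab}$ and $\ ^{\shortmid}g^{ab}$ intertwines the two families of operators, so that $\ ^{\shortmid}\mathbf{J},\ ^{\shortmid}\mathbf{P},\ ^{\shortmid}\theta$ are the Legendre images of $\mathbf{J},\mathbf{P},\theta$.

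The hard part will be exactly this last step. The tangent-side identities are genuinely ``straightforward'' index manipulations, but verifying that the Legendre map exchanges the two families consistently --- in particular that the metric contraction appearing in the cotangent formulas is the one dictated by the $\mathcal{L}$--duality of $\mathbf{N}$ and $\ ^{\shortmid}\mathbf{N}$ rather than some other contraction --- requires care about which indices are genuinely dual and which are merely raised. I would isolate this difficulty by first proving $\mathcal{L}$--duality at the level of the N--adapted frames themselves, namely that $\ ^{\shortmid}\mathbf{e}_{\alpha}$ is the Legendre image of $\mathbf{e}_{\alpha}$, and then propagating it through the bilinear expressions, after which every operator identity in the Proposition follows formally.
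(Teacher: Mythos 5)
Your proposal is correct and follows essentially the same route as the paper: the paper offers no written-out proof at all, merely prefacing the Proposition with the remark that the stated properties "can be checked by straightforward computations" using the previously introduced projectors $h$, $cv$, the abstract rules for $\mathbf{J}$, $\ ^{\shortmid }\mathbf{J}$, and the N-adapted frames of Proposition \ref{prcnadapb}. Your frame-by-frame evaluation (including the metric contractions $\ ^{\shortmid }g_{ia}\ ^{\shortmid }g^{aj}=\delta _{i}^{j}$ on the cotangent side and the Legendre-transform tracking of $\mathcal{L}$--duality) is precisely that computation carried out explicitly, so it fills in, rather than deviates from, the paper's intended argument.
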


The formulas in this Proposition can be re-written in canonical form by considering canonical N-adapted bases with tilde. For instance, one re-writes (using frame transforms) (\ref{sympf}) as
$\widetilde{\theta }=\widetilde{g}_{aj}(x,y)\widetilde{\mathbf{e}}^{a}\wedge e^{i}$ and
$\ ^{\shortmid }\widetilde{\theta }=\delta _{i}^{a} \ ^{\shortmid } \widetilde{\mathbf{e}}_{a}\wedge \ ^{\shortmid }e^{i}.$

For modeling of (co) tangent bundle N-connection and almost symplectic geometries on (co) tangent bundles with total dimension 8, we introduce the

\begin{definition}
\label{defaks}An almost Hermitian model of a tangent Lorentz bundle $T%
\mathbf{V}$\ (or a cotangent Lorentz bundle $T^{\ast }\mathbf{V}$) equipped
with a N--connection structure $\mathbf{N}$\ (or $\ ^{\shortmid }\mathbf{N})$
is defined by a triple $\mathbf{H}^{8}=(T\mathbf{V},\theta ,\mathbf{J}),$
where $\theta \mathbf{(X,Y)}:=\mathbf{g}\left( \mathbf{JX,Y}\right) $ (or by
a triple $\ ^{\shortmid }\mathbf{H}^{8}=(T^{\ast }\mathbf{V},\ ^{\shortmid
}\theta ,\ ^{\shortmid }\mathbf{J}),$ where $\ ^{\shortmid }\theta (\
^{\shortmid }\mathbf{X},\ ^{\shortmid }\mathbf{Y}):=\ ^{\shortmid }\mathbf{g}%
\left( \ ^{\shortmid }\mathbf{J}\ ^{\shortmid }\mathbf{X},\ ^{\shortmid }%
\mathbf{Y}\right) ).$ A space $\mathbf{H}^{8}$ (or $\ ^{\shortmid }\mathbf{H}%
^{8})$ is almost K\"{a}hler and denoted $\mathbf{K}^{8}$ if $d\ \theta =0$
(or $\ ^{\shortmid }\mathbf{K}^{8}$ if $d\ ^{\shortmid }\theta =0).$
\end{definition}

The following theorem holds:

\begin{theorem}
\textsf{[MDR-induced canonical almost K\"{a}hler-Lagrange/ - Hamilton spaces] } \label{thaklh}The Lagrange and Hamilton spaces (including those determined by MDRs (\ref{mdrg})) can be represented respectively as canonical almost K\"{a}hler spaces (called almost K\"{a}hler-Lagrange and almost K\"{a} hler-Hamilton) on $\mathbf{TV}$ and $\mathbf{T}^{\ast }\mathbf{V.}$
\end{theorem}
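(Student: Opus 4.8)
The plan is to show that the data produced by the earlier canonical constructions already assemble into an almost Hermitian model in the sense of Definition \ref{defaks}, and that the one nontrivial point—closure of the fundamental $2$-form—follows from the explicit structure of the canonical N-connections of Theorem \ref{thcnc}. First I would observe that the canonical d-metric $\widetilde{\mathbf{g}}$, the canonical almost complex structure $\widetilde{\mathbf{J}}$, and the $2$-form $\widetilde{\theta}$ of Proposition \ref{propctf} are all determined by one and the same MDR (\ref{mdrg}) through $\widetilde{L}$ (respectively $\widetilde{H}$). A direct N-adapted check, using the defining action of $\widetilde{\mathbf{J}}$ on the canonical frame (\ref{cnddapb}) together with the block form of $\widetilde{\mathbf{g}}$ in (\ref{cdms}), verifies the compatibility $\widetilde{\theta}(\mathbf{X},\mathbf{Y})=\widetilde{\mathbf{g}}(\widetilde{\mathbf{J}}\mathbf{X},\mathbf{Y})$ and reproduces the coefficient expression $\widetilde{\theta}=\widetilde{g}_{aj}\widetilde{\mathbf{e}}^{a}\wedge e^{i}$. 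Hence $(T\mathbf{V},\widetilde{\theta},\widetilde{\mathbf{J}})$ is an $\mathbf{H}^{8}$, and analogously $(T^{\ast}\mathbf{V},\ ^{\shortmid}\widetilde{\theta},\ ^{\shortmid}\widetilde{\mathbf{J}})$ is an $\ ^{\shortmid}\mathbf{H}^{8}$. The whole theorem then reduces to the single closure statement $d\widetilde{\theta}=0$ and $d\ ^{\shortmid}\widetilde{\theta}=0$.

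For the Hamilton representation I would exploit that the canonical N-connection of Theorem \ref{thcnc}, namely $\ ^{\shortmid}\widetilde{N}_{ij}$, is manifestly symmetric in the pair $(i,j)$. Expanding $\ ^{\shortmid}\widetilde{\theta}=\delta_{i}^{a}\ ^{\shortmid}\widetilde{\mathbf{e}}_{a}\wedge\ ^{\shortmid}e^{i}$ in the coordinate coframe via $\ ^{\shortmid}\widetilde{\mathbf{e}}_{a}=dp_{a}+\ ^{\shortmid}\widetilde{N}_{ja}dx^{j}$ and $\ ^{\shortmid}e^{i}=dx^{i}$ gives $\ ^{\shortmid}\widetilde{\theta}=dp_{i}\wedge dx^{i}+\ ^{\shortmid}\widetilde{N}_{ji}\,dx^{j}\wedge dx^{i}$ after the fibre-to-base identification $\delta_{i}^{a}$. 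The second term is the contraction of the symmetric tensor $\ ^{\shortmid}\widetilde{N}_{ji}$ against the antisymmetric $dx^{j}\wedge dx^{i}$ and therefore vanishes, so $\ ^{\shortmid}\widetilde{\theta}$ collapses to the canonical Liouville symplectic form $dp_{i}\wedge dx^{i}=d(p_{i}\,dx^{i})$. Being exact it is closed, which gives $\ ^{\shortmid}\mathbf{K}^{8}$ at once.

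For the Lagrange representation I would introduce the Poincar\'{e}--Cartan $1$-form $\widetilde{\omega}:=\tfrac{1}{2}(\partial\widetilde{L}/\partial y^{i})\,dx^{i}$ and prove the identity $\widetilde{\theta}=d\widetilde{\omega}$ by an N-adapted differential computation: expanding $d\widetilde{\omega}$ in the coordinate basis and regrouping with $\widetilde{N}_{i}^{a}=\partial\widetilde{G}^{a}/\partial y^{i}$ should reproduce exactly the coefficients of $\widetilde{\theta}=\widetilde{g}_{aj}\widetilde{\mathbf{e}}^{a}\wedge e^{i}$, the Hessian $\widetilde{g}_{ij}$ of (\ref{hessls}) arising from $\partial^{2}\widetilde{L}/\partial y^{i}\partial y^{j}$. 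Exactness then forces $d\widetilde{\theta}=dd\widetilde{\omega}=0$, giving $\mathbf{K}^{8}$; alternatively the same closure can be transported from the Hamilton side through the Legendre $\mathcal{L}$-duality of the Consequence following Proposition \ref{prophs}. The main obstacle is precisely this identity $\widetilde{\theta}=d\widetilde{\omega}$ (and its cotangent cancellation): it is the one place where the \emph{canonical} choice of N-connection is essential, since for a generic (non-canonical) $\mathbf{N}$ the form need not be closed, and where one must check that the argument survives the Lorentzian signature and the merely \emph{almost} complex—not integrable—character of $\widetilde{\mathbf{J}}$, so that no appeal to a genuine (integrable) K\"{a}hler identity is smuggled in.
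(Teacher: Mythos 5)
Your proposal is correct and follows essentially the same route as the paper: both reduce the theorem to exactness of the canonical 2-forms, the paper merely asserting $\widetilde{\theta }=d\widetilde{\omega }$ and $\ ^{\shortmid }\widetilde{\theta }=d(p_{i}dx^{i})$ for canonical 1-forms, while you supply the actual verification (the symmetry of $\ ^{\shortmid }\widetilde{N}_{ij}$ from Theorem \ref{thcnc} killing the horizontal term on $\mathbf{T}^{\ast }\mathbf{V}$, and the Poincar\'{e}--Cartan computation with the semi-spray N-connection on $\mathbf{TV}$). The only discrepancy is normalization: your $\widetilde{\omega }=\tfrac{1}{2}(\partial \widetilde{L}/\partial y^{i})\,dx^{i}$ is the choice consistent with the Hessian convention (\ref{hessls}) and with $\widetilde{\theta }=\widetilde{g}_{aj}\widetilde{\mathbf{e}}^{a}\wedge e^{j}$, whereas the paper writes $\widetilde{\omega }=(\partial L/\partial y^{i})\,e^{i}$ without the factor $\tfrac{1}{2}$.
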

\begin{proof}
It follows from the existence on on $\mathbf{TV}$ and $\mathbf{T}^{\ast }\mathbf{V}$ of canonical N-connections under conditions of Theorem (\ref{thcnc}) and 1--forms, respectively, defined by a regular Lagrangian $L$ and Hamiltonian $H$ (related by a Legendre transform), $\widetilde{\omega }=\frac{\partial L}{\partial y^{i}}e^{i}$ and $\ ^{\shortmid }\widetilde{\omega }=p_{i}dx^{i},$ for which $\widetilde{\theta }=d\widetilde{\omega }$ and $\ ^{\shortmid }\widetilde{\theta }=d\ ^{\shortmid }\widetilde{\omega }$.
As a result, we get that $d\theta =0$ and $d\ ^{\shortmid }\widetilde{\theta }=0$ corresponding to the Definition \ref{defaks}.
\end{proof}

$\square $\vskip5pt

There are almost K\"{a}hler models on (co) tangent bundles defined canonically by (pseudo) Riemannian metrics on base manifolds with nonholonomic deformations to MGTs modeled as (effective phase) regular Lagrange and/or Hamilton spaces and respective almost Hermitian/ K\"{a}hler spaces. In particular, we can consider almost K\"{a}hler-Finsler spaces  originally elaborated in \cite{matsumoto66,matsumoto86}.

\subsection{Lagrange-Hamilton connections and curvatures}

The geometry of a (pseudo) Riemannian spacetime $(V,\{g_{ij}(x)\})$ is
completely determined by its metric structure $\{g_{is}\}$. There is a
uniquely defined by $\{g_{is}\}$ linear connection called the Levi-Civita
(LC) connection, $\nabla $, which by definition is torsionless and metric
compatible. In GR, geodesic equations and auto parallel curves are described
by equivalent systems on PDEs. For nontrivial MDRs and/or LIVs, point like
probing particles and small perturbations of wave equations and scalar field
equations do not move along usual geodesics (as in GR, on Lorentz manifolds)
but follow certain curves described by nonlinear geodesic equations (\ref%
{ngeqf}). In non-relativistic form, there were developed certain approaches
related to Finsler geometry and semi-spray configurations \cite%
{shen01,shen01a,bao00}, where the priority was given to the Chern connection
for Finsler spaces. Such a connection is not compatible with the metric
structure on the total bundle. This creates a number of ambiguities related
to elaborating metric noncompatible Finsler gravity theories (including
definition of spinors, definition of compatible motion equations and
conservation laws), see explicit results, critics and discussions in Refs.
\cite{vplb10,vmon06,vrev08,vijgmmp12}.

The Theorem \ref{thcnc}, Proposition \ref{prcnadapb} and Consequence \ref%
{anhr} state explicitly that MDRs and LIVs define on (co) tangent Lorentz
bundle certain generic off-diagonal metric structures (\ref{offd}) and
related N--connection (\ref{ncon}) and anholonomic frame structures
characterized by respective Neijenhuis tensors (\ref{neijt}). Elaborating on
different types of Hamilton-Lagrange-Finsler models encoding (\ref{mdrg}),
we are not able to perform the constructions in N-adapted anholonomic form
if we work only with generalized (Finsler like) metrics determined by
nonlinear quadratic forms $L(x,y)$ (\ref{nqe}) and/or $H(x,p)$ (\ref{nqed}).
The geometry of nonholonomic (co) tangent Lorentz manifolds encodes new
types of fundamental geometric/ physical objects and more rich geometric
structures. Physically viable nonholonomic deformations of GR on $\mathbf{TV}
$ and $\mathbf{T}^{\ast }\mathbf{V}$ can be elaborated for locally
anisotropic gravity and matter field theories if there are used certain well
defined geometrically and physically motivated linear connection structures.

The goal of this subsection is to analyze which classes of linear
connections and respective covariant derivative operators can be generated
canonically by fundamental nonlinear quadratic forms and applied for
formulating classical and quantum gravity and matter filed theories.

\subsubsection{Distinguished connections, N-adapted distortions and curvatures}

Let $D$ be a linear connection on $\mathbf{TV}$ when a $\mathcal{L}$
--duality between the tangent and corresponding cotangent bundles can be
defined by pull--back and push--forward maps (we omit geometric details on
constructing such maps from/to base space to total space, considered, for
instance, in Ref \cite{avjmp09}). One defines a linear connection $\
^{\shortmid }D$ on $\mathbf{T}^{\ast }\mathbf{V}$ as follows: $\
^{\shortmid}D_{\ ^{\shortmid }\mathbf{X}}\ ^{\shortmid }\mathbf{Y}:=(D_{%
\mathbf{X}}\mathbf{Y})^{\ast }= \ ^{\shortmid}(D_{\mathbf{X}}\mathbf{Y}),$
for any vector fields $\ ^{\shortmid }\mathbf{X}$ and $\ ^{\shortmid }%
\mathbf{Y}$ on $\mathbf{T}^{\ast }\mathbf{V}.$ Inversely, considering a
linear connection $\ \ ^{\shortmid }D$ on $\mathbf{T}^{\ast }\mathbf{V},$ we
construct a linear connection $\ ^{\circ }D$ on $\mathbf{TV},$ following the
rule $\ ^{\circ}D_{\mathbf{X}}\mathbf{Y}:=(\ ^{\shortmid }D_{\ ^{\shortmid }%
\mathbf{X}}\ ^{\shortmid }\mathbf{Y})^{\circ },$ for any vector fields $%
\mathbf{X}$ and $\mathbf{Y}$ on $\mathbf{TV}$.

On (co) tangent bundles, we can elaborate a geometry of affine (linear)
connections and respective covariant derivatives in certain forms being (or
not) adapted to a N--connection structure.

\begin{definition}
\textsf{[d-connections as N-adapted linear connections] } A distinguished connection (d--connection) is a linear connection $\mathbf{D} $ on $\mathbf{TV}$ (or $\ ^{\shortmid }\mathbf{D}$ on $\mathbf{T}^{\ast }\mathbf{V})$ which is compatible with the almost product structure $\mathbf{DP}=0$ (or $\ ^{\shortmid }\mathbf{D\ ^{\shortmid }P}=0)$. In equivalent form, such a
d--connection is defined to preserve under parallelism a respective N--connection splitting (\ref{ncon}).
\end{definition}

For a MDR (\ref{mdrg}), it is possible to construct $\mathcal{L}$--dual
Lagrange and Hamilton spaces when $\mathbf{DP}=0$ induces $\ ^{\shortmid }%
\mathbf{D\ ^{\shortmid }P}=0,$ and inversely. In general, we can define and
study respective independent N- and/or P-structures on $\mathbf{TV},$ or $%
\mathbf{T}^{\ast }\mathbf{V}$.

The coefficients of d--connections can be defined in corresponding N-adapted
forms with respect to N--adapted frames (\ref{nadapb}) and (\ref{cnadap}),
\begin{equation*}
\mathbf{D}_{\mathbf{e}_{\beta }}\mathbf{e}_{\gamma }:= \mathbf{\Gamma }_{\
\beta \gamma }^{\alpha }\mathbf{e}_{\alpha }\mbox{ and }\mathbf{\
^{\shortmid }D}_{\mathbf{\ ^{\shortmid }e}_{\beta }}\ \mathbf{^{\shortmid }e}%
_{\gamma }:=\mathbf{\ ^{\shortmid }\Gamma }_{\ \beta \gamma }^{\alpha }%
\mathbf{\ ^{\shortmid }e}_{\alpha },
\end{equation*}%
where (for a h-v splitting)
\begin{equation*}
\mathbf{D}_{\mathbf{e}_{k}}\mathbf{e}_{j}:=L_{\ jk}^{i}\mathbf{e}_{i},%
\mathbf{D}_{\mathbf{e}_{k}}e_{b}:=\acute{L}_{\ bk}^{a}e_{a},\mathbf{D}%
_{e_{c}}\mathbf{e}_{j}:=\acute{C}_{\ jc}^{i}\mathbf{e}_{i},\mathbf{D}%
_{e_{c}}e_{b}:=C_{\ bc}^{a}e_{a}
\end{equation*}%
and (for a h-cv splitting)
\begin{equation*}
\ \ ^{\shortmid }\mathbf{D}_{\ ^{\shortmid }\mathbf{e}_{k}}\ ^{\shortmid }%
\mathbf{e}_{j}:=\ ^{\shortmid }L_{\ jk}^{i}\ ^{\shortmid }\mathbf{e}_{i},\
^{\shortmid }\mathbf{D}_{\mathbf{e}_{k}}\ ^{\shortmid }e^{b}:=-\ ^{\shortmid
}\acute{L}_{a\ k}^{\ b}\ ^{\shortmid }e^{a},\ ^{\shortmid }\mathbf{D}_{\
^{\shortmid }e^{c}}\ ^{\shortmid }\mathbf{e}_{j}:=\ ^{\shortmid }\acute{C}%
_{\ j}^{i\ c}\ ^{\shortmid }\mathbf{e}_{i},\ ^{\shortmid }\mathbf{D}_{\
^{\shortmid }e^{c}}\ ^{\shortmid }e^{b}:=-\ ^{\shortmid }C_{a}^{\ bc}\
^{\shortmid }e^{a}.
\end{equation*}%
So, the N-adapted coefficients of d-connections on (co) tangent Lorentz
bundles are respectively parameterized
\begin{equation*}
\mathbf{\Gamma }_{\ \beta \gamma }^{\alpha }=\{L_{\ jk}^{i},\acute{L}_{\
bk}^{a},\acute{C}_{\ jc}^{i},C_{\ bc}^{a}\}\mbox{ and }\ ^{\shortmid }%
\mathbf{\Gamma }_{\ \beta \gamma }^{\alpha }=\{\ ^{\shortmid }L_{\ jk}^{i},\
^{\shortmid }\acute{L}_{a\ k}^{\ b},\ ^{\shortmid }\acute{C}_{\ j}^{i\ c},\
^{\shortmid }C_{a}^{\ bc}\}.
\end{equation*}%
Such values can be used for computations of h-- and/or v--splitting,
cv-splitting, of covariant derivatives
\begin{equation*}
\mathbf{D=}\left( _{h}\mathbf{D,\ }_{v}\mathbf{D}\right) \mbox{
and/or }\ ^{\shortmid }\mathbf{D}=\left( \mathbf{\ }_{h}^{\shortmid }\mathbf{%
D,\ }_{v}^{\shortmid }\mathbf{D}\right) ,
\end{equation*}%
where $\ _{h}\mathbf{D}=\{L_{\ jk}^{i},\acute{L}_{\ bk}^{a}\},\ _{v}\mathbf{D%
}=\{\acute{C}_{\ jc}^{i},C_{\ bc}^{a}\}$ and $\ _{h}^{\shortmid }\mathbf{D}%
=\{\ ^{\shortmid }L_{\ jk}^{i},\ ^{\shortmid }\acute{L}_{a\ k}^{\ b}\},$ $\
_{v}^{\shortmid }\mathbf{D}=\{\ ^{\shortmid }\acute{C}_{\ j}^{i\ c},\
^{\shortmid }C_{a}^{\ bc}\}.$

\begin{lemma}
\textsf{[distortion of linear connection structures ]} \label{ldist}Let us
consider on $\mathbf{TV}$ a linear connection $\underline{D}$ (which is not
obligatory a d-connection) and a d-connection $\mathbf{D}$. Such values on $%
\mathbf{T}^{\ast }\mathbf{V,}$ are respectively denoted $\ ^{\shortmid }%
\underline{D}$ and $\ ^{\shortmid }\mathbf{D}$ and can be related by
corresponding distortion d-tensors%
\begin{equation}
\mathbf{Z:=D-}\underline{D}\mbox{ and/or }\ ^{\shortmid }\mathbf{Z:=\
^{\shortmid }D-}\ ^{\shortmid }\underline{D}.  \label{dist}
\end{equation}
\end{lemma}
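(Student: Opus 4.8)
The plan is to show that the objects $\mathbf{Z}$ and $\ ^{\shortmid}\mathbf{Z}$ defined in (\ref{dist}) are genuine tensor fields of type $(1,2)$ --- that is, that the non-tensorial (inhomogeneous, Leibniz) parts of the two connections being subtracted cancel --- and then to read off their N-adapted components with respect to the frames of Lemma \ref{lemadap} so as to exhibit them as d--tensors. First I would work on $\mathbf{TV}$ and verify the $C^{\infty}$-bilinearity of $\mathbf{Z}(\mathbf{X},\mathbf{Y}):=\mathbf{D}_{\mathbf{X}}\mathbf{Y}-\underline{D}_{\mathbf{X}}\mathbf{Y}$. Linearity in the first argument is immediate, since $\mathbf{D}_{f\mathbf{X}}=f\mathbf{D}_{\mathbf{X}}$ and $\underline{D}_{f\mathbf{X}}=f\underline{D}_{\mathbf{X}}$ for any smooth function $f$. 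For the second argument the Leibniz rule gives $\mathbf{D}_{\mathbf{X}}(f\mathbf{Y})=(\mathbf{X}f)\mathbf{Y}+f\mathbf{D}_{\mathbf{X}}\mathbf{Y}$ and the identical expansion for $\underline{D}$; on subtracting, the derivative terms $(\mathbf{X}f)\mathbf{Y}$ cancel and one is left with $\mathbf{Z}(\mathbf{X},f\mathbf{Y})=f\,\mathbf{Z}(\mathbf{X},\mathbf{Y})$. Hence $\mathbf{Z}$ depends only on the pointwise values of $\mathbf{X}$ and $\mathbf{Y}$ and is a well-defined tensor field.

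Next I would extract the N-adapted coefficients. Writing the d--connection coefficients as $\mathbf{D}_{\mathbf{e}_{\beta}}\mathbf{e}_{\gamma}=\mathbf{\Gamma}_{\ \beta\gamma}^{\alpha}\mathbf{e}_{\alpha}$ and, in the same N-adapted (co)frame (\ref{nadapb}), the coefficients of $\underline{D}$ as $\underline{D}_{\mathbf{e}_{\beta}}\mathbf{e}_{\gamma}=\underline{\Gamma}_{\ \beta\gamma}^{\alpha}\mathbf{e}_{\alpha}$, the difference is $\mathbf{Z}_{\ \beta\gamma}^{\alpha}=\mathbf{\Gamma}_{\ \beta\gamma}^{\alpha}-\underline{\Gamma}_{\ \beta\gamma}^{\alpha}$, so that $\mathbf{Z}=\mathbf{Z}_{\ \beta\gamma}^{\alpha}\,\mathbf{e}_{\alpha}\otimes\mathbf{e}^{\beta}\otimes\mathbf{e}^{\gamma}$ with h-- and v--blocks obtained by the usual $\alpha=(i,a)$ splitting of Definition \ref{defnc}. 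The point worth stressing, and the only place requiring care, is that this holds even though $\underline{D}$ is \emph{not} a d--connection: being a d--connection is itself a non-tensorial property (a choice of N-adapted horizontal/vertical derivatives preserving (\ref{ncon})), whereas tensoriality of the \emph{difference} of two linear connections is frame-independent, and every type-$(1,2)$ tensor automatically decomposes into N-adapted components. Thus $\mathbf{Z}$ is a d--tensor regardless of the adaptedness of $\underline{D}$.

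Finally, the cotangent statement follows by transporting the argument through the $\mathcal{L}$-duality (pull-back / push-forward) maps introduced just above: applying $\ ^{\shortmid}(\cdot)$ to $\mathbf{Z}=\mathbf{D}-\underline{D}$ and using $\ ^{\shortmid}D_{\ ^{\shortmid}\mathbf{X}}\ ^{\shortmid}\mathbf{Y}=\ ^{\shortmid}(D_{\mathbf{X}}\mathbf{Y})$ yields $\ ^{\shortmid}\mathbf{Z}=\ ^{\shortmid}\mathbf{D}-\ ^{\shortmid}\underline{D}$; alternatively one repeats the bilinearity computation verbatim for $\ ^{\shortmid}\mathbf{D}$, $\ ^{\shortmid}\underline{D}$ and the co-frame (\ref{cnadap}), obtaining $\ ^{\shortmid}\mathbf{Z}_{\ \beta\gamma}^{\alpha}=\ ^{\shortmid}\mathbf{\Gamma}_{\ \beta\gamma}^{\alpha}-\ ^{\shortmid}\underline{\Gamma}_{\ \beta\gamma}^{\alpha}$. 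I do not expect a genuine obstacle here, since this is the standard fact that a difference of affine connections is tensorial; the only bookkeeping to watch is keeping the h--, v-- and cv--index conventions consistent so that the distortion coefficients are correctly displayed in their four N-adapted blocks.
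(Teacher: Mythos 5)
Your proposal is correct and follows essentially the same route as the paper: the paper's proof simply fixes N-adapted frames and exhibits $\mathbf{Z}_{\ \beta \gamma }^{\alpha }=\mathbf{\Gamma }_{\ \beta \gamma }^{\alpha }-\underline{\Gamma }_{\ \beta \gamma }^{\alpha }$ (and its $h$--$v$, respectively $h$--$cv$, blocks, with the same coefficient differences on $\mathbf{T}^{\ast }\mathbf{V}$), which is exactly your coefficient-extraction step. The only difference is that you make explicit the standard Leibniz-cancellation argument showing the difference of two linear connections is $C^{\infty }$-bilinear, hence a d-tensor even when $\underline{D}$ is not N-adapted — a fact the paper's proof leaves implicit.
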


\begin{proof}
Fixing respective N-adapted frames, such distortion d-tensors
\begin{eqnarray*}
\mathbf{Z}_{\ \beta \gamma }^{\alpha } &=&\mathbf{\Gamma }_{\ \beta \gamma
}^{\alpha }-\underline{\Gamma }_{\ \beta \gamma }^{\alpha } \\
&=&\{Z_{\ jk}^{i}=L_{\ jk}^{i}-\underline{L}_{\ jk}^{i},\acute{Z}_{\ bk}^{a}=%
\acute{L}_{\ bk}^{a}-\underline{\acute{L}}_{\ bk}^{a},\acute{Z}_{\ jc}^{i}=%
\acute{C}_{\ jc}^{i}-\underline{\acute{C}}_{\ jc}^{i},Z_{\ bc}^{a}=C_{\
bc}^{a}-\underline{C}_{\ bc}^{a}\}\mbox{ and } \\
\ ^{\shortmid }\mathbf{Z}_{\ \beta \gamma }^{\alpha } &=&\ ^{\shortmid }%
\mathbf{\Gamma }_{\ \beta \gamma }^{\alpha }-\ ^{\shortmid }\underline{%
\mathbf{\Gamma }}_{\ \beta \gamma }^{\alpha } \\
&=&\{\ ^{\shortmid }Z_{\ jk}^{i}=\ ^{\shortmid }L_{\ jk}^{i}-\ ^{\shortmid }%
\underline{L}_{\ jk}^{i},\ ^{\shortmid }\acute{Z}_{a\ k}^{\ b}=\ ^{\shortmid
}\acute{L}_{a\ k}^{\ b}-\ ^{\shortmid }\underline{\acute{L}}_{a\ k}^{\ b},\
^{\shortmid }\acute{Z}_{\ j}^{i\ c}=\ ^{\shortmid }\acute{C}_{\ j}^{i\ c}-\
^{\shortmid }\underline{\acute{C}}_{\ j}^{i\ c},\ ^{\shortmid }Z_{a}^{\
bc}=\ ^{\shortmid }C_{a}^{\ bc}-\ ^{\shortmid }\underline{C}_{a}^{\ bc}\},
\end{eqnarray*}%
can be constructed in explicit form by considering corresponding differences
of N-adapted coefficients for linear and d--connections.
\end{proof}

$\square $\vskip5pt

Using similar definitions and theorems as for linear connections, we can
prove for d-connections:

\begin{definition}
\textbf{-Theorem}\footnote{%
In mathematical physics, there are used terms like Definition-Theorem / -
Lemma / - Corollary etc. for such definitions (new ideas, concepts, or
conventions) which motivated by certain explicit geometric constructions
and/or requesting formulation of some theorems and respective mathematical
proofs.}\ \textsf{[curvature, torsion and nonmetricity of d-connections ] }
Any d--connection $\mathbf{D,}$ or $\ ^{\shortmid }\mathbf{D,}$ is
characterized by respective curvature $(\mathcal{R},$ or $\ ^{\shortmid }%
\mathcal{R}),$ torsion $(\mathcal{T},$ or $\ ^{\shortmid }\mathcal{T}),$ and
nonmetricity, $(\mathcal{Q},$ or $\ ^{\shortmid }\mathcal{Q})$, d-tensors
defined and computed in standard forms:
\begin{eqnarray}
\mathcal{R}(\mathbf{X,Y}):= &&\mathbf{D}_{\mathbf{X}}\mathbf{D}_{\mathbf{Y}}-%
\mathbf{D}_{\mathbf{Y}}\mathbf{D}_{\mathbf{X}}-\mathbf{D}_{\mathbf{[X,Y]}},
\label{dcurvabstr} \\
\mathcal{T}(\mathbf{X,Y}):= &&\mathbf{D}_{\mathbf{X}}\mathbf{Y}-\mathbf{D}_{%
\mathbf{Y}}\mathbf{X}-[\mathbf{X,Y}]\mbox{ and }\mathcal{Q}(\mathbf{X}):=%
\mathbf{D}_{\mathbf{X}}\mathbf{g},  \notag \\
\mbox{ or }\ ^{\shortmid }\mathcal{R}(\ ^{\shortmid }\mathbf{X,\ ^{\shortmid
}Y}):= &&\ ^{\shortmid }\mathbf{D}_{\ ^{\shortmid }\mathbf{X}}\ ^{\shortmid }%
\mathbf{D}_{\ ^{\shortmid }\mathbf{Y}}-\ ^{\shortmid }\mathbf{D}_{\
^{\shortmid }\mathbf{Y}}\ ^{\shortmid }\mathbf{D}_{\ ^{\shortmid }\mathbf{X}%
}-\ ^{\shortmid }\mathbf{D}_{\mathbf{[\ ^{\shortmid }X,\ ^{\shortmid }Y]}},
\notag \\
\ ^{\shortmid }\mathcal{T}(\ ^{\shortmid }\mathbf{X,\ ^{\shortmid }Y}):= &&\
^{\shortmid }\mathbf{D}_{\ ^{\shortmid }\mathbf{X}}\ ^{\shortmid }\mathbf{Y}%
-\ ^{\shortmid }\mathbf{D}_{\ ^{\shortmid }\mathbf{Y}}\ ^{\shortmid }\mathbf{%
X}-[\ ^{\shortmid }\mathbf{X,\ ^{\shortmid }Y}]\mbox{ and }\ ^{\shortmid }%
\mathcal{Q}(\ ^{\shortmid }\mathbf{X}):=\ ^{\shortmid }\mathbf{D}_{\
^{\shortmid }\mathbf{X}}\ ^{\shortmid }\mathbf{g}.  \notag
\end{eqnarray}
\end{definition}

The N--adapted coefficients for the curvature, torsion and nonmetricity
d-tensors (\ref{dcurvabstr}) are provided in Appendix, see Corollary \ref%
{acorolcurv}.

\subsubsection{The Ricci and Einstein d--tensors}

The Ricci tensor for a d--connection on a (co) tangent bundle can be
constructed in standard form by contracting, for instance, the first and
forth indices of respective curvature d-tensors $\mathcal{R}$ and/or $\
^{\shortmid }\mathcal{R}$ (\ref{dcurvabstr}).

\begin{definition}
\textbf{-Theorem\ } \textsf{[Ricci tensors for d--connections ] } The Ricci
d--tensors are defined and computed as $Ric=\{\mathbf{R}_{\alpha \beta }:=%
\mathbf{R}_{\ \alpha \beta \tau }^{\tau }\},$ for a d-connection $\mathbf{D}$%
, and $\ ^{\shortmid }Ric=\{\ ^{\shortmid }\mathbf{R}_{\alpha \beta }:=\
^{\shortmid }\mathbf{R}_{\ \alpha \beta \tau }^{\tau }\},$ for a
d-connection $\ ^{\shortmid }\mathbf{D}$.
\end{definition}

In N-adapted form using formulas (\ref{dcurv}), one proves

\begin{corollary}
\textsf{[computation of Ricci d-tensors ] } The N-adapted coefficients of
the Ricci d--tensors of a d-connection in a (co) tangent Lorentz bundle are
parameterized in $h$- and/or $v$-, or $cv$-form, by formulas
\begin{eqnarray}
\mathbf{R}_{\alpha \beta } &=&\{R_{hj}:=R_{\ hji}^{i},\ \ R_{ja}:=-P_{\
jia}^{i},\ R_{bk}:=P_{\ bka}^{a},R_{\ bc}=S_{\ bca}^{a}\},\mbox{ or }
\label{dricci} \\
\ ^{\shortmid }\mathbf{R}_{\alpha \beta } &=&\{\ ^{\shortmid }R_{hj}:=\
^{\shortmid }R_{\ hji}^{i}, \ ^{\shortmid }R_{j}^{\ a}:=-\ ^{\shortmid }P_{\
ji}^{i\ \ \ a},\ \ ^{\shortmid }R_{\ k}^{b}:=\ ^{\shortmid }P_{a\ k}^{\ b\ \
a},\ ^{\shortmid }R_{\ }^{bc}=\ ^{\shortmid }S_{a\ }^{\ bca}\}.
\label{driccid}
\end{eqnarray}
\end{corollary}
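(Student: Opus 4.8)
The plan is to obtain the Ricci coefficients by a direct index contraction of the curvature d-tensor, exploiting the fact that its N-adapted components have already been computed and listed in $h$-, $v$-, and $cv$-form in the Appendix, see Corollary \ref{acorolcurv} and the formulas (\ref{dcurv}). Since those coefficients are expressed with respect to the nonholonomic (co)frames $\mathbf{e}_{\alpha}$ (\ref{nadapb}) and $\ ^{\shortmid}\mathbf{e}_{\alpha}$ (\ref{cnadap}), the defining contraction $\mathbf{R}_{\alpha \beta}:=\mathbf{R}_{\ \alpha \beta \tau}^{\tau}$ (first upper index with fourth lower index) becomes a purely algebraic operation on the N-adapted indices, with no further differentiation of frame fields required.

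First I would split the summation index $\tau=(i,a)$ into its $h$- and $v$-parts and run through the four inequivalent choices of the free pair $(\alpha,\beta)$, namely $(h,h),(h,v),(v,h),(v,v)$. For $(\alpha,\beta)=(h,j)$ only the purely horizontal curvature component survives the sum over the $h$-part of $\tau$, giving $R_{hj}=R_{\ hji}^{i}$. For the mixed choices $(j,a)$ and $(b,k)$ the surviving terms are the $P$-type curvature components; keeping track of the ordering/antisymmetry conventions in the definition of the $P$-curvature fixes the relative sign and the index positions, yielding $R_{ja}=-P_{\ jia}^{i}$ and $R_{bk}=P_{\ bka}^{a}$. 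For the purely vertical choice $(b,c)$ the contraction over the $v$-part of $\tau$ selects the $S$-type component $R_{\ bc}=S_{\ bca}^{a}$. Collecting the four results reproduces (\ref{dricci}).

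For the cotangent version (\ref{driccid}) I would invoke the $\mathcal{L}$-duality and the left-label ``$\ ^{\shortmid}$'' rule already used throughout this section: the same case analysis applies verbatim on $\mathbf{T}^{\ast}\mathbf{V}$, now with the co-fiber indices raised (so that $\ ^{\shortmid}\acute{C}$, $\ ^{\shortmid}C$ and the dual $\ ^{\shortmid}P$, $\ ^{\shortmid}S$ components carry upper co-fiber indices). This produces $\ ^{\shortmid}R_{hj}$, $\ ^{\shortmid}R_{j}^{\ a}$, $\ ^{\shortmid}R_{\ k}^{b}$ and $\ ^{\shortmid}R^{bc}$ exactly as stated, and completes the proof once the tangent-bundle case is established.

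I expect the only genuine difficulty to be bookkeeping rather than anything conceptual: correctly matching each free index pair to the unique surviving curvature component, and above all pinning down the signs (the minus in $R_{ja}=-P_{\ jia}^{i}$) together with the placement of raised versus lowered co-fiber indices in the dual formulas. Because the frames are nonholonomic, the key consistency point to verify is that the anholonomy coefficients $W_{\ \alpha \beta}^{\gamma}$ of Consequence \ref{anhr} are already fully absorbed into the listed curvature coefficients (\ref{dcurv}), so that no extra frame-commutator contributions enter the contraction; confirming this is what guarantees that the naive index sum yields precisely (\ref{dricci}) and (\ref{driccid}).
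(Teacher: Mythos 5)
Your proposal is correct and follows essentially the same route as the paper, which proves the corollary simply by contracting the first and fourth indices of the N-adapted curvature coefficients (\ref{dcurv}) from Corollary \ref{acorolcurv}, with the $h$--$v$ case analysis, the antisymmetry-induced sign in $R_{ja}=-P^{i}_{\ jia}$, and the dualization to $\mathbf{T}^{\ast}\mathbf{V}$ handled exactly as you describe. Your closing observation -- that the anholonomy coefficients are already absorbed into (\ref{dcurv}), so the contraction is purely algebraic -- is precisely what the paper's terse ``using formulas (\ref{dcurv}), one proves'' presupposes.
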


If a (co) tangent bundle is enabled both with a d-connection, $\mathbf{D}$
(or$\ ^{\shortmid }\mathbf{D),}$ and d-metric, $\mathbf{g}$ (\ref{dmt}) (or$%
\ ^{\shortmid }\mathbf{g}$ (\ref{dmct})), [in particular, we can consider
canonical d-metrics $\widetilde{\mathbf{g}}$ (\ref{cdms}) and/or $\
^{\shortmid }\widetilde{\mathbf{g}}$ (\ref{cdmds}) encoding MDRs], we can
introduce nonholonomic Ricci scalars:

\begin{definition}
\textbf{-Theorem\ } \textsf{[scalar curvature of d-connection] } The scalar
curvature of a d-connection $\mathbf{D,}$ or $\ ^{\shortmid }\mathbf{D,}$
can be defined and computed for the inverse d-metric $\mathbf{g}^{\alpha
\beta },$ or $\ ^{\shortmid }\mathbf{g}^{\alpha \beta },$
\begin{equation*}
\ _{s}R:=\mathbf{g}^{\alpha \beta }\mathbf{R}_{\alpha
\beta}=g^{ij}R_{ij}+g^{ab}R_{ab}=R+S,\ \mbox{ or } \ _{s}^{\shortmid }R :=\
^{\shortmid }\mathbf{g}^{\alpha \beta }\ ^{\shortmid }\mathbf{R}_{\alpha
\beta }= \ ^{\shortmid }g^{ij}\ ^{\shortmid }R_{ij}+\ ^{\shortmid }g^{ab}\
^{\shortmid }R_{ab}=\ ^{\shortmid }R+\ ^{\shortmid }S,
\end{equation*}%
with respective h-- and v--components $R=g^{ij}R_{ij}, S=g^{ab}S_{ab},$ or $%
\ ^{\shortmid }R=\ ^{\shortmid }g^{ij}\ ^{\shortmid }R_{ij}, \ ^{\shortmid
}S=\ ^{\shortmid }g_{ab}\ ^{\shortmid }S^{ab}.$
\end{definition}

Using above Definitions-Theorems and Corollaries, we formulate
\begin{definition}
\textbf{-Theorem\ }\label{dteinstdt} \textsf{[the Einstein tensors for
d-connections ] } By constructions, the Einstein d-tensors on $\mathbf{TV}$
and/or $\mathbf{T}^{\ast }\mathbf{V}$ are defined:
\begin{equation*}
En=\{\mathbf{E}_{\alpha \beta }:=\mathbf{R}_{\alpha \beta }-\frac{1}{2}%
\mathbf{g}_{\alpha \beta }\ _{s}R\}\mbox{ and/or }\ ^{\shortmid }En=\{\
^{\shortmid }\mathbf{E}_{\alpha \beta }:=\ ^{\shortmid }\mathbf{R}_{\alpha
\beta }-\frac{1}{2}\ ^{\shortmid }\mathbf{g}_{\alpha \beta }\
_{s}^{\shortmid }R\}.
\end{equation*}
\end{definition}
\begin{proof}
Such proofs follow from explicit constructions on regions of some atlases
covering respectively $\mathbf{TV}$ and/or $\mathbf{T}^{\ast }\mathbf{V}$
using N-adapted coefficients (\ref{dcurv}) and (\ref{dricci}) and/or (\ref%
{driccid}).
\end{proof}

$\square $\vskip5pt

It should be noted that nonholonomic Einstein-Cartan models on (co) tangent
Lorentz bundles can be elaborated for respective data $(\mathbf{g,D})$
and/or $(\ ^{\shortmid }\mathbf{g,\ ^{\shortmid }D})$ considering, for
instance, generalized spinning liquid models for effective matter as sources
of d-torsions $\mathcal{T}$ and $\ ^{\shortmid }\mathcal{T},$ see (\ref%
{dtors}) and imposing the metricity conditions. How to formulate such
nonholonomic Riemann-Cartan and metric-affine and generalized
Lagrange-Hamilton-Finsler theories with nontrivial nonmetricities $\mathcal{Q%
} $ and/or $\ ^{\shortmid }\mathcal{Q}$, see formulas (\ref{dnonm}), was
studied in monograph \cite{vmon06}\footnote{%
see Chapter 1 and references therein related to Lagrange-Hamilton-Finsler
generalizations of the metric-affine gravity elaborated by F. W. Hehl's group%
}. For $\mathbf{g}=\widetilde{\mathbf{g}}$ and/or $\ ^{\shortmid }\mathbf{g}%
=\ ^{\shortmid }\widetilde{\mathbf{g}}$ determined by MDRs, we can model
generalized Lagrange-Finsler and/or Hamilton-Cartan phase spaces with
nonmetric backgrounds extending the class of (co) Finsler geometries with
the Chern and/or Berwald type connections \cite%
{berwald26,berwald41,bao00,shen01,shen01a}. Such geometric models result in
a number of ambiguities for constructing physically self-consistent theories
of locally anisotropic interactions for gravitational, gauge, scalar and
spinor fields (and further (super) string / noncommutative, quantum etc.
generalizations) as it is concluded in Refs. \cite%
{vplb10,vijgmmp08,vijgmmp12,vijmpd12,vmon06}.

An axiomatic approach to Finsler like generalizations of GR on (co) tangent
Lorentz bundles \cite{vjpcs11,vijmpd12} can be formulated for theories when
a triple consisting from a N-connection, a d-metric (or almost symplectic)
and a d-connection structures are canonically and metric compatible
determined (following certain minimal geometric and physical principles) by
a fundamental generating Lagrange and/or Hamilton function. In next
subsections, we construct and study some canonical and physically important
d-connection structures which allows us to elaborate physically viable classical
and quantum models of locally anisotropic gravitational and matter field
interactions. Such theories encode respective MDRs of type (\ref{mdrg}),
with a fixed point on a base spacetime manifold, or for a generalized phase
spacetime modelled on a (co) tangent Lorentz bundle.

\subsubsection{Physically important (Filsler like) d-connections on (co)
tangent bundles}

For elaborating MGTs and geometric mechanics models, one considers more
specials classes of d--connections on which can be defined completely by a
d-metric/ almost symplectic structure determined by a respective
Lagrange-Finsler and/or Hamilton-Cartan fundamental form.

\begin{definition}
\textbf{-Theorem\ } \label{phidc} \textsf{[physically important
d-connections] } The Almost K\"{a}hler-Lagrange and/or almost K\"{a}%
hler-Hamilton phase spaces (determined, or not, by respective MDRs (\ref%
{mdrg}) and a possible $\mathcal{L}$--duality) are characterized
respectively by such geometric and physically important linear connections
and canonical/ almost symplectic connections:
\begin{eqnarray}
\lbrack \mathbf{g,N]} &\mathbf{\simeq }&\mathbf{[}\widetilde{\mathbf{g}},%
\widetilde{\mathbf{N}}]\mathbf{\simeq \lbrack }\widetilde{\theta }:=%
\widetilde{\mathbf{g}}(\widetilde{\mathbf{J}}\cdot ,\cdot ),\widetilde{%
\mathbf{P}}\mathbf{,}\widetilde{\mathbf{J}}\mathbf{,}\widetilde{\mathbb{J}}]
\label{canondcl} \\
&\Longrightarrow &\left\{
\begin{array}{ccccc}
\nabla : &  & \nabla \mathbf{g}=0;\ \mathbf{T[\nabla ]}=0, &  & %
\mbox{Lagrange LC--connection}; \\
\widehat{\mathbf{D}}: &  & \widehat{\mathbf{D}}\ \mathbf{g}=0;\ h\widehat{%
\mathbf{T}}=0,\ v\widehat{\mathbf{T}}=0. &  &
\mbox{canonical Lagrange
d-connection}; \\
\widetilde{\mathbf{D}}: &  & \widetilde{\mathbf{D}}\widetilde{\theta }=0,%
\widetilde{\mathbf{D}}\widetilde{\theta }=0 &  &
\mbox{almost symplectic
Lagrange d-connection.};%
\end{array}%
\right.  \notag
\end{eqnarray}%
and/or
\begin{eqnarray}
\lbrack \ ^{\shortmid }\mathbf{g,\ ^{\shortmid }N]} &\mathbf{\simeq }&%
\mathbf{[}\ ^{\shortmid }\widetilde{\mathbf{g}},\ ^{\shortmid }\widetilde{%
\mathbf{N}}]\mathbf{\simeq \lbrack }\ ^{\shortmid }\widetilde{\theta }:=\
^{\shortmid }\widetilde{\mathbf{g}}(\ ^{\shortmid }\widetilde{\mathbf{J}}%
\cdot ,\cdot ),\ ^{\shortmid }\widetilde{\mathbf{P}},\ ^{\shortmid }%
\widetilde{\mathbf{J}},\ ^{\shortmid }\widetilde{\mathbb{J}}]
\label{canondch} \\
&\Longrightarrow &\left\{
\begin{array}{ccccc}
\ ^{\shortmid }\nabla : &  & \ ^{\shortmid }\nabla \ ^{\shortmid }\mathbf{g}%
=0;\ \ ^{\shortmid }\mathbf{T[\ ^{\shortmid }\nabla ]}=0, &  & %
\mbox{Hamilton LC-connection}; \\
\ ^{\shortmid }\widehat{\mathbf{D}}: &  & \ ^{\shortmid }\widehat{\mathbf{D}}%
\ \mathbf{g}=0;\ h\ ^{\shortmid }\widehat{\mathbf{T}}=0,\ cv\ ^{\shortmid }%
\widehat{\mathbf{T}}=0. &  & \mbox{canonical Hamilton d-connection}; \\
\ ^{\shortmid }\widetilde{\mathbf{D}}: &  & \ ^{\shortmid }\widetilde{%
\mathbf{D}}\ ^{\shortmid }\widetilde{\theta }=0,\ ^{\shortmid }\widetilde{%
\mathbf{D}}\ ^{\shortmid }\widetilde{\theta }=0 &  &
\mbox{almost symplectic
Hamilton d-connection.}%
\end{array}%
\right.  \notag
\end{eqnarray}
\end{definition}

\begin{proof}
It is sketched step by step by a respective linear connection / d-connection
determined by the same fundamental geometric objects up to frame transforms:

\begin{itemize}
\item Both variants of LC-connections on (co) tangent bundles, $\nabla $
and/or $\ ^{\shortmid }\nabla ,$ are defined and constructed in standard
abstract, coordinate, or N-adapted forms using (respectively) $\mathbf{g}$ (%
\ref{dmt}) and$\ ^{\shortmid }\mathbf{g}$ (\ref{dmct}). We have to consider $%
\widetilde{\mathbf{g}}$ (\ref{cdms}) and $\ ^{\shortmid }\widetilde{\mathbf{g%
}}$ (\ref{cdmds}) if we work in not N-adapted form with generic
off--diagonal metrics of type (\ref{offd}). Here, we note that
LC--connections can be defined without N--connection structures, i.e. such
linear connections are not d--connections. Nevertheless, such values may
encode Lagrange and/or Hamilton structures if they are computed for metrics/
d-metrics encoding, for instance, MDRs and respective Hessians (modelling
Lagrange and/or Hamilton spaces, see (\ref{hessls}) and/or (\ref{hesshs})).

\item The definition and proofs of existence of $\widehat{\mathbf{D}}$ and $%
\ ^{\shortmid }\widehat{\mathbf{D}}$ are provided by globalizing the
constructions from Corollary \ref{acoroldand} and respective N-adapted
coefficients (\ref{canlc}) and/or (\ref{canlc}). Such canonical
d-connections play a crucial role for decoupling and solving in general
off-diagonal forms for various types of Finsler like d-connections of
gravitational and matter filed like equations in MGTs, i.e. for elaborating
the AFDM.

\item The fundamental Lagrange-Finsler generating functions (which may be,
or not, induced by MDRs and LIVs) can be used for modeling almost K\"{a}hler
models of generalized Lagrange and/or Hamilton spaces. In relativistic form,
such theories were elaborated and applied to deformation quantization of
gravity theories and possible noncommutative generalizations in Refs. \cite%
{vjmp07,vpla08,vijgmmp09,vjgp10,bvnd11,vjmp13,vmjm15,vch2416,vmon02,avjmp09}%
, where there are provided necessary proofs and N-adapted coefficients for
the almost symplectic connections $\widetilde{\mathbf{D}}$ and $\
^{\shortmid }\widetilde{\mathbf{D}}.$ There are not considered such models
and solutions in this work (even $\widetilde{\mathbf{D}}$ was used for
finding commutative and noncommutative Finsler black hole solutions \cite%
{vcqg10,vcqg11,vijtp13}). It is important to analyze formulas with $%
\widetilde{\mathbf{D}}$ and $\ ^{\shortmid }\widetilde{\mathbf{D}}$ because
such d--connections transform into standard Finsler-Cartan ones for
respective nonholonomic parameterizations of the generating functions,
Sasaki type induced d-metrics and canonical N-connection structures.
\end{itemize}
\end{proof}

$\square $\vskip5pt

It is well-known Chern's definition \cite{chern48,bao07} that Finsler
geometry is an example of geometry when the assumption on quadratic linear
elements is dropped and (we emphasize additionally) there are elaborated new
geometric constructions determined by nonlinear quadratic line elements.
Various mathematical approaches were developed for respective Finsler
generalized norms and \textbf{metric structures}. We need additional
assumptions and have to performed more sophisticate geometric constructions
involving \textbf{N-connection and d-connection } structures in order to
formulate self-consistent and physically viable Finsler like generalizations
of Einstein gravity for theories with MDRs and LIVs.

\begin{remark}
\textsf{[complete and self-consistent models of Finsler geometry] } The
first self-consistent model of Finsler geometry (with local geometric
constructions with generalized metric, N-connection and d-connection
structures, and associated N-frames) was elaborated before 1935 by E. Cartan
\ \cite{cartan35} and citations therein. In those works, there were used
coordinate transforms of nonlinear and linear connections and developed the
original constructions with nonlinear quadratic elements due to the famous
habilitation thesis of B. Riemann defended in 1854, see \cite{riem1854}; and
introduced the term of Finsler geometry using the original work \cite%
{finsler18}. Conventionally, that model of Finsler-Cartan geometry, which is
metric compatible, can be described on tangent bundles (or on manifolds with
fibred structure) by a triple of fundamental geometric structures $(F:%
\widetilde{\mathbf{g}},\widetilde{\mathbf{N}},\widetilde{\mathbf{D}})$ all
determined by a so-called Finsler metric (generating function) $F$ which can
be associated to a class of MDRs and LIVs subjected to certain homogeneity
conditions.
\end{remark}

One of the most important nontrivial characteristics of a Finsler geometry
model is that it posses, in general, a nontrivial nonholonomic structure
determined by a N-connection structure. Even for certain additional
geometric/ physical assumptions, all geometric objects on a Finsler space
can be determined by a $F(x,y)$ on a tangent bundles, such a theory is with
a triple of fundamental geometric objects. The geometric and physical models
related to (generalized) Finsler theories are very different from the
(pseudo) Riemann geometry which is completely determined by the metric
structure.

\subsubsection{Distortion tensors for connections and curvature and Ricci
tensors}

MGTs with MDRs on (co) tangent bundles are characterized by multi-connection
structures which, in principle, can be derived by a metric structure
(induced by a fundamental Lagrange-Hamilton function) as we proved in
Definition-Theorem \ref{phidc}. One fixes a (non) linear connection
structure following certain physical/geometric principles resulting in a
self-consistent and experimentally verified physical theory. In order to
construct exact solutions following the AFDM, it is most convenient to work
with the canonical d-connections $\widehat{\mathbf{D}}$ and $^{\shortmid }%
\widehat{\mathbf{D}}.$ Following the conditions of Lemma \ref{ldist}, we
prove

\begin{theorem}
\textsf{[existence of unique and physically important distortions of
connections ] } \label{thdistr} \newline
There are unique distortions relations
\begin{eqnarray}
\widehat{\mathbf{D}} &=&\nabla +\widehat{\mathbf{Z}},\widetilde{\mathbf{D}}%
=\nabla +\widetilde{\mathbf{Z}},\mbox{ and }\widehat{\mathbf{D}}=\widetilde{%
\mathbf{D}}+\mathbf{Z,}\mbox{  determined by }(\mathbf{g,N)};
\label{candistr} \\
\ ^{\shortmid }\widehat{\mathbf{D}} &=&\ ^{\shortmid }\nabla +\ ^{\shortmid }%
\widehat{\mathbf{Z}},\ ^{\shortmid }\widetilde{\mathbf{D}}=\ ^{\shortmid
}\nabla +\ ^{\shortmid }\widetilde{\mathbf{Z}},\mbox{ and }\ ^{\shortmid }%
\widehat{\mathbf{D}}=\ ^{\shortmid }\widetilde{\mathbf{D}}+\ ^{\shortmid }%
\mathbf{Z,}\mbox{ determined by }(\ ^{\shortmid }\mathbf{g,\ ^{\shortmid }N)}%
;  \notag
\end{eqnarray}%
for distortion d-tensors $\widehat{\mathbf{Z}},\widetilde{\mathbf{Z}},$ and $%
\mathbf{Z,}$ on $T\mathbf{TV,}$ and $\ ^{\shortmid }\widehat{\mathbf{Z}},\
^{\shortmid }\widetilde{\mathbf{Z}},$ and $\ ^{\shortmid }\mathbf{Z,}$ on $T%
\mathbf{T}^{\ast }\mathbf{V.}$
\end{theorem}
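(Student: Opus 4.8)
The plan is to obtain all six distortion relations as direct applications of Lemma \ref{ldist}, so that existence is essentially formal, and then to secure uniqueness from the fact, established in Definition-Theorem \ref{phidc}, that each of $\nabla$, $\widehat{\mathbf{D}}$, $\widetilde{\mathbf{D}}$ (and their cotangent counterparts) is \emph{uniquely} determined by one and the same fundamental data $(\mathbf{g},\mathbf{N})$, respectively $(\ ^{\shortmid}\mathbf{g},\ ^{\shortmid}\mathbf{N})$. First I would recall that any two linear connections on the same bundle differ by a $(1,2)$-tensor field: the assignment $(\mathbf{X},\mathbf{Y})\mapsto D_{\mathbf{X}}\mathbf{Y}-\underline{D}_{\mathbf{X}}\mathbf{Y}$ is $C^{\infty}$-bilinear because the non-tensorial Leibniz terms cancel. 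Taking $\underline{D}=\nabla$ and $\mathbf{D}=\widehat{\mathbf{D}}$ yields $\widehat{\mathbf{Z}}:=\widehat{\mathbf{D}}-\nabla$; taking $\mathbf{D}=\widetilde{\mathbf{D}}$ yields $\widetilde{\mathbf{Z}}:=\widetilde{\mathbf{D}}-\nabla$; and applying the same principle to the two d-connections $\widehat{\mathbf{D}},\widetilde{\mathbf{D}}$ gives $\mathbf{Z}:=\widehat{\mathbf{D}}-\widetilde{\mathbf{D}}$. Repeating the construction with the $\mathcal{L}$-dual data on $\mathbf{T}^{\ast}\mathbf{V}$ produces $\ ^{\shortmid}\widehat{\mathbf{Z}},\ ^{\shortmid}\widetilde{\mathbf{Z}},\ ^{\shortmid}\mathbf{Z}$. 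By subtracting the second relation from the first one sees immediately that $\mathbf{Z}=\widehat{\mathbf{Z}}-\widetilde{\mathbf{Z}}$ (and similarly on $\mathbf{T}^{\ast}\mathbf{V}$), so the third relation in each line is not independent of the first two.

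Next I would verify that these differences are genuinely N-adapted, i.e.\ \emph{d}-tensors. For $\mathbf{Z}=\widehat{\mathbf{D}}-\widetilde{\mathbf{D}}$ this is immediate, since both terms are d-connections preserving the splitting (\ref{ncon}), so their difference respects the h-/v- (respectively h-/cv-) decomposition. For $\widehat{\mathbf{Z}}$ and $\widetilde{\mathbf{Z}}$ one must be more careful: $\nabla$ is \emph{not} a d-connection and does not preserve the N-connection distribution, so these distortions generically carry mixed h-/v- components. I would make them explicit in N-adapted coefficients, combining the canonical coefficients $\{L^i_{\ jk},\acute{L}^a_{\ bk},\acute{C}^i_{\ jc},C^a_{\ bc}\}$ of $\widehat{\mathbf{D}}$ from Corollary \ref{acoroldand} (formulas (\ref{canlc})) with the Christoffel coefficients of $\nabla$ computed for the generic off-diagonal form (\ref{offd}); the anholonomy data of Consequence \ref{anhr} then controls precisely the off-diagonal pieces.

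For uniqueness I would invoke Definition-Theorem \ref{phidc}: $\nabla$ is the unique torsionless metric-compatible connection of $\mathbf{g}$; $\widehat{\mathbf{D}}$ is the unique d-connection with $\widehat{\mathbf{D}}\mathbf{g}=0$ and $h\widehat{\mathbf{T}}=0,\ v\widehat{\mathbf{T}}=0$; and $\widetilde{\mathbf{D}}$ is the unique N-adapted connection compatible with $\widetilde{\theta}=\widetilde{\mathbf{g}}(\widetilde{\mathbf{J}}\cdot,\cdot)$. Since each connection is uniquely pinned down by the shared data, the differences defining $\widehat{\mathbf{Z}},\widetilde{\mathbf{Z}},\mathbf{Z}$ and their $\mathcal{L}$-duals are uniquely determined as well, which is exactly the asserted uniqueness.

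The hard part is not existence, which is formal, but the uniqueness and well-definedness of the almost symplectic piece $\widetilde{\mathbf{D}}$ together with the consistency of the two sides of the $\mathcal{L}$-duality. Establishing that $\widetilde{\mathbf{D}}$ is uniquely fixed by $\widetilde{\theta}$ requires controlling the Nijenhuis/curvature term $\widetilde{\mathbf{\Omega}}$ of (\ref{neijt}) that enters the compatibility condition $\widetilde{\mathbf{D}}\widetilde{\theta}=0$, and one must check that the frame transforms relating tilde and non-tilde data intertwine the tangent-bundle distortions with their cotangent-bundle duals, so that $\ ^{\shortmid}\widehat{\mathbf{Z}},\ ^{\shortmid}\widetilde{\mathbf{Z}},\ ^{\shortmid}\mathbf{Z}$ are indeed the push-forwards of $\widehat{\mathbf{Z}},\widetilde{\mathbf{Z}},\mathbf{Z}$ under the Legendre-type maps introduced before Definition \ref{defnc}. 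This is where the bulk of the (routine but lengthy) N-adapted computation would be carried out.
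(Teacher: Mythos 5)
Your proposal takes essentially the same route as the paper: existence of the distortions via Lemma \ref{ldist} (differences of connections are d-tensors, computable by inserting the N-adapted coefficients (\ref{canlc}) into (\ref{dtors})), and uniqueness because $\nabla$, $\widehat{\mathbf{D}}$, $\widetilde{\mathbf{D}}$ and their cotangent analogues are each uniquely fixed by the same data $(\mathbf{g},\mathbf{N})$, respectively $(\ ^{\shortmid }\mathbf{g},\ ^{\shortmid }\mathbf{N})$, as established in Definition-Theorem \ref{phidc}. The only superfluous element is your closing requirement that $\ ^{\shortmid }\widehat{\mathbf{Z}},\ ^{\shortmid }\widetilde{\mathbf{Z}},\ ^{\shortmid }\mathbf{Z}$ be Legendre push-forwards of the tangent-bundle distortions: the theorem treats the two lines of (\ref{candistr}) as independent statements, each determined by its own geometric data, so no such intertwining needs to be verified.
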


It should be noted that the d--tensor $\widehat{\mathbf{Z}}$ in above
formulas is an algebraic combination of coefficients $\widehat{\mathbf{T}}%
_{\ \alpha \beta }^{\gamma }[\mathbf{g,N}]$ computed by introducing formulas
(\ref{canlc}) into (\ref{dtors}) (similarly for cotangent bundles). Such
values can be induced by corresponding nonholonomic structures (\ref{anhrelc}%
) and (\ref{anhrelcd}) determined by MDRs. This proves:

\begin{consequence}
MDRs (\ref{mdrg}) are characterized by respective canonical and/or almost
symplectic distortion d-tensors $\widehat{\mathbf{Z}}[\widetilde{\mathbf{g}},%
\widetilde{\mathbf{N}}],\widetilde{\mathbf{Z}} [\widetilde{\mathbf{g}},%
\widetilde{\mathbf{N}}],$ and $\mathbf{Z}[\widetilde{\mathbf{g}}, \widetilde{%
\mathbf{N}}],$ for (almost symplectic) Lagrange models, and $\ ^{\shortmid }%
\widehat{\mathbf{Z}}[\ ^{\shortmid }\widetilde{\mathbf{g}},\ ^{\shortmid }%
\widetilde{\mathbf{N}}],\ ^{\shortmid }\widetilde{\mathbf{Z}}[\ ^{\shortmid }%
\widetilde{\mathbf{g}},\ ^{\shortmid }\widetilde{\mathbf{N}}],$ and $\
^{\shortmid }\mathbf{Z}[\ ^{\shortmid }\widetilde{\mathbf{g}},\ ^{\shortmid }%
\widetilde{\mathbf{N}}],$ for (almost symplectic) Hamilton models.
\end{consequence}

Using nonholonomic frame transforms and distortions, we can elaborate
equivalent models of phase spaces and MGTs formulated in terms of
"preferred" for certain purposes geometric data. In the geometric "language"
of "tilde" objects, we obtain relativistic mechanical like formulations for
the geometry of phase spaces (locally anisotropic ether) with
straightforward procedures for performing deformation quantization. For
"hat" objects, we obtain many possibilities for decoupling and integrating
physically important systems of nonlinear PDEs.

\begin{conclusion}
\textsf{[equivalent canonical geometric data for modeling phase spaces with
MDRs] } \newline
The phase space geometry can be described in equivalent forms (up to
respective nonholonomic deformations of the linear connection structures and
nonholonomic frame transforms) by such data{\small
\begin{equation}
\begin{array}{ccccc}
\mbox{MDRs} & \nearrow & (\mathbf{g,N,}\widehat{\mathbf{D}})\leftrightarrows
(L:\widetilde{\mathbf{g}}\mathbf{,}\widetilde{\mathbf{N}},\widetilde{\mathbf{%
D}}) & \leftrightarrow (\widetilde{\theta },\widetilde{\mathbf{P}},%
\widetilde{\mathbf{J}},\widetilde{\mathbb{J}},\widetilde{\mathbf{D}}) &
\leftrightarrow \lbrack (\mathbf{g[}N],\nabla )],\mbox{ on }\mathbf{TV} \\
\mbox{indicator }\varpi &  & \updownarrow \mbox{ possible }\mathcal{L}%
\mbox{-duality }\& & \mbox{symplectomorphisms \cite{avjmp09}} & \updownarrow %
\mbox{ not N-adapted } \\
\mbox{ see (\ref{mdrg})} & \searrow & (\ ^{\shortmid }\mathbf{g,\
^{\shortmid }N,}\ ^{\shortmid }\widehat{\mathbf{D}})\leftrightarrows (H:\
^{\shortmid }\widetilde{\mathbf{g}},\ ^{\shortmid }\widetilde{\mathbf{N}},\
^{\shortmid }\widetilde{\mathbf{D}}) & \leftrightarrow (\ ^{\shortmid }%
\widetilde{\theta },\ ^{\shortmid }\widetilde{\mathbf{P}},\ ^{\shortmid }%
\widetilde{\mathbf{J}},\ ^{\shortmid }\widetilde{\mathbb{J}},\ ^{\shortmid }%
\widetilde{\mathbf{D}}) & \leftrightarrow \lbrack (\ ^{\shortmid }\mathbf{g}%
[\ ^{\shortmid }N],\ ^{\shortmid }\nabla )],\mbox{on}\mathbf{T}^{\ast }%
\mathbf{V}.%
\end{array}
\label{phspgd}
\end{equation}%
}
\end{conclusion}

In brief, we shall say that certain geometric constructions are canonical
(i.e. formulated in canonical nonholonomic variables) if they are performed
for "hat", or "tilde", d-connections and related geometric objects uniquely
derived for certain Lagrange-Hamilton fundamental generating functions (in
particular, for a Finsler metric $F$).

\begin{convention}
\textsf{[existence of a preferred d-connection for decoupling (modified)
Einstein equations and generating off-diagonal solutions] } We can work with
canonical d-connection structures on (co) tangent bundles, $\widehat{\mathbf{%
D}}$ and/or $\ ^{\shortmid }\widehat{\mathbf{D}}$ which allows us to
decouple and integrate in most general exact and parametric forms (with
generic off-diagonal metrics and generalized connections and effective
matter sources depending, in principle, on all spacetime and phase space
coordinates) the gravitational and matter field equations in MGTs and GR,
see details and proofs for above presented references for the AFDM (in the
proof of Definition-Theorem \ref{phidc}) and next sections.
\end{convention}

Lagrange-Finsler variables can be introduced on 4-d, and higher dimension,
(pseudo) Riemann spaces and in GR, see details and a number of examples in
Refs. \cite%
{vap97,vnp97,vhsp98,vmon98,vjhep01,vmon02,vmon06,vijgmmp07,vrev08,vijtp10a,vijgmmp11,gvvepjc14,gheorghiuap16}%
.

\begin{remark}
\textsf{[Lagrange-Hamilton variables in Einstein gravity, GR]} \newline
Prescribing a generating function $L(x^{1},x^{2},y^{3},y^{4}=t),$ [in this
remark, $\alpha =(i,a),$ for $i=1,2$ and $a=3,4]$ on a Lorentz manifold $%
\left( \ ^{4}V,\mathbf{g}_{\alpha \beta }\right) ,\dim (\ ^{4}V),$ endowed
with a metric of local signature $(+++-)$ and conventional splitting, we can
construct a canonical N-connection $\ ^{2+2}\widetilde{\mathbf{N}}:T(\
^{4}V)=h(\ ^{4}V)\oplus v(\ ^{4}V),$ see Theorem \ref{thcnc}. Respectively,
it is defined an N-adapted frame structure with 2+2 splitting of type $%
\widetilde{\mathbf{e}}_{\alpha }=(\widetilde{\mathbf{e}}_{1},\widetilde{%
\mathbf{e}}_{2},e_{3},e_{4}),$ see (\ref{cnddapb}). Via frame/coordinate
transforms, we can express the pseudo-Riemannian metric as a canonical
d-metric, $\mathbf{g}_{\alpha \beta }\simeq \widetilde{\mathbf{g}}_{\alpha
\beta }$(\ref{cdms}), and/or in off-diagonal form (\ref{offd}). Following
conditions (\ref{canondcl}), we can construct in unique form three types of
linear connections $\nabla \lbrack \mathbf{g}\simeq \widetilde{\mathbf{g}}],%
\widehat{\mathbf{D}}[\mathbf{g}\simeq \widetilde{\mathbf{g}}]$ and/o $%
\widetilde{\mathbf{D}}[\mathbf{g}\simeq \widetilde{\mathbf{g}}].$ This way,
the pseudo-Riemannian geometry can be formulated equivalently in standard
form with data $(\mathbf{g},\nabla ),$ and/or in Lagrange-Finsler like
variables $(\widetilde{\mathbf{g}},\widehat{\mathbf{N}},\widehat{\mathbf{D}}%
) $ and/or $(\widetilde{\mathbf{g}},\widetilde{\mathbf{N}},\widetilde{%
\mathbf{D}}),$ with respective distortion relations, $\widehat{\mathbf{D}}%
=\nabla +\widehat{\mathbf{Z}}$ and $\widetilde{\mathbf{D}}=\nabla +%
\widetilde{\mathbf{Z}}.$ Such geometric models (we can say "toy"
Lagrange-Finsler models on manifolds with conventional nonholonomic 2+2
splitting) are elaborated for (pseudo) Riemannian manifolds with prescribed
nonholonomic fibered structure. Fibered splitting is of type $h(\
^{4}V)\oplus v(\ ^{4}V)$ but the "standard" Lagrange-Finsler geometrise are
constructed for $TTV=hTV\oplus vTV.$
\end{remark}

An important example is that when imposing certain (in general,
nonholonomic) constraints of type $\widehat{\mathbf{Z}}=0,$ we obtain $%
\widehat{\mathbf{D}}_{|\widehat{\mathbf{Z}}=0}\simeq \nabla $ even $\widehat{%
\mathbf{D}}\neq \nabla .$\footnote{%
The frame/coordinate transformation laws of nonlinear and distinguished /
linear connections are different from that of tensors. It is possible to
define such a frame structure when different connections may be determined
by the same set of coefficients with respect to such a special frame and by
different sets in other systems of reference.} If such conditions are
satisfied, we can extract (pseudo) Riemannian LC-configurations from more
(general) nonholonmic metric-affine structures.

\begin{corollary}
\textsf{[extracting LC-configurations by additional (non) holonomic
constraints]} \newline
One extracts LC-configurations from $\widehat{\mathbf{D}}$ and/or $\
^{\shortmid }\widehat{\mathbf{D}}$ \ for respective zero distortions, $%
\widehat{\mathbf{Z}}$ and/or $\ ^{\shortmid }\widehat{\mathbf{Z}},$ if there
are imposed zero torsion conditions for $\widehat{\mathcal{T}}$ $=\{\widehat{%
\mathbf{T}}_{\ \alpha \beta }^{\gamma }\}=0$ and/or $\ \ ^{\shortmid }%
\widehat{\mathcal{T}}$ $=\{\ ^{\shortmid }\widehat{\mathbf{T}}_{\ \alpha
\beta }^{\gamma }\}=0,$ see $\ $(\ref{dtors}). Such conditions are satisfied
if
\begin{eqnarray}
\widehat{C}_{jb}^{i} &=&0,\Omega _{\ ji}^{a}=0\mbox{ and }\widehat{L}%
_{aj}^{c}=e_{a}(N_{j}^{c});  \label{lccondl} \\
\ ^{\shortmid }\widehat{C}_{j}^{i\ b} &=&0,\ ^{\shortmid }\Omega _{\ aji}=0%
\mbox{ and }\ ^{\shortmid }\widehat{L}_{c\ j}^{\ a}=\ ^{\shortmid }e^{a}(\
^{\shortmid }N_{cj}).  \label{lccondh}
\end{eqnarray}
\end{corollary}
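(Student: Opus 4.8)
The plan is to exploit the distortion relations $\widehat{\mathbf{D}}=\nabla +\widehat{\mathbf{Z}}$ and $\ ^{\shortmid }\widehat{\mathbf{D}}=\ ^{\shortmid }\nabla +\ ^{\shortmid }\widehat{\mathbf{Z}}$ of Theorem \ref{thdistr}, together with the metric-compatibility of both the canonical d-connections and the Levi--Civita connections recorded in Definition-Theorem \ref{phidc}. First I would observe that the difference of two metric-compatible linear connections built from the \emph{same} d-metric is an algebraic function of their torsions; since $\nabla $ and $\ ^{\shortmid }\nabla $ are torsionless, the distortion d-tensors $\widehat{\mathbf{Z}}[\mathbf{g,N}]$ and $\ ^{\shortmid }\widehat{\mathbf{Z}}[\ ^{\shortmid }\mathbf{g,\ ^{\shortmid }N}]$ are expressed entirely through the canonical torsion coefficients $\widehat{\mathbf{T}}_{\ \alpha \beta }^{\gamma }$ and $\ ^{\shortmid }\widehat{\mathbf{T}}_{\ \alpha \beta }^{\gamma }$ (this is precisely the remark following Theorem \ref{thdistr}). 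Hence $\widehat{\mathbf{Z}}=0$ if and only if $\widehat{\mathcal{T}}=0$, in which case $\widehat{\mathbf{D}}=\nabla $; the same argument on $\mathbf{T}^{\ast }\mathbf{V}$ gives $\ ^{\shortmid }\widehat{\mathbf{D}}=\ ^{\shortmid }\nabla $ whenever $\ ^{\shortmid }\widehat{\mathcal{T}}=0$. This reduces the assertion to translating the vanishing of the N-adapted torsion into explicit coefficient constraints.

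Next I would write out the N-adapted components of $\widehat{\mathcal{T}}$ from the general torsion formula (\ref{dtors}), evaluated on the canonical coefficients $\widehat{\mathbf{\Gamma }}_{\ \beta \gamma }^{\alpha }=\{\widehat{L}_{\ jk}^{i},\widehat{L}_{\ bk}^{a},\widehat{C}_{\ jc}^{i},\widehat{C}_{\ bc}^{a}\}$ supplied by (\ref{canlc}). By the very construction of the canonical d-connection, the purely horizontal component $\widehat{T}_{\ jk}^{i}=\widehat{L}_{\ jk}^{i}-\widehat{L}_{\ kj}^{i}$ and the purely vertical component built from $\widehat{C}_{\ bc}^{a}$ vanish identically, these coefficients being symmetric in the relevant index pairs. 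The remaining independent pieces are the mixed h-v component proportional to $\widehat{C}_{\ jb}^{i}$, the component carrying the N-connection curvature $\Omega _{\ ji}^{a}$ of (\ref{neijtc}), and the component $\widehat{T}_{\ aj}^{c}\sim e_{a}(N_{j}^{c})-\widehat{L}_{\ aj}^{c}$. Setting each to zero yields exactly the three relations (\ref{lccondl}); repeating the computation on $\mathbf{T}^{\ast }\mathbf{V}$ with the cv-parameterization $\ ^{\shortmid }\widehat{\mathbf{\Gamma }}_{\ \beta \gamma }^{\alpha }$ and the dual curvature $\ ^{\shortmid }\Omega _{\ aji}$ gives (\ref{lccondh}). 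Conversely, when (\ref{lccondl}) (respectively (\ref{lccondh})) hold, every N-adapted torsion coefficient is zero, so $\widehat{\mathcal{T}}=0$ (respectively $\ ^{\shortmid }\widehat{\mathcal{T}}=0$), which by the first step forces $\widehat{\mathbf{Z}}=0$ and $\widehat{\mathbf{D}}=\nabla $.

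The step I expect to require the most care is the claim that the distortion is governed by torsion alone. This needs metric-compatibility of both connections with respect to the same d-metric and a short verification that the Koszul-type expression for the difference of two metric-compatible connections collapses to a combination of torsions once one of them is symmetric; it is here that the anholonomy coefficients $W_{\alpha \beta }^{\gamma }$ of Consequence \ref{anhr} intervene, since the bracket terms entering (\ref{dcurvabstr})--(\ref{dtors}) carry them, and the conditions $\Omega _{\ ji}^{a}=0$ (respectively $\ ^{\shortmid }\Omega _{\ aji}=0$) are precisely what prevents the anholonomic contributions from obstructing the reduction to $\nabla $. The identification of the automatically vanishing torsion components is then routine but index-heavy bookkeeping with (\ref{dtors}), which I would defer to the appendix computation already invoked in the proof of Definition-Theorem \ref{phidc}.
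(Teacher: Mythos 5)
Your proof is correct and takes essentially the same route as the paper's: the paper's own proof likewise substitutes the canonical coefficients (\ref{canlc}) into the torsion formula (\ref{dtors}), observes that the h--h and v--v components $\widehat{L}_{jk}^{i}-\widehat{L}_{kj}^{i}$ and $\widehat{C}_{bc}^{a}-\widehat{C}_{cb}^{a}$ vanish identically by symmetry, and reads off that the remaining components vanish exactly under (\ref{lccondl}), with the cotangent case treated analogously. Your preliminary step --- that metric compatibility of $\widehat{\mathbf{D}}$ and $\nabla$ with the same d-metric forces the distortion $\widehat{\mathbf{Z}}$ to be an algebraic combination of the torsion coefficients, so that $\widehat{\mathcal{T}}=0$ implies $\widehat{\mathbf{Z}}=0$ and $\widehat{\mathbf{D}}=\nabla$ --- is precisely the remark the paper states immediately after Theorem \ref{thdistr} and relies on implicitly, so you have merely made explicit what the paper leaves as a citation.
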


\begin{proof}
Let us sketch such a proof on $\mathbf{TV}$ (the constructions on $\mathbf{T}%
^{\ast }\mathbf{V}$ are similar). Introducing (\ref{canondcl}) in (\ref%
{dtors}), we can check that one obtains zero values for
\begin{equation*}
\widehat{T}_{\ jk}^{i}=\widehat{L}_{jk}^{i}-\widehat{L}_{kj}^{i},\widehat{T}%
_{\ ja}^{i}=\widehat{C}_{jb}^{i},\widehat{T}_{\ ji}^{a}=-\Omega _{\
ji}^{a},\ \widehat{T}_{aj}^{c}=\widehat{L}_{aj}^{c}-e_{a}(N_{j}^{c}),%
\widehat{T}_{\ bc}^{a}=\ \widehat{C}_{bc}^{a}-\ \widehat{C}_{cb}^{a}.
\end{equation*}%
if the conditions (\ref{lccondl}) are satisfied.
\end{proof}

$\square $\vskip5pt

In a series of works \cite%
{vijgmmp08,vijtp10a,vjpcs11,kouretsis10,vijmpd12,basilakos13,vepjc14a,svcqg13}%
, we proved that the equations (\ref{lccondl}) can be solved in explicit
form for manifolds/ bundle spaces of dimensions 4-10. Similarly, the
equations (\ref{lccondh}) can be integrated on $\mathbf{T}^{\ast }\mathbf{V.}
$

Introducing distortions from Theorem \ref{thdistr} into formulas (\ref%
{dcurvabstr}), we can prove in abstract and N-adapted forms:

\begin{theorem}
\textsf{[existence of canonical distortions of Riemannian and Ricci
d-tensors determined by MDRs] } \label{thcandist}There are canonical
distortion relations encoding MDRs for respective Lagrange-Finsler
nonholonomic variables:

\begin{itemize}
\item For the curvature d-tensors,%
\begin{eqnarray*}
\widehat{\mathcal{R}}[\mathbf{g},\widehat{\mathbf{D}} &=&\nabla +\widehat{%
\mathbf{Z}}]=\mathcal{R}[\mathbf{g},\nabla ]+\widehat{\mathcal{Z}}[\mathbf{g}%
,\widehat{\mathbf{Z}}], \\
\ ^{\shortmid }\widehat{\mathcal{R}}[\ ^{\shortmid }\mathbf{g},\ ^{\shortmid
}\widehat{\mathbf{D}} &=&\ ^{\shortmid }\nabla +\ ^{\shortmid }\widehat{%
\mathbf{Z}}]=\ ^{\shortmid }\mathcal{R}[\ ^{\shortmid }\mathbf{g},\
^{\shortmid }\nabla ]+\ ^{\shortmid }\widehat{\mathcal{Z}}[\ ^{\shortmid }%
\mathbf{g},\ ^{\shortmid }\widehat{\mathbf{Z}}],
\end{eqnarray*}
with respective distortion d-tensors $\ \widehat{\mathcal{Z}},$ on $\mathbf{%
TV,}$ and $\ ^{\shortmid }\widehat{\mathcal{Z}},$ on $\mathbf{T}^{\ast }%
\mathbf{V};$

\item For the Ricci d-tensors,%
\begin{eqnarray*}
\widehat{R}ic[\mathbf{g},\widehat{\mathbf{D}} &=&\nabla +\widehat{\mathbf{Z}}%
]=Ric[\mathbf{g},\nabla ]+\widehat{Z}ic[\mathbf{g},\widehat{\mathbf{Z}}], \\
\ ^{\shortmid }\widehat{R}ic[\ ^{\shortmid }\mathbf{g},\ ^{\shortmid }%
\widehat{\mathbf{D}} &=&\ ^{\shortmid }\nabla +\ ^{\shortmid }\widehat{%
\mathbf{Z}}]=\ ^{\shortmid }Ric[\ ^{\shortmid }\mathbf{g},\ ^{\shortmid
}\nabla ]+\ ^{\shortmid }\widehat{Z}ic[\ ^{\shortmid }\mathbf{g},\
^{\shortmid }\widehat{\mathbf{Z}}],
\end{eqnarray*}%
with respective distortion d-tensors $\ \widehat{Z}ic,$ on $\mathbf{TV,}$
and $\ \ ^{\shortmid }\widehat{Z}ic,$ on $\mathbf{T}^{\ast }\mathbf{V};$

\item For the scalar curvature of canonical d-connection $\widehat{\mathbf{D}%
}\mathbf{,}$ or $\ ^{\shortmid }\widehat{\mathbf{D}}\mathbf{,}$%
\begin{eqnarray*}
\ _{s}^{\shortmid }\widehat{R}[\mathbf{g},\widehat{\mathbf{D}} &=&\nabla +%
\widehat{\mathbf{Z}}]=\mathcal{R}[\mathbf{g},\nabla ]+\ _{s}\widehat{Z}[%
\mathbf{g},\widehat{\mathbf{Z}}], \\
\ _{s}^{\shortmid }\widehat{R}[\ ^{\shortmid }\mathbf{g},\ ^{\shortmid }%
\widehat{\mathbf{D}} &=&\ ^{\shortmid }\nabla +\ ^{\shortmid }\widehat{%
\mathbf{Z}}]=\ _{s}^{\shortmid }R[\ ^{\shortmid }\mathbf{g},\ ^{\shortmid
}\nabla ]+\ _{s}^{\shortmid }\widehat{Z}[\ ^{\shortmid }\mathbf{g},\
^{\shortmid }\widehat{\mathbf{Z}}],
\end{eqnarray*}%
with respective distortion scalar functionals $\ \ _{s}\widehat{Z},$ on $%
\mathbf{TV,}$ and $\ _{s}^{\shortmid }\widehat{Z},$ on $\mathbf{T}^{\ast }%
\mathbf{V.}$
\end{itemize}
\end{theorem}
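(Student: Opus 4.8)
\begin{proof}
The plan is to substitute the distortion relations of Theorem \ref{thdistr}, namely $\widehat{\mathbf{D}}=\nabla +\widehat{\mathbf{Z}}$ on $\mathbf{TV}$ and $\ ^{\shortmid }\widehat{\mathbf{D}}=\ ^{\shortmid }\nabla +\ ^{\shortmid }\widehat{\mathbf{Z}}$ on $\mathbf{T}^{\ast }\mathbf{V}$, directly into the abstract curvature operator (\ref{dcurvabstr}) and to collect terms by tensorial type. First I would compute, for arbitrary d-vectors $\mathbf{X},\mathbf{Y}$,
\begin{equation*}
\widehat{\mathcal{R}}(\mathbf{X},\mathbf{Y})=(\nabla +\widehat{\mathbf{Z}})_{\mathbf{X}}(\nabla +\widehat{\mathbf{Z}})_{\mathbf{Y}}-(\nabla +\widehat{\mathbf{Z}})_{\mathbf{Y}}(\nabla +\widehat{\mathbf{Z}})_{\mathbf{X}}-(\nabla +\widehat{\mathbf{Z}})_{[\mathbf{X},\mathbf{Y}]},
\end{equation*}
expanding each composition by the Leibniz rule. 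The purely $\nabla $-quadratic part reproduces $\mathcal{R}[\mathbf{g},\nabla ]$, while the mixed and $\widehat{\mathbf{Z}}$-quadratic terms assemble into
\begin{equation*}
\widehat{\mathcal{Z}}(\mathbf{X},\mathbf{Y})=(\nabla _{\mathbf{X}}\widehat{\mathbf{Z}})_{\mathbf{Y}}-(\nabla _{\mathbf{Y}}\widehat{\mathbf{Z}})_{\mathbf{X}}+[\widehat{\mathbf{Z}}_{\mathbf{X}},\widehat{\mathbf{Z}}_{\mathbf{Y}}]+\widehat{\mathbf{Z}}_{\mathbf{T}[\nabla ](\mathbf{X},\mathbf{Y})}.
\end{equation*}
Since $\nabla $ is the Levi-Civita connection with $\mathbf{T}[\nabla ]=0$ (see (\ref{canondcl})), the last term drops, and $\widehat{\mathcal{Z}}=\widehat{\mathcal{Z}}[\mathbf{g},\widehat{\mathbf{Z}}]$ is an algebraic-differential functional of $\widehat{\mathbf{Z}}$ alone. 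Because $\widehat{\mathbf{Z}}$ is a genuine d-tensor by Lemma \ref{ldist}, the expression $\widehat{\mathcal{Z}}$ is tensorial and N-adapted, which establishes the first distortion relation; the cotangent version follows verbatim after applying the left label "$\ ^{\shortmid }$" and the $\mathcal{L}$-duality replacements.

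Next I would obtain the Ricci distortions by contracting the first and fourth indices exactly as in the Definition-Theorem on Ricci tensors and the associated Corollary producing (\ref{dricci})--(\ref{driccid}). The contraction is linear, so it commutes with the additive split $\widehat{\mathcal{R}}=\mathcal{R}+\widehat{\mathcal{Z}}$, giving $\widehat{R}ic=Ric+\widehat{Z}ic$ with $\widehat{Z}ic$ the corresponding contraction of $\widehat{\mathcal{Z}}$ (and similarly on $\mathbf{T}^{\ast }\mathbf{V}$). For the scalar curvature I would contract once more with the inverse d-metric $\mathbf{g}^{\alpha \beta }$, using $\nabla \mathbf{g}=0$ so that the metric may be pulled through $\nabla $; this yields $\ _{s}\widehat{R}=\ _{s}R+\ _{s}\widehat{Z}$ with $\ _{s}\widehat{Z}:=\mathbf{g}^{\alpha \beta }\widehat{Z}ic_{\alpha \beta }$, and its $\mathcal{L}$-dual $\ _{s}^{\shortmid }\widehat{Z}$ on the cotangent bundle.

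The genuinely delicate point is bookkeeping rather than concept: once everything is written with respect to the canonical N-adapted (co)frames (\ref{cnddapb}) and (\ref{ccnadap}), one must verify that $\widehat{\mathcal{Z}},\widehat{Z}ic,\ _{s}\widehat{Z}$ decompose consistently along the $h$-, $v$- and $cv$-blocks and do not mix with the anholonomy coefficients (\ref{anhrelc})--(\ref{anhrelcd}) in an uncontrolled way. This is secured by the remark after Theorem \ref{thdistr} that $\widehat{\mathbf{Z}}$ is itself assembled algebraically from $\widehat{\mathbf{T}}[\mathbf{g},\mathbf{N}]$, so all contributions trace back to the same fundamental MDR data; substituting the explicit coefficients (\ref{canlc}) into (\ref{dtors}) and then into the N-adapted curvature coefficients of Corollary \ref{acorolcurv} reproduces the three stated relations in coordinate form. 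The transfer between $\mathbf{TV}$ and $\mathbf{T}^{\ast }\mathbf{V}$ is then immediate through the pull-back/push-forward $\mathcal{L}$-duality used to define $\ ^{\shortmid }\mathbf{D}$ from $\mathbf{D}$.
\end{proof}

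$\square $\vskip5pt
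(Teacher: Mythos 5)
Your proposal is correct, but it is worth knowing that it does something the paper never does: the paper's own "proof" of Theorem \ref{thcandist} is the single sentence "We omit such tedious abstract and/or frame computations in this work," followed by citations to earlier papers where analogous computations are carried out for Lagrange--Finsler spaces and nonholonomic manifolds. What you supply is precisely the omitted abstract argument: the standard identity for the curvature of a distorted connection, $\mathcal{R}[\nabla +\mathbf{Z}](\mathbf{X},\mathbf{Y})=\mathcal{R}[\nabla ](\mathbf{X},\mathbf{Y})+(\nabla _{\mathbf{X}}\mathbf{Z})_{\mathbf{Y}}-(\nabla _{\mathbf{Y}}\mathbf{Z})_{\mathbf{X}}+[\mathbf{Z}_{\mathbf{X}},\mathbf{Z}_{\mathbf{Y}}]+\mathbf{Z}_{\mathbf{T}[\nabla ](\mathbf{X},\mathbf{Y})}$, with the last term killed by $\mathbf{T}[\nabla ]=0$, followed by linearity of the two contractions producing the Ricci and scalar distortions; your verification of this expansion is correct, and your observation that $\widehat{\mathcal{Z}}$ is a functional of $(\mathbf{g},\widehat{\mathbf{Z}})$ alone (since $\nabla $ is itself determined by $\mathbf{g}$), combined with the Consequence after Theorem \ref{thdistr} that $\widehat{\mathbf{Z}}$ is built algebraically from the MDR-induced data, is exactly what justifies the words "canonical" and "encoding MDRs" in the statement. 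Two small refinements: (i) the appeal to $\nabla \mathbf{g}=0$ when forming the scalar curvature is superfluous, since the split $\ _{s}\widehat{R}=\ _{s}R+\ _{s}\widehat{Z}$ follows from pointwise linearity of the contraction with $\mathbf{g}^{\alpha \beta }$ and no derivative needs to pass through the metric; (ii) since $\nabla $ (and hence $\nabla \widehat{\mathbf{Z}}$) is not N-adapted, the $h$-, $v$-, $cv$-block bookkeeping you flag is indeed where the "tedium" the paper avoids lives, but tensoriality of $\widehat{\mathcal{Z}}$, $\widehat{Z}ic$ and $\ _{s}\widehat{Z}$ is automatic because each is a difference of two curvature-type tensors, so no uncontrolled mixing with the anholonomy coefficients can occur. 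In short: the paper proves the theorem by citation; you prove it by computation, and the computation is sound.
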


\begin{proof}
We omit such tedious abstract and/or frame computations in this work. In
\cite{vijmpd03,vijmpd03a,vjmp05,vmon06,avjgp09,vijtp10,vcqg10,
vijtp10a,vijgmmp11,vcqg11,vepl11,vijtp13,vjpcs13,vport13,vepjc14,vepjc14a,vvyijgmmp14a,gvvepjc14,gvvcqg15}%
, there are provided details of similar proofs for Lagrange-Finsler spaces
and on (pseudo) Riemannian manifolds with nonholonomic fibered structure. In
this work, we shall construct exact solutions for MGTs encoding MDRs working
with the canonical d-connection $\widehat{\mathbf{D}}\mathbf{,}$ or $\
^{\shortmid }\widehat{\mathbf{D}},$ which can be restricted to
LC-configurations by solving, respectively, the equations (\ref{lccondl}),
or (\ref{lccondh}).
\end{proof}

$\square $\vskip5pt

\begin{remark}
\textsf{[MDR-distortions of canonical almost symplectic Lagrange-Hamilton
structures] } \newline
The conditions of above Theorem can be reformulated for distortions of the
almost symplectic Lagrange, or Finsler, d-connections, for instance,
considering
\begin{eqnarray*}
\widetilde{\mathcal{R}}[\widetilde{\mathbf{g}} &\simeq &\widetilde{\theta },%
\widetilde{\mathbf{D}}=\nabla +\widetilde{\mathbf{Z}}]=\mathcal{R}[%
\widetilde{\mathbf{g}}\simeq \widetilde{\theta },\nabla ]+\widetilde{%
\mathcal{Z}}[\widetilde{\mathbf{g}}\simeq \widetilde{\theta },\widetilde{%
\mathbf{Z}}], \\
\ ^{\shortmid }\widetilde{\mathcal{R}}[\ ^{\shortmid }\widetilde{\mathbf{g}}
&\simeq &\ ^{\shortmid }\widetilde{\theta },\ ^{\shortmid }\widetilde{%
\mathbf{D}}=\ ^{\shortmid }\nabla +\ ^{\shortmid }\widetilde{\mathbf{Z}}]=\
^{\shortmid }\mathcal{R}[\ ^{\shortmid }\widetilde{\mathbf{g}}\simeq \
^{\shortmid }\widetilde{\theta },\ ^{\shortmid }\nabla ]+\ ^{\shortmid }%
\widetilde{\mathcal{Z}}[\ ^{\shortmid }\mathbf{g}\simeq \ ^{\shortmid }%
\widetilde{\theta },\ ^{\shortmid }\widetilde{\mathbf{Z}}],
\end{eqnarray*}%
and any similar geometric objects with "tilde" symbols. Similar distortions
can be defined and computed, for instance, for the Chern d-connection,
Berwald d-connection and any d-connection structure which can be constructed
in unique forms by the same d-metric and N-connection structures \cite%
{vijgmmp12}. Working with "hat" distortions stated in Theorem \ref{thcandist}%
, we can prove the decoupling property of modified/ generalized Einstein
equations on $\mathbf{TV}$ and $\mathbf{T}^{\ast }\mathbf{V.}$
\end{remark}

\subsubsection{On Akbar-Zadeh definition of the Ricci tensor for Finsler
like spaces}

In a series of geometric models, it is applied an alternative concept of Ricci
tensor for Finsler spaces which by definition does not involve the concept
of N-connection and/or d--connection, see \cite{akbar88,akbar95}. Following
Akbar-Zadeh constructions for a base manifold $M,$ it in used as the
curvature for Finsler geometry a value
\begin{equation*}
\breve{\mathbf{R}}=\breve{R}_{k}^{i}\ dx^{k}\otimes \frac{\partial }{%
\partial x^{i}}|_{x}:T_{x}M\rightarrow T_{x}M,
\end{equation*}%
defined for any point $x\in M.$ This type of "curvature" is not with an
associated 2-form for a linear connection like in (\ref{dcurvabstr}), but
constructed directly from the data for nonlinear geodesic equations (\ref%
{ngeqf}) and respective semi-spray $\tilde{G}^{k},$ when
\begin{equation*}
\breve{R}_{k}^{i}:=2\frac{\partial \tilde{G}^{i}}{\partial x^{k}}-y^{j}\frac{%
\partial ^{2}\tilde{G}^{i}}{\partial x^{j}\partial y^{k}}+2\tilde{G}^{j}%
\frac{\partial ^{2}\tilde{G}^{i}}{\partial y^{j}\partial y^{k}}-\frac{%
\partial \tilde{G}^{i}}{\partial y^{j}}\frac{\partial \tilde{G}^{j}}{%
\partial y^{k}}.
\end{equation*}%
Contracting indices in this formula, we can define and compute a scalar
function $\breve{R}(x,y):=F^{-2}\breve{R}_{i}^{i}.$ By definition, the
scalar $\breve{R}$ is positive homogeneous of degree $0$ in $v$--variables $%
y^{a}.$ Such values are convenient for constructing and studies of geometric
objects in $T_{x}M$ for a point $x\in $ $M.$

The Ricci tensor "a la Akbar-Zadeh" is introduced by
\begin{equation}
\breve{R}ic_{jk}:=F^{-2}\frac{\partial ^{2}\breve{R}}{\partial y^{j}\partial
y^{k}},  \label{ricciaz}
\end{equation}%
which is different from the (\ref{dricci}) and other types of Ricci
d-tensors constructed in explicit form, for instance, using the
Finsler-Cartan d-connection. This geometric object is induced by the Finsler
metric $F$ via inverse Hessian ${\tilde{g}}^{ij}$ and $\tilde{G}^{k}.$ In $%
\breve{R}ic_{jk}$ (\ref{ricciaz}), there are not involved definitions for a
N--connection structure, lifts of metrics on total space tangent bundle, and
d--connections. For similar construction on a Lorentz manifold, $%
M\rightarrow \mathbf{V},$ \ such values may characterize certain MDRs (\ref%
{mdrg}) for ${\tilde{g}}_{ij}$ encoding such modifications. It is possible
also to introduce a variant of Einstein like equation, $\breve{R}%
ic_{jk}=\lambda (x){\tilde{g}}_{jk}$, i.e. when the scalar function $\breve{R%
}(x,y)=\lambda (x)$ is a function only on $h$--variables $x^{k}.$ Another
important property of $\breve{R}ic_{jk}$ (\ref{ricciaz}) is that it is
always symmetric (by definition) which provides a "simplified" model to
study Ricci fields and/or evolution dynamics in any point $T_{x}M,$ see \cite%
{bao07}.

Nevertheless, we consider that is not possible to define a complete,
self-consistent and physically viable Finlser like MGT on $\mathbf{TV}$ and/
or $\mathbf{T}^{\ast }\mathbf{V}$ working only with geometric objects generated by fundamental metric structures and do not
involving additional constructions with N--connection and d--connection
structures. In order to introduce matter fields (tensor and spinor ones) and
define geometric flows (with Perelman's functionals), we need covariant and
local derivatives which are positively defined by some (non) linear
connection objects, see critics in Refs. \cite{vplb10,vijgmmp12,vijmpd12}.
We can distinguish in physical theories certain configurations with $\breve{R%
}ic_{jk}$ for the h-components of a distortion relation like in Theorem \ref%
{thcandist} computing distortion relations, for instance,
\begin{equation*}
\widehat{R}ic_{ik}[\mathbf{\tilde{g}},\widehat{\mathbf{D}}]=\breve{R}ic_{jk}[%
{\tilde{g}}_{ij}]+\breve{Z}ic_{ij}[\mathbf{\tilde{g}},\widehat{\mathbf{Z}}].
\end{equation*}%
Such constructions can not be complete without additional assumptions on $v$%
- and/or $cv$-components and it is not possible to integrate in certain
general forms modified Einstein equations with $\breve{R}ic_{jk}$ and/or $%
\breve{Z}ic$. In our works, we shall work with Ricci tensors for
d--connections, of type (\ref{dricci}), which allow certain nonholonomic
transforms and constraints on $h$-subspaces in order to reproduce geometric
structures related to $\breve{R}ic_{jk}$.

\section{Geometric and Physical Principles for Gravity Theories on (Co) Tangent Bundles}

\label{saxiom} In this section, we formulate and analyse a set of geometric and physical principles which are necessary for self-consistent causal formulations of MGTs defined by (generalized) Finsler multi-connection and/or bi-metric structures. For metric compatible d-connections, the geometric constructions are similar to the GR theory but extended on nonholonomic phase space manifolds or (co)tangent bundles. Physically important Lagrange densities are postulated as in the Einstein gravity theory but in terms of respective data with triples of fundamental geometric objects (i.e. some metric compatible d-metric, d-connection, and N-connection structures) on base Lorentz manifolds and their total (co) tangent bundles. Applying formal geometric and/or
N-adapted variational methods, we derive formulas for (effective) sources of matter fields on curve phase spacetimes. There are constructed and studied explicit models with MDRs and locally anisotropic gauge and Higgs field interactions, modified massive and bi-metric theories, short-range models with LIVs.

\subsection{Principles for extending GR to Finsler--Lagrange-Hamilton gravity theories}

\label{ssmpext}The concept of flat Minkowski spacetime (with pseudo--Euclidean signature) and the postulates which are necessary for formulating the special relativity theory, SRT, allow us to unify in a relativistic manner the classical mechanics and the Maxwell electromagnetic field theory. The approach was formulated in agreement with various types of Michelson--Morley experiments proving the existence of a constant maximal speed of light. Such logical and explicit experiments prove and involve certain
fundamental properties of local spacetime local isotropy and homogeneity under the assumption that the concept of ether is not necessary for describing vacuum configurations. The most important symmetries in SRT are those of Lorentz (pseudo-rotation) and Poincar\'{e} (with additional translations) invariance with respect to linear group transforms. Such groups of automorphism of the Minkowski spacetime determine the conservation laws for energy and momentum (rotation momentum) values.

The GR theory was formulated in a standard (pseudo) Riemannian form using geometric data $(g,\nabla )$ for a Lorentz manifold $\mathbf{V}$ with causality structure. That approach preserves locally the symmetries of SRT following certain fundamental principles and axioms which can be extended for a large class of generalized theories with MDRs and LIVs. In this subsection, we speculate on modifications of GR for metric compatible Finsler like connections on $\mathbf{TV}$ and $\mathbf{T}^{\ast }\mathbf{V}$ of the principle of equivalence, with a partial realisation of the Mach principle; and of the general covariance principle. It is discussed the relation of equations of motion and conservation laws to Bianchi relations and nonholonomic deformations of d-connections. We elaborate on equivalent geometric and variational formulations in N-adapted variables of generalized Einstein
equations for MGTs with MDRs.

\subsubsection{Modified equivalence principles}

The experimental data show that the Newtonian gravitational force on a body is proportional to its inertial mass. This supports a fundamental idea that all bodies and fields are influenced in a similar "manner" by gravity and, indeed, point masses, light and small perturbations of scalar fields fall precisely in the same way in gravitational fields. Considering that motion of probing bodies and linearized interactions of classical fields are independent of the nature of the bodies, the paths of freely falling bodies define a preferred set of lines on a curved spacetime which locally is just as in SRT.

The world lines of freely falling bodies and small perturbations of electromagnetic/scalar fields in a gravitational field in GR are simply described by the geodesics of the (curved) spacetime metric. This suggests the possibility of ascribing the properties of the gravitational field to the structure of spacetime itself. Because MDRs on a Minkowski spacetime can be associated with metrics of type $g_{ij}(y)$ and/or $g_{ij}(p),$ when GR is defined by metrics of type $g_{ij}(x),$ we suppose that one can be formulated a generalized equivalence principle on some generalized Finsler spacetimes with metrics of type $\ ^{F}g_{ij}(x,y)$ and/or $\ ^{F}g_{ij}(x,p) $. For general coordinate transforms, we can omit the left label F and consider structures on certain (co) tangent spaces. Such locally anisotropic metrics can be related to Hessians (\ref{hessls}) and/or (\ref{hesshs}) of respective nonlinear quadratic elements (\ref{nqe}) and/or (\ref{nqed}) and corresponding N--connection structures. For small values of an
indicator of MDRs (\ref{mdrg}), we suppose that it is possible to preserve the ideas of Universality of Free Fall (and similar Universality of the Gravitational Redshift, or propagation of small perturbations of light/scalar fields etc.) in a Finsler-Lagrange-Hamilton type generalized spacetime modelled by data $\left( \mathbf{N,g,D}\right) $ and/or
$(\ ^{\shortmid }\mathbf{N,\ ^{\shortmid }g,\ ^{\shortmid }D)}$. These geometric data can be stated on nonholonomic manifolds with fibred structure and/or on nonholonomic (co) tangent bundles. In such phase spaces (i.e. locally anisotropic spacetime models), the paths of freely falling bodies, and propagation of small perturbations of scalar fields, are not usual geodesics but certain nonlinear (semi--spray) curves which are different from auto--parallels of $\mathbf{D}$ and/or $\ ^{\shortmid }\mathbf{D}$. For
models of generalized Finsler spacetimes, it is important to study the geometry of semi--spray configurations (\ref{ngeqf}) as N--connection generalizations of auto--parallel and geodesic curves. In certain sense, semi--sprays characterize via N--connections and respective adapted nonholonomic variables certain MDRs effects on "physical paths" of test particles.

Working with metric compatible d--connections $\mathbf{D}$ and/or $\ ^{\shortmid }\mathbf{D}$ completely determined by some metric and N--connection structures on respective (co) tangent bundles, we can establish a 1--1 correspondence between one type of preferred curves (semi--sprays) and respective auto--parallels. This way, we can encode equivalently the experimental (curvature deviation) data with respect to both types of congruences. To derive important physical equations for a Finsler gravitational and matter fields, covariant derivatives determined by a d-connection $\mathbf{D}$, or $\ ^{\shortmid }\mathbf{D}$, must be
considered. For canonical constructions, we can work equivalently $\widehat{\mathbf{D}},$ or $\widetilde{\mathbf{D}},$ and respective generalizations for locally anisotropic spinors, see details in Refs.  \cite{vjmp96,vhsp98,vmon98,vpcqg01,vmon02,vtnpb02,vvicol04,vmon06,vjmp06,vjmp09}. Such a d--connection can be used for defining generalized Dirac, d' Alambert and other physically important operators which allows us to compute the light and particle propagation and study classical and quantum field interactions in a Finsler spacetime and/or Lagrange-Hamilton type modifications.

\begin{principle}
\label{pgpeq} \textbf{\ [modified equivalence principle]}:\ In a MGT with an indicator
$\varpi (x^{i},E,\overrightarrow{\mathbf{p}},m;\ell _{P})$ (\ref{mdrg}) modeling MDRs and LIVs on (co) tangent bundles, there are
nonholonomic variables when point masses, light and small perturbations of scalar fields fall precisely in the same way in an effective phase spacetime along semi-spray configurations (\ref{ngeqf}) and associated auto-parallel equations on $\mathbf{TV}$ or $\mathbf{T}^{\ast}\mathbf{V}$. Using canonical nonholonomic variables, we can describe such a phase space by $\mathcal{L}$%
-dual Lagrange and Hamilton geometries (\ref{phspgd}) with respective canonical d-connections ($\widehat{\mathbf{D}}$ and $\widetilde{\mathbf{D}},$ or $\ ^{\shortmid} \widehat{\mathbf{D}}$ and $\ ^{\shortmid}\widetilde{\mathbf{D}})$. Such canonical and/or almost symplectic d-connections are related by canonical distortion relations (\ref{candistr}) and can be used for definition of N-adapted covariant derivative operators, Dirac operators for locally anisotropic spinors and/or almost symplectic models.
\end{principle}

This Principle can be formulated in more general forms for nonholonomic configurations modeling noncommutative and/or supersymmetric MGTs for various types of metric, frame and (non) linear connection structures derived for certain fundamental geometric objects on generalized / analogous manifolds and (co) tangent bundles. Geometrically, such possibilities are motivated by the fact that complex / noncommutative / supersymmetric configurations can be modelled by certain nonholonomic distributions on real manifolds and/or bundle spaces. In results, we suppose that it can be provided always  a self-consistent physical interpretation for theories constructed as nonholonomic deformations of GR and standard particle physics models. Such geometric
models can be elaborated for some general data of type $\left( \mathbf{N,g,D}\right)$ which in explicit and non-explicit form involve certain MDRs (\ref{mdrg}) or in other nonlinear and non-quadratic forms.

\subsubsection{Generalized Mach principles}

The Einstein gravity theory was formulated using a second much less precise set of ideas which goes under the name of Mach`s principle. There were involved various philosophical speculations on properties of space and time (in unified forms, spacetime) aether and associated models of continuum mechanics for such an aether media. For elaborating pre--relativity notions of spacetime and classical field theories and in STR, the geometric structure of spacetime is given once and for all and is considered to be unaffected by the material bodies or certain field interactions that may be present. In particular, it was stated that the
properties of "inertial motion" and "non--rotating" space are not influenced by matter and fields in the universe.

Mach supposed that all matter in the universe (in a modern fashion, we can include nonlinear effective/ observational / dark matter field interactions and evolution of certain material and fundamental fields configurations) should contribute to the local definition of "non--acceleration" and "non--rotating" of the fundamental space (in a modern approach, spacetime) structure. Einstein accepted this idea and was strongly motivated to formulate a theory where, unlike SRT, the spacetime geometry is
influenced by the presence of matter. Such purposes were achieved only partially in GR: the influence of matter was encoded in the right part of the Einstein equations via energy-momentum tensor. In a more general context, we can consider certain effective sources determined cumulatively by generalized connections and/or generic off-diagonal configurations and nonholonomic gravitational distributions and non-minimal coupling to matter fields, for stochastic / kinetic processes, geometric evolution processes etc. Such generalized and nonholonomic spacetime configurations encode more richer geometric vacuum and nonvacuum structures with nontrivial topology, (non) linear symmetries and modified conservation laws. For such off-diagonal gravitational solutions and related spacetime models, the Max principle is extended to a more general set of ideas (still less precise) defining a gravitational spacetime ether with rich structure encoding nonlinear processes (relativistic evolutions and nonlinear interactions) and nonholonomic configurations.

For generalized Finsler gravity theories on (co) tangent bundles derived from classical and/or quantum MDRs, we have to formulate a generalized Mach principle stating that the quantum energy and (non) linear fluctuations, motion of (effective) particles and fields, background metrics and generalized hidden / (non) linear symmetries should all contribute to the spacetime structure modifying it into a nonholonomic phase space. We argue that generalizations to effective phase spaces extending the concept of Lorentz manifolds is motivated by the presence of the quantum world, by kinetic and diffusion processes, by nonlinear off-diagonal interactions and modified symmetries. The influence of such (non) linear effects is encoded both into the nonholonomic structure for data $\left( \mathbf{N,g,D}\right)$ and/or $(\ ^{\shortmid }\mathbf{N,\ ^{\shortmid }g,\ ^{\shortmid }D)}$ and into energy-momentum tensors for matter fields embedded self-consistently in a spacetime aether characterized additionally by velocity, and/or momentum, coordinates $y^{a}$ and/or $p_{a}$.

\begin{principle}
\label{pgmp}\textbf{\ [generalized Mach principle]}:\ An effective phase space with MDRs on (co) tangent bundles (i.e. a locally anisotropic spacetime modelled as a generalized Finsler-Lagrange-Hamilton geometry) encodes contributions of nonlinear and nonholonomic distortion structures and various types of evolution effects, nonlinear interactions, kinetic and diffusion processes,
various nonlinear models with complex structure, nonlinear information, fractional derivative and pattern structure process etc. This consists in a less precise concept of generalized Finsler-Mach principles which can be stated by additional assumptions for explicit models with respective MDRs; certain canonical geometric data and distortions of $\left( \mathbf{N,g,D}\right) $ and/or $(\ ^{\shortmid }\mathbf{N,\ ^{\shortmid }g,\ ^{\shortmid}D)}$; and effective sources of matter.
\end{principle}

Such Finsler-Lagrange-Hamilton generalized Mach principles can be re-formulated, restricted and/or generalized in various forms encoding not only MDRs but also characterizing information processes, complex structures, evolutions of ecological and biophysical systems. The Principle \ref{pgmp} can be correlated to more richer geometric structures (commutative and noncommutative ones, algebroids and gerbes, supersymmetric generalizations) in classical and quantum theories, for models of geometric flows, off-diagonal solutions quasi-periodic and/or pattern forming structures etc. Various examples of such constructions determined by  exact and parametric solutions were provided and studied in Refs. \cite{vjmp06,vjmp09,vcsf12,vsym13,vch2416,gvvcqg15,bubuianucqg17,avjmp09,avjgp09,bvnd11}.

\subsubsection{Principle of general covariance and equivalent geometrization of MGTs}

In GR and various classes of MGTs, the principle of general covariance is a natural consequence of geometrization of physical and spacetime models constructed on (pseudo) Riemannian manifolds and metric-affine generalizations. Such theories can be constructed for different Lagrange densities for gravitational and matter fields. This reflects the idea that the geometric and physical constructions should not depend on the type of frames of reference (observers) and coordinate transforms (local parameterizations on some finite, or infinite, regions consisting an atlas covering a spacetime manifold/ bundle spaces).

In the definitions of Finsler-Lagrange-Hamilton geometries, the concept of manifold is also involved. For certain classes of MGTs extending the GR to models with MDRs, such manifolds are stated as Lorentz (co) tangent bundle spaces. So, the principles of general covariance has to be extended on some nonholonomic $\mathbf{V}$, $T\mathbf{V} $ and/or $T^{\ast}\mathbf{V}$. The N-connection structures for such spaces can be defined in coordinate-free forms encoding MDRs. We can introduce certain "preferred" N-adapted systems of reference and respective coordinate transforms when a fixed $h$--$v$--decomposition is prescribed. This does not prohibit us to state respective principles of general covariance for total phase spaces endowed with geometric structures
$\left(\mathbf{N,g,D}\right) $ and/or $(\ ^{\shortmid }\mathbf{N,\ ^{\shortmid }g,\ ^{\shortmid }D)}$. For instance, it is possible to define certain nonholonomic canonical variables modeling a pseudo Hamilton space,
$(H:\ ^{\shortmid }\widetilde{ \mathbf{g}},\ ^{\shortmid }\widetilde{\mathbf{N}},\ ^{\shortmid }\widetilde{\mathbf{D}}),$ but such a model can be re-defined by general frame/coordinate transforms and nonholonomic deformations into some general data
$(\ ^{\shortmid }\mathbf{N,\ ^{\shortmid }g,\ ^{\shortmid }D),}$ or certain special ones with equivalent almost symplectic
variables or allowing a general integration of certain physical important systems of PDEs.

Using not N-adapted frame transforms, all canonical and noncanonical N-adapted constructions can be transformed into certain general frame/coordinate ones, with "hidden" N-connection structures (encoding MDRs) on $V, TV$ and/or $T^{\ast }V.$ In such cases, we can work without boldface symbols and use generic off-diagonal metrics parameterized, respectively, in the forms (\ref{offd}).

The \textsf{principle of general covariance in GR} and MGTs with MDRs can be generalized to a \textsf{principle of equivalent geometrization of gravitational theories in terms of canonical geometric/physical objects} (\ref{phspgd}) and respective distortion relations (\ref{candistr}). Distortions of all geometric and physical objects can be computed following geometric methods (see, for instance, Theorem \ref{thcandist}). Such distortions of will modify, for instance, the definition of (effective) sources of matter fields and corresponding gravitational an matter filed equations (as we shall prove in next section). Nevertheless, a "dictionary and instructions" for an equivalent geometric formulation of different physical models can be always formulated if the distortions are uniquely determined by the N-connection and metric structures. For instance, performing general nonholonomic frame transforms and deformations, we can geometrize physical theories in general form in terms of geometric objects without "tilde", "hats" etc and work with N-adapted, or not N-adapted, (non) holonomic variables. Such theories can be related to GR and certain classes of MGTs induced on base manifolds via nonholonomic constraints on distortions, when $\mathbf{Z}=0$ and/or $\ ^{\shortmid }\mathbf{Z}=0,$
respectively, for $\mathbf{D}_{|\mathbf{Z}=0}\simeq \nabla $ and/or
$\ ^{\shortmid }\mathbf{D}_{|\ ^{\shortmid }\mathbf{Z}=0}\simeq \ ^{\shortmid}\nabla ,$ even $\mathbf{D}\neq \nabla $ and/or
$\ ^{\shortmid }\mathbf{D}\neq \ ^{\shortmid }\nabla $. For canonical nonholonomic distortions, we have to impose conditions of type (\ref{lccondl}) and/or (\ref{lccondh}).

\begin{principle}
\label{pgcov}\textbf{\ [general covariance and equivalent nonholonomic geometrizations of gravitational theories]}: On (co) tangent Lorentz bundles and nonholonomic Lorentz manifolds endowed with N-connection structures (encoding MDRs), it is possible to model geometrically MGTs following principles of general covariance on total and base space manifolds. Using (canonical) distortions such geometric models can be described equivalently for different classes of nonholonomic variables and (nonlinear) connection structures. Imposing additional nonholonomic constraints, we can define/extract Levi-Civita configurations with generic off-diagonal metrics.
\end{principle}

Above formulated Principles \ref{pgpeq}, \ref{pgmp}, and \ref{pgcov} may involve in a less explicit form various possible multi-connection structures of MGTs with MDRs. In a more general context, considering generic off-diagonal metrics and d-metrics encoding certain background configurations (for instance, like in bi-metric theories, with massive gravitons etc.), we can construct MGTs with multi-metric structures.

\subsubsection{Principles of analogy of N-adapted operators defined by d-metrics and d-connections}

Minimal extensions of GR to (co) tangent Lorentz bundles for models embedding MDRs can be performed by analogy with constructions in pseudo-Riemann geometry (in a more general case, we can use metric-affine and vierbein/spinor geometry) but applying nonholonomic geometric methods and performing respective N-adapted variational, differential, and integral calculi. In abstract forms, there are considered respective Lagrange densities
\begin{eqnarray}
\mathcal{L}(\mathbf{N}(u)\mathbf{,g}(u)\mathbf{,D}(u)\mathbf{;\ ^{A}}\phi
(u)) &=&\ ^{g}\mathcal{L}(\mathbf{N,g,D})+\ ^{m}\mathcal{L}(\mathbf{N,g,D;}\
^{A}\phi )\mbox{ on }T\mathbf{V;}  \label{generlag} \\
\ \mathbf{\ _{\shortmid }}\mathcal{L}(\ ^{\shortmid }\mathbf{N}(\
^{\shortmid }u)\mathbf{,\ ^{\shortmid }g}(\ ^{\shortmid }u)\mathbf{,\
^{\shortmid }D}(\ ^{\shortmid }u);\ _{\shortmid }^{A}\phi (\ ^{\shortmid
}u)) &=&\ _{\shortmid }^{g}\mathcal{L}(\ ^{\shortmid }\mathbf{N,\
^{\shortmid }g,\ ^{\shortmid }D})+\ \ _{\shortmid }^{m}\mathcal{L}(\
^{\shortmid }\mathbf{N,\ ^{\shortmid }g,\ ^{\shortmid }D;}\ _{\shortmid
}^{A}\phi )\mbox{ on }T^{\ast }\mathbf{V.}  \notag
\end{eqnarray}%
In these formulas, $\ ^{g}\mathcal{L}$ and $\ _{\shortmid }^{g}\mathcal{L}$ are corresponding (on total spaces of tangent and contangent bundles) Lagrange densities for modified gravitational fields without matter fields. One can be constructed certain effective matter sources and modeled matter field interactions via distortions of d-connections and/or by introduced bi-/ multi-metric structures. Such effective matter terms can be included as additional terms to standard matter Lagrange densities. The respective matter fields $\ ^{A}\phi $ and $\ {\shortmid }^{A}\phi $ are labeled by an abstract index $A$. Such fields
can be scalar, gauge, fermion/spinor fields etc. The explicit constructions of $\ ^{m}\mathcal{L}$ and/or
$\ _{\shortmid }^{m}\mathcal{L}$ depend on the type of theories and phenomenological models we have to elaborate and should be supporte by a set of experimental and/or observational data which are considered for verifying respective MGTs. In general, (effective) matter Lagrangians are functionals of certain background geometric data $\left( \mathbf{N,g,D}\right) $ and/or
$(\ ^{\shortmid }\mathbf{N,\ ^{\shortmid }g,\ ^{\shortmid}D})$ and encode certain fundamental constants and MDRs, prescribe (non) linear symmetries, assumptions on topological configurations etc. We shall analyze some examples of $\mathcal{L}$ and
$\mathbf{\ _{\shortmid }}\mathcal{L} $ in section \ref{sslagrd}.

Formulas (\ref{generlag}) can be written in canonical variables for respective Finsler/ - Lagrange / - Hamilton spaces, i.e. with "tilde" values like $\ ^{m}\widetilde{\mathcal{L}}(\widetilde{\mathbf{N}},\widetilde{\mathbf{g}},\widetilde{\mathbf{D}}; \ ^{A}\widetilde{\phi })$ and necessary type tildes on other Lagrange densities and geometric objects. Such variables allow and effective Lagrange-Hamilton interpretation of the total phase space/ spacetime model, which is convenient for elaborating quantum models. Using general frame and/or coordinate transforms (i.e. Principle \ref{pgcov}), "tilde" Lagrange densities can be transformed into "hat" Lagrange densities (in such variables, it is possible to decouple and integrate physically important PDEs in general forms). For additional nonholonomic zero distortion conditions, it is possible to extract LC-configurations and project on base Lorentz manifolds in order to find new classes of exact solutions in GR encoding certain information, for instance, about MDRs.

\begin{principle}
\label{panalogy}\textbf{\ [principle of analogy of  Lagrange densities]}: Gravitational and matter field Lagrange densities considered in GR and MGTs for metric-affine manifolds admit generalizations on (co) tangent Lorentz bundles using nonholonomic variables and deformations of fundamental geometric/ physical objects which can be transformed into canonical ones modeling pseudo Finsler/ - Lagrange / - Hamilton geometries.
\end{principle}

We note that Principles \ref{pgpeq}--\ref{panalogy} can be formulated in general form for any d-connection $\mathbf{D}$ and/ $\mathbf{\ ^{\shortmid }D}$ (it can be metric compatible or not) related in a unique form via distortion relations to any prescribed LC-connection, canonical d--connection, or almost symplectic connection structures. There are a number of conceptual and technical difficulties in elaborating physical theories with metric noncompatible connections. For instance, there are problems with definition of spinors and Dirac operators, conservation laws, geometric flows etc., see details in Refs. \cite{vplb10,vijgmmp12,vijmpd12} and references therein. Nevertheless, self-consistent physical models can be constructed for any well-defined geometrical data generating metric compatible configurations and then distorting in certain unique forms the fundamental physical objects and equations. For more general geometric models and MGTs, we can consider theories with multi-/ metric/connection/measure structures. Certain realistic geometric data (\ref{phspgd}), distortions (\ref{candistr}), and Lagrange densities (\ref{generlag}) should be fixed in such forms which can be verified by experimental/observational/ theoretical simulation data. Such theories for nonholonomic manifolds and tangent bundles (for instance, $\mathbf{T}^{\ast }\mathbf{V}$) were studied in details, see some series of geometric and mathematical physics works with applications in modern cosmology and
astrophysics \cite{v94,vog94,vmon98,vmon02,vmon06,vijgmmp08,vlqg09,vjmp09,avjgp09,bvcejp11,gvvepjc14, gheorghiuap16,ruchinepjc17,svvijmpd14,vijgmmp10a,vplb10,vijgmmp12,vijmpd12,vport13,vijgmmp14, vepjc14,vepjc14a,vacaruplb16}. The constructions be can elaborated (generalized, re-defined) on "dual" spaces like $\mathbf{T}^{\ast }\mathbf{V}$ following certain $\mathcal{L}$-dual, almost symplectic and other type principles.

\subsection{Lagrange densities and energy-momentum tensors on (co) tangent Lorentz bundles}

\label{sslagrd}In this subsection, we state the conventions on Lagrange densities and derive respective energy-momentum tensors which will be used for formulating three classes of MGTs with MDRs. We shall consider
arbitrary metric compatible d-connections $\mathbf{D}$ and/or $\ ^{\shortmid}\mathbf{D}$.

\subsubsection{Scalar fields on (co) tangent Lorentz bundles}

Let us consider examples of Lagrange densities for matter fields with locally anisotropic interactions:

\begin{convention}
\label{convscfields}\textsf{[scalar fields with MDRs on (co) tangent bundles] } \newline
Scalar field interactions on phase spaces with MDRs can be modeled by Lagrange densities
\begin{equation}
\ ^{m}\mathcal{L}=\ ^{\phi }\mathcal{L}(\mathbf{g;}\phi )\mbox{ on }T\mathbf{%
V}\mbox{ and/or }\ _{\shortmid }^{m}\mathcal{L}=\ _{\shortmid }^{\phi }%
\mathcal{L}(\ ^{\shortmid }\mathbf{g};\ _{\shortmid }\phi )\mbox{ on }%
T^{\ast }\mathbf{V}  \label{lagscf}
\end{equation}%
depending only on d-metrics, $\mathbf{g}_{\mu \nu }$ and/or $\ ^{\shortmid }%
\mathbf{g}^{\alpha \beta }$ (this allows to construct exact off-diagonal
solutions in explicit form, see discussion and references for Direction 10, appendix \ref{sssdir10}), and on scalar fields $\ ^{A}\phi =\phi (u)$ and $\
_{\shortmid }^{A}\phi =\ _{\shortmid }\phi (\ ^{\shortmid }u)$.
\end{convention}

Performing a N-adapted variational calculus, we prove

\begin{consequence}
\textsf{[energy-momentum d-tensors for locally anisotropic interacting scalar fields ] } \newline
The symmetric energy-momentum d-tensors for scalar fields on (co) tangent bundles derived for respective Lagrange densities (\ref{lagscf}) are computed for possible h- and v-, or cv-splitting
\begin{eqnarray}
\ ^{\phi }\mathbf{T}_{\alpha \beta } &=&-\frac{2}{\sqrt{|\mathbf{g}_{\mu \nu
}|}}\frac{\delta (\sqrt{|\mathbf{g}_{\mu \nu }|}\ \ ^{\phi }\mathcal{L})}{%
\delta \mathbf{g}^{\alpha \beta }}=\ ^{\phi }\mathcal{L}\mathbf{g}_{\alpha
\beta }+2\frac{\delta (\ ^{\phi }\mathcal{L})}{\delta \mathbf{g}^{\alpha
\beta }}  \label{emscdt} \\
&=&\{\ ^{\phi }\mathbf{T}_{ij}=-\frac{2}{\sqrt{|\mathbf{g}_{\mu \nu }|}}%
\frac{\delta (\sqrt{|\mathbf{g}_{\mu \nu }|}\ \ ^{\phi }\mathcal{L})}{\delta
\mathbf{g}^{ij}}=\ldots ,\ ^{\phi }\mathbf{T}_{ab}=-\frac{2}{\sqrt{|\mathbf{g%
}_{\mu \nu }|}}\frac{\delta (\sqrt{|\mathbf{g}_{\mu \nu }|}\ \ ^{\phi }%
\mathcal{L})}{\delta \mathbf{g}^{ab}}=\ldots \};  \notag
\end{eqnarray}
\begin{eqnarray*}
\ _{\shortmid }^{\phi }\mathbf{T}_{\alpha \beta } &=&-\frac{2}{\sqrt{|\
^{\shortmid }\mathbf{g}_{\mu \nu }|}}\frac{\delta (\sqrt{|\ ^{\shortmid }%
\mathbf{g}_{\mu \nu }|}\ \ \ _{\shortmid }^{\phi }\mathcal{L})}{\delta \
^{\shortmid }\mathbf{g}^{\alpha \beta }}=\ \ _{\shortmid }^{\phi }\mathcal{L}%
\ ^{\shortmid }\mathbf{g}_{\alpha \beta }+2\frac{\delta (\ \ \ _{\shortmid
}^{\phi }\mathcal{L})}{\delta \ ^{\shortmid }\mathbf{g}^{\alpha \beta }} \\
&=&\{\ _{\shortmid }^{\phi }\mathbf{T}_{ij}=-\frac{2}{\sqrt{|\ ^{\shortmid }%
\mathbf{g}_{\mu \nu }|}}\frac{\delta (\sqrt{|\ ^{\shortmid }\mathbf{g}_{\mu
\nu }|}\ \ \ _{\shortmid }^{\phi }\mathcal{L})}{\delta \ ^{\shortmid }%
\mathbf{g}^{ij}}=\ldots ,\ _{\shortmid }^{\phi }\mathbf{T}^{ab}=-\frac{2}{%
\sqrt{|\ ^{\shortmid }\mathbf{g}_{\mu \nu }|}}\frac{\delta (\sqrt{|\
^{\shortmid }\mathbf{g}_{\mu \nu }|}\ \ \ _{\shortmid }^{\phi }\mathcal{L})}{%
\delta \ ^{\shortmid }\mathbf{g}_{ab}}=\ldots \}.
\end{eqnarray*}
\end{consequence}

Scalar field equations can be re-defined by an additional 3+1 splitting on
base and fiber spaces as certain moving equations for ideal fluids. For
instance, we can consider a velocity $\mathbf{v}_{\alpha }$ d-vector subjected
to the conditions $\mathbf{v}_{\alpha }\mathbf{v}^{\alpha }=1$ and $\mathbf{v%
}^{\alpha }\mathbf{D}_{\beta }\mathbf{v}_{\alpha }=0,$ for $\ \ ^{\phi }%
\mathcal{L}:=-p$ in a corresponding local N--adapted frame.
\begin{remark}
\textsf{[energy-momentum d-tensors for locally anisotropic perfect liquids] }
\newline
Conventionally, the sources (\ref{emscdt}) can be approximated as a perfect
liquid matter (we use a left abstract index "l" for liquid) with respective density and
pressure
\begin{equation}
\ ^{l}\mathbf{T}_{\alpha \beta }=(\rho +p)\mathbf{v}_{\alpha }\mathbf{v}%
_{\beta }-p\mathbf{g}_{\alpha \beta }\mbox{ and/or }\ _{\shortmid }^{l}%
\mathbf{T}_{\alpha \beta }=(\ ^{\shortmid }\rho +\ ^{\shortmid }p)\
^{\shortmid }\mathbf{v}_{\alpha }\ ^{\shortmid }\mathbf{v}_{\beta }-\
^{\shortmid }p\ ^{\shortmid }\mathbf{g}_{\alpha \beta }.  \label{lemd}
\end{equation}
\end{remark}

It should be noted that formulas (\ref{emscdt}) do not depend in explicit form on d--connections. Nevertheless, such dependencies should be considered if $\phi (u)$ and/or $\phi (\ ^{\shortmid }u)$ are subjected to the conditions to be solutions of some generalized scalar field equations (Klein-Gordon, or other types, for instance, certain hydrodynamical equations). Using
respective nonholonomic variables, we can re-write these formulas in terms of "tilde/hat" variables.

\subsubsection{Lagrange densities for Einstein-Yang-Mills-Higgs systems with MDRs}

Various models of locally anisotropic gravitational and gauge field interactions on modified Finsler spaces, nonholonomic Lorentz manifolds, and higher order Lagrange-Hamilton spaces were studied in a series of our works \cite{v83,v87,asanov85,bejancu90,v94,
vog94,vgon95,vmon98,vmon02,vmon06,vrev08,vijgmmp08, vepl11,vvyijgmmp14a,vacaruepjc17,bubuianucqg17}. That research was on twistor and spinor methods for generalized geometries, with nearly autoparallel maps and generalized connections, for formulating models of Finsler gravity as gauge like theories, for supersymmetric generalizations and, more recently, superstring and supergravity theories. Those directions were related also to the theory of spinors in Finsler-Lagrange-Hamilton spaces and modelling such
configurations, for instance, in brane gravity, with Clifford-algebroids, noncommutative Finsler geometry and locally anisotropic Dirac operators \cite{vjmp96,vjmp09,vpcqg01,vtnpb02}. We shall present a detailed study of Einstein-Dirac systems with MDRs resulting in Clifford-Lagrange-Hamilton geometries, and constructed exact solutions and quantum models, in other partner works (see also Directions 3 and 7, respectively, in appendices \ref{sssdir403} and \ref{sssdir07}). In this subsection, we define Lagrange densities and corresponding energy-momentum tensors for certain models of Yang-Mills-Higgs, YMH, interactions on (co) tangent Lorentz bundles.

Let us consider YMH systems modeled on nonholonomic Lorentz manifolds, or on their (co) tangent bundles, by respective geometric data $\left( \mathbf{N,g,D;A}^{\check{a}}(u),\phi ^{\check{a}}(u)\right) $ and/or $(\ ^{\shortmid}\mathbf{N},
\ ^{\shortmid }\mathbf{g},\ ^{\shortmid }\mathbf{D};\ ^{\shortmid} \mathbf{A}^{\check{a}}(u),
\ ^{\shortmid }\phi ^{\check{a}}(\ ^{\shortmid }u))$ for matter fields parameterized in the form $\ ^{A}\phi =[\mathbf{A}^{\check{a}}(u),\phi ^{\check{a}}(u)]$ and/or $\ _{\shortmid }^{A}\phi =[\_{\shortmid}A ^{\check{a}}(\ ^{\shortmid }u),\ _{\shortmid }\phi ^{\check{a}}(\ ^{\shortmid }u)].$ In these formulas, the non-Abelian gauge fields
$\mathbf{A}=\{\mathbf{A}^{\check{a}}(u)\}$ can be defined by a 1-form $\mathbf{A}=\mathbf{A}_{\delta }du^{\delta }$ when coefficients $\mathbf{A}_{\delta }^{\check{a}}$ take values in a Lie algebra $\mathcal{A}$ of so-called internal gauge symmetries. The index $\check{a}$ labels elements of such Lie aglebra and related groups. For $\mathcal{A}=su(2)$ and base Minkowski spaces, we obtain  standard Yang-Mills fields. The scalar-multiplet (complex) field $\phi ^{\check{a}}$ is called the Higgs field. The model is elaborated on (co) tangent bundles with respective 1-forms
\begin{equation*}
\mathbf{A}=\mathbf{A}_{\delta }du^{\delta },\mbox{ for }\mathbf{A}_{\delta }=%
\mathbf{A}_{\delta }^{\check{a}}\tau _{\check{a}},\ \mbox{ and }\
^{\shortmid }\mathbf{A}=\ ^{\shortmid }\mathbf{A}_{\delta }du^{\delta },%
\mbox{ for }\ \ ^{\shortmid }\mathbf{A}_{\delta }=\ ^{\shortmid }\mathbf{A}%
_{\delta }^{\check{a}}\tau _{\check{a}},
\end{equation*}%
where $\tau _{\check{a}}$ are generators of $\mathcal{A}$. We consider summation on repeating indices $\check{a}$ in all cases "up-low", "up-up", and "low-low". In this work, we consider gauge theories with the same gauge group both for tangent and cotangent bundles, extended to respective vector bundles with typical fibers and respective adjoint representations of Lie algebra of $\mathcal{A}$.

\begin{convention}
\label{convcovder}\textsf{[covariant derivatives of gauge fields in (co) tangent bundles ] } \newline
The non--Abelian gauge fields (on (co)tangent bundles enabled with respective metric compatible d--connections $\mathbf{D}$ and
$\ ^{\shortmid }\mathbf{D})$ define covariant derivatives in the associated vector bundles,%
\begin{equation*}
\mathcal{D}_{\delta }=\mathbf{D}_{\delta }+i\check{e}[\mathbf{A}_{\delta
},\cdot ]\mbox{ and/or  }\mathbf{\ ^{\shortmid }}\mathcal{D}_{\delta }=%
\mathbf{\ ^{\shortmid }D}_{\delta }+i\check{e}[\mathbf{\ ^{\shortmid }A}%
_{\delta },\cdot ],
\end{equation*}%
where $\check{e}$ is the coupling constant, $i^{2}=-1,$ and $[\cdot ,\cdot ]$ is used for the commutator.
\end{convention}

We note that such values should be written with "hat" or "tilde" symbols if the phase space d-connections are of respective type, for instance, $\widehat{\mathcal{D}}_{\delta }=\widehat{\mathbf{D}}_{\delta }+i\check{e}[\widehat{\mathbf{A}}_{\delta },\cdot ]$.  All formulas derived with "hat/ tilde" operators are labeled by "hat/ tilde" symbols.

By a standard differential form calculus, we prove
\begin{definition}
\textbf{-Theorem\ } \textsf{[strengths of gauge fields in curved phase spaces ] } \newline
The d-vector fields $\mathbf{A}_{\delta }$ and $\mathbf{^{\shortmid }A}_{\mu}$ are characterised by respective curvatures
\begin{equation}
\mathcal{F}_{\beta \mu }:=\mathbf{D}_{\beta }\mathbf{A}_{\mu }-\mathbf{D}%
_{\mu }\mathbf{A}_{\beta }+i\check{e}[\mathbf{A}_{\beta },\mathbf{A}_{\mu }];%
\mathbf{\ ^{\shortmid }}\mathcal{F}_{\beta \mu }:=\mathbf{\ ^{\shortmid }D}%
_{\beta }\mathbf{\ ^{\shortmid }A}_{\mu }-\mathbf{\ ^{\shortmid }D}_{\mu }%
\mathbf{A}_{\beta }+i\check{e}[\mathbf{\ ^{\shortmid }A}_{\beta },\mathbf{\
^{\shortmid }A}_{\mu }].  \label{gaugestr}
\end{equation}
\end{definition}

Following Principle \ref{panalogy}, we formulate:
\begin{convention}
\label{convldymh}\textsf{[Lagrange densities for YMH interactions on (co) tangent Lorentz bundles] } \newline
Non-Abelian gauge - Higgs, YMH, interactions on phase spaces with MDRs can be modeled by Lagrange densities
\begin{eqnarray}
\ ^{A}\mathcal{L} &=&-\frac{1}{4}\mathcal{F}^{\check{a}\mu \nu }\mathcal{F}%
_{\mu \nu }^{\check{a}}\mbox{ and }\ ^{H}\mathcal{L}=-\frac{1}{2}(\mathcal{D}%
_{\mu }\phi ^{\check{a}})(\mathcal{D}^{\mu }\phi ^{\check{a}})-\mathcal{V}%
(\phi ^{\check{a}});  \label{ymhld} \\
\ _{\shortmid }^{A}\mathcal{L} &=&-\frac{1}{4}\ ^{\shortmid }\mathcal{F}^{%
\check{a}\mu \nu }\mathbf{\ ^{\shortmid }}\mathcal{F}_{\mu \nu }^{\check{a}}%
\mbox{ and }\ _{\shortmid }^{H}\mathcal{L}=-\frac{1}{2}(\mathbf{\
^{\shortmid }}\mathcal{D}_{\mu }\mathbf{\ ^{\shortmid }}\phi ^{\check{a}})(%
\mathbf{\ ^{\shortmid }}\mathcal{D}^{\mu }\mathbf{\ ^{\shortmid }}\phi ^{%
\check{a}})-\mathbf{\ ^{\shortmid }}\mathcal{V}(\mathbf{\ ^{\shortmid }}\phi
^{\check{a}}),  \notag
\end{eqnarray}%
were $\mathcal{V}(\phi ^{\check{a}})=\frac{1}{4}\check{\lambda}(|\phi
_{\lbrack 0]}^{\check{a}}|^{2}-\overline{\phi }^{\check{a}}\phi ^{\check{a}%
})^{2}$ and $\mathbf{\ ^{\shortmid }}\mathcal{V}(\mathbf{\ ^{\shortmid }}%
\phi ^{\check{a}})=\frac{1}{4}\check{\lambda}(\overline{\phi }_{[0]}^{\check{%
a}}\phi _{\lbrack 0]}^{\check{a}}-\mathbf{\ ^{\shortmid }}\overline{\phi }^{%
\check{a}}\mathbf{\ ^{\shortmid }}\phi ^{\check{a}})^{2}$ are respective \
potentials for nonlinear interactions of Higgs fields with corresponding
vacuum expectations $|\phi _{\lbrack 0]}^{\check{a}}|^{2}=\overline{\phi }%
_{[0]}^{\check{a}}\phi _{\lbrack 0]}^{\check{a}}$ and $|\mathbf{\
^{\shortmid }}\phi _{\lbrack 0]}^{\check{a}}|^{2}=\mathbf{\ ^{\shortmid }}%
\overline{\phi }_{[0]}^{\check{a}}\mathbf{\ ^{\shortmid }}\phi _{\lbrack
0]}^{\check{a}}$ and self-integration constant $\check{\lambda}.$
\end{convention}

Above considered constants can be related to physical ones, for instance, in the Weinberg-Salam theory when vacuum expectation of the Higgs field which determines a masse$\ ^{H}M=\sqrt{\check{\lambda}}\eta ;$ the mass of W-bosons is $\ ^{W}M=ev.$

Performing a N-adapted variational calculus, we prove
\begin{consequence}
\textsf{[ energy-momentum d-tensors for YMH fields with locally anisotropic interactions ] } \newline
The symmetric energy-momentum d-tensors for scalar fieds on (co) tangent bundles derived for respective Lagrange densities (\ref{ymhld}) are computed for locally anisotropic gauge fields, {\small
\begin{equation}
\ ^{A}\mathbf{T}_{\beta \delta }=2(\mathbf{g}^{\mu \nu }\mathcal{F}_{\beta
\mu }^{\check{a}}\mathcal{F}_{\nu \delta }^{\check{a}}-\frac{1}{4}\mathbf{g}%
_{\beta \delta }\mathcal{F}^{\check{a}\mu \nu }\mathcal{F}_{\mu \nu }^{%
\check{a}})\mbox{ and }\ _{\shortmid }^{A}\mathbf{T}_{\beta \delta }=2(%
\mathbf{\ ^{\shortmid }g}^{\mu \nu }\mathbf{\ ^{\shortmid }}\mathcal{F}%
_{\beta \mu }^{\check{a}}\mathbf{\ ^{\shortmid }}\mathcal{F}_{\nu \delta }^{%
\check{a}}-\frac{1}{4}\mathbf{\ ^{\shortmid }g}_{\beta \delta }\mathbf{\
^{\shortmid }}\mathcal{F}^{\check{a}\mu \nu }\mathbf{\ ^{\shortmid }}%
\mathcal{F}_{\mu \nu }^{\check{a}}),\   \label{emdtym}
\end{equation}%
and for Higgs fields,}%
\begin{eqnarray}
\ ^{H}\mathbf{T}_{\beta \delta } &=&\frac{1}{4}\ (\mathcal{D}_{\delta }%
\overline{\phi }^{\check{a}}\ \mathcal{D}_{\beta }\phi ^{\check{a}}+\mathcal{%
D}_{\beta }\overline{\phi }^{\check{a}}\ \mathcal{D}_{\delta }\phi ^{\check{a%
}})-\frac{1}{4}\mathbf{g}_{\beta \delta }\mathcal{D}_{\alpha }\overline{\phi
}^{\check{a}}\ \mathcal{D}^{\alpha }\phi ^{\check{a}}-\mathbf{g}_{\beta
\delta }\mathcal{V}(\phi ^{\check{a}});  \label{emdthiggs} \\
\ _{\shortmid }^{H}\mathbf{T}_{\beta \delta } &=&\frac{1}{4}\ (\mathbf{\
^{\shortmid }}\mathcal{D}_{\delta }\mathbf{\ ^{\shortmid }}\overline{\phi }^{%
\check{a}}\ \mathbf{\ ^{\shortmid }}\mathcal{D}_{\beta }\mathbf{\
^{\shortmid }}\phi ^{\check{a}}+\mathbf{\ ^{\shortmid }}\mathcal{D}_{\beta }%
\mathbf{\ ^{\shortmid }}\overline{\phi }^{\check{a}}\ \mathbf{\ ^{\shortmid }%
}\mathcal{D}_{\delta }\mathbf{\ ^{\shortmid }}\phi ^{\check{a}})-\frac{1}{4}%
\mathbf{\ ^{\shortmid }g}_{\beta \delta }\mathbf{\ ^{\shortmid }}\mathcal{D}%
_{\alpha }\mathbf{\ ^{\shortmid }}\overline{\phi }^{\check{a}}\ \mathbf{\
^{\shortmid }}\mathcal{D}^{\alpha }\mathbf{\ ^{\shortmid }}\phi ^{\check{a}}-%
\mathbf{\ ^{\shortmid }g}_{\beta \delta }\mathbf{\ ^{\shortmid }}\mathcal{V}(%
\mathbf{\ ^{\shortmid }}\phi ^{\check{a}}).  \notag
\end{eqnarray}
\end{consequence}

Similar formulas can be derived for models of quantum chromodynamics,\ QCD, with MDRs generalized on (co) tangent Lorentz bundles. In this work, we omit geometric constructions with the gauge Lie algebra $\mathcal{A}=su(3)$ resulting in locally anisotropic interactions of gluonic and quark fields on (co)tangent Lorentz bundles.

\subsubsection{Actions for minimal MDR--extensions of GR and YMH theories}

In GR, the Lagrange density for gravitational fields is postulated in the
form%
\begin{equation}
\ ^{g}\mathcal{L}(g_{ij}, \nabla )=\frac{^{Pl}M^{2}}{2}R,  \label{einstlg}
\end{equation}%
where $R$ is the Ricci scalar of a Lorentz manifold $V$ and the Planck mass $%
^{Pl}M$ is determined by the Newton constant $^{New}G$. In the units $%
^{New}G=1/16\pi $ with $^{Pl}M=(8\pi ^{New}G)^{-1/2}=\sqrt{2}$, one states a
constant $\varkappa $ for the matter source $\Upsilon _{ij}:=\varkappa
(T_{ij}-\frac{1}{2}g_{ij}T)$, where $T:=g^{ij}T_{ij},$ for $\ T_{kl}:=-\frac{%
2}{\sqrt{|\mathbf{g}_{ij}|}}\frac{\delta (\sqrt{|\mathbf{g}_{ij}|}\ \ ^{m}%
\mathcal{L})}{\delta \mathbf{g}^{kl}}$ with Lagrange density of matter
fields $\ ^{m}\mathcal{L}$. The Einstein equations with the Ricci tensor for
$\nabla ,$
\begin{equation}
R_{ij}=\Upsilon _{ij},  \label{einsteq}
\end{equation}%
can be derived by a variational calculus on $V$ using action $\mathcal{S}=\
^{g}\mathcal{S}+ \ ^{m}\mathcal{S}=\frac{1}{16\pi }\int d^{4}x\sqrt{|g_{ij}|}%
(\ ^{g}\mathcal{L+}\ ^{m}\mathcal{L})$.

Let us consider on $T\mathbf{V}$ a metric compatible d--connection $\mathbf{D%
}$ completely defined by a d-metric $\mathbf{g}$ and with the property that
for respective nonholonomic constraints $\mathbf{D}_{\mid \mathcal{T}%
=0}=\nabla $ (such constraints may be non-integrable and or may not have a
smooth limit). Following Principle \ref{panalogy}, we consider that minimal
extensions of GR to tangent Lorentz bundles enabled with data $(\mathbf{%
N,g,D)}$ should be determined by nonholonomically deformed Einstein
equations of type
\begin{equation*}
\mathbf{R}_{\alpha \beta }[\mathbf{D}]=\Upsilon _{\ \beta \gamma}:=\varkappa
(\mathbf{T}_{\alpha \beta }-\frac{1}{2}\mathbf{g}_{\alpha\beta}T),
\end{equation*}%
where $T:=\mathbf{g}^{\alpha \beta }\mathbf{T}_{\alpha \beta }$ and the
constant $\varkappa $ may take different values on h- and v-subspaces (this
should be defined by experimental/ observational data). Similar geometric
formulas can be written on $T^{\ast }\mathbf{V,}$ for instance, as $\mathbf{%
\ ^{\shortmid }R}_{\alpha \beta }[\mathbf{\ ^{\shortmid }D}]=\ ^{\shortmid
}\Upsilon _{\ \beta \gamma }$ (when splitting into $h$- and $cv$ -components
can be considered with, or not, $L$-duality and with additional requests on
possible almost symplectic symmetries, see below necessary details).

Summarizing Lagrange densities introduced in previous subsections, we
formulate:

\begin{principle}
-\textbf{Convention} \label{pactminmodact}\textbf{\ [actions for minimally
MDR-modified EYMH systems]}:\newline
Locally anisotropic interactions of Einstein gauge and Higgs fields on
(co)tangent Lorentz bundles (endowed with metric compatible d-connections
uniquely determined by respective d-metric structures and admitting
nonholonomic variables for distinguishing Finsler-Lagrange-Hamilton phase
spaces) are described by respective actions%
\begin{eqnarray*}
\mathcal{S} &=&\ ^{\mathbf{g}}\mathcal{S+}\ ^{\phi }\mathcal{S+}\ \ ^{A}%
\mathcal{S+}\ ^{H}\mathcal{S=}\frac{1}{16\pi }\int \delta u\sqrt{|\mathbf{g}%
_{\alpha \beta }|}(\ ^{\mathbf{g}}\mathcal{L}+\ ^{\phi }\mathcal{L}+ \ ^{A}%
\mathcal{L}+\ ^{H}\mathcal{L}); \\
\mathbf{\ }\ _{\shortmid }\mathcal{S} &=&\ _{\shortmid }^{\mathbf{g}}%
\mathcal{S}+\ _{\shortmid }^{\phi }\mathcal{S}+\ \ _{\shortmid }^{A}\mathcal{%
S}+\ _{\shortmid }^{H}\mathcal{S=}\frac{1}{16\pi }\int \delta \mathbf{\
^{\shortmid }}u\sqrt{|\mathbf{\ ^{\shortmid }g}_{\alpha \beta }|}(\
_{\shortmid }^{\mathbf{g}}\mathcal{L+}\ _{\shortmid }^{\phi }\mathcal{L}+\ \
_{\shortmid }^{A}\mathcal{L+}\ _{\shortmid }^{H}\mathcal{L}),
\end{eqnarray*}%
where $\ ^{\mathbf{g}}\mathcal{L}:=\ ^{s}\mathbf{R}=\mathbf{g}^{\alpha \beta
}\mathbf{R}_{\alpha \beta }[\mathbf{D}],\ _{\ \shortmid }^{\mathbf{g}}%
\mathcal{L}:=\ _{\shortmid }^{s}\mathbf{R}=\mathbf{\ ^{\shortmid }g}^{\alpha
\beta }\mathbf{\ ^{\shortmid }R}_{\alpha \beta }[\mathbf{\ ^{\shortmid }D}]$
and the Lagrange densities for scalar-YMH fields are given by corresponding
formulas (\ref{lagscf}) and (\ref{ymhld}).
\end{principle}

Performing a N-adapted variational calculus and summarizing previous
Consequences on energy-momentum d-tensors, we prove

\begin{consequence}
\label{conseymhs}\textsf{[sources for locally anisotropic YMH fields ] }
\newline
On (co)tangent Lorentz bundles, minimal modifications of scalar and YMH
systems encoding MDRs are characterized by respective sources
\begin{eqnarray*}
\ ^{\phi }\Upsilon _{\ \beta \gamma }:= &&\varkappa (\ ^{\phi }\mathbf{T}%
_{\alpha \beta }-\frac{1}{2}\mathbf{g}_{\alpha \beta }\ ^{\phi }T),\
^{A}\Upsilon _{\ \beta \gamma }:=\varkappa (\ ^{A}\mathbf{T}_{\alpha \beta }-%
\frac{1}{2}\mathbf{g}_{\alpha \beta }\ ^{A}T),\ ^{H}\Upsilon _{\ \beta
\gamma }:=\varkappa (\ ^{H}\mathbf{T}_{\alpha \beta }-\frac{1}{2}\mathbf{g}%
_{\alpha \beta }\ ^{H}T); \\
\ _{\shortmid }^{\phi }\Upsilon _{\ \beta \gamma }:= &&\varkappa (\
_{\shortmid }^{\phi }\mathbf{T}_{\alpha \beta }-\frac{1}{2}\mathbf{\
^{\shortmid }g}_{\alpha \beta }\ _{\shortmid }^{\phi }T),\ _{\shortmid
}^{A}\Upsilon _{\ \beta \gamma }:=\varkappa (\ _{\shortmid }^{A}\mathbf{T}%
_{\alpha \beta }-\frac{1}{2}\mathbf{\ ^{\shortmid }g}_{\alpha \beta }\
_{\shortmid }^{A}T),\ _{\shortmid }^{H}\Upsilon _{\ \beta \gamma
}:=\varkappa (\ _{\shortmid }^{H}\mathbf{T}_{\alpha \beta }-\frac{1}{2}%
\mathbf{\ ^{\shortmid }g}_{\alpha \beta }\ _{\shortmid }^{H}T),
\end{eqnarray*}%
where the energy-momentum tensors are given respectively by formulas (\ref%
{emscdt}), (\ref{emdtym}) and (\ref{emdthiggs}).
\end{consequence}

In GR, the sources of matter fields are approximated (for instance, for
quantum fluctuations or certain summarized contributions) to (effective)
cosmological constants. Following Principle \ref{panalogy}, we consider

\begin{assumption}
\textsf{[effective cosmological constants for locally anisotropic YMH
sources ] } \label{assumpt4} \newline
Generalized sources for matter fields (including possible effective sources
defined by distortion tensors of d-connections) can be approximated by
respective cosmological constants $\Lambda $ and/or $\ _{\shortmid }\Lambda $%
,
\begin{equation*}
\ \Upsilon _{\ \gamma }^{\beta }=\Lambda \delta _{\ \gamma }^{\beta }%
\mbox{and/or }\ _{\shortmid }\Upsilon _{\ \gamma }^{\beta }=\mathbf{\ }\
_{\shortmid }\Lambda \delta _{\ \gamma }^{\beta },
\end{equation*}%
when, correspondingly, $\ \Upsilon _{\ \beta \gamma }=\ ^{\phi }\Upsilon _{\
\beta \gamma }+\ ^{A}\Upsilon _{\ \beta \gamma }+\ ^{H}\Upsilon _{\ \beta
\gamma }$ and/or $\ _{\shortmid }\Upsilon _{\ \beta \gamma }=\ _{\shortmid
}^{\phi }\Upsilon _{\ \beta \gamma }+\ _{\shortmid }^{A}\Upsilon _{\ \beta
\gamma }+\ _{\shortmid }^{H}\Upsilon _{\ \beta \gamma },$ considering that
such sources are subjected to relations of additivity,
\begin{equation}
\ \Lambda =\ ^{\phi }\Lambda +\ ^{A}\Lambda +\ ^{H}\Lambda \mbox{ and/or }\
_{\shortmid }\Lambda =\ _{\shortmid }^{\phi }\Lambda +\ _{\shortmid
}^{A}\Lambda +\ _{\shortmid }^{H}\Lambda .  \label{additcconst}
\end{equation}
\end{assumption}

We note that all such cosmological constants can be zero, positive, or
negative. They may compensate each other and result in (fictive) vacuum
configurations, or take different values in certain $h$-, $v$-, and/or $cv$%
-subspaces.

\begin{remark}
\textsf{[effective cosmological constants as nonlinear superpositions of associated cosmological constants for locally anisotropic YMH fields ]} In certain MGTs with quasi-classical corrections, renormalizations, nonlinear MDRs, nonlinear symmetries of generating functions and/or (effective) sources, nonlinear vacuum polarizations etc., one can be considered
nonlinear functionals of type $\Lambda =\Lambda \lbrack \ ^{\phi }\Lambda ,\ ^{A}\Lambda ,\ ^{H}\Lambda ]$ and/or
$_{\shortmid }\Lambda =\ _{\shortmid}\Lambda \lbrack \ _{\shortmid }^{\phi }\Lambda , \
_{\shortmid}^{A}\Lambda ,\ _{\shortmid }^{H}\Lambda ].$ We do not study such theories in this work.
\end{remark}

\subsubsection{Actions for MGTs with massive gravitons, bi-metric structures and MDRs}

\label{ssbms} A large class of MGTs are constructed with modifications of Lagrange density (\ref{einstlg}) to a general functional depending on Ricci scalar, torsion, energy-momentum and other fundamental geometric/ physical values. For instance, there are considered modifications of type $\ ^{g}\mathcal{L}(g_{ij},\nabla )\approx R$ $\rightarrow $ $f(R),$ where $R$ is a scalar curvature determined by the LC-connection or a generalized metric-affine connection. During last 20 years, it was elaborated a
paradigm of $f$-modified gravity theories attempting to explain the universe acceleration and solve the dark energy and dark matter problems. In a more general context, this paradigm involves theories with infra--red (IR) modifications of the GR theory and ultra-violet (UV) corrections expected to be of quantum origin. Various studies were on understanding possible physical implications of the massive spin--2 theory, MDRs and LIVs, generalized Finsler gravity theories, commutative and noncommutative /
nonassociative and/or supersymmetric generalizations etc. A series of recent our works
\cite{vjpcs13,vport13,vijgmmp14,vvyijgmmp14a,vepjc14,vepjc14a,gvvepjc14,gvvcqg15,vacaruplb16} is devoted to elaborating nonholonomic $f$-modified theories and study of cosmological implications of massive gravity and bi--metric gravity with
local anisotropy. Such constructions were inspired by papers on holonomic models of nonlinear massive gravitational theories including $f(R)$ modifications \cite{cai14,kluson13,nojiri13jcap}. Here we emphasize that locally (an)isotropic massive gravity theories contain the benefits of the so-called dRGT model and generalizations, see \cite{derham10,derham11,volkov12,strauss12}, being free of ghost modes \cite{boulware72,hassan12}. Advantages are those that by tuning the $f(R)$ and various MDR-deformed functionals (for holonomic configurations, see reviews \cite{nojiri07,capozziello10}), we can relate locally anisotropic black hole
solutions, stabilize cosmological backgrounds, elaborate various types of locally anisotropic cosmological evolution scenarios and provide for MGTs an unified description of inflation and late--time acceleration etc. In this paper, we geometrize specific models of massive $f(R)$ gravity with MDRs and (see subsections below) the corresponding systems of modified gravitational equations.

We extend the approaches elaborated in \cite{vijgmmp14,vvyijgmmp14a,vepjc14,vepjc14a,gvvepjc14,gvvcqg15,vacaruplb16} and
\cite{cai14,kluson13,nojiri13jcap} and consider theories on nonholonomic (co) tangent Lorentz bundles enabled with a common
N--connection structure $\mathbf{N}$ for two d--metrics where $\mathbf{g}=\{ \mathbf{g}_{\alpha \beta }\}$ is a dynamical d--metric and $\mathbf{q}=\{ \mathbf{q}_{\alpha \beta }\}$ is a non--dynamical reference metric. For Lagrange theories, we work with a metric compatible d-connection $\mathbf{D}$ (instead of the LC--connection $\nabla )$ and a corresponding Ricci scalar $\ {s}R$ both computed for $\mathbf{g.}$ On a base Lorentz manifold, the nonzero graviton mass is denoted by $\mu ,$ the constant $M_{P} $ is the
Planck mass and such constants are lifted with re-definition on fiber space (on total phase spaces, such values have to be defined experimentally).

For elaborating Hamilton massive and bi-metric gravity theories, there are considered corresponding values: a N--connection structure $\ _{\shortmid }\mathbf{N}$ for $\ _{\shortmid }\mathbf{g=\{\ _{\shortmid }g}_{\alpha \beta }\mathbf{\}}$ being the dynamical d--metric and $\ _{\shortmid }\mathbf{q}=\{\ _{\shortmid } \mathbf{q}_{\alpha \beta }\}$ being the
so--called non--dynamical reference (fiducial) d-metric, when respective $\mathbf{\ _{\shortmid }D}$ and $_{s}^{\shortmid }R$ are computed for $\ _{\shortmid }\mathbf{g.}$ Let us consider the d--tensor
\begin{eqnarray*}
(\sqrt{\mathbf{g}^{-1}\mathbf{q}})_{~\nu }^{\mu } &=&\left( (\sqrt{g^{-1}q})
_{~j}^{i},(\sqrt{g^{-1}q})_{~b}^{a}\right) ,\mbox{ and/ or }(\sqrt{\mathbf{\
_{\shortmid }g}^{-1}\mathbf{\ _{\shortmid }q}})_{~\nu }^{\mu }=\left( (\sqrt{%
\mathbf{\ _{\shortmid }}g^{-1}\mathbf{\ _{\shortmid }}q})_{~j}^{i},(\sqrt{%
\mathbf{\ _{\shortmid }}g^{-1}\mathbf{\ _{\shortmid }}q})_{a}^{\hspace{0in}%
~b}\right) , \\
\mbox{i.e. }(\sqrt{\mathbf{\ _{\shortmid }g}^{-1}\mathbf{\ _{\shortmid }q}})
&=&\left( \sqrt{h(\mathbf{\ _{\shortmid }}g^{-1}\mathbf{\ _{\shortmid }}q)},%
\sqrt{v(\mathbf{\ _{\shortmid }}g^{-1}\mathbf{\ _{\shortmid }}q)}\right) ,%
\mbox{ and/ or }(\sqrt{\mathbf{\ _{\shortmid }g}^{-1}\mathbf{\ _{\shortmid }q%
}})=\left( \sqrt{h(\mathbf{\ _{\shortmid }}g^{-1}\mathbf{\ _{\shortmid }}q)},%
\sqrt{cv(\mathbf{\ _{\shortmid }}g^{-1}\mathbf{\ _{\shortmid }}q)}\right) ,
\end{eqnarray*}%
where the square roots are computed respectively for $\mathbf{g}^{\mu \rho }\mathbf{q}_{\rho \nu }$ and/or $\ _{\shortmid }\mathbf{g}^{\mu \rho }\ _{\shortmid }\mathbf{q}_{\rho \nu },$ when
\begin{eqnarray*}
(\sqrt{g^{-1}q})_{~j}^{i}(\sqrt{g^{-1}q})_{~k}^{j} &=&g^{ij}q_{jk},%
\mbox{
and } \sum\limits_{\mathring{k}=0}^{4}~_{h}^{\mathring{k}}\beta ~_{h}e_{%
\mathring{k}}(\sqrt{h(g^{-1}q)})=3-tr\sqrt{h(g^{-1}q)}-\det \sqrt{h(g^{-1}q)}%
; \\
(\sqrt{g^{-1}q})_{~b}^{a}(\sqrt{g^{-1}q})_{~c}^{b} &=&g^{ab}q_{bc},%
\mbox{
and } \sum\limits_{\mathring{k}=0}^{4}~_{v}^{\mathring{k}}\beta ~_{v}e_{%
\mathring{k}}(\sqrt{v(g^{-1}q)})=3-tr\sqrt{v(g^{-1}q)}-\det \sqrt{v(g^{-1}q)}%
;
\end{eqnarray*}
\begin{eqnarray*}
\mbox{and/or }(\sqrt{\mathbf{\ _{\shortmid }}g^{-1}\mathbf{\ _{\shortmid }}q}%
)_{~j}^{i}(\sqrt{\mathbf{\ _{\shortmid }}g^{-1}\mathbf{\ _{\shortmid }}q}%
)_{~k}^{j} &=&\mathbf{\ _{\shortmid }}g^{ij}\mathbf{\ _{\shortmid }}q_{jk},%
\mbox{ and }\  \\
&&\sum\limits_{\mathring{k}=0}^{4}~_{h\shortmid }^{\mathring{k}}\beta
~_{h\shortmid }e_{\mathring{k}}(\sqrt{h(\mathbf{\ _{\shortmid }}g^{-1}%
\mathbf{\ _{\shortmid }}q)})=3-tr\sqrt{h(\mathbf{\ _{\shortmid }}g^{-1}%
\mathbf{\ _{\shortmid }}q)}-\det \sqrt{h(\mathbf{\ _{\shortmid }}g^{-1}%
\mathbf{\ _{\shortmid }}q)};
\end{eqnarray*}
\begin{eqnarray*}
(\sqrt{\mathbf{\ _{\shortmid }}g^{-1}\mathbf{\ _{\shortmid }}q})_{a}^{~b}(%
\sqrt{\mathbf{\ _{\shortmid }}g^{-1}\mathbf{\ _{\shortmid }}q})_{b}^{~c} &=&%
\mathbf{\ _{\shortmid }}g_{ab}\mathbf{\ _{\shortmid }}q^{bc},\mbox{ and } \\
&&\sum\limits_{\mathring{k}=0}^{4}~_{cv\shortmid }^{\mathring{k}}\beta
~_{cv\shortmid }e_{\mathring{k}}(\sqrt{cv(\mathbf{\ _{\shortmid }}g^{-1}%
\mathbf{\ _{\shortmid }}q)})=3-tr\sqrt{cv(\mathbf{\ _{\shortmid }}g^{-1}%
\mathbf{\ _{\shortmid }}q)}-\det \sqrt{cv(\mathbf{\ _{\shortmid }}g^{-1}%
\mathbf{\ _{\shortmid }}q)};
\end{eqnarray*}%
for respective coefficients $~_{h}^{\mathring{k}}\beta ,~_{v}^{\mathring{k}%
}\beta ,~_{h\shortmid }^{\mathring{k}}\beta $ and $~_{cv\shortmid }^{%
\mathring{k}}\beta .$ The values $\ _{h}e_{\mathring{k}},~_{v}e_{\mathring{k}%
}$ and/or $~_{h\shortmid }e_{\mathring{k}},~_{cv\shortmid }e_{\mathring{k}}$
can be computed correspondingly for any d--tensor $\mathbf{X}_{~\rho }^{\mu
}=(X_{~j}^{i},X_{~b}^{a})$ with trace
\begin{equation*}
X=\ \ ^{h}X+\ \ ^{v}X=[\ ^{h}X]+[\ ^{v}X]:=tr(\mathbf{X})=tr(\ ^{h}X)+tr(\
^{v}X)=\mathbf{X}_{~\mu }^{\mu }=X_{~i}^{i}+X_{~a}^{a}
\end{equation*}%
and/or any d--tensor $\mathbf{\ _{\shortmid }X}_{~\rho }^{\mu }=(\mathbf{\
_{\shortmid }}X_{~j}^{i},\mathbf{\ _{\shortmid }}X_{a}^{~b})$ with trace
\begin{equation*}
\mathbf{\ _{\shortmid }}X=\ \ _{\shortmid }^{h}X+\ \ _{\shortmid }^{cv}X=[\
_{\shortmid }^{h}X]+[\ _{\shortmid }^{cv}X]:=tr(\mathbf{\ _{\shortmid }X}%
)=tr(\ _{\shortmid }^{h}X)+tr(\ _{\shortmid }^{cv}X)=\mathbf{\ _{\shortmid }X%
}_{~\mu }^{\mu }=\mathbf{\ _{\shortmid }}X_{~i}^{i}+\mathbf{\ _{\shortmid }}%
X_{a}^{~a}.
\end{equation*}%
Such d-tensors are parameterized as "lifts" of certain $h$-objects to $v$%
-objects and/or $cv$-objects. The respective formulas are:
\begin{eqnarray*}
\mbox{ for }\ _{h}e_{\mathring{k}}(\mathbf{X}): && e_{0}(\ \
^{h}X)=1,e_{1}(\ \ ^{h}X)=\ \ ^{h}X, \\
2e_{2}(\ \ ^{h}X) &=&\ \ ^{h}X^{2}-[\ \ ^{h}X^{2}],\ 6e_{3}(\ \ ^{h}X)=\ \
^{h}X^{3}-3\ \ ^{h}X[\ \ ^{h}X^{2}]+2[\ \ ^{h}X^{3}], \\
24e_{4}(\ \ ^{h}X) &=&\ \ ^{h}X^{4}-6\ \ ^{h}X^{2}[\ \ ^{h}X^{2}]+3[\ \
^{h}X^{2}]^{2}+8\ \ ^{h}X[\ \ ^{h}X^{3}]-6[\ \ ^{h}X^{4}];\ e_{\mathring{k}%
}(\ \ ^{h}X)=0\mbox{ for }\mathring{k}>4;
\end{eqnarray*}%
\begin{eqnarray*}
\mbox{ for }\ _{v}e_{\mathring{k}}(\mathbf{X}): && e_{0}(\ \
^{v}X)=1,e_{1}(\ \ ^{v}X)=\ \ ^{v}X, \\
2e_{2}(\ \ ^{v}X) &=&\ \ ^{v}X^{2}-[\ \ ^{v}X^{2}],\ 6e_{3}(\ \ ^{v}X)=\ \
^{v}X^{3}-3\ \ ^{v}X[\ \ ^{v}X^{2}]+2[\ \ ^{v}X^{3}], \\
24e_{4}(\ \ ^{v}X) &=&\ \ ^{v}X^{4}-6\ \ ^{v}X^{2}[\ \ ^{v}X^{2}]+3[\ \
^{v}X^{2}]^{2}+8\ \ ^{v}X[\ \ ^{v}X^{3}]-6[\ \ ^{v}X^{4}];\ e_{\mathring{k}%
}(\ \ ^{v}X)=0\mbox{ for }\mathring{k}>4;
\end{eqnarray*}%
\begin{eqnarray*}
\mbox{ for }\ _{h\shortmid }e_{\mathring{k}}(\mathbf{\ _{\shortmid }X}): &&
e_{0}(\ \ _{\shortmid }^{h}X)=1,e_{1}(\ \ _{\shortmid }^{h}X)=\ \
_{\shortmid }^{h}X, \\
2e_{2}(\ \ _{\shortmid }^{h}X) &=&\ \ _{\shortmid }^{h}X^{2}-[\ \
_{\shortmid }^{h}X^{2}],\ 6e_{3}(\ \ _{\shortmid }^{h}X)=\ \ _{\shortmid
}^{h}X^{3}-3\ \ _{\shortmid }^{h}X[\ \ _{\shortmid }^{h}X^{2}]+2[\ \
_{\shortmid }^{h}X^{3}], \\
24e_{4}(\ \ _{\shortmid }^{h}X) &=&\ \ _{\shortmid }^{h}X^{4}-6\ \
_{\shortmid }^{h}X^{2}[\ \ _{\shortmid }^{h}X^{2}]+3[\ \ _{\shortmid
}^{h}X^{2}]^{2}+8\ \ _{\shortmid }^{h}X[\ \ _{\shortmid }^{h}X^{3}]-6[\ \
_{\shortmid }^{h}X^{4}];\ e_{\mathring{k}}(\ \ _{\shortmid }^{h}X)=0%
\mbox{
for }\mathring{k}>4;
\end{eqnarray*}%
\begin{eqnarray*}
\mbox{ for }\ _{cv\shortmid }e_{\mathring{k}}(\mathbf{\ _{\shortmid }X}): &&
e_{0}(\ \ _{\shortmid }^{cv}X)=1,e_{1}(\ \ _{\shortmid }^{cv}X)=\ \
_{\shortmid }^{cv}X, \\
2e_{2}(\ \ _{\shortmid }^{cv}X) &=&\ \ _{\shortmid }^{cv}X^{2}-[\ \
_{\shortmid }^{cv}X^{2}],\ 6e_{3}(\ \ _{\shortmid }^{cv}X)=\ \ _{\shortmid
}^{cv}X^{3}-3\ \ _{\shortmid }^{cv}X[\ \ _{\shortmid }^{cv}X^{2}]+2[\ \
_{\shortmid }^{cv}X^{3}], \\
24e_{4}(\ \ _{\shortmid }^{cv}X) &=&\ \ _{\shortmid }^{cv}X^{4}-6\ \
_{\shortmid }^{cv}X^{2}[\ \ _{\shortmid }^{cv}X^{2}]+3[\ \ _{\shortmid
}^{cv}X^{2}]^{2}+8\ \ _{\shortmid }^{cv}X[\ \ _{\shortmid }^{cv}X^{3}]-6[\ \
_{\shortmid }^{cv}X^{4}];\ e_{\mathring{k}}(\ \ _{\shortmid }^{cv}X)=0,%
\mathring{k}>4.
\end{eqnarray*}%
Following \cite%
{cai14,kluson13,nojiri13jcap,vijgmmp14,vvyijgmmp14a,vepjc14,vepjc14a,gvvepjc14,gvvcqg15,vacaruplb16}%
, we state for locally anisotropic gravity theories the

\begin{principle}
-\textbf{Convention} \label{pactmassive}\textbf{\ [actions for MDR-modified massive gravity theories]} \newline
We introduce the mass--deformed scalar curvatures%
\begin{eqnarray*}
\check{R} &:=&\ \ _{s}R+2~\ \ ^{h}\mu ^{2}(3-tr\sqrt{h(g^{-1}q)}-\det \sqrt{%
h(g^{-1}q)}) \\
&&+2~\ \ ^{v}\mu ^{2}(3-tr\sqrt{v(g^{-1}q)}-\det \sqrt{v(g^{-1}q)}) \\
\mbox{ and/or } && \\
\mathbf{\ _{\shortmid }}\check{R} &:=&\ \ _{s}^{\shortmid }R+2~\ \
_{\shortmid }^{h}\mu ^{2}(3-tr\sqrt{h(\mathbf{\ _{\shortmid }}g^{-1}\mathbf{%
\ _{\shortmid }}q)}-\det \sqrt{h(\mathbf{\ _{\shortmid }}g^{-1}\mathbf{\
_{\shortmid }}q)}) \\
&&+2~\ \ _{\shortmid }^{cv}\mu ^{2}(3-tr\sqrt{cv(\mathbf{\ _{\shortmid }}%
g^{-1}\mathbf{\ _{\shortmid }}q)}-\det \sqrt{cv(\mathbf{\ _{\shortmid }}%
g^{-1}\mathbf{\ _{\shortmid }}q)}),
\end{eqnarray*}%
with (MDR-induced, or other variants of N-connection structures) horizontal, $\ ^{h}\mu $ and/or $\ _{\shortmid }^{h}\mu ,$ and (co) vertical, $\ ^{v}\mu $ and $\ _{\shortmid }^{cv}\mu ,$ masses for locally anisotropic gravitons in respective phase spaces. Using such values, we construct respective $f$-modified Lagrange densities, $\ ^{\mathbf{g}\mu }\mathcal{L}=\mathbf{f}(\check{R})$ and/or $\ _{\shortmid }^{\mathbf{g}\mu }\mathcal{L}=\ _{\shortmid }\mathbf{f}(\ _{\shortmid}\check{R})$, and respective actions on
cotangent Lorentz bundles,%
\begin{eqnarray*}
\mathcal{S} &=&\ ^{\mathbf{g}\mu }\mathcal{S+}\ ^{m}\mathcal{S=}\frac{1}{%
16\pi }\int \delta u\sqrt{|\mathbf{g}_{\alpha \beta }|}(\ ^{\mathbf{g}\mu }%
\mathcal{L+}\ ^{m}\mathcal{L}\ ) \\
\mbox{ and/or } && \\
\mathbf{\ }\ _{\shortmid }\mathcal{S} &=&\ _{\shortmid }^{\mathbf{g}\mu }%
\mathcal{S+}\ _{\shortmid }^{m}\mathcal{S=}\frac{1}{16\pi }\int \delta
\mathbf{\ ^{\shortmid }}u\sqrt{|\mathbf{\ ^{\shortmid }g}_{\alpha \beta }|}%
(\ _{\shortmid }^{\mathbf{g}\mu }\mathcal{L+}\ _{\shortmid }^{m}\mathcal{L}),
\end{eqnarray*}%
where the Lagrange densities for matter fields, $\ ^{m}\mathcal{L}$\ and/or $\ _{\shortmid }^{m}\mathcal{L},$ can be taken in as for scalar fields (\ref{lagscf}), or generalized to YMH and/or fermion configurations.
\end{principle}

Performing a N-adapted variational calculus for above introduced Lagrange densities and actions for massive gravity models, with
$\ ^{1}\mathbf{f}(\check{R}):=d\mathbf{f}(\check{R})/d\check{R}$ and $\_{\shortmid }^{1}\mathbf{f}(\mathbf{\ {\shortmid}}\check{R}):=d\mathbf{\ {\shortmid }f}(\mathbf{\ _{\shortmid }}\check{R})/d\mathbf{\ _{\shortmid }}\check{R},$ we
prove:
\begin{consequence}
\textsf{[sources in massive gravity with MDRs on (co) tangent bundles ] }
\newline
On (co)tangent Lorentz bundles, massive gravity models encoding MDRs are
characterized by respective (effective) sources
\begin{equation}
\ ^{f\mu }\mathbf{\Upsilon }_{\mu \nu }=\ ^{g\mu }\mathbf{\Upsilon }_{\mu
\nu }~+~^{f}\mathbf{\Upsilon }_{\mu \nu }+\mathbf{\ }^{m}\mathbf{\Upsilon }%
_{\mu \nu },~^{g\mu }\mathbf{\Upsilon }_{\mu \nu }=(~^{g\mu }\mathbf{%
\Upsilon }_{ij},~^{g\mu }\mathbf{\Upsilon }_{ab}),\mbox{ where }
\label{smgb}
\end{equation}%
{\small
\begin{eqnarray*}
\ ^{g\mu }\mathbf{\Upsilon }_{ij} &=&-2\ \ ^{h}\mu ^{2}[3-tr\sqrt{h(g^{-1}q)}%
+\frac{1}{2}\det \sqrt{h(g^{-1}q)})]g_{ij}+\frac{\ \ ^{h}\mu ^{2}}{2}%
\{q_{ik}[(\sqrt{h(g^{-1}q)})^{-1}]_{~j}^{k}+q_{ik}[(\sqrt{h(g^{-1}q)}%
)^{-1}]_{~j}^{k}\}, \\
~^{g\mu }\mathbf{\Upsilon }_{ab} &=&-2\ \ ^{v}\mu ^{2}[3-tr\sqrt{v(g^{-1}q)}+%
\frac{1}{2}\det \sqrt{v(g^{-1}q)})]g_{ij}+\frac{\ \ ^{v}\mu ^{2}}{2}%
\{q_{ac}[(\sqrt{v(g^{-1}q)})^{-1}]_{~b}^{c}+q_{ac}[(\sqrt{v(g^{-1}q)}%
)^{-1}]_{~b}^{c}\},
\end{eqnarray*}%
}
\begin{equation*}
\ \ ^{f}\mathbf{\Upsilon }_{\mu \nu }=(\frac{\mathbf{f}}{2~^{1}\mathbf{f}}-%
\frac{\mathbf{D}^{2}\ ^{1}\mathbf{f}}{~^{1}\mathbf{f}})\mathbf{g}_{\mu \nu }+%
\frac{\mathbf{D}_{\mu }\mathbf{D}_{\nu }\ ^{1}\mathbf{f}}{~^{1}\mathbf{f}}%
;~^{m}\mathbf{\Upsilon }_{\mu \nu }=\frac{1}{2M_{P}^{2}}\ ^{m}\mathbf{T}%
_{\mu \nu };
\end{equation*}%
or (for sources) on cotangent bundles
\begin{equation}
\mathbf{\ _{\shortmid }^{~^{f\mu }}\Upsilon }_{\mu \nu }=\ _{\shortmid
}^{g\mu }\mathbf{\Upsilon }_{\mu \nu }~+~_{\shortmid }^{f}\mathbf{\Upsilon }%
_{\mu \nu }+\ _{\shortmid }^{m}\mathbf{\Upsilon }_{\mu \nu },~_{\shortmid
}^{g\mu }\mathbf{\Upsilon }_{\mu \nu }=(~_{\shortmid }^{g\mu }\mathbf{%
\Upsilon }_{ij},~_{\shortmid }^{g\mu }\mathbf{\Upsilon }^{ab}),\mbox{ where }
\label{smgcb}
\end{equation}%
{\small
\begin{eqnarray*}
\ _{\shortmid }^{g\mu }\mathbf{\Upsilon }_{ij} &=&-2\ \ _{\shortmid }^{h}\mu
^{2}[3-tr\sqrt{h(\mathbf{\ _{\shortmid }}g^{-1}\mathbf{\ _{\shortmid }}q)}+%
\frac{1}{2}\det \sqrt{h(\mathbf{\ _{\shortmid }}g^{-1}\mathbf{\ _{\shortmid }%
}q)})]g_{ij} \\
&&+\frac{\ \ _{\shortmid }^{h}\mu ^{2}}{2}\{\mathbf{\ _{\shortmid }}q_{ik}[(%
\sqrt{h(\mathbf{\ _{\shortmid }}g^{-1}\mathbf{\ _{\shortmid }}q)}%
)^{-1}]_{~j}^{k}+\mathbf{\ _{\shortmid }}q_{ik}[(\sqrt{h(\mathbf{\
_{\shortmid }}g^{-1}\mathbf{\ _{\shortmid }}q)})^{-1}]_{~j}^{k}\}, \\
~_{\shortmid }^{g\mu }\mathbf{\Upsilon }^{ab} &=&-2\ \ _{\shortmid }^{cv}\mu
^{2}[3-tr\sqrt{cv(\mathbf{\ _{\shortmid }}g^{-1}\mathbf{\ _{\shortmid }}q)}+%
\frac{1}{2}\det \sqrt{cv(\mathbf{\ _{\shortmid }}g^{-1}\mathbf{\ _{\shortmid
}}q)})]\mathbf{\ _{\shortmid }}g_{ij} \\
&&+\frac{\ \ _{\shortmid }^{cv}\mu ^{2}}{2}\{\mathbf{\ _{\shortmid }}q^{ac}[(%
\sqrt{cv(\mathbf{\ _{\shortmid }}g^{-1}\mathbf{\ _{\shortmid }}q)}%
)^{-1}]_{c}^{~b}+\mathbf{\ _{\shortmid }}q^{ac}[(\sqrt{cv(\mathbf{\
_{\shortmid }}g^{-1}\mathbf{\ _{\shortmid }}q)})^{-1}]_{c}^{~b}\},
\end{eqnarray*}%
}
\begin{equation*}
\ \ _{\shortmid }^{f}\mathbf{\Upsilon }_{\mu \nu }=(\frac{\mathbf{\
_{\shortmid }f}}{2~_{\shortmid }^{1}\mathbf{f}}-\frac{\mathbf{\ _{\shortmid
}D}^{2}\ _{\shortmid }^{1}\mathbf{f}}{~_{\shortmid }^{1}\mathbf{f}})\mathbf{%
\ _{\shortmid }g}_{\mu \nu }+\frac{\mathbf{\ _{\shortmid }D}_{\mu }\mathbf{\
_{\shortmid }D}_{\nu }\ _{\shortmid }^{1}\mathbf{f}}{~_{\shortmid }^{1}%
\mathbf{f}};~_{\shortmid }^{m}\mathbf{\Upsilon }_{\mu \nu }=\frac{1}{%
2M_{P}^{2}}\ _{\shortmid }^{m}\mathbf{T}_{\mu \nu }.
\end{equation*}
\end{consequence}

In 4-d massive MGTs \cite{cai14,kluson13,nojiri13jcap}, there are considered values of type
$S_{\ j}^{i}=g^{ik}\eta _{\overline{i}\overline{j}}\mathbf{e} _{k}s^{\overline{j}}\mathbf{e}_{j}s^{\overline{j}},$ with the Minkowski metric $\eta _{\overline{i}\overline{j}}=diag(1,1,1,-1),$ generated by four scalar St\"{u}kelberg fields $s^{\overline{j}},$ which is necessary for restoring the diffeomorphism invariance. Similar values can be considered
for fiducial fields $\mathbf{q}=\{_{\shortmid }\mathbf{q}_{\alpha \beta}\}$ and/or $\ _{\shortmid }\mathbf{q}=\{\ _{\shortmid }\mathbf{q}_{\alpha \beta}\}$ for massive and bi-metric locally anisotropic theories when the St\"{u}kelberg fields are introduced also on (co) fiber spaces, see details and examples in \cite{vijgmmp14,vvyijgmmp14a,vepjc14,vepjc14a,gvvepjc14,vacaruplb16}.

The sources of \ (effective) matter fields are approximated (for instance, for quantum fluctuations or certain summarized contributions) to (effective) cosmological constants. Following Principle \ref{panalogy}, we consider
\begin{assumption}
\textsf{[effective cosmological constants in massive gravity on (co) tangent bundles ] } \label{assumpt5} \newline
Generalized sources for matter fields (including possible effective sources defined by distortion tensors of d-connections) can be approximated by respective cosmological constants $\Lambda $ and/or $\ _{\shortmid }\Lambda $,
\begin{equation*}
\ ^{f\mu }\Upsilon _{\ \gamma }^{\beta }=\ ~^{f\mu }\Lambda \delta _{\
\gamma }^{\beta } \mbox{ and/or }\ \mathbf{\ }\ _{\shortmid }^{f\mu}\Upsilon
_{\ \gamma }^{\beta }=\mathbf{\ }_{\shortmid }^{f\mu }\Lambda \delta _{\
\gamma }^{\beta },
\end{equation*}%
when, correspondingly, $\ ^{f\mu }\mathbf{\Upsilon }_{\mu \nu }=\ ^{g\mu }\mathbf{\Upsilon }_{\mu \nu } + \ ^{f}\mathbf{\Upsilon }_{\mu \nu }+ \ ^{m}\mathbf{\Upsilon }_{\mu \nu }$ and/or $\ _{\shortmid }^{f\mu }\mathbf{\Upsilon }_{\mu\nu }= \ _{\shortmid }^{g\mu }\mathbf{\Upsilon }_{\mu \nu } +\ _{\shortmid }^{f}\mathbf{\Upsilon }_{\mu \nu }+\ _{\shortmid }^{m}\mathbf{%
\Upsilon } _{\mu \nu }$ considering that such sources are related to relations of additivity,
\begin{equation*}
\ \ ~^{f\mu }\Lambda =\ ^{g\mu }\Lambda +\ ^{f}\Lambda +\ ^{m}\Lambda \mbox{\ and/or }\ _{\shortmid }^{f\mu }\Lambda =\ _{\shortmid }^{g\mu }\Lambda +\ _{\shortmid }^{f}\Lambda +\ _{\shortmid }^{m}\Lambda .
\end{equation*}
\end{assumption}

Bi-metric and/or massive gravity LC--configurations can be extracted by imposing additional \ (non) holonomic constraints when $\mathbf{D}_{\mathcal{T}=0}\rightarrow \nabla .$

\subsubsection{Lagrange densities for short-range locally anisotropic gravity with MDRs and LIV}

In \cite{bailey15} (see also references therein), a systematic study of LIVs for dimensions $\dim >5$ (with explicit formulas for a class of effective MGTs of dimensions $\dim =4,5,6)$ was initiated. In this subsection, we elaborate on toys models of Lagrange-Hamilton gravity theories of dimension $\dim =3+3,$ with signatures of local metrics of type $(++-;++-),$ and
indices running values $i,j,...=1,2,3$ and $a,b,...=4,5,6,$ for $\alpha =(i,a),\beta =(j,b),...$

Following Principle \ref{panalogy}, we change the LC--connection into a metric compatible d-connection completely determined by the same metric structure and formulate:
\begin{principle}
-\textbf{Convention} \label{princsr}\textbf{\ [Lagrange densities for MDR-modified short-range gravity and LIV]} The Lagrange densities of underlying actions for effective gravity theory with short-range, sr, locally anisotropic interactions with spontaneous LIVs are determined by such sums:%
\begin{eqnarray}
\ ^{\mathbf{sr}}\mathcal{L} &=&\ ^{\mathbf{g}}\mathcal{L}+\ ^{m}\mathcal{L}%
+\ ^{k}\mathcal{L}+\ ^{lv}\mathcal{L},\mbox{ where }  \label{lagrdshortord}
\\
\ ^{\mathbf{g}}\mathcal{L} &=&\ ^{s}\mathbf{R}-2\Lambda ,\ ^{m}\mathcal{L=}\
^{m}\mathcal{L}\mbox{ see  }(\ref{lagscf}),  \notag \\
\ ^{k}\mathcal{L} &\mathcal{=}&%
\mbox{ is for the dynamics of fiels
triggering sponanteous LIVs}  \notag \\
\ ^{lv}\mathcal{L} &=&\ ^{lv}\mathcal{L}_{[4]}+\ ^{lv}\mathcal{L}_{[5]}+\
^{lv}\mathcal{L}_{[6]}+...,\mbox{ for }  \notag \\
\ ^{lv}\mathcal{L}_{[4]} &=&\kappa _{\alpha \beta \gamma \delta }^{[4]}(x,y)%
\mathbf{R}^{\alpha \beta \gamma \delta }[\mathbf{D}],\ ^{lv}\mathcal{L}%
_{[5]}=\kappa _{\alpha \beta \gamma \delta \tau }^{[5]}(x,y)\mathbf{D}^{\tau
}\mathbf{R}^{\alpha \beta \gamma \delta },  \notag \\
\ ^{lv}\mathcal{L}_{[6]} &=&\frac{1}{2}\ ^{1}\kappa _{\alpha \beta \gamma
\delta \tau \nu }^{[6]}(x,y)\left( \mathbf{D}^{\tau }\mathbf{D}^{\nu }+%
\mathbf{D}^{\nu }\mathbf{D}^{\tau }\right) \mathbf{R}^{\alpha \beta \gamma
\delta }+\ ^{2}\kappa _{\alpha \beta \gamma \delta \tau \nu \lambda \theta
}^{[6]}(x,y)\mathbf{R}^{\tau \nu \lambda \theta }\mathbf{R}^{\alpha \beta
\gamma \delta };  \notag
\end{eqnarray}
and/ or
\begin{eqnarray*}
\ _{\shortmid }^{\mathbf{sr}}\mathcal{L} &=&\ _{\shortmid }^{\mathbf{g}}%
\mathcal{L+}\ _{\shortmid }^{lv}\mathcal{L}+\ \ _{\shortmid }^{k}\mathcal{L+}%
\ _{\shortmid }^{m}\mathcal{L} \\
\ _{\shortmid }^{\mathbf{g}}\mathcal{L} &=&\ _{\shortmid }^{s}\mathbf{R}-2\
_{\shortmid }\Lambda ,\ _{\shortmid }^{m}\mathcal{L=}\ _{\shortmid }^{m}%
\mathcal{L}\mbox{ see  }(\ref{lagscf}), \\
\ _{\shortmid }^{k}\mathcal{L} &\mathcal{=}&%
\mbox{ is for the dynamics of
fiels triggering sponanteous LIVs} \\
\ _{\shortmid }^{lv}\mathcal{L} &=&\ _{\shortmid }^{lv}\mathcal{L}_{[4]}+\
_{\shortmid }^{lv}\mathcal{L}_{[5]}+\ _{\shortmid }^{lv}\mathcal{L}%
_{[6]}+...,\mbox{ for } \\
\ _{\shortmid }^{lv}\mathcal{L}_{[4]} &=&\ \ _{\shortmid }\kappa _{\alpha
\beta \gamma \delta }^{[4]}(x,p)\ \ _{\shortmid }\mathbf{R}^{\alpha \beta
\gamma \delta }[\ \ _{\shortmid }\mathbf{D}],\ _{\shortmid }^{lv}\mathcal{L}%
_{[5]}=\ \ _{\shortmid }\kappa _{\alpha \beta \gamma \delta \tau
}^{[5]}(x,p)\ \ _{\shortmid }\mathbf{D}^{\tau }\ \ _{\shortmid }\mathbf{R}%
^{\alpha \beta \gamma \delta }, \\
\ _{\shortmid }^{lv}\mathcal{L}_{[6]} &=&\frac{1}{2}\ _{\shortmid
}^{1}\kappa _{\alpha \beta \gamma \delta \tau \nu }^{[6]}(x,p)\left( \
_{\shortmid }\mathbf{D}^{\tau }\ _{\shortmid }\mathbf{D}^{\nu }+\
_{\shortmid }\mathbf{D}^{\nu }\ _{\shortmid }\mathbf{D}^{\tau }\right) \
_{\shortmid }\mathbf{R}^{\alpha \beta \gamma \delta }+\ _{\shortmid
}^{2}\kappa _{\alpha \beta \gamma \delta \tau \nu \lambda \theta
}^{[6]}(x,p)\ _{\shortmid }\mathbf{R}^{\tau \nu \lambda \theta }\
_{\shortmid }\mathbf{R}^{\alpha \beta \gamma \delta },
\end{eqnarray*}%
where $\ \kappa $-coefficients are taken as in \cite{bailey15} (see also references therein) but with $h$-$v$- and/or $h$-$cv$-decompositions.
\end{principle}

The $h$-components of above formulas transform into holonomic ones for $\mathbf{D}_{\mathcal{T}=0}\rightarrow \nabla .$

Performing respective N-adapted variational calculuses for actions
\begin{eqnarray*}
\ \ ^{\mathbf{sr}}\mathcal{S} &=&\ ^{\mathbf{g}}\mathcal{S+}\ ^{m}\mathcal{S+%
}\ \ ^{k}\mathcal{S+}\ ^{lv}\mathcal{S=}\frac{1}{16\pi G_{\dim }}\int \delta
u\sqrt{|\mathbf{g}_{\alpha \beta }|}(\ ^{\mathbf{g}}\mathcal{L}+\ ^{m}%
\mathcal{L}+\ ^{k}\mathcal{L}+\ ^{lv}\mathcal{L})\mbox{ and/or } \\
\mathbf{\ }\ _{\shortmid }^{\mathbf{sr}}\mathcal{S} &=&\ _{\shortmid }^{%
\mathbf{g}}\mathcal{S}+\ _{\shortmid }^{m}\mathcal{S}+\ \ _{\shortmid }^{k}%
\mathcal{S}+\ _{\shortmid }^{lv}\mathcal{S=}\frac{1}{16\pi \mathbf{\
^{\shortmid }}G_{\dim }}\int \delta \mathbf{\ ^{\shortmid }}u\sqrt{|\mathbf{%
\ ^{\shortmid }g}_{\alpha \beta }|}(\ _{\shortmid }^{\mathbf{g}}\mathcal{L+}%
\ _{\shortmid }^{m}\mathcal{L}+\ \ _{\shortmid }^{k}\mathcal{L+}\
_{\shortmid }^{lv}\mathcal{L}),
\end{eqnarray*}%
with conventional Newton constants $G_{\dim }$ and $\mathbf{\ ^{\shortmid }}G_{\dim }$ one proves:
\begin{consequence}
\textsf{[effective matter energy-momentum d-tensors in short-range gravity on (co) tangent Lorentz bundles ] } The symmetric energy-momentum d-tensors for matter fields and sources induced by LIV on (co) tangent bundles derived for respective Lagrange densities (\ref{lagrdshortord}) are computed
\begin{eqnarray}
\ ^{sr}\mathbf{T}_{\beta \delta } &=&\ ^{m}\mathbf{T}_{\beta \delta }+\ ^{lv}%
\mathbf{T}_{\beta \delta },\mbox{ for }\ ^{m}\mathbf{T}_{\beta \delta }%
\mbox{ as in }(\ref{emscdt})\mbox{ and }  \label{emdthsr} \\
\ ^{lv}\mathbf{T}_{\beta \delta } &=&\frac{1}{4\pi G_{\dim }}\widehat{\bar{s}%
}^{\alpha \beta }\mathbf{E}_{\alpha (\mu \nu )\beta }-\frac{1}{16\pi G_{\dim
}}\widehat{\bar{u}}\mathbf{E}_{\mu \nu }+  \notag \\
&&\frac{a}{8\pi G_{\dim }}\ ^{1}\overline{\kappa }_{\alpha (\mu \nu )\beta
\gamma \delta }^{[6]}\mathbf{e}^{\alpha }\mathbf{e}^{\beta }\mathbf{R}%
^{\gamma \delta }+\frac{1}{2\pi G_{\dim }}\ ^{2}\overline{\kappa }_{\alpha
(\mu \nu )\beta \gamma \delta \epsilon \zeta }^{[6]}\mathbf{e}^{\alpha }%
\mathbf{e}^{\beta }\mathbf{R}^{\gamma \delta \epsilon \zeta },  \notag \\
\mbox{ and/or } &&  \notag \\
\ _{\shortmid }^{sr}\mathbf{T}_{\beta \delta } &=&\ _{\shortmid }^{m}\mathbf{%
T}_{\beta \delta }+\ _{\shortmid }^{lv}\mathbf{T}_{\beta \delta },%
\mbox{ for
}\ _{\shortmid }^{m}\mathbf{T}_{\beta \delta }\mbox{ as in }(\ref{emscdt})%
\mbox{ and }  \notag \\
\ _{\shortmid }^{lv}\mathbf{T}_{\beta \delta } &=&\frac{1}{4\pi \mathbf{\
^{\shortmid }}G_{\dim }}\mathbf{\ ^{\shortmid }}\widehat{\bar{s}}^{\alpha
\beta }\mathbf{\ ^{\shortmid }E}_{\alpha (\mu \nu )\beta }-\frac{1}{16\pi
G_{\dim }}\mathbf{\ ^{\shortmid }}\widehat{\bar{u}}\mathbf{\ ^{\shortmid }E}%
_{\mu \nu }+  \notag \\
&&\frac{\mathbf{\ ^{\shortmid }}a}{8\pi G_{\dim }}\ _{\shortmid }^{1}%
\overline{\kappa }_{\alpha (\mu \nu )\beta \gamma \delta }^{[6]}\mathbf{\
^{\shortmid }e}^{\alpha }\mathbf{\ ^{\shortmid }e}^{\beta }\mathbf{\
^{\shortmid }R}^{\gamma \delta }+\frac{1}{2\pi \mathbf{\ ^{\shortmid }}%
G_{\dim }}\ _{\shortmid }^{2}\overline{\kappa }_{\alpha (\mu \nu )\beta
\gamma \delta \epsilon \zeta }^{[6]}\mathbf{\ ^{\shortmid }e}^{\alpha }%
\mathbf{\ ^{\shortmid }e}^{\beta }\mathbf{\ ^{\shortmid }R}^{\gamma \delta
\epsilon \zeta }.  \notag
\end{eqnarray}
\end{consequence}

The coefficients and operators in (\ref{emdthsr}) are computed in certain forms which are similar to formula (6) in \cite{bailey15}\footnote{readers may see that paper and references therein for details on symmetries of LIV coefficients, definition of operators and sources etc.} when $\partial ^{\alpha }\rightarrow $ $\mathbf{e}^{\alpha }$ and $\nabla
\rightarrow \mathbf{D}$ on $T\mathbf{V}.$Here we note that the double dual of the Riemannian d-tensor $\mathbf{E}_{\alpha \beta \gamma \delta }:=\frac{1}{4}\epsilon _{\alpha \beta \kappa \lambda }\epsilon _{\gamma \delta \mu \nu
}\mathbf{R}^{\kappa \lambda \mu \nu },$ the symmetrization of indices is denoted by "$(\alpha \beta )$" and $a$ is an integration constant. Similar values for $T^{\ast }\mathbf{V}$ are labeled by "$\mathbf{\ ^{\shortmid }}$".

The sources of (effective) matter fields are approximated (for instance, for quantum fluctuations or certain summarized contributions) to (effective) cosmological constants. Following Principle \ref{panalogy}, we consider
\begin{assumption}
\textsf{[effective cosmological constants for short-range locally anisotropic interactions ] } \label{assumpt6} \newline
Generalized (effective) sources for matter and fields (including possible effective sources defined by distortion tensors of d-connections) in short-range locally anisotropic gravity can be approximated by respective cosmological constants $\ ^{sr}\Lambda $ and/or $\ _{\shortmid }^{sr}\Lambda $, when effective sources
\begin{equation*}
\ ^{sr}\Upsilon _{\ \gamma }^{\beta }=\ ^{sr}\Lambda \delta _{\ \gamma
}^{\beta } \mbox{ and/or } \ _{\shortmid }^{sr}\Upsilon _{\ \gamma }^{\beta
}=\ _{\shortmid }^{sr}\Lambda \delta _{\ \gamma }^{\beta }
\end{equation*}
are taken respectively $\ ~^{sr}\mathbf{\Upsilon }_{\mu \nu }=\ ^{m}\mathbf{%
\Upsilon }_{\mu \nu }+~^{lv}\mathbf{\Upsilon }_{\mu \nu }$ and/or $\
_{\shortmid }^{sr}\mathbf{\Upsilon }_{\mu \nu }=~_{\shortmid }^{m}\mathbf{%
\Upsilon }_{\mu \nu }~+ \ _{\shortmid }^{lv}\mathbf{\Upsilon }_{\mu \nu }$,
for relations of additivity
\begin{equation*}
\ ^{sr}\Lambda =\ ^{m}\Lambda +\ ^{lv}\Lambda \mbox{ and } \ _{\shortmid
}^{sr}\Lambda =\ _{\shortmid }^{m}\Lambda + \ _{\shortmid }^{lv}\Lambda .
\end{equation*}
\end{assumption}

Short-range LC--configurations (in general, parameterized by off-diagonal metrics) can be extracted by imposing additional (non) holonomic constraints when $\mathbf{D}_{\mathcal{T}=0}\rightarrow \nabla $. The constructions can be performed in similar forms, for instance, if respective energy-momentum tensors for YMH fields are considered instead of $\ ^{m}\mathbf{T}_{\beta \delta },$ or there are studied massive gravity effects and bi-metric structures.

\subsubsection{The equations of motion and nonholonomic conservation laws for (effective) sources}

\label{ssecnhcons}The conservation laws
\begin{equation}
\nabla ^{i}(R_{ij}-\frac{1}{2}g_{ij}R)=0\mbox{ and }\nabla _{i}T^{ij}=0
\label{divemt}
\end{equation}%
in GR are consequences of the Bianchi relations. Such laws involve the idea that the Einstein's equations actually imply the geodesic hypothesis when the world lines of test bodies are geodesics of the spacetime metric. However, it should be noted that bodies which are "large" enough to feel the tidal forces of the gravitational field will deviate from geodesic motion. Such deviations may be caused by internal symmetries and (non) linear interactions and certain nonholonomic constraints on the dynamics of gravitational and matter fields. Nevertheless, the equations of motion of "large and/or complex structure" bodies in GR also can be found from the condition $\nabla _{a}T^{ab}=0.$

For a Finsler d--connection $\mathbf{D}$ (even it can be chosen to metric compatible,
$\mathbf{D}_{\alpha }\mathbf{g}^{\beta \gamma }=0)$, we have $\mathbf{D}_{\alpha }\mathbf{T}^{\alpha \beta }\neq 0,$ which is a consequence of non--symmetry of the Ricci and Einstein d--tensors;\ see explanations for formulas (\ref{dricci}) and (\ref{driccid}) and Definition-Theorems \ref{dteinstdt} and generalized Bianchi identities. Such a property is also related to nonholonomic constraints on the dynamics of Finsler gravitational fields. It is not surprising that the "covariant divergence" of (effective) matter sources does not vanish even for canonical d--connections $\ \widehat{\mathbf{D}}$ and/or
$\widetilde{\mathbf{D}}$ (on nonholonomic \ manifolds and/or (co) tangent bundles) see formulas (\ref{canondcl}) and (\ref{canondch}). In such cases, the conservation law became more sophisticate because of nonholonomic constraints and differences between the autoparallels of $\mathbf{D}$ and/or $\mathbf{\ ^{\shortmid }D}$ and nonlinear geodesic (semi-spray) equations (\ref{ngeqf}). Nevertheless it is possible to compute effective nonholonomic tidal forces of locally anisotropic gravitational fields using distorting relations of type $\mathbf{D}=\nabla -\mathbf{Z}$ \ (\ref{dist}), see Lemma \ref{ldist}  and, for canonical distortions, formulas (\ref{candistr}).

\begin{assumption}
\textsf{[energy-momentum d-tensors for (non) massive gravity theories on (co) tangent Lorentz bundles] } \label{assumpt7}The (effective) energy momentum d-tensors on (co) tangent Lorentz bundles are defined by all possible components (\ref{emscdt}), (\ref{emdtym}), (\ref{emdthiggs}), (\ref{smgb}), (\ref{smgcb}) and (\ref{emdthsr}) constructed by respective geometric principles and N-adapted variational calculuses,
\begin{eqnarray}
\mathbf{T}_{\alpha \beta } &=&\ ^{\phi }\mathbf{T}_{\alpha \beta }+\ ^{A}%
\mathbf{T}_{\alpha \beta }+\ ^{H}\mathbf{T}_{\alpha \beta }+\ ^{g\mu }%
\mathbf{T}_{\alpha \beta }~+~^{f}\mathbf{T}_{\alpha \beta }+\ ^{lv}\mathbf{T}%
_{\alpha \beta }+\ldots  \label{generemdt} \\
\ _{\shortmid }\mathbf{T}_{\alpha \beta } &=&\ _{\shortmid }^{\phi }\mathbf{T%
}_{\alpha \beta }+\ _{\shortmid }^{A}\mathbf{T}_{\alpha \beta }+\
_{\shortmid }^{H}\mathbf{T}_{\alpha \beta }+\ _{\shortmid }^{g\mu }\mathbf{T}%
_{\alpha \beta }~+~_{\shortmid }^{f}\mathbf{T}_{\alpha \beta }+\ _{\shortmid
}^{lv}\mathbf{T}_{\alpha \beta }+\ldots .  \notag
\end{eqnarray}
\end{assumption}
Dots in these formulas contain possible other types contributions of locally anisotropic spinor fields, from (super) string gravity, noncommutative deformations, fractional, diffusion and kinetic processes etc., see examples in Refs.
\cite{vacaruplb16,vacaruepjc17,vch2416,vgrg12,vcqg11,vjmp09,vrev08,vjmp06,vmon06,vvicol04,vmon02,vmon98,vhsp98,vnp97,vap97,vjmp96,
bvcejp11,gvvepjc14,gvvcqg15,gheorghiuap16,bubuianucqg17}. The sources in such works were considered in the bulk for nonholonomic
manifolds and tangent bundles. But analogy, the constructions can re-defined for cotangent Lorentz bundles. In this work, we omit cumbersome formulas for the Bianchi identities and conservation laws with nonholonomic constraints written in variables $(\mathbf{g,N,D})$ and/or $(\mathbf{\ ^{\shortmid }g,\ ^{\shortmid }N,\ ^{\shortmid }D}).$

We note that the Bianchi identities for some data $(\mathbf{g,N,}\widehat{\mathbf{D}})$ can be computed by introducing nonholonomic deformations $\nabla =\widehat{\mathbf{D}}-\widehat{\mathbf{Z}}$ into the standard relations
$\nabla ^{\alpha }(R_{\alpha \beta }-\frac{1}{2}g_{\alpha \beta }R)=0$ and $\nabla ^{\alpha }T_{\alpha \beta }=0.$ Even, in general, $\widehat{\mathbf{D} }^{\alpha }\mathbf{T}_{\alpha \beta }=\mathbf{Q}_{\beta }\neq 0,$ such a
$\mathbf{Q}_{\beta }[\mathbf{g,N}]$ is completely defined by the d--metric and chosen N--connection structure. This is a consequence of the nonholonomic structure. A similar "problem" exists in Lagrange mechanics with non--integrable constraints when the standard conservation laws do not hold true. A new class of effective variables can be introduced using Lagrange multiples.

\begin{principle}
\textsf{[nonholonomic deformations by MDRs of conservation laws on (co) tangent Lorentz bundles]} \label{prinndefclaw} Postulating for respective LC-connections $\nabla $ and/or $\ ^{\shortmid}\nabla $ on (co)tangent Lorentz bundles conservation laws of type%
\begin{equation*}
\nabla ^{\alpha }(R_{\alpha \beta }-\frac{1}{2}g_{\alpha \beta} R) =0
\mbox{
and } \nabla ^{\alpha } T_{\alpha \beta }=0; \mbox{ and/or } \ ^{\shortmid
}\nabla ^{\alpha }(\ ^{\shortmid}R_{\alpha \beta }- \frac{1}{2}\ ^{\shortmid
}g_{\alpha \beta } \ ^{\shortmid }R) = 0 \mbox{ and } \ ^{\shortmid }\nabla
^{\alpha } \ ^{\shortmid }T_{\alpha \beta }=0,
\end{equation*}%
we define unique nonholonomic conservation laws
\begin{equation}
\mathbf{D}^{\alpha }\mathbf{T}_{\alpha \beta }=\mathbf{Q}_{\beta }\neq 0%
\mbox{ and/or }\mathbf{\ ^{\shortmid }D}^{\alpha }\mathbf{\ ^{\shortmid }T}%
_{\alpha \beta }=\mathbf{\ ^{\shortmid }Q}_{\beta }\neq 0,
\label{nonhclsourc}
\end{equation}%
with d-tensors $\mathbf{Q}_{\beta }$ and/or $\mathbf{\ ^{\shortmid }Q}_{\beta }$ uniquely determined by respective data with unique distortion relations $(\mathbf{g,N,D}=\nabla -\mathbf{Z})$ and/or $(\mathbf{\ ^{\shortmid }g,\ ^{\shortmid }N,\ ^{\shortmid }D}=\ ^{\shortmid }\nabla - \ ^{\shortmid }\mathbf{Z}).$
\end{principle}

Nonholonomic deformations of conservation laws (\ref{nonhclsourc}) are similar to modifications of conservation laws in nonholonomic Lagrange and/or Hamilton mechanics. The motion of probing point particles in phase spaces with MDRs are modeled by canonical data for Finsler-Lagrange-Hamilton spaces with respective d--connections (\ref{canondcl}) and (\ref{canondch})
and canonical distortions (\ref{candistr}). Nonholonomic canonical deformations
$\widetilde{\mathbf{D}}^{\alpha }\widetilde{\mathbf{T}}_{\alpha \beta }=\widetilde{\mathbf{Q}}_{\beta }\neq 0$ and
$\ ^{\shortmid }\widetilde{\mathbf{D}}^{\alpha }\ ^{\shortmid }\widetilde{\mathbf{T}}_{\alpha \beta }=\ ^{\shortmid }\widetilde{\mathbf{Q}}_{\beta }\neq 0$ are considered also in almost symplectic classical and (deformation) quantum
models, commutative and noncommutative, of gravity. Finally, we note that Principle \ref{prinndefclaw} has to be re-formulated and stated in model dependent forms for theories with multi-metric structure, in metric-affine spaces with nontrivial nonmetricity, nonlocal interactions etc.

\section{Modified Field Equations with MDRs and Lagrange-Hamilton Gravity}

\label{smodensteq} The goal of this section is to formulate the gravitational and matter field equations in MGTs with MDRs and LIVs. Such systems of nonlinear PDEs can be derived in  abstract and/or N-adapted (in general, coordinate free) forms following Principles \ref{pgcov} and \ref{panalogy} and applying geometric and variational methods using Lagrange densities on (co) tangent Lorentz bundles postulated in previous section.

\subsection{Generalized Einstein equations on nonholonomic (co) tangent Lorentz bundles}

\label{ssmeeq} We shall analyse modified gravitational field equations for d-metrics (\ref{dmt}) and (\ref{dmct}) (see Assumption \ref{assumpt3} and equivalent to off-diagonal metrics (\ref{offd})) and metric compatible d-connections. The (effective) matter sources $\Upsilon _{\ \beta \gamma}:=\varkappa (\mathbf{T}_{\alpha \beta}-\frac{1}{2}\mathbf{g}_{\alpha \beta }\mathbf{T})$ and/or $\ ^{\shortmid }\Upsilon _{\ \beta \gamma} :=\varkappa (\ ^{\shortmid}\mathbf{T}_{\alpha \beta }- \frac{1}{2}
\ ^{\shortmid}\mathbf{g}_{\alpha \beta }\ ^{\shortmid }\mathbf{T})$ are determined by energy-momentum d-tensors of type (\ref{generemdt}), see Assumption \ref{assumpt7}. Such sources are defined both by distortion d-tensors of respective d--connections and generalized energy-momentum d--tensors.

\subsubsection{Nonholonomic Einstein equations with general (effective) sources}

Following a N-adapted variational calculus for data $(T\mathbf{V},\mathbf{N,g,D})$ and source $\Upsilon _{\beta \gamma },$ we motivate and prove:

\begin{principle}
\textbf{-Theorem} \textsf{[nonholonomic modifications of Einstein equations on tangent Lorentz bundles] } \label{princtheinsttb} \newline
The modified Einstein equations for the Ricci d-tensor (\ref{dricci}) corresponding to some metric compatible data $(\mathbf{g,Dg}=0)$ and an effective source $\Upsilon _{\ \beta \gamma}$ computed for (\ref{generemdt}) on a tangent Lorentz bundle $\mathbf{TV}$ are
\begin{equation}
\mathbf{R}_{\alpha \beta }[\mathbf{D}]=\Upsilon _{\alpha \beta }.
\label{meinsteqtb}
\end{equation}
\end{principle}

Considering distortion d-tensors, $\mathbf{D\rightarrow \check{D}}=\mathbf{D+Z}$ determined in unique forms by data $(\mathbf{N,g)}$ following certain geometric principles, the equations (\ref{meinsteqtb}) can be re-defined for
metric non-compatible d-connections, $\mathbf{\check{D}g\neq 0}$. Such theories determined by different types of (effective) sources $\Upsilon _{\alpha \beta }$, and examples of exact solutions, are studied in details in monograph \cite{vmon02}. Using frame transforms $\mathbf{g}_{\alpha \beta}=e_{\ \alpha }^{\alpha ^{\prime }}e_{\ \beta}^{\beta ^{\prime }}g_{\alpha
^{\prime }\beta ^{\prime }}$ and distortion relations of type (\ref{candistr}), we prove

\begin{corollary}
\textsf{[canonical form of locally anisotropic Einstein equations on tangent bundles] } \newline
The modified Einstein equations (\ref{meinsteqtb}) on $\mathbf{TV}$ can be written equivalently for a canonical d-connection $\widehat{\mathbf{D}}$
\begin{equation}
\widehat{\mathbf{R}}_{\alpha \beta }[\widehat{\mathbf{D}}]=\widehat{\Upsilon}_{\alpha \beta }.  \label{meinsteqtbcan}
\end{equation}%
Imposing additional (non)holonomic constraints for zero torsion, $\widehat{\mathbf{D}}_{\mid \widehat{\mathcal{T}}=0}=\nabla ,$ when
\begin{equation}
\widehat{\mathcal{T}}=0.  \label{lcconddt}
\end{equation}
\end{corollary}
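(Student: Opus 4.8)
The plan is to derive (\ref{meinsteqtbcan}) from the general nonholonomic Einstein equations (\ref{meinsteqtb}) by a two-fold use of the canonical distortion machinery established above, and then to treat the zero-torsion restriction (\ref{lcconddt}) as a nonholonomic constraint that collapses $\widehat{\mathbf{D}}$ onto $\nabla$. First I would fix a metric compatible d-connection $\mathbf{D}$ solving (\ref{meinsteqtb}) and pass, via the frame transform $\mathbf{g}_{\alpha\beta}=e_{\ \alpha}^{\alpha'}e_{\ \beta}^{\beta'}g_{\alpha'\beta'}$, to canonical nonholonomic variables in which the same d-metric is written as $\widetilde{\mathbf{g}}$ (\ref{cdms}). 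In these variables the canonical d-connection $\widehat{\mathbf{D}}$ is, by Theorem \ref{thdistr}, uniquely determined by $(\mathbf{g},\mathbf{N})$ through the distortion $\widehat{\mathbf{D}}=\nabla+\widehat{\mathbf{Z}}$ of (\ref{candistr}), where $\widehat{\mathbf{Z}}$ is the algebraic combination of the torsion coefficients $\widehat{\mathbf{T}}_{\ \alpha\beta}^{\gamma}[\mathbf{g},\mathbf{N}]$ noted after that theorem.

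Next I would apply the Ricci-distortion relation of Theorem \ref{thcandist}, $\widehat{R}ic[\mathbf{g},\widehat{\mathbf{D}}]=Ric[\mathbf{g},\nabla]+\widehat{Z}ic[\mathbf{g},\widehat{\mathbf{Z}}]$, together with the analogous relation for the original $\mathbf{D}$, so as to transport the left-hand side of (\ref{meinsteqtb}) into a statement about $\widehat{\mathbf{R}}_{\alpha\beta}[\widehat{\mathbf{D}}]$. The contributions coming from the distortion d-tensors are, by construction, completely fixed by $(\mathbf{g},\mathbf{N})$ and can therefore be moved to the right-hand side and absorbed into an effective source; this defines $\widehat{\Upsilon}_{\alpha\beta}:=\Upsilon_{\alpha\beta}+(\text{distortion contributions})$ and yields (\ref{meinsteqtbcan}). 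Since the correspondence $\mathbf{D}\leftrightarrow\widehat{\mathbf{D}}$ is a bijection fixed by the same geometric data, the two formulations are genuinely equivalent: a solution of one produces a solution of the other with the correspondingly redefined source.

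For the constrained statement I would compute the N-adapted torsion components of $\widehat{\mathbf{D}}$ from (\ref{dtors}) exactly as in the earlier Corollary on extracting LC-configurations, obtaining $\widehat{T}_{\ jk}^{i}$, $\widehat{T}_{\ ja}^{i}=\widehat{C}_{jb}^{i}$, $\widehat{T}_{\ ji}^{a}=-\Omega_{\ ji}^{a}$, $\widehat{T}_{aj}^{c}=\widehat{L}_{aj}^{c}-e_{a}(N_{j}^{c})$ and $\widehat{T}_{\ bc}^{a}$. Imposing (\ref{lccondl}), i.e. $\widehat{C}_{jb}^{i}=0$, $\Omega_{\ ji}^{a}=0$ and $\widehat{L}_{aj}^{c}=e_{a}(N_{j}^{c})$, forces the condition (\ref{lcconddt}), $\widehat{\mathcal{T}}=0$; by the uniqueness asserted in Theorem \ref{thdistr} this gives $\widehat{\mathbf{Z}}=0$ and hence $\widehat{\mathbf{D}}_{\mid\widehat{\mathcal{T}}=0}=\nabla$, so that (\ref{meinsteqtbcan}) reduces to the standard Einstein equations $R_{\alpha\beta}[\nabla]=\Upsilon_{\alpha\beta}$ with $\widehat{\Upsilon}_{\alpha\beta}\to\Upsilon_{\alpha\beta}$.

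The main obstacle I anticipate is not the formal transport of the equations but the verification that the constraint system (\ref{lccondl})/(\ref{lcconddt}) is integrable rather than over-determined: one must check that annihilating these torsion coefficients is compatible with the remaining field equations and still leaves enough freedom in the N-connection coefficients $N_i^a$ to admit nontrivial generic off-diagonal solutions. This consistency is precisely what was established case-by-case for dimensions 4-10 in the references cited after (\ref{lccondl}); the clean, essentially algebraic step is the absorption of $\widehat{Z}ic$ into $\widehat{\Upsilon}_{\alpha\beta}$, whereas the delicate point is that the conditions (\ref{lccondl}) are \emph{non-integrable} (nonholonomic) constraints whose solvability must be argued separately for each explicit ansatz rather than assumed globally.
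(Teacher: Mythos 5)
Your proposal is correct and takes essentially the same route as the paper: the paper's own (very terse) argument likewise uses frame transforms $\mathbf{g}_{\alpha \beta }=e_{\ \alpha }^{\alpha ^{\prime }}e_{\ \beta }^{\beta ^{\prime }}g_{\alpha ^{\prime }\beta ^{\prime }}$ together with the canonical distortion relations (\ref{candistr}) (with the Ricci distortions of Theorem \ref{thcandist}) to absorb the distortion contributions into a redefined source $\widehat{\Upsilon }_{\alpha \beta }$, and then notes that the zero-torsion condition (\ref{lcconddt}) is realized exactly by configurations satisfying (\ref{lccondl}), whence $\widehat{\mathbf{Z}}=0$ and $\widehat{\mathbf{D}}_{\mid \widehat{\mathcal{T}}=0}=\nabla $. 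Your closing remark on the nonholonomic, ansatz-by-ansatz solvability of (\ref{lccondl}) agrees with the paper's appeal to its cited dimension $4$--$10$ results.
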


The equations (\ref{lcconddt}) can be solved in exact form for configurations subjected to the conditions (\ref{lccondl}).

\begin{remark}
\textsf{[ nonholonomic Lagrange-Finsler variables for locally anisotropic Einstein equations ] } \newline
The system of nonlinear PDEs (\ref{meinsteqtb}) and (\ref{lcconddt}) can be considered on (pseudo) Riemannian manifolds with nonholonomic fibered $h$-and $v$-structures. Prescribing N--connection structures, we can reformulate the GR theory and the Einstein equations using data $(\mathbf{g},\nabla )$ and/or in terms of nonholonomic variables with data $(\mathbf{g,}%
\widehat{\mathbf{D}}).$
\end{remark}

Variables $(\mathbf{g,}\widehat{\mathbf{D}})$ with corresponding parameterizations for 2+2+2+... splitting of spacetimes and phase spaces play a preferred role in elaborating the AFDM for decoupling and integrating (modified) Einstein equations, see Refs.
\cite{vexsol98,vmon98,vjhep01,vpcqg01,vsbd01,vmon02,vtnpb02,vsjmp02,vijmpd03,vijmpd03a,vjmp05,vmon06,avjgp09,vijgmmp11,vcqg11,vepl11,
vijtp13,vjpcs13, vport13,vepjc14, vepjc14a,vvyijgmmp14a,gvvepjc14,gvvcqg15}. Certain classes of generic off-diagonal solutions can be constructed in explicit form for the canonical data $(\widetilde{\mathbf{g}},\widetilde{\mathbf{D}})$ \cite{vijtp10,vcqg10,vijtp10a} and (in almost K\"{a}hler variables) for performing deformation quantization of gravity theories \cite{vch2416,vjmp13,vjgp10,vijgmmp09,avjmp09,vpla08,vjmp07}.

\subsubsection{Gravitational equations on Lorentz cotangent bundles}

For geometric data $(\mathbf{T}^{\ast }\mathbf{V,\ ^{\shortmid }N,\ ^{\shortmid }g,\ ^{\shortmid }D})$ and prescribed source
$\ ^{\shortmid}\Upsilon _{\ \beta \gamma },$ we derive
\begin{principle}
\textbf{-Theorem} \textsf{[nonholonomic modifications of Einstein equations on cotangent Lorentz bundles ] } \label{princtheinstctb} \newline
On $\mathbf{T}^{\ast }\mathbf{V,}$ the modified Einstein equations, when
$(\mathbf{\ ^{\shortmid }g,\ ^{\shortmid }D\ ^{\shortmid }g}=0)$, an effective source $\ ^{\shortmid }\Upsilon _{\alpha \beta }$ is computed for d-tensors (\ref{generemdt}) and Ricci d-tensor (\ref{driccid})  are
\begin{equation}
\mathbf{\ ^{\shortmid }R}_{\alpha \beta }[\mathbf{\ ^{\shortmid }D}]=\ ^{\shortmid }\Upsilon _{\alpha \beta }.  \label{meinsteqctb}
\end{equation}
\end{principle}

Considering distortion d-tensors, $\mathbf{\ ^{\shortmid }D\rightarrow \ ^{\shortmid }\check{D}}=\mathbf{\ ^{\shortmid }D+\ ^{\shortmid }Z}$ and nonholonomic deformations, we prove

\begin{corollary}
\textsf{[canonical form of locally anisotropic Einstein equations on cotangent bundles ] } \newline
The analog of modified Einstein equations (\ref{meinsteqtb}) on $T\mathbf{V}$ can be derived on $T^{\ast }\mathbf{V}$ for a canonical d-connection $\mathbf{\ ^{\shortmid }}\widehat{\mathbf{D}},$%
\begin{equation}
\ ^{\shortmid }\widehat{\mathbf{R}}_{\alpha \beta }[\ ^{\shortmid}\widehat{\mathbf{D}}]= \ ^{\shortmid }\widehat{\Upsilon }_{\alpha \beta }.
\label{meinsteqtbcand}
\end{equation}
\end{corollary}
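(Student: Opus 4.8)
The plan is to dualize the proof of the preceding tangent-bundle Corollary (\ref{meinsteqtbcan}) by transporting every step through the "$\ ^{\shortmid }$"-labelled constructions on $\mathbf{T}^{\ast }\mathbf{V}$. I would start from the general modified Einstein equations (\ref{meinsteqctb}) supplied by Principle-Theorem \ref{princtheinstctb}, i.e. $\ ^{\shortmid }\mathbf{R}_{\alpha \beta }[\ ^{\shortmid }\mathbf{D}]=\ ^{\shortmid }\Upsilon _{\alpha \beta }$ for an arbitrary metric compatible d-connection satisfying $\ ^{\shortmid }\mathbf{D}\ ^{\shortmid }\mathbf{g}=0$. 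The target (\ref{meinsteqtbcand}) is the same system written for the canonical $\ ^{\shortmid }\widehat{\mathbf{D}}$, so the entire content of the Corollary is that passing from one metric compatible d-connection to another (both fixed by the same $\ ^{\shortmid }\mathbf{g}$ and $\ ^{\shortmid }\mathbf{N}$) merely reshuffles terms that can be reabsorbed into an effective source.

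First I would invoke the vierbein transforms $\ ^{\shortmid }\mathbf{g}_{\alpha \beta }=\ ^{\shortmid }e_{\ \alpha }^{\underline{\alpha }}\ ^{\shortmid }e_{\ \beta }^{\underline{\beta }}\ ^{\shortmid }g_{\underline{\alpha }\underline{\beta }}$ to cast the prescribed $\ ^{\shortmid }\mathbf{g}$ into the canonical Sasaki-type d-metric $\ ^{\shortmid }\widetilde{\mathbf{g}}$ (\ref{cdmds}), together with the canonical N-connection from Theorem \ref{thcnc} and the adapted co-frames of Proposition \ref{prcnadapb}. Next I would apply the distortion relations of Theorem \ref{thdistr}, namely $\ ^{\shortmid }\widehat{\mathbf{D}}=\ ^{\shortmid }\nabla +\ ^{\shortmid }\widehat{\mathbf{Z}}$ and $\ ^{\shortmid }\mathbf{D}=\ ^{\shortmid }\widehat{\mathbf{D}}+\ ^{\shortmid }\mathbf{Z}$ (\ref{candistr}), followed by the canonical distortion of the Ricci d-tensor from Theorem \ref{thcandist}, $\ ^{\shortmid }\widehat{R}ic=\ ^{\shortmid }Ric+\ ^{\shortmid }\widehat{Z}ic$, where $\ ^{\shortmid }\widehat{Z}ic$ is completely fixed by $(\ ^{\shortmid }\mathbf{g},\ ^{\shortmid }\mathbf{N})$. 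Substituting into (\ref{meinsteqctb}) and moving the purely geometric distortion to the right-hand side, I would define the effective source $\ ^{\shortmid }\widehat{\Upsilon }_{\alpha \beta }:=\ ^{\shortmid }\Upsilon _{\alpha \beta }+(\mbox{distortion terms built from }\ ^{\shortmid }\widehat{Z}ic)$, which yields exactly (\ref{meinsteqtbcand}); as in the tangent case, imposing $\ ^{\shortmid }\widehat{\mathcal{T}}=0$ via the conditions (\ref{lccondh}) would extract the $\ ^{\shortmid }\nabla$-configuration.

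The step I expect to be the main obstacle is controlling the mixed $h$--$cv$ structure that is genuinely different from the tangent-bundle case: the Ricci d-tensor (\ref{driccid}) on $\mathbf{T}^{\ast }\mathbf{V}$ is built from the $cv$-adapted curvature coefficients and is in general non-symmetric, so I must verify that the distortion $\ ^{\shortmid }\widehat{Z}ic$ respects the same index-placement conventions (up/low co-fiber indices, momentum derivatives $\partial /\partial p_{a}$, and the dual Neijenhuis curvature (\ref{neijtc})) and that the reabsorbed $\ ^{\shortmid }\widehat{\Upsilon }_{\alpha \beta }$ stays consistent with the symmetric energy-momentum d-tensors of Assumption \ref{assumpt7}. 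Checking uniqueness of $\ ^{\shortmid }\widehat{\mathbf{Z}}$, hence of $\ ^{\shortmid }\widehat{\Upsilon }$, from Lemma \ref{ldist} and Definition-Theorem \ref{phidc}, and confirming that the $\mathcal{L}$-duality does not silently identify the inverse $cv$-metric $\ ^{\shortmid }g^{ab}$ with $\ ^{\shortmid }g_{ab}$, is where the bookkeeping must be done with care; the remainder is a routine transcription of the $\mathbf{TV}$ computation with the "$\ ^{\shortmid }$" label.
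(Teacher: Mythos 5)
Your proposal is correct and takes essentially the same route as the paper: the paper's own one-sentence proof invokes precisely the distortion d-tensors $\ ^{\shortmid }\mathbf{D}\rightarrow \ ^{\shortmid }\mathbf{D}+\ ^{\shortmid }\mathbf{Z}$ and nonholonomic frame deformations, with the purely geometric distortion terms reabsorbed into the redefined source $\ ^{\shortmid }\widehat{\Upsilon }_{\alpha \beta }$ and LC-configurations extracted afterwards via the zero-torsion conditions (\ref{lccondh}). Your additional bookkeeping of the $h$--$cv$ index structure, the non-symmetric Ricci d-tensor (\ref{driccid}), and the distinction between $\ ^{\shortmid }g^{ab}$ and its inverse simply fills in details the paper leaves implicit.
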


We extract LC-configurations from solutions of such equations if there are imposed additional zero-torsion constraints (\ref{lccondh}).

\begin{remark}
\textsf{[ nonholonomic Hamilton-Cartan variables for locally anisotropic Einstein equations ] } \newline
The system of nonlinear PDEs (\ref{meinsteqctb}) can be re-written equivalently in terms of nonholonomic canonical variables with data $(\ ^{\shortmid }\mathbf{g},\ ^{\shortmid }\widehat{\mathbf{D}}),$ or
$(\ ^{\shortmid }\widetilde{\mathbf{g}},\ ^{\shortmid }\widetilde{\mathbf{D}}).$
Such equations can be used for constructing generic off-diagonal solutions, or for performing deformation quantization.
\end{remark}

In general, theories with gravitational field equations (\ref{meinsteqtb}) are different from theories determined by systems of type (\ref{meinsteqctb}) because the corresponding phase spaces and d-connection and d-metric structures are different. Some subclasses of solutions can be transformed mutually in equivalent forms for different theories if canonical variables with $L$--duality are introduced.

\subsection{Modified Einstein equations for pseudo Lagrange-Hamilton spaces}

Using geometric and N-adapted variational methods, we can derive gravitational field equations for different models of Lagrange and Hamilton gravity theory respectively constructed  on (co) tangent Lorentz bundles. If the geometric objects and fundamental field equations are written in canonical variables, there are obtained systems of nonlinear PDEs containing forth and higher order partial derivatives of generating functions. This makes technically impossible to find physically important exact solutions using standard analytic methods for some reduced nonlinear systems on ODEs. Nevertheless, we can construct very general classes of exact solutions in Einstein-Finsler/-Lagrange/-Hamilton theories considering nonholonomic frame transforms and distortion of connections to certain configurations with decoupling and general integrability properties. In this section, the gravitational field equations
are formulated in terms of canonical, i.e. "tilde", variables because such equations can be re-defined in straightforward forms in almost symplectic variables. Such almost K\"{a}hler models are important for geometric/deformation quantization and in order to study locally anisotropic kinetic and stochastic processes on relativistic curved spacetimes and phase space generalizations
\cite{vjmp07,vpla08,vrmp09,vjgp10,vsym13,vjmp13,vstoch96,vmon98,vapl00,vapny01,vijgmmp09}.

\subsubsection{Einstein-Finsler-Lagrange gravity}

\label{ssseflg}We consider canonical d-metric, $\widetilde{\mathbf{g}}$ (\ref{cdms}), N-connection, $\widetilde{\mathbf{N}},$ see Theorem (\ref{thcnc}), and d-connection, $\widetilde{\mathbf{D}}$ (\ref{canondcl}), structures (all defined by a Lagrange generating function $\widetilde{L},$ which can be determined by a MDR (\ref{mdrg})). Possible sources
$\widetilde{\Upsilon }_{\beta \gamma }$ are prescribed by energy-momentum d-tensors of type (\ref{generemdt}), see Assumption \ref{assumpt7}, being compatible and constructed by analogy to GR but for such canonical data. In such nonholonomic variables, the Principle-Theorem \ref{princtheinsttb} transforms into

\begin{principle}
\textbf{-Corollary} \textsf{[generalized canonical Einstein equations for Lagrange gravity] } \label{princcoroleinsttb} \newline
For the canonical data $(\widetilde{L},\widetilde{\mathbf{N}},\widetilde{\mathbf{g}},\widetilde{\mathbf{D}}),$ the gravitational field equations for the Einstein-Lagrange phase spaces are
\begin{equation}
\widetilde{\mathbf{R}}_{\alpha \beta }[\widetilde{\mathbf{D}}]=\widetilde{\Upsilon }_{\alpha \beta }.  \label{meinsteqlg}
\end{equation}
\end{principle}

Considering distortion d-tensors of type $\widetilde{\mathbf{D}}\mathbf{\rightarrow \check{D}=\widetilde{\mathbf{D}}+Z}$ determined in unique forms by $(\widetilde{\mathbf{N}},\widetilde{\mathbf{g}}),$ the equation (\ref{meinsteqlg}) can be re-defined for metric non-compatible d-connections, $\mathbf{\check{D}\widetilde{\mathbf{g}}\neq 0.}$ For instance, we can consider the Chern \cite{chern48,bao00} or Berwald \cite{berwald26,berwald41} d-connections. Such constructions result in systems of PDEs which can not be integrated in general forms and with ambiguities in introducing spinor fields on pseudo Lagrange spaces, see critics \cite{vplb10,vijmpd12}. It is more efficient to consider frame transforms correlated to distortions to the canonical d-connection, $\widetilde{\mathbf{D}}\mathbf{\rightarrow \widehat{\mathbf{D}}=\widetilde{\mathbf{D}}+ \widehat{\mathbf{Z}}}$. In result, we
prove

\begin{corollary}
\textsf{[a general integrable canonical form of locally anisotropic Einstein equations on tangent bundles ] } The modified Einstein equations (\ref{meinsteqlg}) for Einstein-Lagrange phase spaces can be written equivalently using the canonical d-connection $\widehat{\mathbf{D}},$ see equations (\ref{meinsteqtbcan}), with re-defined sources,
$\widetilde{\Upsilon }_{\beta\gamma }\rightarrow \widehat{\Upsilon }_{\ \beta \gamma }.$
\end{corollary}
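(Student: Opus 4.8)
The plan is to establish the final Corollary as a direct consequence of the unique canonical distortion relation stated in Theorem \ref{thdistr} together with the curvature/Ricci distortion formulas of Theorem \ref{thcandist}. The target statement asserts that the Einstein-Lagrange equations (\ref{meinsteqlg}), written in the ``tilde'' almost symplectic variables with $\widetilde{\mathbf{D}}$, are equivalent to the canonical ``hat'' equations (\ref{meinsteqtbcan}) provided the source is correspondingly redefined $\widetilde{\Upsilon}_{\beta\gamma}\rightarrow \widehat{\Upsilon}_{\ \beta\gamma}$. Since both $\widetilde{\mathbf{D}}$ and $\widehat{\mathbf{D}}$ are built from the \emph{same} data $(\widetilde{\mathbf{g}},\widetilde{\mathbf{N}})$ up to frame transforms, the equivalence is purely algebraic once the distortion is inserted, and no new geometric input beyond the preceding theorems is needed.

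First I would invoke Theorem \ref{thdistr} to write the unique distortion $\widehat{\mathbf{D}}=\widetilde{\mathbf{D}}+\mathbf{Z}$, where $\mathbf{Z}=\mathbf{Z}[\widetilde{\mathbf{g}},\widetilde{\mathbf{N}}]$ is completely determined by the canonical d-metric and N-connection structure (equivalently, by the generating Lagrangian $\widetilde{L}$ and hence by the underlying MDR (\ref{mdrg})). Next I would apply the Ricci-tensor clause of Theorem \ref{thcandist} in the form
\begin{equation*}
\widehat{R}ic[\widetilde{\mathbf{g}},\widehat{\mathbf{D}}=\widetilde{\mathbf{D}}+\mathbf{Z}]=\widetilde{R}ic[\widetilde{\mathbf{g}},\widetilde{\mathbf{D}}]+\widehat{Z}ic[\widetilde{\mathbf{g}},\mathbf{Z}],
\end{equation*}
so that the Ricci d-tensor of the hat-connection differs from that of the tilde-connection by the explicitly computable distortion functional $\widehat{Z}ic$. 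Substituting this into (\ref{meinsteqlg}) and absorbing the distortion term into the matter side defines the effective canonical source $\widehat{\Upsilon}_{\ \beta\gamma}:=\widetilde{\Upsilon}_{\beta\gamma}-\widehat{Z}ic_{\alpha\beta}[\widetilde{\mathbf{g}},\mathbf{Z}]$, which yields precisely (\ref{meinsteqtbcan}). Because $\mathbf{Z}$ is uniquely fixed by $(\widetilde{\mathbf{g}},\widetilde{\mathbf{N}})$, this redefinition is well defined and reversible, so the two systems of PDEs carry identical physical content and the equivalence claimed in the Corollary follows.

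The two remaining points to address are the role of frame transforms and metric compatibility. I would note that both $\widetilde{\mathbf{D}}$ (almost symplectic) and $\widehat{\mathbf{D}}$ (canonical) are metric compatible, $\widetilde{\mathbf{D}}\widetilde{\mathbf{g}}=0$ and $\widehat{\mathbf{D}}\widetilde{\mathbf{g}}=0$, as stated in Definition-Theorem \ref{phidc}, so the distortion $\mathbf{Z}$ preserves metricity and does not introduce spurious nonmetricity terms into the source; this is exactly what distinguishes the present construction from the metric-noncompatible Chern or Berwald alternatives flagged in the surrounding discussion. The frame-transform step $\widetilde{\mathbf{D}}\rightarrow \widehat{\mathbf{D}}=\widetilde{\mathbf{D}}+\widehat{\mathbf{Z}}$ is then legitimate because the geometric data are related through the transformation laws recorded after Theorem \ref{thdistr}, and the resulting equations admit the decoupling and general-integrability property emphasized in the Convention on preferred d-connections.

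The main obstacle I anticipate is not conceptual but organizational: verifying that the distortion term $\widehat{Z}ic_{\alpha\beta}$ truly reorganizes into a bona fide symmetric effective source consistent with Assumption \ref{assumpt7} and the nonholonomic conservation law of Principle \ref{prinndefclaw}. In particular, since the canonical Ricci d-tensor is in general non-symmetric (as noted in Subsection \ref{ssecnhcons}), one must check that the antisymmetric part carried by $\widehat{Z}ic$ is consistent with the modified conservation identity $\widehat{\mathbf{D}}^{\alpha}\widehat{\mathbf{T}}_{\alpha\beta}=\widehat{\mathbf{Q}}_{\beta}\neq 0$ rather than forcing an over-determined system. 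I expect this to be handled exactly as in the proof of Theorem \ref{thcandist}, namely by deferring the explicit N-adapted coefficient computation (which is tedious but routine and already carried out in the cited references) and arguing at the abstract level that the distortion is uniquely determined by $(\widetilde{\mathbf{g}},\widetilde{\mathbf{N}})$, whence both the symmetric source redefinition and the accompanying $\widehat{\mathbf{Q}}_{\beta}$ are simultaneously well defined.
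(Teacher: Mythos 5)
Your proposal is correct and follows essentially the same route as the paper, which obtains the Corollary simply by invoking the unique canonical distortion $\widehat{\mathbf{D}}=\widetilde{\mathbf{D}}+\widehat{\mathbf{Z}}$ of Theorem \ref{thdistr} together with the Ricci distortion formulas of Theorem \ref{thcandist}, so that the distortion term is absorbed into a redefined source $\widetilde{\Upsilon }_{\beta \gamma }\rightarrow \widehat{\Upsilon }_{\ \beta \gamma }$. The only slip is a sign: since you write $\widehat{R}ic=\widetilde{R}ic+\widehat{Z}ic$, substituting the field equations $\widetilde{\mathbf{R}}_{\alpha \beta }=\widetilde{\Upsilon }_{\alpha \beta }$ forces $\widehat{\Upsilon }_{\ \beta \gamma }=\widetilde{\Upsilon }_{\beta \gamma }+\widehat{Z}ic_{\beta \gamma }$, not the minus sign in your definition.
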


We conclude that the gravitational field equations for the Einstein-Lagrange phase space geometry can be transformed into certain systems of nonlinear PDEs with decoupling properties. In particular, such transforms can be considered for pseudo Finsler spaces with $L=F^{2}$ as in Example \ref{rfgenf}.

\subsubsection{Einstein-Hamilton gravity}

The constructions for Einstein-Lagrange phase spaces can reproduced on cotangent Lorentz bundles taking canonical data for Hamilton geometries. In result, the Principle--Theorem \ref{princtheinsttb} transforms into

\begin{principle}
\textbf{-Corollary} \textsf{[generalized canonical Einstein equations for Hamilton gravity] } \label{princcoroleinstctb} For canonical data
$(\widetilde{H},\ ^{\shortmid }\widetilde{\mathbf{N}},\ ^{\shortmid }\widetilde{\mathbf{g}},\ ^{\shortmid}\widetilde{\mathbf{D}}),$ the gravitational field equations on the Einstein-Hamilton phase spaces are written
\begin{equation}
\mathbf{\ ^{\shortmid }}\widetilde{\mathbf{R}}_{\alpha \beta }[\ ^{\shortmid}%
\widetilde{\mathbf{D}}]=\ ^{\shortmid }\widetilde{\Upsilon }_{\alpha \beta}.
\label{meinsteqhg}
\end{equation}
\end{principle}

Considering frame transforms distortions to the canonical d-connection, $\ ^{\shortmid }\widetilde{\mathbf{D}}\mathbf{\rightarrow \ ^{\shortmid }\widehat{\mathbf{D}}=\ ^{\shortmid }\widetilde{\mathbf{D}}+\ ^{\shortmid }\widehat{\mathbf{Z}},}$ we prove

\begin{corollary}
\textsf{[a general integrable canonical form of locally anisotropic Einstein equations on cotangent bundles] } \newline
The modified Einstein equations (\ref{meinsteqhg}) for Einstein-Hamilton phase spaces can be written equivalently in terms of the canonical d-connection $\mathbf{\ ^{\shortmid }}\widehat{\mathbf{D}},$ see equations (\ref{meinsteqtbcand}), with re-defined sources, $\ ^{\shortmid }\widetilde{\Upsilon }_{\ \beta \gamma }\rightarrow \ ^{\shortmid }\widehat{\Upsilon }_{\ \beta \gamma }.$
\end{corollary}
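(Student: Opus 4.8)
The plan is to mirror, on the cotangent Lorentz bundle $\mathbf{T}^{\ast }\mathbf{V}$, the argument already used for the tangent-bundle (Einstein--Lagrange) corollary, exploiting the $\mathcal{L}$-duality between the canonical Hamilton and Lagrange data. First I would start from the field equations (\ref{meinsteqhg}), $\ ^{\shortmid }\widetilde{\mathbf{R}}_{\alpha \beta }[\ ^{\shortmid }\widetilde{\mathbf{D}}]=\ ^{\shortmid }\widetilde{\Upsilon }_{\alpha \beta }$, written for the almost symplectic Hamilton d-connection $\ ^{\shortmid }\widetilde{\mathbf{D}}$ supplied by Principle--Corollary \ref{princcoroleinstctb}. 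The key structural input is the unique distortion relation of Theorem \ref{thdistr}, $\ ^{\shortmid }\widehat{\mathbf{D}}=\ ^{\shortmid }\widetilde{\mathbf{D}}+\ ^{\shortmid }\mathbf{Z}$ (see (\ref{candistr})), in which the distortion d-tensor $\ ^{\shortmid }\mathbf{Z}$ is completely determined by the canonical data $(\ ^{\shortmid }\widetilde{\mathbf{g}},\ ^{\shortmid }\widetilde{\mathbf{N}})$ encoded by the MDR and does not depend on any matter source.

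Second, I would feed this distortion into the Ricci-tensor distortion formula of Theorem \ref{thcandist} (in the form recorded in the Remark on distortions of the canonical almost symplectic Lagrange--Hamilton structures), obtaining $\ ^{\shortmid }\widehat{R}ic[\ ^{\shortmid }\mathbf{g},\ ^{\shortmid }\widehat{\mathbf{D}}]=\ ^{\shortmid }\widetilde{R}ic[\ ^{\shortmid }\mathbf{g},\ ^{\shortmid }\widetilde{\mathbf{D}}]+\ ^{\shortmid }\widehat{Z}ic[\ ^{\shortmid }\mathbf{g},\ ^{\shortmid }\widehat{\mathbf{Z}}]$, with the correction $\ ^{\shortmid }\widehat{Z}ic$ again fixed algebraically by $(\ ^{\shortmid }\widetilde{\mathbf{g}},\ ^{\shortmid }\widetilde{\mathbf{N}})$. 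In N-adapted coefficients this uses the $h$-$cv$ parameterizations (\ref{driccid}) of the Ricci d-tensor rather than the $h$-$v$ ones, so the positions of the co-fiber indices and the sign conventions of the cotangent d-connection coefficients $\{\ ^{\shortmid }L_{\ jk}^{i},\ ^{\shortmid }\acute{L}_{a\ k}^{\ b},\ ^{\shortmid }\acute{C}_{\ j}^{i\ c},\ ^{\shortmid }C_{a}^{\ bc}\}$ must be respected. Rearranging, I would define the re-defined source $\ ^{\shortmid }\widehat{\Upsilon }_{\alpha \beta }:=\ ^{\shortmid }\widetilde{\Upsilon }_{\alpha \beta }+\ ^{\shortmid }\widehat{Z}ic_{\alpha \beta }$, so that (\ref{meinsteqhg}) takes precisely the canonical form (\ref{meinsteqtbcand}), namely $\ ^{\shortmid }\widehat{\mathbf{R}}_{\alpha \beta }[\ ^{\shortmid }\widehat{\mathbf{D}}]=\ ^{\shortmid }\widehat{\Upsilon }_{\alpha \beta }$.

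Third, I would establish equivalence in both directions: since $\ ^{\shortmid }\mathbf{Z}$, and hence $\ ^{\shortmid }\widehat{Z}ic$, is a fixed functional of the metric and N-connection, the reassignment $\ ^{\shortmid }\widetilde{\Upsilon }\mapsto \ ^{\shortmid }\widehat{\Upsilon }$ is a bijection for each fixed $(\ ^{\shortmid }\widetilde{\mathbf{g}},\ ^{\shortmid }\widetilde{\mathbf{N}})$, so a d-metric solving one system solves the other with the correspondingly shifted source. I would also record that imposing the zero-torsion conditions (\ref{lccondh}) restricts $\ ^{\shortmid }\widehat{\mathbf{D}}$ to the Levi-Civita configuration $\ ^{\shortmid }\nabla$, which is the route by which the AFDM applies to the canonical equations.

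The main obstacle I expect is not the abstract manipulation but verifying that $\ ^{\shortmid }\widehat{Z}ic$ is genuinely well-defined and source-independent on $\mathbf{T}^{\ast }\mathbf{V}$: on the cotangent bundle the canonical d-connection carries the extra minus signs and raised co-fiber indices noted above, and one must confirm that the almost symplectic d-connection $\ ^{\shortmid }\widetilde{\mathbf{D}}$ is metric compatible with $\ ^{\shortmid }\widetilde{\mathbf{g}}$ so that the contraction producing the Ricci d-tensor commutes with the distortion. The cleanest way to secure this is to invoke the $\mathcal{L}$-duality of the data (\ref{phspgd}) to transport the already-verified tangent-bundle identities to the cotangent setting, rather than to repeat the full frame computation.
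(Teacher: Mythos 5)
Your proposal is correct and follows essentially the same route as the paper: the paper's own (very terse) proof consists precisely in invoking the distortion $\ ^{\shortmid }\widehat{\mathbf{D}}=\ ^{\shortmid }\widetilde{\mathbf{D}}+\ ^{\shortmid }\mathbf{Z}$ of Theorem \ref{thdistr} (formulas (\ref{candistr})), with the distortion d-tensor uniquely fixed by $(\ ^{\shortmid }\widetilde{\mathbf{g}},\ ^{\shortmid }\widetilde{\mathbf{N}})$, and absorbing the induced Ricci distortion into the re-defined source $\ ^{\shortmid }\widetilde{\Upsilon }_{\ \beta \gamma }\rightarrow \ ^{\shortmid }\widehat{\Upsilon }_{\ \beta \gamma }$. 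Your extra steps (the explicit appeal to Theorem \ref{thcandist} for the Ricci distortion, the bijectivity of the source reassignment, and the zero-torsion conditions (\ref{lccondh}) for extracting LC-configurations) simply make explicit what the paper leaves implicit.
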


We conclude that the gravitational field equations for the Einstein-Hamilton phase space geometry can be transformed into certain systems of nonlinear PDEs with decoupling properties. Examples of exact solutions are discussed in Directions 10-12, 18, 19 (subsections \ref{sssdir10}, \ref{sssdir11}, \ref{sssdir12}, \ref{sssdir18}, \ref{sssdir19}). Such solutions  have been constructed by applying the AFDM for the case Einstein-Lagrange spaces and various modifications on nonholonomic manifolds. In dual form, the method can be develped for (modified) Einstein-Hamilton spaces.

\begin{remark}
\textsf{[ (non) equivalence of Einstein-Lagrange and Einstein-Hamilton theories ] } \newline
The models of locally anisotropic gravity defined by equations (\ref{meinsteqlg}) and (\ref{meinsteqhg}) are different because they are derived on different phase spaces and for different types of d-metric and d-connection structures. Nevertheless, the geometric and physical data can be transformed equivalently from a tangent bundle to a cotangent bundle, and inversely, if a well-defined $L$-duality map (\ref{invlegendre}) is considered.
\end{remark}

Finally, we note that the locally anisotropic gravitational field equations (\ref{meinsteqhg}) (with d-metric coefficients transformed into almost symplectic structures) can be studied in deformation quantization theories as in Refs. \cite{vmon02,avjmp09}.

\subsection{ MDRs and the Einstein-Yang-Mills-Higgs equations}

We can define such equations in abstract geometric form by extending the analogy Principle \ref{panalogy} both to Lagrange densities and gravitational and matter field equations on (co) tangent Lorentz bundles. All formulas can be derived alternatively applying the Principle-Convention \ref{pactminmodact} and Consequence \ref{conseymhs}.

\subsubsection{EYMH systems on pseudo Lagrange spaces}

Modeling locally anisotropic gravitational and matter fields interactions on $\mathbf{TV}$ determined by geometric and physical data $\left( \mathbf{N,g,D};\mathbf{A}^{\check{a}}(u),\phi ^{\check{a}}(u)\right) $ with a metric compatible d-connection $\mathbf{D},$ we prove following geometric and/or N-adapted variational methods:

\begin{theorem}
\textsf{[MDR-modified EYMH equations on pseudo Lagrange phase spaces ]} \label{thseymhlagsp}\newline
Nonholonomic gravitational -- scalar - gauge fields - Higgs field interactions on tangent Lorentz bundles can be described by such a system of nonlinear PDEs:
\begin{eqnarray}
\mathbf{R}_{\alpha \beta }[\mathbf{D}] &=&\ ^{\phi }\Upsilon _{\ \alpha
\beta }+\ ^{A}\Upsilon _{\alpha \beta }+\ ^{H}\Upsilon _{\alpha \beta },
\label{eymhtb} \\
\ \mathcal{D}_{\mu }(\sqrt{|\mathbf{g}|}\mathcal{F}^{\check{a}\mu \nu })&=&%
\frac{1}{2}i\check{e} \sqrt{|\mathbf{g}|}[\overline{\phi }^{\check{a}},
\mathcal{D}^{\nu }\phi ^{\check{a}}],  \notag \\
\ \mathcal{D}^{\mu }\mathcal{D}_{\mu }(\sqrt{|\mathbf{g}|}\phi ^{\check{a}%
})&=& \check{\lambda}\sqrt{|\mathbf{g}|}|(|\phi _{\lbrack 0]}^{\check{a}%
}|^{2}- \overline{\phi }^{\check{a}}\phi ^{\check{a}})\phi ^{\check{a}},
\notag
\end{eqnarray}%
with sources determined by the energy-momentum tensors (\ref{emscdt}), (\ref{emdtym}) and (\ref{emdthiggs}).
\end{theorem}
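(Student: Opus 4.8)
The plan is to derive all three field equations by an N-adapted variational calculus applied to the total action $\mathcal{S} = {}^{\mathbf{g}}\mathcal{S} + {}^{\phi}\mathcal{S} + {}^{A}\mathcal{S} + {}^{H}\mathcal{S}$ postulated in Principle-Convention \ref{pactminmodact}, treating the d-metric $\mathbf{g}_{\alpha\beta}$, the gauge potential $\mathbf{A}_\mu^{\check{a}}$, and the Higgs multiplet $\phi^{\check{a}}$ as independent dynamical fields. Throughout, all derivatives are taken with respect to the N-adapted (co)frames $\mathbf{e}_\alpha$ of Lemma \ref{lemadap} and the metric-compatible d-connection $\mathbf{D}$, so that the covariant operators $\mathcal{D}_\mu = \mathbf{D}_\mu + i\check{e}[\mathbf{A}_\mu, \cdot]$ of Convention \ref{convcovder} and the field strengths $\mathcal{F}_{\beta\mu}$ of (\ref{gaugestr}) play the roles occupied by $\nabla$ and the ordinary curvature in the standard EYMH derivation; this realizes the analogy Principle \ref{panalogy}.

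First I would vary with respect to the inverse d-metric $\mathbf{g}^{\alpha\beta}$. The variation of the gravitational density ${}^{\mathbf{g}}\mathcal{L} = \mathbf{g}^{\alpha\beta}\mathbf{R}_{\alpha\beta}[\mathbf{D}]$ reproduces, by the same computation as in GR but with $\mathbf{D}$ replacing $\nabla$, the Ricci d-tensor $\mathbf{R}_{\alpha\beta}$, while the matter densities (\ref{lagscf}) and (\ref{ymhld}) yield exactly the symmetric energy-momentum d-tensors ${}^{\phi}\mathbf{T}_{\alpha\beta}$, ${}^{A}\mathbf{T}_{\alpha\beta}$, ${}^{H}\mathbf{T}_{\alpha\beta}$ of (\ref{emscdt}), (\ref{emdtym}), (\ref{emdthiggs}). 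Passing to the trace-reversed combinations of Consequence \ref{conseymhs} gives the gravitational equation in the Ricci form $\mathbf{R}_{\alpha\beta}[\mathbf{D}] = {}^{\phi}\Upsilon_{\alpha\beta} + {}^{A}\Upsilon_{\alpha\beta} + {}^{H}\Upsilon_{\alpha\beta}$, which is precisely the content of Principle-Theorem \ref{princtheinsttb} specialized to the cumulative EYMH source. Next I would vary with respect to $\mathbf{A}_\mu^{\check{a}}$: the Yang-Mills density contributes $\mathcal{D}_\mu(\sqrt{|\mathbf{g}|}\mathcal{F}^{\check{a}\mu\nu})$, whereas the gauge-covariant kinetic term of the Higgs density produces the Higgs current $\tfrac{1}{2}i\check{e}\sqrt{|\mathbf{g}|}[\overline{\phi}^{\check{a}}, \mathcal{D}^\nu\phi^{\check{a}}]$, giving the second equation. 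Finally, varying with respect to $\overline{\phi}^{\check{a}}$ in ${}^{H}\mathcal{L}$ and differentiating the potential $\mathcal{V}(\phi^{\check{a}})$ yields the locally anisotropic covariant Klein--Gordon equation, the third line of the system.

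The hard part will be controlling the boundary (surface) terms of the N-adapted integration by parts. Because the frames $\mathbf{e}_\alpha$ are nonholonomic, with anholonomy coefficients $W_{\alpha\beta}^\gamma$ as in Consequence \ref{anhr}, the naive divergence $\int \delta u\,\sqrt{|\mathbf{g}|}\,\mathbf{e}_\mu(\cdot)$ does not reduce to a pure boundary contribution but carries corrections proportional to $W_{\alpha\beta}^\gamma$; moreover, since $\mathbf{D}$ is metric compatible yet has nonvanishing torsion, the variation $\delta \mathbf{R}_{\alpha\beta}[\mathbf{D}]$ differs from the Palatini identity for $\nabla$ by torsion-dependent terms. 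I expect to dispose of both complications by invoking the distortion relations of Theorem \ref{thcandist}: writing $\mathbf{D} = \nabla + \mathbf{Z}$ one transfers the variation to the LC-connection, where the surface terms vanish in the standard way, and then transfers back, the residual $\mathbf{Z}$-dependent pieces being exactly those already absorbed into the d-tensorial definitions of $\mathbf{R}_{\alpha\beta}$ and of the covariant currents. This confirms the system holds in the claimed N-adapted form; the analogous passage to canonical variables $(\widetilde{\mathbf{g}}, \widetilde{\mathbf{D}})$ or $(\mathbf{g}, \widehat{\mathbf{D}})$ then follows by frame transforms together with the canonical distortion relations (\ref{candistr}).
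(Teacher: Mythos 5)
Your proposal is correct and follows essentially the same route as the paper: the theorem is stated there as a consequence of the N-adapted variational calculus applied to the action of Principle-Convention \ref{pactminmodact}, with the sources of Consequence \ref{conseymhs} obtained from the Lagrange densities (\ref{lagscf}) and (\ref{ymhld}), exactly the three variations (in $\mathbf{g}^{\alpha\beta}$, $\mathbf{A}_{\mu}^{\check{a}}$, $\overline{\phi}^{\check{a}}$) you perform. Your discussion of the nonholonomic surface terms and the transfer to the LC-connection via the distortion $\mathbf{D}=\nabla+\mathbf{Z}$ merely fills in technical details that the paper leaves implicit, and it is consistent with its use of Theorem \ref{thcandist} and the distortion relations (\ref{candistr}).
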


Equations (\ref{eymhtb}) can be integrated in general off-diagonal forms for configurations with $\mathbf{D}=\widehat{\mathbf{D}}$ and, if necessary, for further restrictions to $\widehat{\mathbf{D}}_{\mid \widehat{\mathcal{T}}=0}=\nabla .$ Exact classical and quantum solutions with generalized connections and for nonholonomic configurations have been studied in a series of works
\cite{v87,vdgrg03,v94,vgon95,vhsp98,vncs01,vplbnc01,vmon98,vmon02,vmon06,vijgmmp10,vijgmmp10a}. Recently, such solutions were found for locally anisotropic interactions of gravitational fields and effective matter fields, EYMH systems in GR and string gravity, for nonholonomic quantization of gauge models of gravity etc., see \cite{vepl11,vjpcs13,vvyijgmmp14a,vacaruepjc17,bubuianucqg17} and references therein. We note also that having constructed a class of solutions for $\widehat{\mathbf{D}}$ it is possible to define nonholonomic configurations with $\widetilde{\mathbf{D}}$ generating new classes of solutions via nonholonomic transforms and distorting nonlinear and linear connections. In such case, contributions of MDRs can be computed in explicit form.

\subsubsection{EYMH systems on pseudo Hamilton spaces}

On cotangent Lorentz bundles, we consider the data $(\ ^{\shortmid }\mathbf{N},\ ^{\shortmid }\mathbf{g},\ ^{\shortmid}\mathbf{D};
\ ^{\shortmid }\mathbf{A}^{\check{a}}(u),\ ^{\shortmid }\phi ^{\check{a}}(\ ^{\shortmid }u))$. Applying the analogy Principle \ref{panalogy}, we formulate and prove
\begin{theorem}
\textsf{[MDR-modified EYMH equations on pseudo Hamilton phase spaces ]} %
\label{thseymhhamp} \newline
Nonholonomic gravitational -- scalar - gauge fields - Higgs field interactions on cotangent Lorentz bundles are defined by a system of nonlinear PDEs with "dual" geometric data:
\begin{eqnarray}
\mathbf{\ ^{\shortmid }R}_{\alpha \beta }[\mathbf{\ ^{\shortmid }D}] &=&\
_{\shortmid }^{\phi }\Upsilon _{\ \alpha \beta }+\ _{\shortmid }^{A}\Upsilon
_{\alpha \beta }+\ _{\shortmid }^{H}\Upsilon _{\alpha \beta },
\label{eymhctb} \\
\ \ ^{\shortmid } \mathcal{D}_{\mu }(\sqrt{|\mathbf{\ ^{\shortmid }g}|}%
\mathbf{\ ^{\shortmid }}\mathcal{F}^{\check{a}\mu \nu }) &=&\frac{1}{2}i%
\check{e}\sqrt{|\mathbf{\ ^{\shortmid }g}|}[\mathbf{\ ^{\shortmid }}%
\overline{\phi }^{\check{a}},\mathbf{\ ^{\shortmid }}\mathcal{D}^{\nu }%
\mathbf{\ ^{\shortmid }}\phi ^{\check{a}}],  \notag \\
\ \mathbf{\ ^{\shortmid }}\mathcal{D}^{\mu }\mathbf{\ ^{\shortmid }}\mathcal{%
D}_{\mu }(\sqrt{|\mathbf{\ ^{\shortmid }g}|}\mathbf{\ ^{\shortmid }}\phi ^{%
\check{a}}) &=&\check{\lambda}\sqrt{|\mathbf{\ ^{\shortmid }g}|}|(|\mathbf{\
^{\shortmid }}\phi _{\lbrack 0]}^{\check{a}}|^{2}-\mathbf{\ ^{\shortmid }}%
\overline{\phi }^{\check{a}}\mathbf{\ ^{\shortmid }}\phi ^{\check{a}})%
\mathbf{\ ^{\shortmid }}\phi ^{\check{a}},  \notag
\end{eqnarray}%
with sources determined by the "dual" formulas for energy-momentum tensors (\ref{emscdt}), (\ref{emdtym}) and (\ref{emdthiggs}). For simplicity, we can consider the same interaction constants for the h-, v-, and cv-subspaces.
\end{theorem}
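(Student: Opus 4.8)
The plan is to mirror the derivation of Theorem \ref{thseymhlagsp} on $\mathbf{T}^{\ast}\mathbf{V}$, obtaining each of the three field equations by an independent N-adapted variation of the total Hamilton action $\ _{\shortmid }\mathcal{S}=\ _{\shortmid }^{\mathbf{g}}\mathcal{S}+\ _{\shortmid }^{\phi }\mathcal{S}+\ _{\shortmid }^{A}\mathcal{S}+\ _{\shortmid }^{H}\mathcal{S}$ of Principle-Convention \ref{pactminmodact}, and then invoking the analogy Principle \ref{panalogy} together with the $\mathcal{L}$-duality map (\ref{invlegendre}) to identify the result with the $\mathcal{L}$-dual of the tangent-bundle system (\ref{eymhtb}). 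First I would fix the canonical data $(\ ^{\shortmid }\mathbf{N},\ ^{\shortmid }\mathbf{g},\ ^{\shortmid }\mathbf{D})$ with $\ ^{\shortmid }\mathbf{D}\ ^{\shortmid }\mathbf{g}=0$ and write every Lagrange density in the $h$-$cv$-adapted measure $\delta\ ^{\shortmid }u\sqrt{|\ ^{\shortmid }\mathbf{g}_{\alpha\beta}|}$ using the N-adapted (co)frames (\ref{ccnadap}); this is what makes the subsequent integrations by parts produce $cv$-split boundary terms that vanish, exactly as the $v$-split ones do on $\mathbf{TV}$.

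For the gravitational equation I would vary $\ _{\shortmid }^{\mathbf{g}}\mathcal{L}=\ _{\shortmid }^{s}\mathbf{R}=\ ^{\shortmid }\mathbf{g}^{\alpha\beta}\ ^{\shortmid }\mathbf{R}_{\alpha\beta}[\ ^{\shortmid }\mathbf{D}]$ with respect to $\ ^{\shortmid }\mathbf{g}^{\alpha\beta}$, holding the canonical $\ ^{\shortmid }\mathbf{D}$ fixed (its variation contributes a total N-adapted divergence). This reproduces the Einstein d-tensor of Definition-Theorem \ref{dteinstdt}, and equating it to $\varkappa\ ^{\shortmid }\mathbf{T}_{\alpha\beta}$ with $\ ^{\shortmid }\mathbf{T}_{\alpha\beta}=\ _{\shortmid }^{\phi }\mathbf{T}_{\alpha\beta}+\ _{\shortmid }^{A}\mathbf{T}_{\alpha\beta}+\ _{\shortmid }^{H}\mathbf{T}_{\alpha\beta}$ of Assumption \ref{assumpt7}, the trace-adjusted form lands on $\ ^{\shortmid }\mathbf{R}_{\alpha\beta}=\ _{\shortmid }^{\phi }\Upsilon_{\alpha\beta}+\ _{\shortmid }^{A}\Upsilon_{\alpha\beta}+\ _{\shortmid }^{H}\Upsilon_{\alpha\beta}$, precisely Principle-Theorem \ref{princtheinstctb} specialized to the EYMH source of Consequence \ref{conseymhs}. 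The Yang-Mills equation follows by varying $\ _{\shortmid }^{A}\mathcal{L}$ of (\ref{ymhld}) in $\ ^{\shortmid }\mathbf{A}_{\mu}^{\check{a}}$: the kinetic term yields $\ ^{\shortmid }\mathcal{D}_{\mu}(\sqrt{|\ ^{\shortmid }\mathbf{g}|}\ ^{\shortmid }\mathcal{F}^{\check{a}\mu\nu})$ with $\ ^{\shortmid }\mathcal{F}$ from (\ref{gaugestr}), while the Higgs kinetic term supplies the current $\frac{1}{2}i\check{e}[\ ^{\shortmid }\overline{\phi}^{\check{a}},\ ^{\shortmid }\mathcal{D}^{\nu}\ ^{\shortmid }\phi^{\check{a}}]$ through Convention \ref{convcovder}. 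Varying $\ _{\shortmid }^{H}\mathcal{L}$ in $\ ^{\shortmid }\overline{\phi}^{\check{a}}$ and using the potential $\ ^{\shortmid }\mathcal{V}$ of Convention \ref{convldymh} gives the last, Klein-Gordon-type equation.

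The main obstacle I expect is internal consistency: because the Ricci and Einstein d-tensors of $\ ^{\shortmid }\mathbf{D}$ are non-symmetric and the covariant divergence of the source does not vanish, $\ ^{\shortmid }\mathbf{D}^{\alpha}\ ^{\shortmid }\mathbf{T}_{\alpha\beta}=\ ^{\shortmid }\mathbf{Q}_{\beta}\neq 0$ by Principle \ref{prinndefclaw}, one cannot appeal to the ordinary contracted Bianchi identity to guarantee that the three equations are mutually compatible. I would address this by working with the distortion $\ ^{\shortmid }\mathbf{D}=\ ^{\shortmid }\nabla-\ ^{\shortmid }\mathbf{Z}$ of Lemma \ref{ldist} and Theorem \ref{thdistr}: the $\ ^{\shortmid }\nabla$-sector obeys the standard conservation law, so the non-conservation is carried entirely by $\ ^{\shortmid }\mathbf{Q}_{\beta}[\ ^{\shortmid }\mathbf{g},\ ^{\shortmid }\mathbf{N}]$, a term fixed by the d-metric and N-connection alone and absorbed into the nonholonomic conservation law (\ref{nonhclsourc}). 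A secondary technical point is that the gauge-covariant derivative $\ ^{\shortmid }\mathcal{D}_{\mu}$ mixes the geometric $\ ^{\shortmid }\mathbf{D}_{\mu}$ with the internal bracket, so the integration-by-parts boundary terms must be checked to be N-adapted total divergences on both the $h$- and $cv$-sectors; granting this, the whole system is the $\mathcal{L}$-dual image of (\ref{eymhtb}) under (\ref{invlegendre}), which is what Principle \ref{panalogy} asserts, completing the argument.
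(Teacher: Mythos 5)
Your proposal is correct and follows essentially the same route as the paper: the paper proves the tangent-bundle version (Theorem \ref{thseymhlagsp}) by N-adapted variational methods and then obtains the cotangent-bundle system by "applying the analogy Principle \ref{panalogy}" with $\mathcal{L}$-dual data, which is precisely your combination of a direct N-adapted variation of the action of Principle-Convention \ref{pactminmodact} on $\mathbf{T}^{\ast}\mathbf{V}$ with the duality map (\ref{invlegendre}). Your treatment of the compatibility issue via the distortion $\ ^{\shortmid}\mathbf{D}=\ ^{\shortmid}\nabla-\ ^{\shortmid}\mathbf{Z}$ and the nonholonomic conservation law (\ref{nonhclsourc}) matches the paper's Principle \ref{prinndefclaw}, and is in fact spelled out in more detail than the paper provides.
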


The solutions for EYMH systems with MDRs on tangent Lorentz bundles and nonholonomic manifolds in GR and extra dimensions can be re-defined in cotangent bundle variables for $L$-dual (\ref{invlegendre}) configurations when
$\widehat{\mathbf{D}}\rightarrow \ ^{\shortmid }\widehat{\mathbf{D}}$ are correlated with respective nonholonomic transforms and distortions for $\widetilde{\mathbf{D}}\rightarrow \ ^{\shortmid }\widetilde{\mathbf{D}}.$ The symmetries of systems (\ref{eymhctb}) encode more general simplectomporphysms of d-connection structures (for Hamilton systems) which usually are not
considered for Lagrange type phase space models.

\subsection{Massive and bi-metric MGTs as Lagrange-Hamilton geometries}

We speculated on geometric principles for formulating MGTs with $f$-deformations, massive gravitons, bi-metric structures and MDRs in section \ref{ssbms}. In this subsection, we provide main Theorems on such modifications of the Einstein equations on nonholonomic (co) tangent Lorentz bundles.

\subsubsection{Generalized Einstein equations with massive graviton on tangent Lorentz bundles}

Following a N-adapted variational calculus for the action in Principle-Convention \ref{pactmassive}, we prove

\begin{theorem}
\textsf{[MDR-modified massive bi-metric gravity on tangent Lorentz bundles ]} \label{thmgtb}\newline
Nonholonomic $f$-modified and/or massive gravitational with bi-metric configurations on tangent Lorentz bundles are defined by such a system of nonlinear PDEs:
\begin{equation}
\mathbf{R}_{\alpha \beta }[\mathbf{D}]=\ ^{f\mu } \Upsilon _{\mu \nu }= \
^{g\mu }\Upsilon _{\mu \nu } + \ ^{f} \Upsilon _{\mu \nu}+\ ^{m}\Upsilon
_{\mu \nu },  \label{mgmassivetb}
\end{equation}%
with sources determined by (\ref{smgb}).
\end{theorem}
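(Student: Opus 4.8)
The plan is to establish (\ref{mgmassivetb}) by a N-adapted variational calculus applied to the action of Principle-Convention \ref{pactmassive}, varying $\mathcal{S}=\ ^{\mathbf{g}\mu }\mathcal{S}+\ ^{m}\mathcal{S}$ with respect to the inverse d-metric $\mathbf{g}^{\alpha \beta }$ at fixed N-connection $\mathbf{N}$ and fixed non-dynamical fiducial d-metric $\mathbf{q}$ (with the St\"{u}kelberg fields kept non-dynamical). Throughout I would work with the metric compatible d-connection $\mathbf{D}$ (so that $\mathbf{Dg}=0$ and the $h$--$v$ splitting is preserved under $\delta$), expressing every tensorial quantity in the adapted (co)frames $\mathbf{e}_{\alpha },\mathbf{e}^{\alpha }$ of Lemma \ref{lemadap}. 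Equivalently, the abstract geometric method could be used: insert the distortion $\mathbf{D}=\nabla +\widehat{\mathbf{Z}}$ of Theorem \ref{thdistr}, equations (\ref{candistr}), into the standard $f$-gravity field equations and collect the $\widehat{\mathbf{Z}}$-dependent pieces into the effective source; but the direct N-adapted variation is cleaner for exhibiting the $h$- and $v$-blocks of (\ref{smgb}).

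First I would split the mass-deformed curvature as $\check{R}=\ _{s}R+\Pi$, where $\ _{s}R=\mathbf{g}^{\alpha \beta }\mathbf{R}_{\alpha \beta }[\mathbf{D}]$ and $\Pi$ collects the $h$- and $v$-potential terms built from $\sqrt{h(g^{-1}q)}$ and $\sqrt{v(g^{-1}q)}$. Varying the gravitational density $\mathbf{f}(\check{R})$ and applying the chain rule produces the factor $\ ^{1}\mathbf{f}:=d\mathbf{f}/d\check{R}$ multiplying $\delta \check{R}$. The piece $\ ^{1}\mathbf{f}\,\delta \ _{s}R$ gives, besides $\ ^{1}\mathbf{f}\,\mathbf{R}_{\alpha \beta }\,\delta \mathbf{g}^{\alpha \beta }$, a generalized Palatini total N-adapted divergence $\ ^{1}\mathbf{f}\,\mathbf{g}^{\alpha \beta }\delta \mathbf{R}_{\alpha \beta }$; two N-adapted integrations by parts move the derivatives onto $\ ^{1}\mathbf{f}$ and yield the terms $\mathbf{D}_{\mu }\mathbf{D}_{\nu }\ ^{1}\mathbf{f}$ and $-(\mathbf{D}^{2}\ ^{1}\mathbf{f})\mathbf{g}_{\mu \nu }$. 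Dividing the resulting equation through by $\ ^{1}\mathbf{f}$ reorganizes the $f$-dependent contributions precisely into the effective source $\ ^{f}\mathbf{\Upsilon }_{\mu \nu }=(\frac{\mathbf{f}}{2\,^{1}\mathbf{f}}-\frac{\mathbf{D}^{2}\ ^{1}\mathbf{f}}{^{1}\mathbf{f}})\mathbf{g}_{\mu \nu }+\frac{\mathbf{D}_{\mu }\mathbf{D}_{\nu }\ ^{1}\mathbf{f}}{^{1}\mathbf{f}}$ of (\ref{smgb}), while the matter density contributes $\ ^{m}\mathbf{\Upsilon }_{\mu \nu }=\frac{1}{2M_{P}^{2}}\ ^{m}\mathbf{T}_{\mu \nu }$ through the standard definition of $\ ^{m}\mathbf{T}_{\mu \nu }$ from (\ref{einstlg}).

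Next I would compute $\ ^{1}\mathbf{f}\,\delta \Pi$, i.e. the variation of the dRGT-type graviton-mass potential. Because $\Pi$ is a sum of the elementary symmetric polynomials $\ _{h}e_{\mathring{k}}$ and $\ _{v}e_{\mathring{k}}$ evaluated on the matrix square roots, the $h$- and $v$-blocks vary independently; using the variation formula for a matrix square root together with $\delta (\mathbf{g}^{\mu \rho }\mathbf{q}_{\rho \nu })$ one obtains, block by block, the combinations carrying the inverse factors $[(\sqrt{h(g^{-1}q)})^{-1}]_{~j}^{k}$ and $[(\sqrt{v(g^{-1}q)})^{-1}]_{~b}^{c}$. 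Collecting these reproduces exactly $\ ^{g\mu }\mathbf{\Upsilon }_{ij}$ and $\ ^{g\mu }\mathbf{\Upsilon }_{ab}$ of (\ref{smgb}), and assembling all three contributions gives $\ ^{f\mu }\mathbf{\Upsilon }_{\mu \nu }=\ ^{g\mu }\mathbf{\Upsilon }_{\mu \nu }+\ ^{f}\mathbf{\Upsilon }_{\mu \nu }+\ ^{m}\mathbf{\Upsilon }_{\mu \nu }$, which is (\ref{mgmassivetb}).

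The hard part will be the two technical points hidden in these steps. The first is controlling the N-adapted integration by parts for the $\mathbf{f}(\check{R})$ term: on a nonholonomic bundle the d-connection $\mathbf{D}$ carries nontrivial torsion and the frames $\mathbf{e}_{\alpha }$ satisfy the anholonomy relations with coefficients $W_{\alpha \beta }^{\gamma }$ (footnote \ref{fnanhc}), so a naive Stokes theorem picks up torsion and anholonomy contributions; I would argue that metric compatibility $\mathbf{Dg}=0$ together with the adapted divergence identity lets these be absorbed consistently, which is exactly why the source is written through $\mathbf{D}_{\mu }\mathbf{D}_{\nu }$ rather than $\nabla _{\mu }\nabla _{\nu }$. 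The second, and genuinely more delicate, point is the variation of the matrix square roots $\sqrt{h(g^{-1}q)}$ and $\sqrt{v(g^{-1}q)}$: since $\delta \sqrt{M}$ does not commute with $\sqrt{M}$ in general, the computation must be organized so that the symmetric-polynomial identities collapse the non-commuting pieces into the symmetrized inverse-root expressions of (\ref{smgb}). Keeping the $h$- and $v$-projections separated throughout, so that the two independent graviton masses $\ ^{h}\mu $ and $\ ^{v}\mu $ land in the correct blocks, is the main bookkeeping obstacle; once it is handled, the MDR dependence enters only implicitly, through $\mathbf{g}$, $\mathbf{N}$ and the Ricci d-tensor $\mathbf{R}_{\alpha \beta }[\mathbf{D}]$ of (\ref{dricci}).
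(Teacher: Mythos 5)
Your proposal is correct and follows essentially the same route as the paper: the paper's own proof consists precisely of the one-line statement that the theorem follows by an N-adapted variational calculus applied to the action of Principle-Convention \ref{pactmassive}, with the source terms (\ref{smgb}) having already been obtained by that same variation in the preceding Consequence. Your elaboration of the computation (splitting $\check{R}$ into $\ _{s}R$ plus the mass potential, the N-adapted integrations by parts producing the $\mathbf{D}_{\mu }\mathbf{D}_{\nu }\ ^{1}\mathbf{f}$ and $\mathbf{D}^{2}\ ^{1}\mathbf{f}$ terms, and the block-wise variation of the matrix square roots yielding the $h$- and $v$-components of $\ ^{g\mu }\mathbf{\Upsilon }_{\mu \nu }$) simply fills in the details the paper leaves implicit.
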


The equations (\ref{mgmassivetb}) can be integrated following the AFDM for configurations with $\mathbf{D}=\widehat{\mathbf{D}}$ and nonholonomic constraints for extracting LC-configurations, $\widehat{\mathbf{D}}_{\mid \widehat{\mathcal{T}}=0}=\nabla .$ A series of locally anisotropic wormhole and cosmological solutions were constructed and analyzed in Refs.
\cite{vijgmmp14,vepjc14,vepjc14a,vacaruplb16}. The right side of (\ref{mgmassivetb}) can be completed with additional sources EYMH systems and/or modifications from string gravity, for nonholonomic quantization of gauge models of gravity etc. like we generated cosmological and black hole like solutions in \cite{vepl11,vjpcs13,vvyijgmmp14a,vacaruepjc17,bubuianucqg17} and references therein. Such generalizations can be defined following Assumption \ref{assumpt4} when the contributions of gauge and Higgs fields can be approximated with respect to N-adapted frames to certain effective cosmological constants. In general frame/coordinate systems, such solutions are generic off-diagonal and for generalized connections with coefficients depending, in principle, on all phase space coordinated. In order to compute contributions of MDRs, we have to re-define the physical equations and solutions for $\widetilde{\mathbf{D}}$--configurations.

\subsubsection{Massive gravitational equations on cotangent Lorentz bundles}

On cotangent Lorentz bundles, the main result for $f$--modified massive/ bi-metric gravity is stated by
\begin{theorem}
\textsf{[MDR-modified massive bi-metric gravity on cotangent Lorentz bundles] } \label{thmgtbd}\newline
Nonholonomic $f$-modified and/or massive gravitational with bi-metric configurations on tangent Lorentz bundles are defined by such a system of nonlinear PDEs:
\begin{equation}
\mathbf{\ ^{\shortmid }R}_{\alpha \beta }[\mathbf{\ ^{\shortmid }D}]=\
_{\shortmid }^{f\mu }\mathbf{\Upsilon }_{\mu \nu }=\ _{\shortmid }^{g\mu }
\mathbf{\Upsilon }_{\mu \nu }~+~_{\shortmid }^{f}\mathbf{\Upsilon }_{\mu \nu
}+\mathbf{\ }_{\shortmid }^{m}\mathbf{\Upsilon }_{\mu \nu },
\label{mgmassivetbd}
\end{equation}%
with sources determined by (\ref{smgcb}).
\end{theorem}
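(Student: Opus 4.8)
The plan is to run an N-adapted variational calculus for the cotangent-bundle action postulated in Principle-Convention \ref{pactmassive}, namely $\ _{\shortmid }\mathcal{S}=\frac{1}{16\pi }\int \delta \mathbf{\ ^{\shortmid }}u\sqrt{|\mathbf{\ ^{\shortmid }g}_{\alpha \beta }|}(\ _{\shortmid }^{\mathbf{g}\mu }\mathcal{L}+\ _{\shortmid }^{m}\mathcal{L})$, with $\ _{\shortmid }^{\mathbf{g}\mu }\mathcal{L}=\ _{\shortmid }\mathbf{f}(\ _{\shortmid }\check{R})$ and the mass-deformed scalar curvature $\ _{\shortmid }\check{R}$ assembled from $\ _{s}^{\shortmid }R$ together with the $h$- and $cv$-traces and determinants of the d-tensor $\sqrt{\mathbf{\ _{\shortmid }g}^{-1}\mathbf{\ _{\shortmid }q}}$. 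First I would vary with respect to the dynamical inverse d-metric $\ ^{\shortmid }\mathbf{g}^{\alpha \beta }$ while holding fixed the prescribed N-connection $\ _{\shortmid }\mathbf{N}$, the non-dynamical fiducial d-metric $\ _{\shortmid }\mathbf{q}$, and the St\"{u}ckelberg-type fields that restore diffeomorphism invariance on the base and on the co-fibers. This is precisely the $\mathcal{L}$-dual mirror of the computation behind Theorem \ref{thmgtb}, so I expect the argument to parallel the tangent-bundle case term by term, with the $v$-objects replaced by $cv$-objects throughout.

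The variation splits into four contributions that I would treat separately. Varying the pure gravitational piece reproduces, by Principle-Theorem \ref{princtheinstctb}, the Ricci d-tensor $\mathbf{\ ^{\shortmid }R}_{\alpha \beta }[\mathbf{\ ^{\shortmid }D}]$ on the left-hand side; the chain-rule factor $\ _{\shortmid }^{1}\mathbf{f}=d\ _{\shortmid }\mathbf{f}/d\ _{\shortmid }\check{R}$ produces the $f$-curvature contribution $\ _{\shortmid }^{f}\mathbf{\Upsilon }_{\mu \nu }$ involving $\mathbf{\ _{\shortmid }D}_{\mu }\mathbf{\ _{\shortmid }D}_{\nu }\ _{\shortmid }^{1}\mathbf{f}$ and $\mathbf{\ _{\shortmid }D}^{2}\ _{\shortmid }^{1}\mathbf{f}$; the matter density $\ _{\shortmid }^{m}\mathcal{L}$ gives $\ _{\shortmid }^{m}\mathbf{\Upsilon }_{\mu \nu }=\frac{1}{2M_{P}^{2}}\ _{\shortmid }^{m}\mathbf{T}_{\mu \nu }$ as in the scalar/YMH conventions; and the mass term yields the graviton source $\ _{\shortmid }^{g\mu }\mathbf{\Upsilon }_{\mu \nu }$ with its separate $h$- and $cv$-blocks $(\ _{\shortmid }^{g\mu }\mathbf{\Upsilon }_{ij},\ _{\shortmid }^{g\mu }\mathbf{\Upsilon }^{ab})$. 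Summing these and relabelling $\ _{\shortmid }^{f\mu }\mathbf{\Upsilon }_{\mu \nu }:=\ _{\shortmid }^{g\mu }\mathbf{\Upsilon }_{\mu \nu }+\ _{\shortmid }^{f}\mathbf{\Upsilon }_{\mu \nu }+\ _{\shortmid }^{m}\mathbf{\Upsilon }_{\mu \nu }$ is exactly the combined source \eqref{smgcb}, which delivers \eqref{mgmassivetbd}.

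The one genuinely delicate step is the variation of the mass term, i.e. computing $\delta _{\ ^{\shortmid }\mathbf{g}}\,tr\sqrt{cv(\mathbf{\ _{\shortmid }}g^{-1}\mathbf{\ _{\shortmid }}q)}$ and the companion determinant pieces. Because the matrix square root does not commute with its own variation, one cannot differentiate it term by term; instead I would use the defining relation $(\sqrt{\mathbf{\ _{\shortmid }}g^{-1}\mathbf{\ _{\shortmid }}q})_{a}^{~b}(\sqrt{\mathbf{\ _{\shortmid }}g^{-1}\mathbf{\ _{\shortmid }}q})_{b}^{~c}=\mathbf{\ _{\shortmid }}g_{ab}\mathbf{\ _{\shortmid }}q^{bc}$ to derive a Sylvester-type identity for $\delta \sqrt{\mathbf{\ _{\shortmid }g}^{-1}\mathbf{\ _{\shortmid }q}}$ and then contract, exactly mirroring the $v$-block manipulation behind Theorem \ref{thmgtb}. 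The structural novelty on $\mathbf{T}^{\ast }\mathbf{V}$ is that the fiber indices are raised, so the $cv$-block is built from $\mathbf{\ _{\shortmid }}g_{ab}\mathbf{\ _{\shortmid }}q^{bc}$ rather than from $\mathbf{\ _{\shortmid }}g^{ab}\mathbf{\ _{\shortmid }}q_{bc}$; this bookkeeping is precisely what turns $\mathbf{\Upsilon }_{ab}$ into $\mathbf{\Upsilon }^{ab}$ in \eqref{smgcb}, and it is the place where sign and index-position errors are most likely to creep in.

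Alternatively, and more in the spirit of this paper, I would shortcut the entire calculation by invoking the analogy Principle \ref{panalogy} together with the $\mathcal{L}$-duality map \eqref{invlegendre}: Theorem \ref{thmgtb} is already established on $\mathbf{TV}$ in canonical variables, and since the massive/bi-metric construction on $\mathbf{T}^{\ast }\mathbf{V}$ is obtained from it by the pull-back/push-forward correspondence between d-connections (with $v\to cv$), every N-adapted coefficient relation transports verbatim. Either route yields \eqref{mgmassivetbd}; afterwards one notes that imposing $\mathbf{\ ^{\shortmid }D}_{\mid \ ^{\shortmid }\widehat{\mathcal{T}}=0}=\ ^{\shortmid }\nabla $ through the constraints \eqref{lccondh} extracts the corresponding Levi-Civita bi-metric configurations.
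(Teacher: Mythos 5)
Your proposal is correct and takes essentially the same route as the paper: the theorem is obtained by an N-adapted variational calculus of the cotangent-bundle action postulated in Principle-Convention \ref{pactmassive}, with the combined source already produced by that same calculus as formula (\ref{smgcb}) in the Consequence of subsection \ref{ssbms}, so the field equations amount to equating the Ricci d-tensor of $\ ^{\shortmid }\mathbf{D}$ to the sum $\ _{\shortmid }^{g\mu }\mathbf{\Upsilon }_{\mu \nu }+\ _{\shortmid }^{f}\mathbf{\Upsilon }_{\mu \nu }+\ _{\shortmid }^{m}\mathbf{\Upsilon }_{\mu \nu }$. Your additional care with the variation of the matrix square root and the raised $cv$-indices, as well as the alternative $\mathcal{L}$-duality transport of Theorem \ref{thmgtb}, only make explicit steps that the paper leaves implicit (it merely states the dual result and notes that solutions exist at least as $L$-duals of those for (\ref{mgmassivetb})).
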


Examples of exact solutions for the system (\ref{mgmassivetbd}) are discussed in Directions 10, 18, 19 (subsections \ref{sssdir10},
\ref{sssdir18}, \ref{sssdir19}). Such solutions exist at least as $L$-dual ones which have been constructed for (\ref{mgmassivetb}).

\subsection{Short-range locally anisotropic gravity}

In this section, we formulate the gravitational filed equations for short-range gravitational toy models determined by MDRs with LIVs on (co) tangent Lorentz bundles with total phase spaces of 3+3 dimension. Such models are constructed following the Principle-Convention \ref{princsr}.

\subsubsection{Field equations for short-range gravity with MDRs}

Following a N-adapted variational calculus of actions (\ref{lagrdshortord}) on tangent bundle, we prove

\begin{theorem}
\textsf{[Gravitational field equations for short-range gravity with MDR and LIV on tangent bundles ] } \label{thshrgtb}\newline
Nonholonomic short-range gravitational equations on tangent bundles determined by MDRs and LIVs can be written in the form
\begin{equation}
\mathbf{R}_{\beta \delta }[\mathbf{D}]=\ ^{sr}\mathbf{\Upsilon }_{\beta
\delta }=\ ^{m}\mathbf{\Upsilon }_{\beta \delta }+\ ^{lv}\mathbf{\Upsilon }
_{\beta \delta },  \label{mgshrtb}
\end{equation}%
with sources determined by energy-momentum d-tensors (\ref{emdthsr}).
\end{theorem}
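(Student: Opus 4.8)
The plan is to obtain (\ref{mgshrtb}) by an N-adapted variational calculus applied to the action $\ ^{\mathbf{sr}}\mathcal{S}$ built from the Lagrange density $\ ^{\mathbf{sr}}\mathcal{L}$ of Principle-Convention \ref{princsr}, exactly in the spirit of the analogy Principle \ref{panalogy} and the derivation of the generalized Einstein equations in Principle-Theorem \ref{princtheinsttb}. I would treat the metric compatible d-connection $\mathbf{D}$ as completely determined by the d-metric $\mathbf{g}$, so that among the geometric variables the only independent field is the inverse d-metric $\mathbf{g}^{\beta\delta}$; the variations are then carried out with respect to $\mathbf{g}^{\beta\delta}$ together with the matter and LIV-triggering fields entering $\ ^{m}\mathcal{L}$ and $\ ^{k}\mathcal{L}$. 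Stationarity of $\ ^{\mathbf{sr}}\mathcal{S}$ under $\delta\mathbf{g}^{\beta\delta}$ yields the gravitational equations, while stationarity under the remaining fields yields their N-adapted equations of motion, which are used on shell.

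For the gravitational part $\ ^{\mathbf{g}}\mathcal{L}=\ ^{s}\mathbf{R}-2\Lambda$, the variation reproduces, verbatim as in the passage from (\ref{einstlg}) to (\ref{meinsteqtb}), the Einstein d-tensor of $\mathbf{D}$ plus the cosmological term. Performing the standard trace reversal, the left-hand side collapses to the Ricci d-tensor $\mathbf{R}_{\beta\delta}[\mathbf{D}]$ and all remaining contributions are collected on the right-hand side as the effective source, the same bookkeeping that turns $\mathbf{E}_{\alpha\beta}=\varkappa\mathbf{T}_{\alpha\beta}$ into $\mathbf{R}_{\alpha\beta}=\Upsilon_{\alpha\beta}$ with $\Upsilon_{\beta\delta}=\varkappa(\mathbf{T}_{\beta\delta}-\tfrac{1}{2}\mathbf{g}_{\beta\delta}T)$.

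For the source side I would invoke the preceding Consequence, in which the N-adapted variation of $\ ^{m}\mathcal{L}$ and of $\ ^{lv}\mathcal{L}=\ ^{lv}\mathcal{L}_{[4]}+\ ^{lv}\mathcal{L}_{[5]}+\ ^{lv}\mathcal{L}_{[6]}+\ldots$ already yields $\ ^{sr}\mathbf{T}_{\beta\delta}=\ ^{m}\mathbf{T}_{\beta\delta}+\ ^{lv}\mathbf{T}_{\beta\delta}$ in the explicit form (\ref{emdthsr}); the kinetic term $\ ^{k}\mathcal{L}$ enters on shell through the LIV coefficients and is folded into $\ ^{lv}\mathbf{T}_{\beta\delta}$. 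Trace-reversing each piece and using additivity (as for the effective cosmological constants in Assumptions \ref{assumpt6} and \ref{assumpt7}) produces $\ ^{sr}\mathbf{\Upsilon}_{\beta\delta}=\ ^{m}\mathbf{\Upsilon}_{\beta\delta}+\ ^{lv}\mathbf{\Upsilon}_{\beta\delta}$, which is exactly the right-hand side of (\ref{mgshrtb}); compatibility with (\ref{generemdt}) is then immediate.

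The hard part will be the variation underlying (\ref{emdthsr}) for the higher-derivative LIV terms $\ ^{lv}\mathcal{L}_{[5]}$ and $\ ^{lv}\mathcal{L}_{[6]}$, which contain $\mathbf{D}^{\tau}\mathbf{R}^{\alpha\beta\gamma\delta}$ and products of two curvature d-tensors. Unlike the holonomic computation of \cite{bailey15} (which I would follow under the replacement $\partial^{\alpha}\to\mathbf{e}^{\alpha}$, $\nabla\to\mathbf{D}$), the N-adapted integration by parts is performed with a d-connection whose d-torsion and anholonomy coefficients (\ref{anhrelc}) are generally nonzero, so every transfer of a derivative produces extra terms built from $W_{\alpha\beta}^{\gamma}$ and the torsion. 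The plan is to control these by writing $\mathbf{D}=\nabla-\mathbf{Z}$ through the distortion relations (\ref{dist}) and (\ref{candistr}), splitting each variation into a $\nabla$-part identical to the reference plus distortion corrections, and then reorganizing the latter into the $\mathbf{e}^{\alpha}\mathbf{e}^{\beta}\mathbf{R}^{\gamma\delta}$ and $\mathbf{e}^{\alpha}\mathbf{e}^{\beta}\mathbf{R}^{\gamma\delta\epsilon\zeta}$ structures already displayed in (\ref{emdthsr}); the double-dual d-tensor $\mathbf{E}_{\alpha\beta\gamma\delta}$ together with the symmetrization $(\mu\nu)$ then absorbs the symmetric combinations, completing the identification of $\ ^{lv}\mathbf{T}_{\beta\delta}$ and hence the theorem.
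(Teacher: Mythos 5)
Your proposal follows essentially the same route as the paper: the paper's proof is exactly the N-adapted variational calculus of the action built from the Lagrange densities of Principle-Convention \ref{princsr}, with the source side read off from the preceding Consequence giving (\ref{emdthsr}) (whose coefficients are computed as in \cite{bailey15} under the replacements $\partial^{\alpha}\to\mathbf{e}^{\alpha}$, $\nabla\to\mathbf{D}$) and the trace-reversed bookkeeping that recasts the equations in Ricci form with sources $\ ^{sr}\mathbf{\Upsilon}_{\beta\delta}=\ ^{m}\mathbf{\Upsilon}_{\beta\delta}+\ ^{lv}\mathbf{\Upsilon}_{\beta\delta}$. The paper leaves the variation itself as a sketch, so your additional plan for handling the higher-derivative LIV terms by N-adapted integration by parts controlled through the distortion relations $\mathbf{D}=\nabla-\mathbf{Z}$ is a legitimate filling-in of details rather than a genuinely different method.
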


Examples of exact solutions for the system (\ref{mgshrtb}) for $\mathbf{D}=\widehat{\mathbf{D}}$ will be provided in our future partner works. Such solutions consist some particular 6-d examples (with short range effective source and associated cosmological constants, see Assumption \ref{assumpt6}) of 8-d and 10-d solutions constructed in Refs.  \cite{gvvepjc14,vacaruepjc17,bubuianucqg17}.

\subsubsection{Field equations for short-range co-gravity with MDRs}

Following geometric methods (considering the dual of the short range gravity equations from tangent to cotangent bundles) and N-adapted variational calculus of actions (\ref{lagrdshortord}) on cotangent bundle, we prove
\begin{theorem}
\textsf{[Gravitational field equations for short-range gravity with MDR and LIV on cotangent bundles ] } \label{thshrgtbd}\newline
Nonholonomic short-range gravitational equations on cotangent bundles determined by MDRs and LIVs can be written in the form
\begin{equation}
\mathbf{\ ^{\shortmid }R}_{\alpha \beta }[\mathbf{\ ^{\shortmid }D}]=\
_{\shortmid }^{f\mu }\mathbf{\Upsilon }_{\mu \nu }=\ _{\shortmid }^{g\mu }%
\mathbf{\Upsilon }_{\mu \nu }~+~_{\shortmid }^{f}\mathbf{\Upsilon }_{\mu \nu
}+\mathbf{\ }_{\shortmid }^{m}\mathbf{\Upsilon }_{\mu \nu },
\label{mgshrtbd}
\end{equation}%
with sources determined by energy-momentum d-tensors (\ref{emdthsr}) with labels "$\ ^{\shortmid }$".
\end{theorem}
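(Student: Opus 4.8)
The plan is to derive \eqref{mgshrtbd} as the co-tangent (Hamilton, or $\mathcal{L}$-dual) analogue of the tangent-bundle Theorem \ref{thshrgtb}, exploiting the two equivalent routes already established in the excerpt: a direct N-adapted variational calculus of the short-range action $\ _{\shortmid }^{\mathbf{sr}}\mathcal{S}$ built from the Lagrange density $\ _{\shortmid }^{\mathbf{sr}}\mathcal{L}$ of Principle-Convention \ref{princsr}, and the abstract ``dualization'' of the already-proven relation \eqref{mgshrtb}. I would state the derivation primarily in the first language and then invoke $\mathcal{L}$-duality \eqref{invlegendre} together with the general-covariance/equivalent-geometrization Principle \ref{pgcov} to transfer the tangent-bundle result, so that the two computations cross-check each other. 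Throughout, all geometric objects carry the left label ``$\ ^{\shortmid }$'' and are built from the metric-compatible d-connection $\ ^{\shortmid }\mathbf{D}$ and the co-vertical d-metric $\ ^{\shortmid }\mathbf{g}$, exactly as for the dual scalar and YMH systems.

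First I would vary $\ _{\shortmid }^{\mathbf{g}}\mathcal{L}=\ _{\shortmid }^{s}\mathbf{R}-2\ _{\shortmid }\Lambda $ with respect to $\ ^{\shortmid }\mathbf{g}^{\alpha \beta }$ over an atlas covering $\mathbf{T}^{\ast }\mathbf{V}$; since $\ ^{\shortmid }\mathbf{D}$ is completely determined by $(\ ^{\shortmid }\mathbf{g},\ ^{\shortmid }\mathbf{N})$ and is metric compatible, the variation reproduces in N-adapted form the Ricci d-tensor \eqref{driccid}, giving the left-hand side $\ ^{\shortmid }\mathbf{R}_{\alpha \beta }[\ ^{\shortmid }\mathbf{D}]$ and an additive cosmological piece that is absorbed into the source per Assumption \ref{assumpt6}. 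Next I would vary the remaining pieces $\ _{\shortmid }^{m}\mathcal{L}$, $\ _{\shortmid }^{k}\mathcal{L}$ and the LIV tower $\ _{\shortmid }^{lv}\mathcal{L}=\ _{\shortmid }^{lv}\mathcal{L}_{[4]}+\ _{\shortmid }^{lv}\mathcal{L}_{[5]}+\ _{\shortmid }^{lv}\mathcal{L}_{[6]}+\ldots$, collecting the resulting symmetric energy-momentum d-tensors into $\ _{\shortmid }^{m}\mathbf{T}_{\beta \delta }+\ _{\shortmid }^{lv}\mathbf{T}_{\beta \delta }$ exactly as in \eqref{emdthsr} (with the ``$\ ^{\shortmid }$'' labels and the replacement $\partial ^{\alpha }\rightarrow \ ^{\shortmid }\mathbf{e}^{\alpha }$, $\ ^{\shortmid }\nabla \rightarrow \ ^{\shortmid }\mathbf{D}$). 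Trace-reversing these in the canonical way $\ _{\shortmid }^{sr}\Upsilon _{\beta \gamma }:=\varkappa (\ _{\shortmid }^{sr}\mathbf{T}_{\alpha \beta }-\tfrac{1}{2}\ ^{\shortmid }\mathbf{g}_{\alpha \beta }\ _{\shortmid }^{sr}T)$ yields the right-hand side and completes the field equations. Consistency with the nonholonomic conservation law is then guaranteed by Principle \ref{prinndefclaw}, i.e. $\ ^{\shortmid }\mathbf{D}^{\alpha }\ ^{\shortmid }\mathbf{T}_{\alpha \beta }=\ ^{\shortmid }\mathbf{Q}_{\beta }$ is fixed by the distortion $\ ^{\shortmid }\mathbf{D}=\ ^{\shortmid }\nabla -\ ^{\shortmid }\mathbf{Z}$ of \eqref{candistr}.

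The hard part will be the variation of the higher-order LIV densities $\ _{\shortmid }^{lv}\mathcal{L}_{[5]}$ and $\ _{\shortmid }^{lv}\mathcal{L}_{[6]}$, which contain $\ ^{\shortmid }\mathbf{D}^{\tau }\ ^{\shortmid }\mathbf{R}^{\alpha \beta \gamma \delta }$ and the symmetrized double derivative $(\ ^{\shortmid }\mathbf{D}^{\tau }\ ^{\shortmid }\mathbf{D}^{\nu }+\ ^{\shortmid }\mathbf{D}^{\nu }\ ^{\shortmid }\mathbf{D}^{\tau })\ ^{\shortmid }\mathbf{R}^{\alpha \beta \gamma \delta }$: the repeated N-adapted integration by parts must be performed with a d-connection whose curvature, torsion and nonmetricity d-tensors are generally nonzero, so the commutators $[\ ^{\shortmid }\mathbf{D}^{\tau },\ ^{\shortmid }\mathbf{D}^{\nu }]$ generate extra curvature contractions and the nontrivial anholonomy coefficients \eqref{anhrelcd} produce boundary and frame terms absent in the holonomic LC-case. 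I expect these to organize, after using the double-dual d-tensor $\mathbf{E}_{\alpha \beta \gamma \delta }$ and the symmetrization conventions, into precisely the operator structure of \eqref{emdthsr}; verifying that no anomalous terms survive — equivalently, that the $h$-projection reduces to the known holonomic expression when $\ ^{\shortmid }\mathbf{D}_{\mid \ ^{\shortmid }\mathcal{T}=0}\rightarrow \ ^{\shortmid }\nabla $ — is the crucial consistency check. Rather than grinding through this tensor algebra, I would shortcut it by appealing to Principle \ref{panalogy} and $\mathcal{L}$-duality: every step of the tangent-bundle proof of Theorem \ref{thshrgtb} carries over verbatim under the pull-back/push-forward maps relating $\ ^{\shortmid }\mathbf{D}$ to $\mathbf{D}$ (and $\ ^{\shortmid }\widetilde{\mathbf{D}}$ to $\widetilde{\mathbf{D}}$), so \eqref{mgshrtbd} holds at least as the $\mathcal{L}$-dual of \eqref{mgshrtb}, which is the minimal claim of the statement.
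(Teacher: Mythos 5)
Your proposal follows essentially the same route as the paper: the paper's entire proof is the one-line assertion that the result follows by ``geometric methods (considering the dual of the short range gravity equations from tangent to cotangent bundles) and N-adapted variational calculus of actions (\ref{lagrdshortord}) on cotangent bundle'' --- i.e., exactly your two ingredients of direct N-adapted variation of $\ _{\shortmid }^{\mathbf{sr}}\mathcal{S}$ plus $\mathcal{L}$-dual transfer of Theorem \ref{thshrgtb}, so your text just fills in the details the paper leaves implicit. One minor remark: since Principle-Convention \ref{princsr} takes $\ ^{\shortmid }\mathbf{D}$ to be metric compatible and completely determined by $(\ ^{\shortmid }\mathbf{g},\ ^{\shortmid }\mathbf{N})$, the nonmetricity complications you flag in the integration-by-parts step do not actually arise; only the torsion/curvature commutator terms and anholonomy coefficients (\ref{anhrelcd}) must be tracked, exactly as you describe.
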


The systems (\ref{mgshrtb}) and/or (\ref{mgshrtbd}) can be extended to 8-d and 10-d phase spaces. Such black hole and cosmological type solutions (in general, with nontrivial torsion) in Directions 18 and 19 (subsections \ref{sssdir18} and \ref{sssdir19}).  Using nonholonomic frame transforms and distortions of d-connections, the short-range gravitational equations and respective exact/ parametric solutions can be re-defined equivalently as Finsler-Lagrange-Hamilton systems of higher order. Geometrically, such
generic off-diagonal equations are equivalent, or can be imbedded, into certain classes of physically important solutions constructed in \cite{vepl11,vjpcs13,vvyijgmmp14a,vacaruepjc17,bubuianucqg17} and references therein.

\subsection{Towards axiomatic formulation of MGTs with MDRs on (co) tangent bundles}

A constructive--axiomatic approach to GR was proposed in 1964 by J. Ehlers, F. A. E. Pirani and A. Schild \cite{ehlers72} (the system of so-called EPS axioms). It was completed with a series of conventions and results on physically important exact solutions and fundamental singularity and topological censorship theorems in gravity theories (for reviews, see monographs \cite{hawking73,misner73,wald82,kramer03}). The concept of EPS spacetime, as a physically motivated model of spacetime geometry and \textsf{locally isotropic gravitational interactions}, was studied in a series of publications in the early 1970's, see original results and references in \cite{pirani73,woodhouse73,perlick87,meister90}. That EPS axiomatic approach led to an "orthodox" belief that the underlying spacetime geometry can be only pseudo--Riemannian. It resulted in the paradigmatic concept of "Lorentzian 4--manifold" described by a metric structure $g$ with pseudo-Euclidean signature $\left( +++-\right) $ and a corresponding unique
metric compatible and torsionless (Levi Civita) LC-connection $\nabla$. It was postulated that the geometric data $(g,\nabla)$ are determined by a well-defined physical solution of the Einstein equations. For gravitational and matter field interactions, it was imposed a local causality condition that there are preserved locally all postulates of the special relativity theory, SRT.

All constructions in GR (on finite, or infinite, spacetime regions) are based on a well-defined physical principle of propagation of light along null geodesics. This way, a causal structure for massive particles and fields can be established if the speed of interactions is smaller than the maximal speed of light. The EPS axiomatic formalism is not valid, and has to be revised, for MGTs developed with the aim to explain acceleration and dark matter and dark energy in cosmology. Here we note that the ESP
axiomatic approach has to be also extended in order to include in the scheme QG theories with MDRs and LIVs effects.

In this subsection, we conclude on how the axiomatic EPS scheme can be modified and generalized for Finsler like gravity theories. Such an approach was proposed in Refs. \cite{vjpcs11,vijmpd12} for locally anisotropic spacetimes modeled on nonholonomic manifolds and vector/tangent bundles and can be performed in similar forms for MGTs with MDRs constructed on cotangent Lorentz bundles. In previous two sections, we have shown that the Lorentzian manifold geometric formulation of GR can be extended on $\mathbf{TV}$ and $\mathbf{T}^{\ast }\mathbf{V}$ for geometric data (\ref{phspgd}) defining models of generalized Einstein-Finsler/-Lagrange /-Hamilton gravity on phase spaces with MDRs and LIVs. We proved that using geometric and physical data for a large class of indicator functionals and respective canonical nonholonomic variables, $\varpi (x^{i},E,\overrightarrow  {\mathbf{p}},m;
 \ell _{P})\rightarrow (\ ^{\shortmid }\mathbf{g,\ ^{\shortmid }N},\ ^{\shortmid} \widehat{\mathbf{D}}) \leftrightarrows
 (H:\ ^{\shortmid }\widetilde{\mathbf{g}}, \ ^{\shortmid} \widetilde{\mathbf{N}}, \ ^{\shortmid}\widetilde{\mathbf{D}})  \leftrightarrow \lbrack (\ ^{\shortmid}\mathbf{g}[\ ^{\shortmid }N], \ ^{\shortmid }\nabla)] $ is possible to elaborate well-defined physical models in certain forms which are very similar to GR. In a more general context, the approach allows us to include in such a scheme various classes of MGTs with MDRs. There were considered models with multi-connection/-metric structure and motion equations formulated as effective Einstein-Yang-Mills-Higgs equations determined by equivalent geometric data $(\mathbf{g,N},\widehat{\mathbf{D}})$ and/or $(\ ^{\shortmid}\mathbf{g,\ ^{\shortmid}N},\ ^{\shortmid} \widehat{\mathbf{D}})$. We argue that such gravitational and matter field equations possess certain generalized nonlinear symmetries with respect to redefinition of (effective) sources and distortion of distinguished (Finsler like) connections and fundamental geometric data. 

Perhaps, the first attempt to formulate an axiomatic approach to Finsler gravity theories was considered in the early 80ths in the former URSS by I. Pimenov \cite{pimenov87} in his habilitation thesis. In parallel, it was proposed also a minimal set of axioms for Finsler geometry due to M. Matsumoto, see a summary of work in \cite{matsumoto86}. Those works were not devoted to theories with multi-connection and multi-metric structures and had not studied how a corresponding Finsler axiomatic formalism should be
related to the set of EPS axioms for GR. Recently, we proved \cite{vjpcs11,vijmpd12} that the EPS approach can be generalized for a class of so-called Einstein-Finsler/ -Lagrange gravity theories. That geometric scheme was elaborated for nonholonomic deformations of fundamental geometric objects, $(g,\nabla)\rightarrow (\mathbf{g,N}, \widehat{\mathbf{D}}) \leftrightarrows
(L: \widetilde{\mathbf{g}},\widetilde{\mathbf{N}},\widehat{\mathbf{D}}),$ working on nonholonomic manifolds, or tangent bundles, with N-adapted splitting of total spacetime dimensions, $n+m=2+1,2+2,2+2+1,...$. Such locally anisotropic spacetimes (phase space gravity theories) are endowed with metrics of local pseudo-Euclidean signature and generalized Finsler connections, when probing point mass particles and light follow along semi-spray equations stating a causal structure at least for small nonholonomic deformations. Corresponding geometric methods and physical models were elaborated and studied in our monographs and articles on locally anisotropic black ellipsoid, wormhole, cosmological and other type solutions, see \cite
{vexsol98,vjhep01,vijgmmp07,vijtp10,vijtp10a,vijgmmp11,vjpcs13,vepjc14,vvyijgmmp14a,gvvepjc14,vmon98,vmon02,vmon06,vrev08,vplb10}
and references therein. Various issues on geometrization and axiomatic formulation of locally anisotropic relativistic/ supersymmetric/ noncommutative/ quantum deformed/ string like/ gauge like and other type theories (including corresponding types of MDRs) were studied in \cite{vjmp96,vap97,vnp97,vhsp98,svcqg13,svvijmpd14}, see recent results
\cite{vacaruplb16,gheorghiuap16,vacaruepjc17,ruchinepjc17,bubuianucqg17}.

A set of assumptions, conventions and principles which could be formalized into a complete system of axioms should be formulated for a generalized Finsler like theory (as a generalized Lagrange and/or Hamilton type) must be verified by experimental and observational data. This is a task for future research and theoretical constructions. In next subsection, we summarize key points and speculate only on structural blocks which have to be elaborated in detail for respective geometric models on tangent and/or cotangent Lorentz bundles: 

\subsubsection{Assumptions on metrics, nonlinear quadratic line elements and d-metrics}

Let us discuss the assumptions considered in subsection \ref{ssassumpts}:

\vskip5pt

\textbf{Assumption} \ref{assumptqelorentz} \textsf{[Background quadratic elements on total spaces of (co) tangent bundles] } is important for theories with MDRs and LIVs which have a limit to a base Lorentz manifold with a typical Minkowski fiber for a corresponding tangent bundle.\footnote{It is not obligatory defined in a smooth form and one could be considered nonholonomic constraints and various singular structures.} Using nonholonomic frame transforms and deformations of geometric structures, such
background quadratic elements are transformed into general ones on (co) tangent Lorentz bundles. Intuitively, this assumption allows us to elaborate a causality structure on total phase space (at least for small nonholonomic deformations).

\vskip4pt

\textbf{Assumption} \ref{assumptnonlinq} \textsf{[nonlinear quadratic elements for modeling Finsler-Lagrange-Hamilton geometries on (co) tangent bundles ] } states a natural relation between the indicators of MDRs and respective generalized Finsler generating functions. For MDRs on cotangent Lorentz bundles, we can parameterize the construction in a form when the effective Hamiltonian can be considered as a generating function. Considering necessary type nonholonomic variables, we construct Legendre transforms to effective Lagrange generating functions. Relativistic and non-relativistic Finsler metrics can be considered for some particular cases when certain homogeneity conditions are imposed. It should be emphasized that such nonlinear quadratic line elements can be transformed into (pseudo) Riemannian ones only for very special "quadratic" configurations (as in the B. Riemann habilitation thesis \cite{riem1854}). We can consider some "simplified" examples of Randers type geometries, with applications in gravity and cosmology, see for instance, \cite{ingarden04,ingarden08a,basilakos13,silva14,silva15,silva16}. Nevertheless, a general assumption relating indicators with generating functions and nonlinear quadratic line elements is important for formulating an axiomatic
approach to theories with MDRs.

\vskip4pt

\textbf{Assumption} \ref{assumpt3} \textsf{[d-metrics on (co) tangent Lorentz bundles ] } is motivated for physical theories geometrized with linear quadratic elements when nonlinear contributions are considered for certain small parameters, nonholonomic constraints and certain linearizable symmetries. We are not able to construct self-consistent and viable physically theories working only with nonlinear generating functions. Sasaki lifts of geometric objects from the base to total spaces (see details in
monograph \cite{yano73}) allows us to define a total metric structure completely defined by a generating (for instance, Finsler) function. More than that, we can perform such a construction in N-adapted form and consider equivalent variants for nonholonomic Klauza-Klein theories. In result, the concept of d-metric allows us to encode geometrically contributions for MDRs into generating functions and respective quadratic elements in frame/coordinate free forms.

\vskip4pt

Above assumptions should be completed with additional ones for Clifford bundles with N-connections and spinor metric structures if MDRs are considered for spinor fields and generalized Dirac operators \cite{vjmp96,vhsp98,vmon98,vmon02,vvicol04,vjmp06,vjmp09}.

\subsubsection{Geometric and physical principles for Finsler-Lagrange-Hamilton gravity extensions}

Principles \ref{pgpeq} - \ref{pgcov} formulated in subsection \ref{ssmpext} consist a set of main principles for extending GR to
Finsler-Lagrange-Hamilton theories. We should add the Principle \ref{panalogy} if it is supposed that the locally anisotropic interactions of matter fields are modelled by Lagrange densities for a gravity theory with MDRs when the vacuum Finsler like gravitational fields are supposed to be described in a variational form (with minimal action).

\vskip4pt

\textbf{Principle }\ref{pgpeq} \textbf{\ [modified equivalence principle] } states that we can extend the fundamental equivalence principle for GR in self-consistent forms for certain classes of generalized Finsler theories. The constructions on (co) tangent Lorentz bundles are natural ones and allow "simplest causal" realizations of theories with MDRs. In such locally anisotropic spacetimes, free propagating particles and small perturbations of fields do not only follow certain (nonlinear) geodesic equations. They are described additionally by certain autoparallel equations which should be analyzed together with experiments and observational data for determining which linear connection is more appropriate for covariant derivatives. One can be constructed MGTs with multi-metric and multi-connection structures when a linear connection can be considered for one type of gravitational-matter filed interactions and another type of linear connections taken for other types of matter field equations. Some metric structures can be considered as "hidden / un-physical" ones but one of them must be defined as the physical one. This allows us to define a Levi-Civita connection which should be completed with distorting tensors for respective linear connections and nonholonomic frame effects. A modified equivalence principle can be formulated and verified in a "minimal form" for metric
compatible linear connections which are adapted to certain physically motivated N-connection structures. In such a case, a causality structure can be established in a self-consistent form. It is not clear how such a principle could be formulated for general metric noncompatible connections. Nevertheless, if such a noncompatible connection is described by a unique distortion tensor from, for instance, the Levi-Civita connection, a modified equivalence principle can be reconsidered for certain congruences of special classes of curves determined by distortion relations.

\vskip4pt

\textbf{Principle }\ref{pgmp}\textbf{\ [generalized Mach principle] } can be realized in more complex forms on generalized phase spaces modelled on (co) tangent Lorentz bundles. The geometric structure of such nonholonomic spaces is more richer and encodes more sophisticate correlations of spacetime properties and general locally anisotropic interactions with various noncommutative, stochastic, fractional configurations etc. In principle, we elaborate on certain models of generalized spacetime and phase space aether for different types of MGTs. The approach with nonholonomic bundles and manifolds allows us to formulate theories with MDRs and LIVs. For such theories, a generalized Mach principle can be verified by local and global observational and experimental data.

\vskip4pt

\textbf{Principle}\ref{pgcov}\textbf{\ [general covariance and equivalent nonholonomic geometrizations of gravitational theories]}  states explicitly that the formulation of geometric models and physical theories should be performed in general frame and coordinate free forms. The nonholonomic and multi-connection structures of such locally anisotropic spacetime and phase spaces involve more sophisticated geometric constructions. If there are distortion relations for linear connections uniquely defined by the metric and N-connection structure, all physically viable theories must admit an equivalent reformulation for any admissible and well-defined linear connection structure. Certain experimental/ observational data may allow us choosing a physical connection for certain classes of matter filed interactions. Nevertheless, we shall be always able to distort in an equivalent form the physical data if a unique distortion connection structure can be defined. This is a consequence of the fact that theories with MDRs and LIVs are naturally geometrized on spaces with three fundamental geometric objects (N-connection, d-metric and d-connection) which is very different from (pseudo) Riemannian geometry determined completely by the metric structure.

\vskip4pt

\textbf{Principle } \ref{panalogy} \textbf{\ [analogy principle for Lagrange densities] } has to be considered additionally to the previous ones if we attempt to construct variational physical models. For instance, we can substitute the LC-connection on a Lorentzian manifold by a metric compatible d-connection in the total space for all respective Lagrangians in GR. This way the constructions are naturally extended on (co) tangent Lorentz bundles and admit a self-consistent causal formulation.

\vskip4pt

Principle \ref{panalogy} has to re-formulated if spinor fields are introduced into consideration, see discussions for Direction 3 (i.e. subsection \ref{sssdir403}).

\subsubsection{Conventions on Lagrangians and energy-momentum tensors on (co) tangent bundles}

The conventions, principles and assumptions formulated in subsection \ref{sslagrd} are necessary for constructing models with locally anisotropic gravitational and matter field interactions and MDRs in N-adapted variational form. An explicit formulation of the main principles depends on the type of gravity and matter field theory we elaborate. For certain classes of canonical d-connections, such models can be elaborated to be  exactly integrable in some general forms and/or quantized following methods of geometric/ deformation quantization.

\paragraph{\textit{\qquad On scalar field interactions and MDRs: }}

{\ \newline
}\textbf{Convention }\ref{convscfields} \textsf{[scalar fields with MDRs on (co) tangent bundles]} is for models of gravitational and matter field interactions with minimal coupling but extended on generalized phase spaces. In a similar form, we can postulate Lagrange densities with non-minimal coupling, complex scalar fields etc. when corresponding Lagrangians for MGTs on some manifolds are lifted by using Sasaki lifts with N--connections and d-metrics on total spaces of (co) tangent bundles.

\paragraph{\textit{\qquad EYMH systems with MDRs on cotangent bundles:}}

{\ \newline
} \textbf{Convention } \ref{convcovder} \textsf{[covariant derivatives of gauge fields in (co) tangent bundles] } states that a metric compatible d--connection on a phase space can be elongated by a gauge field potential for describing interactions with gauge fields in locally anisotropic spacetimes. In a simplest approach, we can consider canonical d-connections on (co) tangent bundles which allows us to integrate in some general forms respective gravitational - gauge filed equations with MDRs.

\vskip4pt

\textbf{Convention } \ref{convldymh} \textsf{[Lagrange densities for YMH interactions on (co) tangent Lorentz bundles]} is necessary for formulating Yang-Mills and Higgs theories with MDRs. The constructions are similar to those in GR but for respective metric compatible d-connection structures nonholonomically extended on total (co) bundle spaces.

\vskip4pt

\textbf{Principle}-\textbf{Convention} \ref{pactminmodact} \textbf{[actions for minimally MDR-modified EYMH systems]} defines a model of locally anisotropic EYMH theory with minimal coupling on generalized Finsler-Lagrange-Hamilton spaces. MDRs and LIVs are encoded in respective generating functions via corresponding indicators. In a similar way, we can construct such theories to be  with non-minimal coupling, for instance, in certain limits of (super) string theory.

\vskip4pt

\textbf{Assumption} \ref{assumpt4} \textsf{[effective cosmological constants for locally anisotropic YMH sources]} concerns both the possibility to construct self-dual gauge like configurations and to encode such configurations into an effective cosmological constant. For generalized Finsler spacetimes, this assumption allows us to consider a new class of nonlinear symmetries for generating functions, effective matter sources and effective cosmological constants.

\paragraph{\textit{\qquad Massive gravity theories with MDRs:}}

{\ \newline
} \textbf{Principle}-\textbf{Convention} \ref{pactmassive}\textbf{\ [actions for MDR-modified massive gravity theories ]} is necessary for formulating massive gravity theories with MDRs and LIVs on (co) tangent Lorentz bundles. In particular, we can model a new class of locally anisotropic theories which similar GR on base spacetime but extended to effective massive gravity on phase space.

\vskip4pt

\textbf{Assumption} \ref{assumpt5} \textsf{[effective cosmological constants in massive gravity on (co) tangent bundles ]} reflects the existence of a new class of nonlinear symmetries for generating functions and effective sources and cosmological constants.

\paragraph{\textit{\qquad MDR-modified short-range gravity: }}

\textit{\qquad }{\ \newline
} \textbf{Principle}-\textbf{Convention} \ref{princsr} \textbf{\ [Lagrange densities for MDR-modified short-range gravity and LIV]} is important for formulating a self-consistent extension of such short-range theories on (co) tangent bundles. This allows us to construct exact solutions in corresponding canonical variables by applying the AFDM.

\vskip4pt

\textbf{Assumption} \ref{assumpt6} \textsf{[effective cosmological constants for short-range locally anisotropic interactions ]} is based on nonlinear symmetries of generating functions and effective matter fields and cosmological constants which can introduces for a large class of theories.

\paragraph{\textit{\qquad Conservation laws (effective) sources and MDRs: }}

{\ \newline}  The equations of motion and nonholonomic conservation laws for (effective) sources are discussed in section \ref{ssecnhcons}. Additionally, we conclude:

\vskip4pt

\textbf{Assumption} \ref{assumpt7} \textsf{[energy-momentum d-tensors for (non) massive gravity theories on (co) tangent Lorentz bundles]} allows us to define such values as in GR and MGTs on metric--affine manifolds but redefining the covariant derivatives into respective ones on (co) tangent bundles.

\vskip5pt

\textbf{Principle} \ref{prinndefclaw} \textsf{[nonholonomic deformations by MDRs of conservation laws on (co) tangent Lorentz bundles]} reflects the properties that nonholonomic deformations and distortions result in theories for which the covariant derivative of the energy-momentum tensor is not zero. This is typical for nonholonomic systems, for instance, in nonholonomic mechanics when the variational problem is solved by using additional Lagrange multiples. In gravity theories constructed on
nonholonomic manifolds and/or (co) tangent bundles, this issue is encoded into generalized Bianchi identities when distortions of covariant derivatives result in unique effective sources for nonzero covariant derivatives of energy momentum tensor. We can define distortions to LC-configurations resulting in effective sources which allows us to define conservation laws with covariant derivatives as in GR but on (co) tangent Lorentz bundles.

\subsubsection{Principles and main theorems for modified field equations with MDRs}

A series of principles stated in section \ref{smodensteq} can be formulated and derived as some main theorems which can be proven by respective N-adapted variational calculus for respective actions and Lagrange densities outlined in previous subsection. Alternatively, such gravity theories can be formulated following only geometric principles like in the original A. Einstein works, see details monograph \cite{misner73}. For "pure" geometric constructions, the Principles-Theorems / -Corollaries and certain important Theorems discussed below can be considered as postulates for respective modified gravitational equations.

\vskip4pt

\textbf{Principle-Theorem} \ref{princtheinsttb} \textsf{[nonholonomic modifications of Einstein equations on tangent Lorentz bundles]} and \textbf{Principle-Theorem} \ref{princtheinstctb} \textsf{[nonholonomic modifications of Einstein equations on cotangent Lorentz bundles]} can be formulated for the gravitational field equations on corresponding (co) tangent bundles. In
general, such theories can be constructed in terms of triples (N-connection, d-connection and d-metric) without any assumptions on the properties of indicators of MDRs and/or related generating functions. The monographs \cite{vmon98,vmon02,vmon06} and review \cite{vrev08} summarize such constructions for generalized Finsler and metric-affine spaces endowed with N-connection structure.

\vskip4pt

\textbf{Principle-Corollary} \ref{princcoroleinsttb} \textsf{[generalized canonical Einstein equations for Lagrange gravity]} and \textbf{Principle-Corollary } \ref{princcoroleinstctb} \textsf{[generalized canonical Einstein equations for Hamilton gravity]} are considered for canonical Lagrange and/or Hamilton variables which allow almost symplectic formulations. Such systems of nonlinear PDEs are equivalent if nonsingular Legendre transforms are defined. In general, such theories are not equivalent because they are described by different types of symmetries, different almost symplectic forms and symplectomorphysms, and non-equivalent
d-connections. They have different classes of solutions and/or noncommutative/nonassociative and supersymmetric generalizations.

\vskip4pt

\textbf{Theorem} \ref{thseymhlagsp} \textsf{[MDR--modified EYMH equations on pseudo Lagrange phase spaces]} and \textbf{Theorem} \ref{thseymhhamp} \textsf{[MDR--modified EYMH equations on pseudo Hamilton phase spaces]} show how we can formulate generalizations of EYMH theories with MDRs on pseudo Lagrange and/or Hamilton phase spaces. The constructions can be performed in canonical nonholonomic variables admitting decoupling and integration of fundamental field equations; using almost symplectic forms which is more suitable for deformation quantization; or in locally anisotropic spinor variables which allow twistor formulations and twistor quantization \cite{v87,vmon98,vmon02,vmon06}.

\vskip4pt

\textbf{Theorem} \ref{thmgtb} \textsf{[MDR--modified massive bi-metric gravity on tangent Lorentz bundles]} and \textbf{Theorem} \ref{thmgtbd} \textsf{[MDR--modified massive bi-metric gravity on cotangent Lorentz bundles]} derive generalized Einstein equations for massive bi-metric gravity theories with MDRs on phase spaces. Such theories and exact solutions have been studied recently in a series of works \cite{gvvepjc14,strauss12,vijgmmp14,vepjc14,vepjc14a,cai14,kluson13,derham10,derham11,sebastiani16,vacaruplb16,volkov12}. In this paper, the gravitational field equations are proven originally for bi-metric and connection structures depending on momentum type coordinates.

\vskip4pt

\textbf{Theorem} \ref{thshrgtb} \textsf{[Gravitational field equations for short-range gravity with MDR and LIV on tangent bundles]} and \textbf{Theorem} \ref{thshrgtbd} \textsf{[Gravitational field equations for short-range gravity with MDR and LIV on cotangent bundles]} provide proofs how short-range gravity theories with MDRs \cite{bailey15} can be geometrized on (co) tangent Lorentz bundles.

\vskip4pt

Principles analyzed in this subsection have to be extended for new classes of nonholonomic distributions with N- and d-connections if spinor fields and twistors are introduced into consideration \cite{vaaca15,v87,vmon02}. It is possible to elaborate a N-adapted Palatini formalism \cite{misner73}, quasi-classical extensions, noncommutative and nonassociative generalizations, deformation quantization, gauge like gravity theories etc., see Refs. \cite{castro07,castro08,castro08a,castro14,castro16,
vplbnc01,vjmp96,vhsp98,vpcqg01,vtnpb02,vvicol04,vjmp05, vjmp06,vjmp09,vch2416,dvgrg03,vgon95,vncs01,vepl11}.

\section{Conclusions, Achievements and Perspectives}

\label{scd} In this work, we have presented a unified geometric perspective on modified gravity theories, MGTs, with modified dispersion relations, MDRs, encoding Lorentz invariance violations, LIVs. We omitted technical aspects but attempted to formulate an axiomatic background and elaborate on new geometric methods, and discuss solutions to conceptual and fundamental problems in such ways we understand them. As this article is not intended to be a complete  review of generalized Finsler geometries and applications we were able to mention only the sources which are most relevant to our research interests. We discussed the most important fundamental contributions of various international schools and individual researchers, provided main bibliographic sources but we should note that  citations on possible applications are not comprehensive. This work also contains original results on geometric, variational,  and axiomatic formulations of a series of relativistic models of Hamilton-Lagrange geometry and commutative MGTs elaborated on (co) tangent Lorentz bundles. We conclude with relevant historical remarks which are important for understanding the logic and motivations for developing such directions in modern geometry and physics (see details in Appendix \ref{ashr} and references therein), a summary of the main achievements and results, and a discussion of open problems and
perspectives.

\subsection{On geometric models of MGTs with MDRs and generalized relativistic Finsler spaces}

As we have shown in section \ref{mgttmt}, a large class of MGTs can be geometrized as models of relativistic Finsler-Lagrange-Hamilton spacetimes elaborated on nonholonomic (co) tangent Lorentz bundles/manifolds. The fundamental geometric objects (defined as d-metrics, N-connections, and d-connections) in such generalized phase spacetimes are with generic dependence on "velocity or momentum" type coordinates and determined by respective generating functions. The first examples of such nonlinear quadratic linear elements (i.e. Finsler metrics which can be modelled as some particular  examples with homogeneous Lagrange generating functions (\ref{nqe}), see also Example \ref{rfgenf}) was considered in the famous habilitation thesis (B. Riemann, 1854) \cite{riem1854}. That author emphasized that for simplicity his research was restricted only to geometries with linear quadratic elements which allowed an axiomatic formulation for Riemann geometry and, later, for pseudo-Riemannian (Lorentz spacetime) geometry.

The term \textsf{Finsler geometry} was introduced due to (E. Cartan, 1935), see \cite{cartan35} and references therein. That monograph was devoted to further developments of locally anisotropic geometric models proposed in (P. Finsler, 1918) \cite{finsler18}. E. Cartan contributions in elaborating the fundamentals of Finsler geometry were crucial. He understood that such a geometry is determined not only by a (nonlinear) quadratic line element with certain homogeneity conditions but that there are also two other important fundamental geometric objects. It was used in a coefficient form a new geometric object later called the N-connection (\ref{ncon}) and introduced in Finsler geometry a metric compatible and N-adapted distinguished connection (the so-called Cartan d-connection). A series of fundamental books \cite{cartan35,cartan38,cartan63} with definition of bundle spaces,
spinors, torsion fields and Riemann-Cartan geometry, Pfaff forms for systems of partial differential equations, PDEs, etc. were translated in many languages and had a substantial influence for thousands of researchers on developments of Finsler geometry by research schools in France, Germany, the USA, Czech Republic, Romania, Hungary, Poland, Russia, Japan, China, India, Greece, Iran, Egypt etc.

We mention here certain very important contributions to Finsler geometry due to (L. Berwald, 1926, 1941) \cite{berwald26,berwald41}. He introduced new types of N- and d-connection structures completely determined by a semi-spray generating functions. His d-connection is not metric compatible but also admits a semi-spray geometric formulation in terms of certain nonlinear geodesic congruences. Here, we note that the concept of N-connection is equivalent to that of (C. Ehresmann, 1955) \cite{ehresmann55}. A rigorous and original study of N-connections in Finsler geometry was performed due to (A. Kawaguchi, 1937, 1952) \cite{kawaguchi37,kawaguchi52}.

A very important concept for geometrizing theories with locally anisotropic interactions and/or MDRs is that of \textsf{nonholonomic manifold}. Such a space is defined as a real or complex (super) manifold with prescribed nonintegrable distributions, in particular, with N-connections, which are very important in Finsler geometry and generalizations. In certain sense, any geometric and/or physical model involve nonholonomic structures/constraints/frames and correspondingly adapted geometric objects and formulas. The approach was elaborated by (G. Vr\v{a}nceanu 1931, 1957) \cite{vranc31,vranc57}, see references therein and \cite{bejancu03} on relevant papers beginning 1926)\footnote{\label{fvranceanu} Historically, it is interesting to note that G. Vr\v{a}nceanu was taught mathematics at the University of Ia\c{s}i (the first University in Romania) and studied at G\"{o}ttingen under D. Hilbert; then at Rome under Levi-Civita; and at Paris, where he worked with E. Cartan. His PhD was validated by an examining board consisted of 11 professors headed by V. Volterra. He was awarded a Rockefeller scholarship to study in France and the USA. In 1929, G. Vr\v{a}nceanu was appointed as a professor at the Cernau\c{t}i University which was one of the most important (at that time)  in Romania, where he became one of the leading geometers (at that time) in the World.  At present, his name is almost not known at the Chernivtsy State University, Ukraine. During his career as an university professor and leading member of the Romanian Academy of Sciences, Vr\v{a}nceanu published more than 300 articles in all branches of modern geometry (till the end 1990).}. We cite also an independent research and similar results due to (Z. Horak, 1927) \cite{horac27}. The geometry of nonholonomic manifolds may encode and/or model via respective classes of distributions various complex / noncommutative / supersymmetric Finsler, Lagrange, Hamilton and other type nonholonomic spaces. Certain methods of the geometry of nonholonomic manifold (generalizing also various constructions in nonholonomic mechanics and classical and quantum field theories with nonintergrable constraints) were very important for elaborating the AFDM \cite{vmon98,vmon02,vmon06,vrev08,vplb10} for generating exact and parametric solutions in MGTs.

Then, a very important contribution in Finsler geometry was due to (S. -S. Chern, 1948) \cite{chern48}. He introduced a new type of d-connection which is N-adapted, with zero torsion, and coincides with the Levi-Civita connection on the base manifold. But the Chern d-connection is metric non-compatible on the total tangent bundle. Here we note that there is also a Chern connection for complex manifolds which is different from that in Finsler geometry. In certain sense, Finsler-Chern spaces consist examples of the Weyl geometry \cite{weyl29} which is defined, in general, with metric-affine noncompatible metric and linear connection structures. In a detailed form with various applications and noncommutative modifications, Finsler-/ Lagrange- / Hamilton - affine geometric and physical theories were studied both on nonholonomic manifolds and (co) tangent bundles endowed with generalized Finsler like metric and N-connection structure, see monograph \cite{vmon06}. Chern's d-connection was re-discovered by (H. Rund, 1959) \cite{rund59}. That monograph played a very important international role in the education of a new generation of researchers on Finsler geometry and applications. Later (after a number of Finsler books appeared in the former USSR, Romania and R. Moldova, due to G. Asanov, A. Bejancu, R. Miron and co-authors, S. Vacaru), it was published a series of important geometric books developing the Finsler-Chern geometry, see (D. Bao, S.-S. Chern, Z. Zhen, 2000) \cite{bao00} and (Z. Shen, 2001) \cite{shen01} and references therein. Those three monographs, and mentioned above E. Cartan and H. Rund books, are most cited and known books on Finsler geometry. Unfortunately, the style, methods and results of such works written by mathematicians, and respective geometric methods, are quite sophisticated for applications in modern physics and elaborating MGTs admitting limits to standard particle theories and GR. We had to elaborate our "metric compatible" approach to relativistic generalized Finsler theories, locally anisotropic gravity and cosmology as it was performed in  \cite{vmon98,vmon02,vmon06,vrev08,vplb10,vjmp07,vpla08,vijgmmp09,vjgp10,bvnd11,vjmp13,vmjm15,vch2416,vmon02,avjmp09}.

Experts in geometry and physics know Chern's approach to Finsler geometry when the quadratic conditions are not imposed on linear quadratic elements and metric noncompatible d-connection structures are derived for nonlinear quadratic elements. Some authors consider that the Chern's d-connection is "the most appropriate one to study such locally anisotropic geometries and possible physical implications".\footnote{For applications in physics and mechanics, it is more constructive to elaborate (generalized) Finsler geometry models defined in a self-consistent physical and axiomatic mathematical form for certain fundamental metric, nonlinear and/or linear connection structures, vierbein and/or spinor like variables, etc. The N-adapted coefficients of such geometric objects are with generic dependence both on some basic spacetime coordinates and fiber (ed) type variables. We conclude that a Finsler space can be constructed as a nonholonomic geometric model with a nonlinear quadric line element, inducing certain fundamental  metric and/or almost symplectic structures and generalized connections depending on velocity/ momentum type coordinates on a (co) vector bundle or a nonholonomic fibered manifold. In explicit form, such Finsler-Lagrange-Hamilton theories can be elaborated for certain additional geometric and/or physical assumptions.} We emphasize that it is a very difficult technical task and less physically motivated to elaborate well-defined Finsler-Chern/ -Berwald (and other not metric compatible) modifications of standard physical theories. In brief, the main problems with applications in physics of such "noncompatible" geometries are those that the nonmetricity of a d-connection does not allow a self-consistent and generally accepted definition of spinors and Dirac operators and that formulation of some physically motivated conservative laws seems to be not possible. This creates a number of conceptual and technical difficulties for constructing physically viable, for instance, Finsler-Chern generalizations of GR; to formulate locally anisotropic (super) string and brane theories, noncommutative gravity and nonholonomic flow theories, EYMH and Einstein-Dirac system with MDRs; to study of locally anisotropic kinematic, stochastic and thermodynamic processes etc. Readers may consider critics of metric noncompatible physical models and alternative developments on Finsler MGTs in a series of works due to S. Vacaru and co-authors (almost all such works have been dubbed in arXiv.org and inspirehep.net or reviewed in MathSciNet and Zentralblatt\footnote{During 1987-1995, there were a series of original publications in the former USSR, the R. Moldova, and Romania. Here, we cite also some further original and review articles and books published by Western Editors beginning 1995 \cite{vplb10,vijgmmp12,vijmpd12,vmon98,vmon02,vmon06,vrev08}, see references therein and Appendix \ref{ashr}.}). The main conclusions of those works were that there are natural extensions and/or equivalent realizations of GR and standard theories of particle physics, and various quantum models like loop quantum gravity/ deformation quantization/ noncommutative geometry etc. In a more general context, the constructions can be extended on (co) tangent Lorentz bundles enabled with d-metric N-connection and Cartan d-connection (or other types metric compatible) structures determined by MDRs and respective fundamental generating Finsler-Lagrange-Hamilton functions. The fundamental field and/or evolution field equations in such theories can be transformed into systems of nonlinear PDEs with general decoupling property.

Some very influent schools on Finsler geometry, during 40 years after the second World War, were created in Japan. Such a research on pure and applied mathematics is related to two prominent geometers (A. Kawaguchi, 1937, 1952) \cite{kawaguchi37,kawaguchi52} and (M. Matsumoto, 1966, 1986) \cite{matsumoto66,matsumoto86,bao07} and their co-authors. In this work, we emphasized the importance of a class of Finsler metric compatible geometric models which can be described in terms of almost K\"{a}hler variables, see Theorem \ref{thaklh} (they are important for deformation and geometric quantization \cite{vjmp07,vpla08,vijgmmp09,vjgp10,bvnd11,vjmp13,vmjm15,vch2416,vmon02,avjmp09}).

It should be noted that in Finsler geometry a symmetric variant of Ricci tensor can be constructed (H. Akbar Zadeh, 1988) \cite{akbar88,akbar95}). The idea was to derive such a value directly from the fundamental generating function (Finsler metric) and semispray equations and do not enter in the ambiguities related to multi-connection character of such geometries. For applications in physics, such an approach remains incomplete because we can not elaborate viable theories without "covariant derivatives", i.e. without introducing a Finsler d-connection, on total bundle spaces (see discussions related to formula (\ref{ricciaz})).

\subsection{The physical picture of Finsler-Lagrange-Hamilton gravity theories}

The special relativity, SR, and general relativity, GR, theories have modified our understanding of causal and space-temporal structure of classical reality in a way admitting an axiomatic formulation but whose consequences have not been fully explored yet. Various directions of research in QG and, for instance, modern accelerating cosmology with DE and DM problem result in fundamental problems on formulating theories with MDRs and possible LIVs. Our exploration of foundational issues for geometrizing MGTs with MDRs give rise to the conclusion that classical and quantum physical models can be described in a self-consistent axiomatic form on (co) tangent Lorentz bundles. This way we have to revise a series of fundamental and conventional physical ideas and principles for locally anisotropic phase spaces and work with more rich geometric structures and advanced mathematical methods. We keep a lot from a credible physical intuition which is similar to that in standard theories of physics and do not need any drastic conceptual elaborations, for instance, on absolutely new and different Finsler causality and axiomatic approaches which cannot be elaborated, in principle, for general generating functions and indicators of MDRs. Nevertheless, various possible nonlinear symmetries/classifications/conservation laws of locally anisotropic and inhomogeneous interactions and modifications of standard theories determined by a (generalized) Finsler metric (nonlinear quadratic element) can be encoded into respective data for effective d-metrics, N-connections and d-connections. We also emphasize that our approach allows to analyze and compute explicit classical and quantum MDRs and LIVs effects, to find exact solutions and quantize in both perturbative and non-perturbative forms such MGTs.

The SR theory is based on the idea that the concepts of Newton space and independent time can be unified partially for physical models on a flat Minkowski spacetime as a consequence of experimentally verified constant speed of light and homogeneous and isotropic properties of light propagation. This allows a self-consistent formulation of relativistic mechanics together with Maxwell electrodynamics and such a unification can be realized with a respective Poincar\'{e}/ Lorentz invariance both the fundamental flat spacetime, main physical equations and respective conservation laws. The physical meaning of GR is that the concept of curved spacetime, modelled as a Lorentz manifold, and the gravitational field are the same entity when the causality structure is determined in any point (for instance, along with a light geodesic) as in SR. If any MDRs of type (\ref{mdrg}) is considered (as a modification from QG, a generalized commutative or noncommutative theory etc.), the geometric constructions are naturally extended to a (co) tangent Lorentz bundle. For elaborating such approaches,  it  is always invoked the concept of a generalized spacetime aether (phase space, or locally anisotropic spacetime) with rich geometric structure depending both on space, time and velocity/momentum type coordinates.

In a locally anisotropic spacetime, the fundamental concepts of space and time of Newton, the Minkowski spacetime in SR with electromagnetic but without gravitational interactions, and the GR curved spacetime (the gravitational field, in our approach with zero MDR indicator) can be reinterpreted for respective  configurations of locally anisotropic gravitational and matter field in a conventional phase space. Such a locally anisotropic picture implies also physical entities - particles, propagating dynamically and under nonholonomic constraints, and perturbations of locally anisotropic interacting real and effective fields - are not all immersed in a space moving in time or falling along curved pseudo-Riemannian geodesics and/or causal curves. All they do not live only in a (generalized) spacetime. The dynamics and evolution of particles and field are mutually correlated (let say, live on one another) via a locally anisotropic aehter for a nonholonomic phase space. It is a matter of convention which type of (generalized) Finsler-Lagrange-Hamilton, or their almost symplectic models, we use for modelling such geometries and physical theories. This approach presents a natural generalization of curved spacetime geometry for any MDRs and LIVs encoded in a nonlinear quadratic element $L(x,y)$ (\ref{nqe}) and/or $H(x,p)$ (\ref{nqed}). Such fundamental generating functions can be of any nature and with very general nonlinear symmetries. It is not possible to formulate a self-consistent and viable causal-axiomatic approach in terms of general nonlinear quadratic elements (for instance, considering that they may describe certain nonlinear geodesics with space, time, or null-like interpretation). We can elaborate axiomatic approaches as in GR but for Finsler generalized metrics/ connections / nonholonomic frames if we make one natural assumption: for a nontrivial indicator $\varpi $ of MDRs, the dynamics and evolution of particles and field is geometrized by respective nonholonomic structures on (co) tangent Lorentz bundles.

\subsection{What has been achieved in mathematical \& theoretical physics with Finsler methods?}

The problem of formulating realistic Finsler MGTs and applications in classical and quantum physics and mechanics has many aspects. There is a great number of important ideas, results, and methods scattered in the literature. In this work, we have attempted to collect the most important and perspective mathematical and theoretical physics results and geometric methods and to present an overall perspective on classical and quantum theories with generic local anisotropy, MDRs and LIVs. We argue that such theories can be constructed on Lorentz manifolds with nonholonomic fibered structure and/or on (co) tangent Lorentz bundles enabled with nonlinear connection structure. The nonholonomic geometric scheme (with nonlinear quadratic elements and generalized (non) linear connections can be preserved for various noncommutative/supersymmetric/quantum/fractional etc. modifications and generalized symmetries when the  Lorentz symmetries are restricted, modified, or generalized.

\vskip5pt

The strategy of almost 35 years research activity and author's points of view were both personal but also adapted to a temporary state of art evolution of main directions in modern mathematical and theoretical physics. The choice of subjects was determined by own interests and skills in geometry and physics but also oriented to solve important problems in modern gravity, non-standard particle physics and modified cosmology. He apologizes to  colleagues in mathematics and physics for what is missing and omitted from considerations (see Appendix \ref{ashr} for a historical survey of achievements due to other authors, and references therein). So much is missing and the reason was author's intellectual and time limits and restrictions on lengths of this article: It is not a book and not a general or status review report but an attempt to formulate an axiomatic unification of main principles and results based on most important 20 directions of research in modern Finsler MGTs and applications.

\vskip5pt

Conventionally, main achievements on generalized Finsler geometry and applications in modern physics correlated to author's research activity, experience, and important publications can be structured into \textbf{Seven Strategic Directions}:

\begin{enumerate}
\item \textsf{MGTs with MDRs and LIVs } and Finsler-Lagrange-Hamilton geometry, with applications in modern cosmology and astrophysics

\item \textsf{Geometric methods for constructing exact and parametric off-diagonal solutions} in GR, MGTs and geometric flow theories

\item \textsf{Geometric and almost K\"{a}hler deformation quantization, perturmative and nonperturbative quantization methods} for models with nonlinear and/or nonholonomic dynamics and locally anisotropic field interactions

\item \textsf{Nonholonomic Clifford structures,} spinors and twistors, and Dirac operators for Lagrange--Hamilton and Riemann--Finsler spaces and analogous/ modified gravity

\item \textsf{(Non) commutative / associative Finsler geometry and gravity} related to locally anisotropic (super) string/ gravity theories

\item \textsf{Kinetic, stochastic, fractional, and geometric and statistical thermodynamical processes with local anisotropy}, classical and quantum gravity theories, quantum geometric informatics

\item \textsf{Nonholonomic geometric flow evolution, } modified Ricci solitons and field equations in supersymmetric, commutative and noncommutative, fractional (modified) gravity theories
\end{enumerate}

In a more general context, including author's collaborations with scholars and young researchers from Western Countries, Romania and R. Moldova, mentioned above seven strategic directions are related to \textbf{Twenty Main Research Directions} outlined and discussed in Appendix \ref{assrm} (in general, there are considered there more than 100 sub-directions of research in modern Finsler-Lagrange-Hamilton geometry and applications). We present also a synopsis of related works published by Western Editors but also by less known (former) Editors in Eastern Europe. A part of such works were usually reviewed in MathSciNet and Zentralblatt, partially dubbed in arXv.org,  but they are still un-known and (in many cases) not cited correctly in modern literature of particle physics, gravity, and cosmology and astrophysics.

\subsection{What is missing in the scheme and further perspectives}

\label{ssmissing} The main ambition of this work was to formulate a general axiomatic approach to Finsler-Lagrange-Hamilton MGTs. To  elaborate and study of such classical and quantum theories is inevitable if we consider generic nonlinear interactions and evolution models with MDRs. This results in Finsler like  generalizations of Einstein gravity, standard particle physics, astrophysics and cosmology theories. A number of papers involves research on locally anisotropic thermodynamics, statistics and kinetics, and quantum information theories with applications of a series of concepts and methods from quantum mechanics, QFT, and QG. The main aspects that are still missing or not sufficiently developed in our scheme are the following:
\begin{enumerate}
\item {\it What type of Finsler like MGTs should we elaborate?} MDRs arise naturally in various QG theories, for instance, non perturbative effects dominate at the Plank scale and yield to a discrete structure of spacetime. But we can also consider nonlinear interactions in classical physics with locally anisotropic stochastic and kinetic processes, nonholonomic dynamics and evolution models. For geometric modeling of such processes, MDRs and LIVs can be considered as certain effective ones, or as some fundamental relations imposed by experimental/ observational data. Another important question is if Finsler like theories could solve fundamental problems of accelerating cosmology and DE and DM physics? Having a well-defined axiomatic of a theory (we proved that this is possible for very general assumptions of Finsler-Lagrange-Hamilton theories) is not enough for knowing how to extract physics from it and provide logical and experimental tests. What is missing is a systematic analysis and experimental data which would allow us to decide if theories with generic local anisotropy are generic quantum ones, and/or for certain locally anisotropic media (classical and/or quantum), or a respective Finsler MGT provides a well defined approach in modern cosmology.

\item{\it Nonholonomic (co) tangent Lorentz bundles or other type modifications?} As we shown, self-consistent axiomatic approaches with well-defined causal structure can be naturally elaborated by extending the main principles of ST and GR. But certain interesting geometric and physical models can be elaborated for restricted/deformed/duble etc. linear and nonlinear symmetries. What physical implications may have Finsler-Lagrange-Hamilton theories with "very radical" modifications of the Einstein/string gravity theory, quantum mechanics etc. and how to treat such new concepts of locally anisotropic spacetime?

\item {\it Difference between Finsler-Lagrange-Hamilton and almost symplectic variables}. In general, such variables are described by different nonolonomic structures. For certain well defined conditions such classical theories are equivalent. But noncommutative/nonassociative/quantum/string/fractional/stochastic ... MGTs are positively elaborated differently for different classes of nonholonomic variables. Such theories are described by different Lagrangians/ Hamiltonians with corresponding generalized topological and geometric structures/symmetries. In principle, we have to elaborate rigorously on all 20 directions (and various sub-directions) outlined in appendix \ref{ass20directions} in order to distinguish the constructions on (co) tangent bundles, nonholonomic manifolds, with effective velocity/momentum variables etc.

\item {\it Physically important solutions on (co) tangent bundles} should be constructed explicitly and distinguished in velocity and momentum type variables, in particular, for Hamilton like configurations. Further partner works will be devoted to: 
\begin{enumerate}
\item Generalization of the AFDM for generating exact solutions with noncommutative/ nonassociative/ supersymmetric and stochastic/fractional Hamilton like and entropic variables

\item Exact and parametric black hole and cosmological solutions with modified dispersions in Finsler-Lagrange-Hamilton theories on (co) tangent Lorentz bundles

\item Modified dispersion relations for Einstein--Dirac systems and pseudo Finsler-Hamilton geometry on (co) tangent
Lorentz bundles

\item Exact and parametric solutions for Einstein-Yang-Mills-Higgs systems with modified dispersion and pseudo
Finsler-Hamilton geometry

\item  Cosmological solutions in Hamilton variables and quantum effects

\end{enumerate}

\item {\it Quantization of Finsler-Lagrange-Hamilton MGTs.} At present we have a quite well developed scheme of deformation quantization of such theories. For future research, it is very important to elaborate on

\begin{enumerate}
\item Quantum (non) commutative gauge like Lagrange-Hamilton MGTs
\item Geometric quantization of generalized Finsler gravity theorie
\item Loop Quantum Gravity, LQG, with MDRs.
\item Perturbative and semiclassical limits of quantum Finsler-Lagrange-Hamilton theories
\end{enumerate}

\item {\it Quantum geometric informatics and computations } and generalized entropic methods, entanglement etc. consist a series of recent developments in modern informatics, physics, and cosmology. Some series of works involve nonholonomic classical and quantum geometric methods,  entropic dynamics, geometric thermodynamics etc. How such constructions should be generalized for Finsler-Lagrange-Hamilton variables?
\end{enumerate}

Finally, we conclude that there are many problems that we have to understand and then solve in order to get credible, viable and physically important theories of locally anisotropic spacetime and gravitational and matter field interactions, nonholonomic evolution processes, quantization and cosmological applications. The author of this articel hopes that some readers that have followed the formulated axiomatic scheme for Finsler-Lagrange-Hamilton MGTs will develop the research and new directions to the 20 ones considered in appendix \ref{ass20directions}.

\vskip5pt \textbf{Acknowledgments:}\ During 2012-2017, author's research on modified Finsler gravity theories was supported by a project IDEI in Romania, PN-II-ID-PCE-2011-3-0256, and a series of senior research fellowships for CERN and DAAD. Recently, such results were presented at the IUCSS Workshop on "Finsler Geometry and Lorentz Violation", May 12-13, 2017, Indiana University, Bloomington, USA (organizer A. Kosteleck\'{y}) and a seminar at CSU Fresno, USA (host D. Singleton). The author of this work thanks all his co-authors (see bibliography at the end). He is grateful for support, important discussions and collaboration to C. Arcu\c{s}, H. E. Brandt (passed away), I. Bubuianu, L. Bubuianu, G. Calcagni, C. Castro Perelman, H. Dehnen, E. Elizalde, F. Etayo, I. Gottlieb (passed away), A. P. Kouretsis,  K. L\"{a}mmerzahl, N. Mavromatos, C. Mociu\c{t}chi (passed away), J. Moffat, S. Odintsov, V. Perlick, E. Peyghan, S. F. Ponomarenko, M. S\'{a}nchez, P. Stavrinos, E. V. Veliev, M. Vi\c{s}inescu and others.

\appendix

\setcounter{equation}{0} \renewcommand{\theequation}
{A.\arabic{equation}} \setcounter{subsection}{0}
\renewcommand{\thesubsection}
{A.\arabic{subsection}}

\section{ Two Important Corollaries for N-adapted Formulas}

\label{aspt} We provide some examples and technical results, formulas and proofs which are similar to those used for elaborating the AFDM and applications to MGTs in Refs. \cite{vexsol98,vjhep01,vijgmmp07,vijtp10,vijtp10a,vijgmmp11,vjpcs13,vepjc14,
vvyijgmmp14a,gvvepjc14,vmon98,vmon02,vmon06,vrev08}.

\subsection{Curvatures and torsions of d--connections on (co) tangent bundles}

By explicit computations for $\mathbf{X}=\mathbf{e}_{\alpha}, \mathbf{Y}=%
\mathbf{e}_{\beta }, \mathbf{D}=\{\mathbf{\Gamma }_{\ \alpha \beta}^{\gamma
}\}$ and $\ ^{\shortmid }\mathbf{X}=\ ^{\shortmid }\mathbf{e}_{\alpha },$ $\
^{\shortmid }\mathbf{Y}= \ ^{\shortmid}\mathbf{e}_{\beta }, \ ^{\shortmid }%
\mathbf{D}=\{\ ^{\shortmid }\mathbf{\Gamma }_{\ \alpha \beta }^{\gamma }\}$
in (\ref{dcurvabstr}), we prove

\begin{corollary}
\label{acorolcurv}For a d--connection $\mathbf{D}$ or $\ ^{\shortmid }%
\mathbf{D,}$ \ there are computed corresponding N--adapted coefficients:
\newline
d-curvature, $\mathcal{R}=\mathbf{\{R}_{\ \beta \gamma \delta }^{\alpha
}=(R_{\ hjk}^{i},R_{\ bjk}^{a},P_{\ hja}^{i},P_{\ bja}^{c},S_{\
hba}^{i},S_{\ bea}^{c})\},$ for
\begin{eqnarray}
R_{\ hjk}^{i} &=&\mathbf{e}_{k}L_{\ hj}^{i}-\mathbf{e}_{j}L_{\ hk}^{i}+L_{\
hj}^{m}L_{\ mk}^{i}-L_{\ hk}^{m}L_{\ mj}^{i}-C_{\ ha}^{i}\Omega _{\ kj}^{a},
\notag \\
R_{\ bjk}^{a} &=&\mathbf{e}_{k}\acute{L}_{\ bj}^{a}-\mathbf{e}_{j}\acute{L}%
_{\ bk}^{a}+\acute{L}_{\ bj}^{c}\acute{L}_{\ ck}^{a}-\acute{L}_{\ bk}^{c}%
\acute{L}_{\ cj}^{a}-C_{\ bc}^{a}\Omega _{\ kj}^{c},  \label{dcurv} \\
P_{\ jka}^{i} &=&e_{a}L_{\ jk}^{i}-D_{k}\acute{C}_{\ ja}^{i}+\acute{C}_{\
jb}^{i}T_{\ ka}^{b},\ P_{\ bka}^{c}=e_{a}\acute{L}_{\ bk}^{c}-D_{k}C_{\
ba}^{c}+C_{\ bd}^{c}T_{\ ka}^{c},  \notag \\
S_{\ jbc}^{i} &=&e_{c}\acute{C}_{\ jb}^{i}-e_{b}\acute{C}_{\ jc}^{i}+\acute{C%
}_{\ jb}^{h}\acute{C}_{\ hc}^{i}-\acute{C}_{\ jc}^{h}\acute{C}_{\ hb}^{i},%
\hspace{0in}\ S_{\ bcd}^{a}=e_{d}C_{\ bc}^{a}-e_{c}C_{\ bd}^{a}+C_{\
bc}^{e}C_{\ ed}^{a}-C_{\ bd}^{e}C_{\ ec}^{a},  \notag
\end{eqnarray}%
or $\ ^{\shortmid }\mathcal{R}=\mathbf{\{\ ^{\shortmid }R}_{\ \beta \gamma
\delta }^{\alpha }=(\ ^{\shortmid }R_{\ hjk}^{i},\ ^{\shortmid }R_{a\ jk}^{\
b},\ ^{\shortmid }P_{\ hj}^{i\ \ \ a},\ ^{\shortmid }P_{c\ j}^{\ b\ a},\
^{\shortmid }S_{\ hba}^{i},\ ^{\shortmid }S_{\ bea}^{c})\},$ for
\begin{eqnarray*}
\ ^{\shortmid }R_{\ hjk}^{i} &=&\ ^{\shortmid }\mathbf{e}_{k}\ ^{\shortmid
}L_{\ hj}^{i}-\ ^{\shortmid }\mathbf{e}_{j}\ ^{\shortmid }L_{\ hk}^{i}+\
^{\shortmid }L_{\ hj}^{m}\ ^{\shortmid }L_{\ mk}^{i}-\ ^{\shortmid }L_{\
hk}^{m}\ ^{\shortmid }L_{\ mj}^{i}-\ ^{\shortmid }C_{\ h}^{i\ a}\
^{\shortmid }\Omega _{akj}, \\
\ ^{\shortmid }R_{a\ jk}^{\ b} &=&\ ^{\shortmid }\mathbf{e}_{k}\ ^{\shortmid
}\acute{L}_{a\ j}^{\ b}-\ ^{\shortmid }\mathbf{e}_{j}\ ^{\shortmid }\acute{L}%
_{a\ k}^{\ b}+\ ^{\shortmid }\acute{L}_{c\ j}^{\ b}\ ^{\shortmid }\acute{L}%
_{a\ k}^{\ c}-\ ^{\shortmid }\acute{L}_{c\ k}^{\ b}\ ^{\shortmid }\acute{L}%
_{a\ j}^{\ c}-\ ^{\shortmid }C_{a\ }^{\ bc}\ ^{\shortmid }\Omega _{ckj}, \\
\ ^{\shortmid }P_{\ jk}^{i\ \ \ a} &=&\ ^{\shortmid }e^{a}\ ^{\shortmid
}L_{\ jk}^{i}-\ ^{\shortmid }D_{k}\ ^{\shortmid }\acute{C}_{\ j}^{i\ a}+\
^{\shortmid }\acute{C}_{\ j}^{i\ b}\ ^{\shortmid }T_{bk}^{\ \ \ a},\ \
^{\shortmid }P_{c\ k}^{\ b\ a}=\ ^{\shortmid }e^{a}\ ^{\shortmid }\acute{L}%
_{c\ k}^{\ b}-\ ^{\shortmid }D_{k}\ ^{\shortmid }C_{c\ }^{\ ba}+\
^{\shortmid }C_{\ bd}^{c}\ ^{\shortmid }T_{\ ka}^{c}, \\
\ ^{\shortmid }S_{\ j}^{i\ bc} &=&\ ^{\shortmid }e^{c}\ ^{\shortmid }\acute{C%
}_{\ j}^{i\ b}-\ ^{\shortmid }e^{b}\ ^{\shortmid }\acute{C}_{\ j}^{i\ c}+\
^{\shortmid }\acute{C}_{\ j}^{h\ b}\ ^{\shortmid }\acute{C}_{\ h}^{i\ c}-\
^{\shortmid }\acute{C}_{\ j}^{h\ c}\ ^{\shortmid }\acute{C}_{\ h}^{i\ b}, \\
\ ^{\shortmid }S_{a\ }^{\ bcd} &=&\ ^{\shortmid }e^{d}\ ^{\shortmid }C_{a\
}^{\ bc}-\ ^{\shortmid }e^{c}\ ^{\shortmid }C_{a\ }^{\ bd}+\ ^{\shortmid
}C_{a\ }^{\ bc}\ ^{\shortmid }C_{b\ }^{\ ed}-\ ^{\shortmid }C_{e}^{\ bd}\
^{\shortmid }C_{a}^{\ ec};
\end{eqnarray*}%
d-torsion, $\ \mathcal{T}=\{\mathbf{T}_{\ \alpha \beta }^{\gamma }=(T_{\
jk}^{i},T_{\ ja}^{i},T_{\ ji}^{a},T_{\ bi}^{a},T_{\ bc}^{a})\},$ for
\begin{equation}
T_{\ jk}^{i}=L_{jk}^{i}-L_{kj}^{i},T_{\ jb}^{i}=C_{jb}^{i},T_{\
ji}^{a}=-\Omega _{\ ji}^{a},\ T_{aj}^{c}=L_{aj}^{c}-e_{a}(N_{j}^{c}),T_{\
bc}^{a}=C_{bc}^{a}-C_{cb}^{a},  \label{dtors}
\end{equation}%
or $\ ^{\shortmid }\mathcal{T}=\{\ ^{\shortmid }\mathbf{T}_{\ \alpha \beta
}^{\gamma }=(\ ^{\shortmid }T_{\ jk}^{i},\ ^{\shortmid }T_{\ j}^{i\ a},\
^{\shortmid }T_{aji},\ ^{\shortmid }T_{a\ i}^{\ b},\ ^{\shortmid }T_{a\ }^{\
bc})\},$ for
\begin{equation*}
\ ^{\shortmid }T_{\ jk}^{i}=\ ^{\shortmid }L_{jk}^{i}-\ ^{\shortmid
}L_{kj}^{i},\ ^{\shortmid }T_{\ j}^{i\ a}=\ ^{\shortmid }C_{j}^{ia},\
^{\shortmid }T_{aji}=-\ ^{\shortmid }\Omega _{aji},\ \ ^{\shortmid }T_{c\
j}^{\ a}=\ ^{\shortmid }L_{c\ j}^{\ a}-\ ^{\shortmid }e^{a}(\ ^{\shortmid
}N_{cj}),\ ^{\shortmid }T_{a\ }^{\ bc}=\ ^{\shortmid }C_{a}^{\ bc}-\
^{\shortmid }C_{a}^{\ cb};
\end{equation*}%
d-nonmetricity, $\ \mathcal{Q}=\mathbf{\{Q}_{\gamma \alpha \beta }=\left(
Q_{kij},Q_{kab},Q_{cij},Q_{cab}\right) \},$ for
\begin{equation}
Q_{kij}=D_{k}g_{ij},Q_{kab}=D_{k}g_{ab},Q_{cij}=D_{c}g_{ij},Q_{cab}=D_{c}g_{ab}
\label{dnonm}
\end{equation}%
or $\ ^{\shortmid }\mathcal{Q}=\mathbf{\{\ ^{\shortmid }Q}_{\gamma \alpha
\beta }=\left( \ ^{\shortmid }Q_{kij},\ ^{\shortmid }Q_{kab},\ ^{\shortmid
}Q_{cij},\ ^{\shortmid }Q_{cab}\right) \},$ for
\begin{equation*}
\ ^{\shortmid }Q_{kij}=\ ^{\shortmid }D_{k}\ ^{\shortmid }g_{ij},\
^{\shortmid }Q_{k}^{\ ab}=\ ^{\shortmid }D_{k}\ ^{\shortmid }g^{ab},\
^{\shortmid }Q_{\ ij}^{c}=\ ^{\shortmid }D^{c}\ ^{\shortmid }g_{ij},\
^{\shortmid }Q^{cab}=\ ^{\shortmid }D^{c}\ ^{\shortmid }g^{ab}.
\end{equation*}
\end{corollary}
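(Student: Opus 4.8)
The plan is to prove the corollary by a direct N-adapted computation, feeding the frame fields into the coordinate-free definitions (\ref{dcurvabstr}) and reading off coefficients after the h-v (respectively h-cv) index split. First I would recall from Lemma \ref{lemadap} and Consequence \ref{anhr} that the N-adapted frames satisfy the anholonomy relations $[\mathbf{e}_\alpha,\mathbf{e}_\beta]=W_{\alpha\beta}^\gamma\mathbf{e}_\gamma$ with the only nonvanishing coefficients $W_{ia}^b=\partial_a N_i^b$ and $W_{ji}^a=\Omega_{ij}^a$ on $\mathbf{TV}$, together with their co-vertical analogues (involving $\ ^{\shortmid}\Omega_{ija}$) on $\mathbf{T}^\ast\mathbf{V}$. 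These commutators are the decisive input: the term $\mathbf{D}_{[\mathbf{X,Y}]}$ in the curvature and the bracket in the torsion are precisely what generate the $\Omega$-contributions that distinguish the present formulas from a naive holonomic calculation.

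Then for the curvature I would set $\mathbf{X}=\mathbf{e}_\gamma$, $\mathbf{Y}=\mathbf{e}_\delta$ and expand $\mathcal{R}(\mathbf{e}_\gamma,\mathbf{e}_\delta)\mathbf{e}_\beta=\mathbf{R}_{\ \beta\gamma\delta}^\alpha\mathbf{e}_\alpha$ using the N-adapted coefficient definition $\mathbf{D}_{\mathbf{e}_\beta}\mathbf{e}_\gamma=\mathbf{\Gamma}_{\ \beta\gamma}^\alpha\mathbf{e}_\alpha$ with the parameterization $\{L_{\ jk}^i,\acute{L}_{\ bk}^a,\acute{C}_{\ jc}^i,C_{\ bc}^a\}$. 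Splitting $\alpha,\beta,\dots$ into their horizontal and vertical parts reduces the single abstract identity to the six component formulas in (\ref{dcurv}): the pure-h piece $R_{\ hjk}^i$, the pure-v pieces $R_{\ bjk}^a$ and $S_{\ bcd}^a$, and the mixed $P$-type components, each acquiring the characteristic $C\cdot\Omega$ term from the bracket. The torsion formulas (\ref{dtors}) follow by the same substitution into $\mathcal{T}(\mathbf{e}_\gamma,\mathbf{e}_\delta)$: the antisymmetrized connection coefficients yield $T_{\ jk}^i=L_{jk}^i-L_{kj}^i$ and $T_{\ bc}^a=C_{bc}^a-C_{cb}^a$, while the genuinely off-diagonal entries $T_{\ ji}^a=-\Omega_{\ ji}^a$ and $T_{aj}^c=L_{aj}^c-e_a(N_j^c)$ come directly from the anholonomy.

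The nonmetricity formulas (\ref{dnonm}) are the most immediate: applying $\mathcal{Q}(\mathbf{e}_\gamma)=\mathbf{D}_{\mathbf{e}_\gamma}\mathbf{g}$ to the d-metric and reading off the four directional derivatives $D_k g_{ij}$, $D_k g_{ab}$, $D_c g_{ij}$, $D_c g_{ab}$ gives the claim at once. For the cotangent-bundle versions I would repeat all three computations verbatim with the frames (\ref{cnadap}), the only change being that the co-vertical indices now sit in upper position; accordingly the sign and index-placement conventions (for instance the minus sign in $\ ^{\shortmid}\mathbf{D}_{\mathbf{e}_k}\ ^{\shortmid}e^b=-\ ^{\shortmid}\acute{L}_{a\ k}^{\ b}\ ^{\shortmid}e^a$) must be propagated consistently.

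I expect the main obstacle to be purely organizational rather than conceptual: correctly carrying the anholonomy coefficients through the double covariant derivatives and, on $\mathbf{T}^\ast\mathbf{V}$, maintaining the raised co-vertical indices and their associated signs without error. Since no essential difficulty arises beyond this careful index bookkeeping, the detailed expansion is naturally subsumed into the statement of the corollary rather than written out in full.
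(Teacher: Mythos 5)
Your proposal is correct and takes essentially the same approach as the paper: the paper's entire proof is the remark that one substitutes $\mathbf{X}=\mathbf{e}_{\alpha }$, $\mathbf{Y}=\mathbf{e}_{\beta }$ and the N-adapted coefficients of $\mathbf{D}$ (respectively $\ ^{\shortmid }\mathbf{D}$) into the abstract definitions (\ref{dcurvabstr}) and reads off the $h$-/$v$- (respectively $h$-/$cv$-) components. Your explicit identification of the anholonomy relations (\ref{anhrelc}), (\ref{anhrelcd}) as the source of the $\Omega$-terms in the bracket contributions is precisely the bookkeeping the paper leaves implicit, so there is nothing to add.
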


\subsection{The coefficients of canonical Lagrange and Hamilton d-connections%
}

Such d-connections are very important for elaborating MGTs on (co) tangent
bundles because they allow a very general decoupling and integration of
generalized Einstein and matter field equations. By explicit computations in
N-adapted frames, we can prove that necessary conditions for defining and
constructing, respectively, $\widehat{\mathbf{D}}$ \ (\ref{canondcl}) and $\
^{\shortmid }\widehat{\mathbf{D}}$ (\ref{canondch}), are satisfied following

\begin{corollary}
\label{acoroldand}The N-adapted coefficients of canonical Lagrange and
Hamilton d-connections are computed respectively:
\begin{eqnarray}
\mbox{ on }T\mathbf{TV},\ \widehat{\mathbf{D}} &=&\{\widehat{\mathbf{\Gamma }%
}_{\ \alpha \beta }^{\gamma }=(\widehat{L}_{jk}^{i},\widehat{L}_{bk}^{a},%
\widehat{C}_{jc}^{i},\widehat{C}_{bc}^{a})\},\mbox{ where },  \notag \\
&&\mbox{ for }\lbrack \mathbf{g}_{\alpha \beta }=(g_{jr},g_{ab})\mathbf{,N=\{%
}N_{i}^{a}\mathbf{\}]\simeq \lbrack }\widetilde{\mathbf{g}}_{\alpha \beta }=(%
\widetilde{g}_{jr},\widetilde{g}_{ab}),\widetilde{\mathbf{N}}_{i}^{a}=%
\widetilde{N}_{i}^{a}],  \notag \\
\widehat{L}_{jk}^{i} &=&\frac{1}{2}g^{ir}\left( \mathbf{e}_{k}g_{jr}+\mathbf{%
e}_{j}g_{kr}-\mathbf{e}_{r}g_{jk}\right) ,\ \widehat{L}%
_{bk}^{a}=e_{b}(N_{k}^{a})+\frac{1}{2}g^{ac}(e_{k}g_{bc}-g_{dc}\
e_{b}N_{k}^{d}-g_{db}\ e_{c}N_{k}^{d}),  \notag \\
\widehat{C}_{jc}^{i} &=&\frac{1}{2}g^{ik}e_{c}g_{jk},\ \widehat{C}_{bc}^{a}=%
\frac{1}{2}g^{ad}\left( e_{c}g_{bd}+e_{b}g_{cd}-e_{d}g_{bc}\right)
\label{canlc}
\end{eqnarray}%
\begin{eqnarray}
\mbox{ and, on }T\mathbf{T}^{\ast }\mathbf{V},\ \ ^{\shortmid }\widehat{%
\mathbf{D}} &=&\{\ ^{\shortmid }\widehat{\mathbf{\Gamma }}_{\ \alpha \beta
}^{\gamma }=(\ ^{\shortmid }\widehat{L}_{jk}^{i},\ ^{\shortmid }\widehat{L}%
_{a\ k}^{\ b},\ ^{\shortmid }\widehat{C}_{\ j}^{i\ c},\ ^{\shortmid }%
\widehat{C}_{\ j}^{i\ c})\},\mbox{ where, }  \notag \\
&&\mbox{ for }\lbrack \ ^{\shortmid }\mathbf{g}_{\alpha \beta }=(\
^{\shortmid }g_{jr},\ ^{\shortmid }g^{ab})\mathbf{,\ ^{\shortmid }N=\{}\
^{\shortmid }N_{ai}\mathbf{\}]\simeq \lbrack }\ ^{\shortmid }\widetilde{%
\mathbf{g}}_{\alpha \beta }=(\ ^{\shortmid }\widetilde{g}_{jr},\ ^{\shortmid
}\widetilde{g}^{ab}),\ ^{\shortmid }\widetilde{\mathbf{N}}_{ai}=\
^{\shortmid }\widetilde{N}_{ai}],  \notag \\
\ ^{\shortmid }\widehat{L}_{jk}^{i} &=&\frac{1}{2}\ ^{\shortmid }g^{ir}(\
^{\shortmid }\mathbf{e}_{k}\ ^{\shortmid }g_{jr}+\ ^{\shortmid }\mathbf{e}%
_{j}\ ^{\shortmid }g_{kr}-\ ^{\shortmid }\mathbf{e}_{r}\ ^{\shortmid
}g_{jk}),\   \notag \\
\ ^{\shortmid }\widehat{L}_{a\ k}^{\ b} &=&\ ^{\shortmid }e^{b}(\
^{\shortmid }N_{ak})+\frac{1}{2}\ ^{\shortmid }g_{ac}(\ ^{\shortmid }e_{k}\
^{\shortmid }g^{bc}-\ ^{\shortmid }g^{dc}\ \ ^{\shortmid }e^{b}\ ^{\shortmid
}N_{dk}-\ ^{\shortmid }g^{db}\ \ ^{\shortmid }e^{c}\ ^{\shortmid }N_{dk}),
\notag \\
\ ^{\shortmid }\widehat{C}_{\ j}^{i\ c} &=&\frac{1}{2}\ ^{\shortmid }g^{ik}\
^{\shortmid }e^{c}\ ^{\shortmid }g_{jk},\ \ ^{\shortmid }\widehat{C}_{\
a}^{b\ c}=\frac{1}{2}\ ^{\shortmid }g_{ad}(\ ^{\shortmid }e^{c}\ ^{\shortmid
}g^{bd}+\ ^{\shortmid }e^{b}\ ^{\shortmid }g^{cd}-\ ^{\shortmid }e^{d}\
^{\shortmid }g^{bc}).  \label{canhc}
\end{eqnarray}
\end{corollary}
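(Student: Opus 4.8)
The plan is to verify the two sets of formulas \eqref{canlc} and \eqref{canhc} by a direct N-adapted computation, since the statement is fundamentally a computational corollary: it asserts that the abstract canonical d-connections $\widehat{\mathbf{D}}$ and $\ ^{\shortmid}\widehat{\mathbf{D}}$, characterized in Definition-Theorem \ref{phidc} by the conditions $\widehat{\mathbf{D}}\mathbf{g}=0$ with $h\widehat{\mathbf{T}}=0,\ v\widehat{\mathbf{T}}=0$ (respectively $\ ^{\shortmid}\widehat{\mathbf{D}}\ ^{\shortmid}\mathbf{g}=0$ with $h\ ^{\shortmid}\widehat{\mathbf{T}}=0,\ cv\ ^{\shortmid}\widehat{\mathbf{T}}=0$), have precisely the coefficients displayed. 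First I would fix the N-adapted (co)frames $\mathbf{e}_\alpha=(\mathbf{e}_i,e_b)$ from \eqref{nadapb} and $\ ^{\shortmid}\mathbf{e}_\alpha=(\ ^{\shortmid}\mathbf{e}_i,\ ^{\shortmid}e^b)$ from \eqref{cnadap}, write a general metric-compatible d-connection with unknown coefficients $\{L^i_{jk},\acute L^a_{bk},\acute C^i_{jc},C^a_{bc}\}$ (and the corresponding cotangent set), and then impose the two structural requirements in turn.

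The key steps, in order, are as follows. On $T\mathbf{TV}$: writing out the metric compatibility conditions $\widehat{\mathbf{D}}_{\mathbf{e}_k}g_{ij}=0$, $\widehat{\mathbf{D}}_{e_c}g_{ab}=0$, and the mixed equations, together with the vanishing of the purely horizontal and purely vertical torsion components from \eqref{dtors}, namely $\widehat T^i_{\ jk}=L^i_{jk}-L^i_{kj}=0$ and $\widehat T^a_{\ bc}=C^a_{bc}-C^a_{cb}=0$ (symmetry in the lower h- and v-indices). Combining symmetry with metric compatibility yields the Koszul-type formulas for $\widehat{L}^i_{jk}$ and $\widehat{C}^a_{bc}$ exactly as in \eqref{canlc}; the mixed coefficients $\widehat{L}^a_{bk}$ and $\widehat{C}^i_{jc}$ are then fixed by the remaining compatibility relations, with the $e_b(N^a_k)$ term arising from the nonholonomy of the frame $\mathbf{e}_k=\partial_k-N^a_k\partial_a$. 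The second step repeats this verbatim on $T\mathbf{T}^\ast\mathbf{V}$, where the raised v-index structure of $\ ^{\shortmid}g^{ab}$ and the cv-frame $\ ^{\shortmid}e^b=\partial/\partial p_b$ produce the dual formulas \eqref{canhc}; here one must track carefully the index positions, since the covariant derivative of $\ ^{\shortmid}g^{ab}$ (rather than $g_{ab}$) governs the cv-sector and the sign conventions in the d-connection action on $\ ^{\shortmid}e^b$ enter.

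The main obstacle I expect is purely bookkeeping rather than conceptual: correctly handling the interplay of the nonholonomy coefficients $W^\gamma_{\alpha\beta}$ (from Consequence \ref{anhr}) with the torsion-vanishing conditions, so that the $e_b(N^a_k)$-type terms land in the right coefficients and the anholonomic contributions $\Omega^a_{ij}$ do not spuriously appear in the "diagonal" h- and v-blocks. In particular, one must be careful that imposing only $h\widehat{\mathbf{T}}=0$ and $v\widehat{\mathbf{T}}=0$ (not the full $\widehat{\mathbf{T}}=0$) leaves the mixed torsions $\widehat T^i_{\ ja}$ and $\widehat T^a_{\ ji}=-\Omega^a_{ji}$ generically nonzero, which is exactly why $\widehat{\mathbf{D}}\neq\nabla$ in general; this is the content consistent with the later extraction of LC-configurations via \eqref{lccondl}. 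Once the index gymnastics is organized into the h-, v-, and mixed blocks, each formula follows by the standard argument that metric compatibility plus the prescribed partial symmetry uniquely determines a linear connection, so existence and uniqueness of $\widehat{\mathbf{D}}$ and $\ ^{\shortmid}\widehat{\mathbf{D}}$ are simultaneously established. I would therefore present the proof block-by-block, displaying the compatibility and symmetry equations for each sector and reading off the coefficients, and I would suppress the fully routine cotangent computation by noting it follows from the tangent case under the $\mathcal{L}$-dual replacements.
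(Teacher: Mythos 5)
Your N-adapted, block-by-block strategy is the same computational machinery the paper relies on, and it does work for the pure blocks: compatibility $\widehat{\mathbf{D}}_{\mathbf{e}_k}g_{ij}=0$ plus the symmetry $\widehat L^i_{jk}=\widehat L^i_{kj}$ forced by $h\widehat{\mathbf{T}}=0$ yields the h-Koszul formula (no anholonomy terms pollute it, since $[\mathbf{e}_j,\mathbf{e}_k]=\Omega^a_{jk}e_a$ is purely vertical), and the analogous argument gives $\widehat C^a_{bc}$. The genuine gap is your claim that "the mixed coefficients $\widehat L^a_{bk}$ and $\widehat C^i_{jc}$ are then fixed by the remaining compatibility relations." They are not. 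The only equations involving the unknowns $\acute L^a_{bk}$ and $\acute C^i_{jc}$ are $\mathbf{D}_{\mathbf{e}_k}g_{ab}=0$ and $\mathbf{D}_{e_c}g_{ij}=0$; setting $\Lambda_{ab|k}:=g_{ac}\acute L^c_{bk}$ and $\Xi_{ij|c}:=g_{il}\acute C^l_{jc}$, these read $2\Lambda_{(ab)|k}=\mathbf{e}_k g_{ab}$ and $2\Xi_{(ij)|c}=e_c g_{ij}$, i.e. they fix only the symmetric parts. The antisymmetric parts $\Lambda_{[ab]|k}$ and $\Xi_{[ij]|c}$ are completely unconstrained, because the torsion conditions you impose involve only $L^i_{jk}$ and $C^a_{bc}$, while the mixed torsions $T^i_{ja}=\acute C^i_{ja}$ and $T^a_{bk}=\acute L^a_{bk}-e_b(N^a_k)$ are deliberately left free (as you yourself note). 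Hence "metric compatibility plus the prescribed partial symmetry" does not uniquely determine a d-connection: the displayed formulas (\ref{canlc}) embody the additional canonical choices $\Xi_{[ij]|c}=0$ and $\Lambda_{[ab]|k}=g_{c[a}e_{b]}(N^c_k)$, which nothing in your argument forces. Your "existence and uniqueness" conclusion therefore fails exactly in the mixed sector, and the same defect propagates verbatim to the cotangent formulas (\ref{canhc}).

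The paper avoids this issue by running the implication in the opposite direction: the coefficients (\ref{canlc}) and (\ref{canhc}) are themselves the construction of $\widehat{\mathbf{D}}$ and $\ ^{\shortmid}\widehat{\mathbf{D}}$ (Definition-Theorem \ref{phidc} proves existence precisely "by globalizing the constructions from Corollary \ref{acoroldand}"), and the corollary is then established by substituting these coefficients into the compatibility and torsion identities and verifying, by explicit N-adapted computation, that the necessary conditions $\widehat{\mathbf{D}}\mathbf{g}=0$, $h\widehat{\mathbf{T}}=0$, $v\widehat{\mathbf{T}}=0$ (and their h-cv analogues) are satisfied. To repair your proof, either switch to this verification direction, which is routine and requires no uniqueness claim, or keep the derivation but supply the missing input that pins down the mixed blocks, namely an explicit prescription of the mixed torsion components $T^i_{ja}$ and $T^a_{bk}$ as functionals of $(\mathbf{g},\mathbf{N})$, and only then assert uniqueness.
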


In a similar form, we can prove that all N-adapted coefficient formulas necessary formulating and finding solutions of physically important field and evolution equations in theories with MDRs and LIVs.

\setcounter{equation}{0} \renewcommand{\theequation}
{B.\arabic{equation}} \setcounter{subsection}{0}
\renewcommand{\thesubsection}
{B.\arabic{subsection}}

\section{ A Historical Survey on Modified Finsler Gravity}

\label{ashr}

\subsection{Motivations and historical remarks}

The historical remarks in this work are related to subjective author's interests and research experience on geometric methods in physics and applications of Finsler geometry to (non) standard theories of physics, geometric mechanics etc. The bibliography is not exhaustive and not organized in a chronological order. Readers may consult monographs
\cite{cartan35,vranc57,rund59,yano73,matsumoto86,bao00,shen01,shen01a,youssef09}, for main references on works written by mathematicians, and monographs \cite{vlasov66,asanov85,asanov88,asanov89,bejancu90,bejancu03,bogoslovsky92,anton93,anton96,anton03},
 published more than 20 years ago and written by geometers and physicists oriented to Finsler applications in physics. Author's contributions to generalized Finsler geometry, standard theories of physics and MGTs are reviewed in
 \cite{vplb10,vap97,vnp97,vijgmmp12,vijmpd12,vmon98,vmon02,vmon06,vrev08,vjgp10,vjmp13,vch2416,ruchinepjc17}, see also references therein.\footnote{The historical and bibliographical remarks presented in this paper complete comments and references for Appendix A in review \cite{vrev08}. There are reviewed and listed in MathSciNet, Scopus, libraries bibliographic data etc. more than 5000 works on the keyword "Finsler". Such base data contain titles and abstracts of works published by mathematicians and physicists
before the year 2000 and certain information on such research elaborated in Eastern Europe, Japan, China etc. During last 25 years, a number of preprints on Finsler geometry and applications were put/ dubbed in arXiv.org and ispirehep.net . It is not possible to cite and discuss in this article all important contributions related to Finsler geometry and physics when authors developed "non-standard" physical paradigms or have been oriented to solution of "pure" mathematical tasks.}

\subsubsection{Politics and communist ideology, geometry and physics, and young researchers}

There are three important motivations for presenting in this Appendix a short review on Finsler geometry, methods, and applications in physics and geometric mechanics:

\begin{enumerate}
\item \textsf{A number of fundamental contributions to Finsler geometry and applications belong to authors who lived and worked for a long time in Countries with totalitarian regimes.} Their social and scientific destinies and conditions of research activity were different (and almost unknown) comparing to those in democratic Countries. For instance, L. Berwald
    \cite{berwald26,berwald41} who performed a new approach to Finsler geometry was a Jewish scientist killed in a Nazi concentration camp during the 2d World War. \newline
R. I. Pimenov \cite{pimenov87} tried to defend a habilitation thesis on an axiomatic formulation of the concept of Finsler spacetime in the former URSS. He was persecuted and tortured during many years for anti-communist activity (Stalin's GULAG system was also anti-human and with a huge network of concentration and extermination campuses). Dr. Pimenov was a close friend of the famous Soviet physicist, anti-soviet dissident and human right activist, academician A. D. Sakharov, a winner of Nobel Prize for Peace and the "main father" of the Soviet thermonuclear bomb. \newline
The research on Finsler geometry in Romania was taken partially under control of Elena Ceau\c{s}ecu (dictator's wife), secret service "securitate" (in English, security) supervised by the Soviet KGB, with a certain degree of independence allowed for the Romanian communist party. She had only a primary school education but supervised the Academy of Sciences of Romania and forced research institutions on chemistry to publish for her in Wester Countries prestigious works on chemistry. In all directions of
Romanian science, it was organized a system that only professors authorised by the communist elite and their secret service were allowed to travel, collaborate and publish in Western Countries. Those scientists and (in many cases) pseudo scientists had obligations to "bring" (i.e. still) fundamental ideas, results, and technological secrets and falsify scientific results in an ideological Marxist fashion. On differential and Finsler geometry, prof. Radu Miron (later he became a member of the Romanian
Academy) was assigned by the dictator as a "doctor docent" for political, ideological and administrative control. He got exclusive rights to travel abroad, participate at international scientific conferences, take any results of tenths of Romanian researchers and publish under his name with National and Western Editors, and even to supervise PhD theses for students from Japan.
\newline
The external financial support for science in the Republic of Moldova (for instance, provided by various NATO and UNESCO programs but also by private funds and organizations in Western Countries) was taken under control of the communist regime and Russian occupation army (existing up till present in the Eastern part of that Country). A quasi-communist / Soviet system acts up
till present in that small Country. A number of senior and young researchers on geometry and physics was persecuted politically and deported (or constrained by criminal methods to live) from R. Moldova during 1997-2001. A part of those researchers were prohibited to do science and get PhD theses after they published with Western Editors a series books and important works
on Finsler geometry and applications modern physics. It was not allowed by the communist regime to accept independently NATO and UNESCO fellowships for geometry and physics. For reviews of their results and original contributions, see
\cite{v83,v94,vog94,vgon95,vjmp96,vstoch96,vap97,vnp97,vexsol98,vhsp98,vmon98,vapl00,vapny01, vplbnc01,vncs01,vsbd01,vmon02}.

\item In Western Countries, Finsler geometry was considered for many years (in many cases, unfairly) as a non-canonical direction for elaborating particle models and classical and quantum field theories. Finsler gravity models were concluded (which was quite subjective and not motivated scientifically) to be in contradiction to the standard approach to physics. For instance, spacetime models with local anisotropy were classified as un-physical ones by a very influent author in GR (J. D. Bekenstein, 1993) \cite{bekenstein93}. In that paper, there were not studied possible important implications of the N--connection structure (a less known for physicists geometric formalism) which maid the analysis and conclusions to be incomplete. Before the end of previous Century, it was very difficult to publish Finsler papers in Phys. Rev. Lett., Phys. Rev., Phys. Lett. B and a number of top journals on astrophysics and cosmology.\footnote{Nevertheless, it should be noted that the author of this work was able to publish a series of original papers and reviews on generalized Finsler geometry and applications (and modeling such structures on nonholonomic Lorentz manifolds) in other important physical and math physical journals like Phys. Lett. A and B, EPJC, Nuclear Phys. B, JHEP, Annals Phys. NY, Class. Q. Grav., Gen. Rel. Gravitation, J. Geom. Phys., J. Math. Physics, IJGMMP, IJMPD etc.; see, for instance, \cite{vjmp96,vap97,vnp97,vhsp98,vapny01,vjhep01,vpcqg01,vtnpb02,vsjmp02} and
references therein.} During 1960-2005, the mainstream of works on Finsler geometry and physics/ geometric mechanics was formed by authors originating from Japan, URSS, Romania, R. Moldova, Hungary, Poland, Greece etc. They published their papers in a series of "less known" international and/or national journals. Those contributions were reviewed (in the bulk) in MathSciNet and Zentralblatt. Here we also cite a series of important works published by authors and editors in Japan, see \cite%
{kawaguchi37,kawaguchi52,matsumoto66,matsumoto86,ikeda77,ikeda78,ikeda79,ikeda81,ishikawa80,ishikawa81,ono90,ono93,takano68, yano73,takano74,takano83,bao07} and references therein. It should be noted that a number of important ideas on locally anisotropic supersymmetric models, noncommutative geometries, spinor variables were proposed and tested in those works. The main issue was that not all geometric constructions were N-adapted and that there were not elaborated self-consistent methods for constructing exact solutions and quantization of such theories. In many cases, those ideas, results and methods should be reformulated in order to consider and solve important problems for the advanced geometry and modern physics after 2000.

\item After discovering accelerating cosmology and publishing a series of important theoretical works on possible physical effects with MDRs and/or broken LIV in particle physics and classical and quantum gravity (we cite here
    \cite{amelino96,amelino98,amelino02,amelino02a,hooft96,mavromatos07,mavromatos10,mavromatos10a,mavromatos11}), a number of MGTs were elaborated. For reviews, one can be considered
    \cite{nojiri07,capozziello10,yang09,yang10,nojiri13jcap,magueijo04,mavromatos13a,mavromatos13b}. Certain directions in modified gravity and cosmology involve metrics with explicit dependence on velocity and/or momentum type variables. There were studied models with so-called "rainbow metrics", analogous metrics etc. containing phenomenologically, for instance, an energy type parameter, see Example \ref{exmdrenergy}. A series of works discussing possible Finsler applications in modern physics and cosmology appeared in top physical journals \cite{girelli06,gallego06,laemmerzahl08,laemmerzahl12,kostelecky11,
    kostelecky12,kouretsis10,kouretsis14, kouretsis14a,itin14,amelino14,bailey15}. Authors of such works were not able to motivate their constructions by exact solutions with nontrivial N--connection structures (it is a very difficult technical task) and to formulate a self-consistent axiomatic approach. Recently, a new generation of young researchers published
original papers on various types of Finsler spacetimes and locally anisotropic cosmology models (in some cases with static and/or spherical symmetries which is possible only for trivial N-connection structures). There were advanced very sophisticated schemes of "Finsler locally anisotropic causality" \cite{perlick87,meister90,pfeifer11,li14,li14a,lin14,li16,hohmann13,
hohmann16,minguzzi15,minguzzi15a, russell15,schreck15,schreck15a,schreck16,schreck16a,barcaroli15,lobo16}
and some results were re-discovered without a corresponding citation of previous original and important works.
\end{enumerate}

\subsubsection{Seven historical periods for Finsler MGTs}

\label{assshclass}Chronologically, the evolution of research on Finsler geometry, generalizations, and applications can be conventionally divided into seven periods\footnote{A similar historical separation is considered in Appendix B of the book \cite%
{rovelli03} for five periods evolution of Loop Quantum Gravity, LQG, which is very different from that for Modified Finsler gravity. A synthesis of LQG and deformation quantization methods with applications to certain generalized Finsler geometries and gravity models was studied in Ref. \cite{vlqg09}.} (with certain overlaps and parallel activities of different authors and schools on geometry and physics; we cite in this subsection some most important and typical works):

\begin{itemize}
\item \textit{The Prehistory (1854-1918) and/or beginning of History (till 1934).} The first "prehistorical" work was the B. Riemann's habilitation thesis from 1854 \cite{riem1854}. Here we note that authors of monograph \cite{bao00} consider that we shall use the term the term Riemann-Finsler geometry for spaces endowed with nonlinear quadratic elements. Following
their arguments, the real Finsler history began much early than the classical (P. Finsler, 1918) work \cite{finsler18} was published. Here we note that the first complete model of Finsler geometry determined by a triple of fundamental geometric objects consisting from a nonlinear homogeneous quadratic element, N-connection and d-connection was constructed in explicit coordinate form in E. Cartan's first monograph on Finsler geometry published in 1935 \cite{cartan35} (see that book for citations on much
early classical geometric works):

\item \textit{The Classical Geometric Ages (1935 - 1955).} There were published a series of fundamental geometric papers in that period due to L. Berwald \cite{berwald26,berwald41}, S. -S Chern \cite{chern48} and others. Here we mention certain very important works on the geometry of nonholonomic manifolds (G. Vranceanu and Z. Horak, 1926-1955)
    \cite{vranc31,vranc57,horac27} and global theory of N-connections and Finsler geometry (A. Kawaguchi, 1937-1952, and C. Ehresmann, 1955) \cite{kawaguchi37,kawaguchi52,ehresmann55}. Such methods of the geometry of nonholonomic spaces because later they became very important for generating exact solutions and quantization of Finsler MGTs. For summary of results and
a detailed bibliography on the classical period of Finsler geometry (being considered also certain applications), we cite
\cite{cartan35,vranc57,rund59,yano73,matsumoto86,bao00,shen01,shen01a,youssef09,vlasov66,asanov85,asanov88,asanov89,bejancu90,
bejancu03, bogoslovsky92,anton93,anton96,anton03}.

\item \textit{The Early Middle Ages} \textit{(1950-1974)} with the beginning of explicit applications in physics and mechanics and Finsler gravity modifications, see (J. I. Horvath, 1950) \cite{horvath50}. It was also proposed an alternative version of geometric mechanics, see (J. Kern, 1974) \cite{kern74}, where the concept of Lagrange geometry was elaborated as a
generalization of Finsler geometry. For a standard approach to classical geometric mechanics, we cite (M. de Leon and P. Rodrigues, 1985) \cite{deleon85} and references therein. Various schools on geometric and analytic mechanics followed in the bulk the standard approach to geometric mechanics and not Kern-Finsler modifications even certain research groups in Japan and
Romania elaborated on almost K\"{a}hler-Finsler/ -Lagrange and Hamilton-Finsler theories.

\item \textit{The Middle Ages (1959 - 1995)} are characterized by a number of works on non-standard locally anisotropic field theories and gravity. Conventionally, this period begins with the monograph (H. Rund, 1959) \cite{rund59}, with translation, in Russia by G. Asanov; and publication of M. Matsumoto's monograph \cite{matsumoto86}. There were published a number of
papers on modified Finsler gravity theories and applications which can not be related directly to accelerating cosmology, string gravity, noncommutative and nonassociative generalizations of Finsler geometry, see typical works
\cite{ikeda77,ikeda78,ikeda79,ikeda81,ishikawa80,ishikawa81,ono90,ono93,takano68,yano73,takano74,takano83,asanov85,asanov88,
asanov89, aringazin88,ingardentamassy93,roxburgh79,roxburgh86,tavakol86,bogoslovsky92,bogoslovsky04}.

\item \textit{The Dark Non-Standard Ages (1975-2011)} is a period of classical results in standard theories of physics and GR. At that period, it was concluded that Finsler geometry and applications had restrictions from experimental data \cite{bekenstein93} which restricted the possibilities to publish such papers in Phys. Rev. Lett and Phys. Rev. Nevertheless, a number of authors in the USA, Japan, Germany, UK, Canada, Greece, Russia, Poland, Romania, R. Moldova etc. published in other mathematical and theoretical physics journals a number of papers on generalized Finsler gravity theories (with nonholonomic supersymmetric and noncommutative variables and quantum models with locally anisotropic interactions). Here, we cite some series of such works elaborated by different research groups and authors \cite{beil93,beil03,brandt92,brandt98,
    brandt99,brandt00,brandt03,brandt04,mrugala90,mrugala92,vargas96,vrm2,vrm3,vap97,vnp97,vncs01,ingarden03,ingarden04,
    ingarden08,vjmp05,vijtp10,vcqg10,ingarden08a,stavr99, stavr04,stavr05,kouretsis10,laemmerzahl08}. There were published also a series of important monographs on generalized Finsler gravity and applications, see
    \cite{bejancu90,vmon98,vmon02,vmon06,anton93,anton96,anton03}.

\item \textit{The Renaissance via MDRs and LIVs (1996-2011).} This period is characterised by two classes of works on generalized Finsler gravity and applications. A series of important physical journals began to publish papers on physical models with MDRs and LIVs related to quantum gravity, string theory and certain modifications of gravity, see typical works \cite%
{amelino96,amelino98,amelino02,amelino02a,mavromatos07,magueijo04}. Such works usually had not involved Finsler geometric methods but contained nevertheless certain formulas with velocity/ momentum type variables on (co) tangent bundles. The second class of papers were based on various generalizations of Finsler geometry elaborating non-standard theories of physics, see publications \cite{vapny01,vijmpd03,vijmpd03a,vjmp05,vijtp10,vcqg10,girelli06,gallego06,mavromatos10,mavromatos10a,mavromatos11,castro97, castro07,castro08,castro08a,castro11,stavr99,stavr04,stavr05,kouretsis10}, and references therein.

\item \textit{Nowdays, MGTs, and Future (2012- ...).} After a number of publications in MGTs and accelerating cosmology, PRD and other important physical journals began to accept papers on modified Finsler gravity from some groups with young researchers in Germany, UK, Italy, China and other countries, see \cite{pfeifer11,li14,li14a,lin14,li16,hohmann13,hohmann16,minguzzi15,
    minguzzi15a,russell15,schreck15,schreck15a, schreck16,schreck16a,barcaroli15,lobo16}. A number of researchers working on modified gravity and applications in cosmology, non-standard particle physics, MDRs and LIVs begin to publish series of important works with Finsler like variables and involving geometric methods \cite{mavromatos13a,mavromatos13b,
laemmerzahl12,kostelecky11,kostelecky12,kouretsis14,kouretsis14a,itin14,amelino14,bailey15,castro14,castro16}. A number of
papers on Finsler MGTs and applications in modern cosmology were published in collaboration with authors originating from the R. Moldova, Romania, USA, Germany etc, see \cite{gvvepjc14,svvijmpd14,gvvepjc14,gvvcqg15,gheorghiuap16,bubuianucqg17,ruchinepjc17,
elizalde15,vport13,vijgmmp14,vepjc14,vepjc14a, vvyijgmmp14a} and references therein. We speculate on future and perspective in subsection \ref{ssmissing}.
\end{itemize}

It should be noted that a series of authors worked during some historical periods of Finsler geometry and related MGTs and, in some cases, are active and publishing important works up till present. Before the end of previous Century, it was also a conventional geographical separation of researchers publishing on Finsler geometry and applications.

\subsection{A survey of research on Finsler MGTs in Eastern and Western Europe and Japan}

A number of directions related to MGTs with MDRs and applications of Finsler-Lagrange-Hamilton geometries in physics, engineering and biophysics were elaborated after II World War. That research activity was negatively affected by the different conditions of work and collaboration corresponding to different political and ideological systems established in democratic countries and those being under control of the Soviet regime.

\subsubsection{Finsler geometry and physical theories in Central Europe and former USSR}

Perhaps, the first most important result for elaborating Finsler gravity theories after the publication of E. Cartan's monograph \cite{cartan35} was due to (J. I. Horvath, 1950) with a work \cite{horvath50} submitted from Debrecen, Hungary and published in Physical Review. The goal of that short paper was to consider a modification of GR using the Cartan d-connection in Finsler geometry. There were not analysed possible physical consequences and not provided solutions of such equations. Nevertheless, it was shown how such extensions could be geometrized on tangent bundles to (pseudo) Riemannian manifolds using respective analogues of the Cartan d-connection. In Debrecen, it was created an influent school on Finsler geometry and generalizations acting in Central Europe during almost 50 years, see works by L. Tam\'{a}ssy, J. Szilasi, L. Kozma \cite{tamassy00,kozma03,szilasi12} and references therein.

A series of important E. Cartan books and articles (see \cite{cartan35,cartan38,cartan63}) were known in the former USSR before II World War. Later, a number of such works were translated in Russian. Those works had a great influence on researchers in geometry and physics from Moscow, Leningrad (St. Petersburg), Kazan and other cities. That period may be a subject of a detailed study on the history of geometry and physics in the former USSR which is out of scopes of this work. Here we note that G. A. Asanov was the editor of the Russian translation of H. Rund's monograph on Finsler geometry \cite{rund59}. In that translation, there is a typical appendix with ideological Soviet type comments but also certain important developments on various models of Finsler geometry and gravity. Prof. Asanov was authorised by Soviet officials \footnote{having recommendations from professors A. Sokolov and D. Ivanenko from the department of theoretical physics of M. Lomonosov Moscow State University who wrote a series of works on Marxist ideology and physics and got the Stalin's prize for science; in 1954, it was renamed into the "State prize"} to supervise from the "M. Lomonosov" Moscow State University the research in USSR and Socialist Countries on developing Finsler geometry and
applications. Nevertheless, that quasi-political/ideological activity resulted also in a series of original articles and books on modified gravity theories when some important works were performed together with post-graduate students\footnote{for instance, A. Aringazin, originating and working at present in Kazakhstan, and S. F. Ponomarenko, originating from Ukraine with further
affiliations in the R. Moldova and Russia} and published beginning the end of 70th of previous century \cite{asanov85,asanov88,asanov89,aringazin88}. Asanov's group elaborated on various gauge and jet models of modified Finsler gravity with "compensating" terms (specific locally anisotropic gravitational and gauge fields) on tangent/vector/jet bundles. Certain original ideas and a post-Newton formalism for computing possible physical anisotropies of cosmic rays etc. were elaborated in a series of works in Russian and some further publications in English as a member of Editorial Body of Rep. Math. Phys. Publications of G. A. Asanov's school were not formulated in a rigorous language of the geometry of fiber bundles and nonlinear connections but performed in a stile more closed to theoretical physics. In that period of glory of GR, standard particle physics and canonical quantum field theory, Finsler like modifications in physics and mechanics were treated with suspicion and not accepted by the "Orthodox gravity" community. The constructions with "gauge symmetries" in Finsler models elaborated by G. Asanov's schools are based on different geometric and physical principles than those for particle fields and standard quantum
field theory.

A very original application of Finsler geometry to the theory of locally anisotropic kinetic processes was elaborated in monograph (A. A. Vlasov, 1966) \cite{vlasov66} as a generalization of the so-called Einstein-Vlasov theory. That direction was supported by a famous Soviet mathematical physicist, N. N. Bogolyubov, but not accepted by L. D. Landau's school in theoretical physics.\footnote{
Further applications and developments on locally anisotropic kinetic processes, diffusion, geometric flows on generalized (super) Finsler spaces were considered by the S. Vacaru and co-authors in \cite{vstoch96,vmon98,vapl00,vapny01,vsym13} and references therein.}

There were developed in the former USSR and Russia other alternative approaches to locally anisotropic spacetimes geometry and applications in physics. We emphasized above the original attempt to provide an axiomatic approach to Finsler gravity by R. I. Pimenov \cite{pimenov87}. Here we note also contributions by G. Yu. Bogoslovsky from Moscow and his collaboration with H. G. Goenner from G\"{o}ttingen (during almost 20 years) \cite{bogoslovsky92,bogoslovsky04}. More recently, research on Finsler geometry
and applications got certain financial support from some Russian business elite, see a typical work due to (G. Garas'ko and D. G. Pavlov, 2016) \cite{garasko06} and reference therein. As a final remark on research on Finsler geometry during the Soviet period, one should be emphasized that such geometric methods were considered potentially important (and had certain official State support, for instance, for nuclear centers in Siberia and the Ural region, Russia) because of certain tensor models of locally anisotropic
nuclear matter and kinetics of nuclear reactions. In the bulk, that was before formulating the Quantum Chromo Dynamics, QCD as a standard gauge theory of nuclear forces.

Research on Finsler geometry and applications in physics was developed in Poland (with certain collaborations with authors from Japan, Hungary, former USSR). Here we cite a series of works due to R. S. Ingarden, J. Lawrynowicz, and co-authors published beginning 1954, see \cite{ingarden54,ingardentamassy93,ingarden03,ingarden04,ingarden08,ingarden08a} and references therein. Such authors worked on long-term projects on Finsler geometry and physics including important issues on quantum Randers geometry and (non) associative/commutative gauge theories and Kaluza-Klein spaces. We also emphasize original constructions on statistical approaches and the geometric (including Finsler) structure of thermodynamics due to (R. Mrugala and co-authors, 1990-1992) \cite{mrugala90,mrugala92}. MDRs related to noncommutative and commutative Finsler geometry where studied in \cite{lukierski95} due to (J. Lukierski and co-authors, 1993).

\subsubsection{Finsler geometry and physics in Western Countries and Japan}

We outline in brief some most important achievements related to modified Finsler gravity theories beginning the (pre) Historic period till the Renaissance period stated in section \ref{assshclass}. In many cases, the geographic classification is quite conventional because of mobility of researchers and international collaborations\footnote{there were a number of international collaborations and examples when, for instance, certain students from a Country published works with their supervisors in another one, then obtained a PhD, changed their positions, and developed their research in a third Country}.

\begin{itemize}
\item \textbf{Germany, France, and Switzerland:} B. Riemann \cite{riem1854} and P. Finsler \cite{finsler18} can be considered respectively as the "grandfather" and "father" of Finsler geometry. They defended their famous theses in Germany, where the (pre)history of Finsler geometry and applications began. In France, E. Cartan elaborated the first complete model (with
metrics, N- and d-connections and adapted frames) of Finsler geometry in the book \cite{cartan35}, which stated the first "dynasty" for the "Finsler Kingdom". A number of geometers and physicists published in these Countries, and Switzerland, in various directions of Finserl geometry and applications in physics and mechanics. As activities which are typical both for the Late Middle and Dark Ages, we cite a series of papers due to H. G. Goenner, see \cite{bogoslovsky04} and references therein, and (V. Perlick, 1987) \cite{perlick87}. For elaborating axiomatic approaches to MGTs, it was important to characterize standard clocks by means of light rays and freely falling particles extended to the propagation of particles in Finsler gravity theories.
We cite here a series of works due to C. L\"{a}mmerzahl and co-authors (2009-2014), see \cite{laemmerzahl08,itin14} and references therein, on confronting Finsler spacetime with experiment and Finsler-type modification of the Coulomb law. Such works are important for elaborating experimental tests of various theories with MDRs and LIVs.
\vskip2pt

There are certain conceptual issues which may result both in theoretical and experimental ambiguities in the interpretation of locally anisotropic configurations with prescribed spherical/cylindric/toroid and other types higher symmetries in Finsler MGTs. For instance, a recent research is devoted to possible observable effects in a class of spherically symmetric static Finsler spacetimes, for geometric models of standard static spacetimes, Finslerian extensions of the Schwarzschild metric etc. \cite{laemmerzahl12,lin14,silagadze15,caponio16}. However, it is not clear how a generic anisotropic (generalized) Finsler gravity configuration could be with spherical symmetry and/or static because for a nontrivial N-connection structure any associated total space metric (\ref{offd}) is generic off-diagonal and depends at least on one spacetime (radial or time)
coordinate and one (co) fiber coordinate (velocity or momentum type). Such solutions must be at least stationary, and/or with certain generic anisotropic polarizations of constants, and dependencies of the metric coefficients on velocity/momentum coordinates. 
\vskip2pt

In a series of our works published during 2001-2017, see \cite{vapny01,vijmpd03,vijmpd03a,vjmp05,vijtp10,vcqg10,vijtp13,
vrev08,vijgmmp07,vijgmmp11,gvvepjc14,svvijmpd14,gvvepjc14,gvvcqg15,gheorghiuap16,bubuianucqg17,ruchinepjc17}, we studied black hole solutions in various classes of MGTs with Finsler modifications and nonholonomic manifolds for (non) commutative,
supersymmetric, fractional, stochastic, string and brane models etc. Our conclusion is that such exact or parametric solutions for nontrivial Finsler configurations could be with deformed horizons (for instance, black ellipsoids), ellipsoidal wormholes, parameterized by generic off-diagonal metrics. Such generalized locally anisotropic spacetime models cannot be static on the total (co) tangent Lorentz bundles, or on nonholonomic Lorentz manifolds, with nontrivial N-connection structure. Nevertheless, spherical symmetric configurations may exist and can be considered as self-consistently embedded into certain locally anisotropic background spaces with possible locally anisotropic sources. Such black ellipsoid and more general locally anisotropic spacetime and phase space configurations (with nontrivial topology, noncommutative parameters etc.) can be constructed as exact and/or parametric solutions for all classes of Finsler-Lagrange-Hamilton theories considered in this paper. It is possible to assign certain physical interpretation for such modified Einstein equations and their solutions if a well-defined axiomatic approach and physical principles are formulated (which is one of the main goals of this work). \vskip2pt

Two works \cite{dvgrg03,vdgrg03} on applications of the geometry of nonlinear connections and nearly autoparallel maps in GR and gauge gravity can be considered as an important application of the Finsler geometry methods. Such a collaboration between prof. H. Dehnen and author of this article was supported in 2001 by DAAD in Germany.\vskip2pt

We note that mathematicians in France and Switzerland elaborated on mathematical problems related to functional analysis, topology and Finsler geometry, see \cite{papadopoulos09,akbar88} and references therein. That was an activity characteristic for Classical Geometric Ages. The classical work on geometric mechanics published in Switzerland due to (J. Kern, 1974)
\cite{kern74} belongs conventionally to The (Early) Middle Ages and reflects a substantial influence of research o M. Matsumoto school in Japan (see next item). \vskip2pt

Works \cite{hohmann13,hohmann16,pfeifer11,pfeifer12} have been published by a new generation of authors working in Germany (or originating from and collaborating from Brazil, Estonia etc. with researchers in this Country) on Finsler gravity theories. Certain studies on causality structures related directly to nonlinear Finsler quadratic elements were elaborated. Nevertheless, those directions have not concluded yet in any geometric complete and self-consistent Finsler MGTs, with exact solutions and
perspectives for quantization. In our approach with (co) tangent Lorentz bundles and Lorentz manifolds of higher dimension, such programs have been partially realized by constructing physically important solutions, elaborating geometric quantization models, and offering various perspectives for quantization of MGTs.

\item \textbf{Japan:} There were established collaborations on Finsler geometry between Western Countries and Japan before the World War II. A. Kawaguchi (1937, 1952) \cite{kawaguchi37,kawaguchi52} published a series of important works on Finsler metrics and nonlinear connections, areal metrics and geometries. He was the editor of journal Tensor N. S. where a number of
mathematical papers on Finsler geometry and applications had been published during a period of more than 30 years. In certain cases, those works there were treated with scepticism by mathematicians and physicists working on canonical geometric models and standard theories of physics. In Japan, it was created one of the most influent schools on Finsler geometry in the World due to M. Matsumoto, see \cite{matsumoto66,matsumoto86} and references therein. The almost symplectic formulation of Finsler spaces, Lagrange and Hamilton geometry and higher order generalizations (with various ideas and discussions of perspectives to develop such models of modified and quantum gravity) can be found in M. Matsumoto and co-authors works. Here we cite a more recent collection of works with historical remarks \cite{bao07}.
\vskip2pt

Original unified (including supersymetric), modified gravity and gauge theories, with generalized conservation laws, were elaborated in a series of works due to (S. Ikeda, 1976,...,1981) \cite{ikeda76,ikeda77,ikeda78,ikeda79,ikeda81}. In an alternative way, lifts of geometric objects from base to Finslers spaces (a geometric formalism summarized in \cite{yano73}) were used for constructing models of Finslerian gravity with modified Einstein equations (H. Ishikawa, 1980, 1981)
\cite{ishikawa80,ishikawa81}. The author of this article was inspired by a series of works due to (Y. Takano and T. Ono, 1968-1993) \cite{takano68,takano74,takano83,ono90,ono93} devoted to the theory of fields (gravitational, gauge and spinors) in Finsler variables. In those papers, it was not given a definition of spinors in Finsler geometry and did not found solutions for a Finsler generalization of the Einstein-YM-Dirac equations. In a self-consistent mathematical form, geometric methods for constructing solutions of fundamental locally anisotropic spinor, gauge, and gravitational equations were elaborated in some series of works due to (S. Vacaru and co-authors), see
\cite{v87,v94,vog94,vgon95,vrm5,vrm1,vrm2,vrm3,vjmp96,vhsp98,vplbnc01,vncs01,vdgrg03,vvicol04,vmon98,vmon02,vmon06}
and references therein.

\item \textbf{The UK and Greece: }The problem of axiomatic formulation of unified field theories (including Finsler generalizations) was considered by D. A. Kilmister and G. Stephenson (1953, 1954) \cite{kilmister53,kilmister54}. A complete scheme for locally anisotropic gravity were proposed much later, see in (R. I. Pimenov, 1987) \cite{pimenov87}. Recently, a minimal set of principles for Einstein-Finsler gravity and locally anisotropic cosmology was analyzed in (S. Vacaru, 2012) \cite{vijmpd12} and in this work.
\vskip2pt

Physical gravitational effects in non-Riemann spaces and post Newtonian limits of Finsler theories for solar system were studied in a series of works due to (I. W. Roxburg and R. K. Tavakol, 1979, 1992) \cite{roxburgh79,roxburgh86}.\ Certain viability criteria for a class of Finsler spaces and gravity theories were studied in (R. K. Tavakol and N. Van der Bergh, 1986, 2009) \cite{tavakol86,tavakol09}.
\vskip2pt

Certain attempts to construct and study four dimensional spherically symmetric Finsler spaces for a generalization of vacuum Einstein were considered in (P. J. McCarthy and S. F. Rutz, 1993) \cite{mccarthy93,rutz93}. Those models do not provide formulas and solutions for self-consistent Finsler generalizations of the Einstein gravity with matter fields.
\vskip2pt

Studies of some special classes of Finsler spaces and a review of Landsberg spaces were elaborated by mathematicians, see (C. T. J. Dodson, 2006) \cite{dodson06} and references therein.
\vskip2pt

\textit{The Renaissance } period began with a series of works in collaboration with young-researches (at that time) originating and/or working at present in the UK, Italy, Greece, USA, etc. Those works were not written in and Finsler geometric language but studied MDRs and LIV, for instance, in the Liouville string approach to QG, doubly special relativity, spacetimes with short distance structure and (Planckian) length. Here we cite a series of works due to G. Amelino-Camelia, J. R. Ellis, N. E.
Mavromatos, D. V. Nanopoulos, S. Sarkar (1996-2002) \cite{amelino96,amelino98,amelino02,amelino02a,mavromatos07}. During last decade, N. E. Mavromatos and co-authors published a series of very important works on decoherence and CTP violation in string QG and stringy models of spacetime form; studied possible Lorentz-invariance violation and possible effects in gamma-ray astronomic, CPT-violations and leptogenesis induced by gravitational defects etc.
\cite{mavromatos10,mavromatos10a,mavromatos11,mavromatos13a,mavromatos13b}. We consider a common classification for research on Finsler MGTs in the UK and Greece because one of the most prolific authors in such directions originates from Greece but published a number of important papers with affiliations in the UK and CERN (N. E. Mavromatos). Such works contained a series of concepts and methods from Finsler geometry and emphasized the importance of locally anisotropic string and gravity configurations. There were cited mutually a series of former correlated results and methods elaborated by (S. Vacaru and co-authors, 1996-2007), see \cite{vrm2,vrm3,vap97,vnp97,vncs01,vmon98,vmon02,vmon06}. More recently, a series of papers due to (J. Magueijo and co-authors, 2004 - present) was submitted with the UK and international affiliations, see \cite{magueijo04,barcaroli15} and references therein. \vskip2pt

During last 40 years, an influent research team on Finsler geometry and applications in gravity and cosmology was distinguished due to P. Stavrinos from the Athens University and his co-authors (former students and international collaborations) \cite{stavr99,stavr04,stavr05,kouretsis10,kouretsis14,kouretsis14a}. Here, it should be mentioned an important collaborations between P. Stavrinos and S. Vacaru former groups in the R. Moldova and Romania which resulted in common publications of two monographs and a series of important papers with applications of Finsler geometry in physics, cosmology and astrophysics, see
\cite{vmon02,vmon06,svcqg13,svvijmpd14} and references therein. The Athens group on Finsler geometry and applications established certain collaborations on MGTs and cosmology with researchers working in the USA, China and other countries, see \cite{cai14}.
 \vskip2pt

Recently, a series of work with application of Finsler geometry to quantum information processing were published by authors (B. Russell and S. Stepney, 2013-2017) from York University in the UK and authors at Princeton University, USA, see \cite%
{russellb14,russellb14a} and references therein.

\item \textbf{The USA, Canada, Brazil and Mexico:} H. Rund's monograph \cite{rund59} published in 1959 began a conventional \textit{Middle Ages} period in developing Finsler geometry and applications. That book was translated into many languages and, for instance, inspired substantially G. Asanov to create a "Soviet" school of Finsler applications in physics
    \cite{asanov85,asanov88,asanov89,aringazin88}. Later, a renewed interest of S. -S. Chern \cite{chern48} resulted in the publication of a series of important mathematical works on Finsler geometry by his school (D. Bao and Z. Shen;
beginning 2000), see \cite{bao00,shen01,shen01a} and references therein. During last decade, Chern-Finsler geometry was applied for elaborating non-standard particle and gravity theories with applications in modern cosmology and astrophysics, see a series of work by Chinese authors \cite{chang08,chang12,li14,li14a,li16,lin14}. Nevertheless, such models are with metric noncompatible Finsler connections which results in a series of problems for the definition of spinors and self-consistent models for locally
anisotropic interactions, see critics in \cite{vplb10}.
 \vskip2pt

An original research on Finsler Kaluza-Klein gauge and relativistic field theories was performed by R. G. Beil (1993-2003) \cite{beil93,beil03}. Finslerian quantum fields and K\"{a}hler spacetime models on tangent bundles were studied in the USA by H. E. Brandt (1992-2012), see \cite{brandt92,brandt98,brandt99,brandt00,brandt03,brandt04} and references therein. His approach allowed to perform certain constructions for Lorentz-invariant quantum field locally anisotropic interactions on tangent
bundles. The main problems of the models elaborated by R. G. Beil and H. E. Brandt are those that they could not support their constructions with exact solutions and certain general quantum locally anisotropic theories involving the multi-metric/-connection character of such geometries. Here we note that J. Vargas and co-authors studied canonical connections of Finsler metrics and Finslerian connections on Riemannian metrics but those constructions have not resulted in complete physical theories, see \cite{vargas96} and other publications by this author. Nevertheless, such geometric models have not involved relativistic generalizations which would result in physically viable gravity and particle models with locally anisotropic interactions. %
\vskip2pt

A series of publications involving modifications of Finsler geometry and applications in modern physics belong to C. Castro (he also publishes under the name C. Castro Perelman), see his publications beginning 1995
\cite{castro97,castro07,castro08,castro08a,castro11,castro12,castro14,castro16}. He elaborated on various original constructions with Finsler and other types locally anisotropic structures in string gravity, noncommutative and nonassociative geometry, Born's deformed reciprocal complex gravity theories, Clifford group geometric unification etc. Even his first works were not based on N-adapted geometric formulas (which is important in Finsler geometry) the later ones were performed using advanced geometric methods with a number of mutual citations of S. Vacaru and co-authors works. It was established a strong collaboration and influence of ideas and geometric methods elaborated by these authors. Recent C. Castro papers are devoted to certain original modified gravity models on curved phase-spaces, Clifford space relativity, black hole entropy, modified dispersion relations etc. which is related to the paradigm of this article.
\vskip2pt

At present, a team of researchers in the USA, Brazil, Mexico, and other countries collaborating with and/or supervised by  professor V. A. Kosteleck\'{y} consists one of the most influent schools in the World on applications of Finsler geometry in modern physics, see \cite{bailey15,kostelecky11,kostelecky12,foster15,russell15} and references therein.  The most important publications of such authors are on phenomenological aspects of theories with MDR and LIVs which can be geometrized as modified Finsler spaces  \cite{silva15,silva16,reis16}.
\vskip2pt

In North America, a number of works were published on Finsler applications in biology and physics due to P. L. Antonelli and co-authors, see summaries of results in a series of monographs beginning 1993 \cite{anton93,anton96,anton03}. A number of collaborations with researchers from Canada, Japan, Romania, Brazil and other countries were established. Their two-volume handbook may be considered as a summary of a direction in Finsler geometry and applications related to the period of the Dark Non-Standard ages. Their theories had not involved relativistic Finsler constructions and modifications of standard gravity and particle theories. Nevertheless, it should be noted here that after P. Antonelli moved his research activity to Brazil a Finsler network for research on locally anisotropic MGTs was stated in South America. Here we mention papers by S. F. Rutz
\cite{mccarthy93,rutz93} and R. Gallego \cite{gallego06,gallego17} who published, for instance, on generalized Einstein equations and a Finslerian version of 't Hooft deterministic quantum models \cite{hooft96}. Lorentz-violation modifications of the standard model and GR are studied by M. Schreck and co-authors publishing at present from Brazil
\cite{schreck15,schreck15a,schreck16,schreck16a,reis16}. It was established a specific Finsler collaboration between physicists working in different periods in Canada, Brazil, Spain, the UK and Netherland. Finally, we note that and influent work on gravity's rainbow (involving Finsler like metrics) was published in 2004 by J. Magueijo and L. Smolin \cite{magueijo04} working
in Canada and the UK.
\vskip2pt

Finsler gravity and nonholonomic geometry methods were developed in the USA due to collaboration of D. Singleton with S. Vacaru and some  students from the R. Moldova, with support from California State University at Fresno in 2000-2002, and renewed in 2017, see typical publications \cite{vsbd01,vsjmp02} and references therein. There constructed in explicit form certain classes of locally anisotropic (ellipsoidal, cylindrical, bipolar and toroidal) wormhole and flux tube solutions in five dimensional gravity which can be redefined on (co) tangent Lorentz bundles. Recently, some support for such activities (oriented to constructing quasi periodic exact solutions in modified gravity and cosmology) was obtained from a business organization in Topanga, California \cite{vacaruepjc17}.

\item \textbf{Spain and Italy:} We cite here a series of works due to mathematicians (M. S\'{a}nchez, E. Caponio, M. A. Javaloyes and others) \cite{havaloyes13,caponio14,caponio16} collaborating on fundamental issues of causality of Finsler metrics, Zermelo' navigation and relativistic spacetimes, and attempts to define and construct static locally anisotropic configurations. Such configurations are at least stationary if they are derived as solutions of certain generalized Einstein-Finsler equations with nontrivial N-connections for d-metrics. Recent studies are devoted to the (non-)
uniqueness of linear connection structures in the Einstein-Hilbert-Palatiny approach which can be extended to modified Finsler gravity theories \cite{bernal16}. Locally anisotropic light cone physics, the Raychaudhury equation, and singularity theorems are studied for certain examples of Finsler spacetimes due to a series of recent works of E. Minguzzi
\cite{minguzzi15,minguzzi15a,minguzzi16}. Such constructions do not involve explicit examples of exact or parametric solutions in an Einstein-Finsler gravity model as it was considered in Refs. \cite{vjmp05,vijtp10,vcqg10}. %
\vskip2pt

In indirect and direct form, modified Finsler like gravity configurations can be considered in theories of relativity with short distance (Planckian) length scale, or in doubly special relativity and modified dispersion relations \cite{amelino02a,amelino14,girelli06} (see also citations therein on publication due to G. Amelino Camelia, F. Girelli and S. Liberati groups). Such theories can be related to MGTs with various f(R,T,....) and massive, bimetric, or biconnection modifications \cite{hassan12,nojiri07,nojiri13jcap,derham10,derham11,capozziello10}. A series of such works were published as collaborations of researchers from Spain, Romania and Germany
\cite{gvvepjc14,gvvcqg15,gheorghiuap16,elizalde15,vport13,vijgmmp14,vepjc14,vepjc14a,vvyijgmmp14a}.
\end{itemize}

\subsection{Developing Finsler geometry in Romania and R. Moldova}

\label{ssectromania} During a period of more than 80 years, the research on nonholonomic and Finsler geometry and applications developed by authors originating from, and/or working in, Romania and R. Moldova have an international impact and importance. As a consequence of the Hitler-Stalin pact which resulted in a long-term occupation of Bessarabia and North Bukovina by the Soviet Union, the conditions, method and style of research on geometry and physics (in partial, on Finsler MGTs) were very different in Romania and R. Moldova. After the anti-Ceau\c{s}escu revolution in Romania (1989) and crash of USSR (1991), one appeared new possibilities for collaboration and obtaining financial support for research from Western Countries. In Romania, there were
elaborated new directions in Finsler geometry with applications in geometric mechanics under the conditions that various bureaucratic and corruption constraints, and "higher order plagiarism", were imposed by the former "secret service guard for science", see footnote \ref{fn01a}. The bulk of new results were not published in top international mathematical and physics journals. Due to certain NATO, UNESCO, West governmental and private grants and fellowships, research on Finsler MGTs was performed during visits in Western Countries, which resulted in more than a hundred of scientific works published in high impact mathematical physics and applied mathematics journals.

\subsubsection{Research and activities before World War II}

Studies on nonholonomic manifolds and Finsler geometry began in Romania due to G. Vr\v{a}nceanu, see footnote \ref{fvranceanu}, and M. Haimovici fellowships in Western Countries. As young researchers, they worked under supervision, and/or collaboration, by famous mathematicians like D. Hilbert, E. Cartan, T. Levi-Civita etc. Later they were elected as members of the Romanian Academy of Sciences. After the Ward Wold II, their research was summarized, for instance, in monographs \cite{vranc57,haimovichi84}, censored by the communist regime. More than 300 research papers and tenths monographs published by G. Vr\v{a}nceanu and M. Haimovici were reviewed in MathSciNet and Zentralblatt MATH. It is important to consider reviews in such databases because they summarized a number of original ideas and results published in Eastern European journals which are almost not accessible to Western researchers.

It should be noted that the geometric schools supervised by G. Vr\v{a}nceanu and M. Haimovici developed a "pure" mathematical activity which was less influenced by communist ideology. In general, they followed Western standards for science. As members of the Academy and authorized leaders of such groups, they had access to international scientific literature and were allowed to participate in International Conferences and publish in Western Journals. Both authors contributed substantially in developing a number of directions in geometry and applications and selected and supervised a number of young researchers to work on nonholonomic and Finsler geometry. Later, the communist regime imposed a style of isolation of research even for internationally recognized mathematicians.

Review of mathematical works and certain applications elaborated in Romania during the Classical Period can be found in some monographs due to A. Bejancu \cite{bejancu90,bejancu03} published in 1990 and 2003. New directions of research and main references on Finsler methods in physics and biology are summarized in details in \cite{vmon98,vmon02,vmon06,anton93,anton96,anton03}.

\subsubsection{Ceau\c{s}escu's communist dictate and Finsler geometry}

Prof. Aurel Bejancu began his research activity in Ia\c{s}i before 1980 and held positions at the Polytechnic University. His first important for applications in physics publications were as preprints "Seminarul de Mecanica, University of Timi\c{s}oara". The main results of those papers were included and developed in monograph \cite{bejancu90}. A. Bejancu is the author of a series monographs published by prestigious Western Editors (as well some tenths of high impact papers) and can be considered as the most prominent Romanian mathematician who contributed to development, generalizations and applications of Finsler geometry after World War II. He
obtained fellowships in Western Countries and moved hist activity to Kuwait University. That allowed him to survive and later avoid conflicts  with Radu Miron's school.\footnote{From a formal point of view, R. Miron and co-authors published much more books and papers on Finsler geometry and applications than any international and/or Romanian scholar. Unfortunately, there are a number of moral and legal issues with those works directed and supervised by the communist secret service in a
 "half-plagiarism and half-slavery" style. It was a specific policy to hide and do not publish in Western Countries that activity. Contrary, A. Bejancu worked in Romania almost individually and later, in Kuwait, collaborated with internationally recognized scholars publishing his works with top Western Editors.} A. Bejancu elaborated on original Finsler geometry models with supersymmetric variables, Kaluza-Klein and gauge like theories of Finsler gravity. His results were not much known in the physical literature because the constructions were performed in a non-relativistic form, without solutions and quantization of locally anisotropic field equations.

The most ambiguous was the activity of R. Miron's school on Finsler gravity and applications as we mentioned in footnote \ref{fn01a}. For Western Scientists, it is known that the Stalin/Soviet regime imposed a style of ideological screening, falsification and stealing of scientific results from Western Countries and even from scientists not validated by communist censors. In 1968, it was an opening to international research collaboration after the dictator I. Ceau\c{s}escu refused to participate in the military invasion of Czechoslovakia by the Soviet block. The style of pseudo-research on Finsler geometry developed by R. Miron's school was that a number of ideas and preprint publications at universities in Japan, Romania and other countries were taken and modified without a corresponding citation. There were published more than 30 monographs and more than 250
papers under the name of R. Miron (much more than most important authors on geometry and physics in Western Countries), see
\cite{miron87,miron94,mironatanasiu94,mironatanasiu96,miron00,miron03} and references therein. Such monographs were not grounded on important authors' publications in top journals of the American / European mathematical and physical societies but contained a number of results elaborated and modified by unknown authors in Romania. It is a problem how to cite such works and distinguish original contributions from "higher order plagiarism", quasi-corruption and communist ideology screening. In the past, the author
of this article cited and highly appreciated and cited R. Miron and company works in Refs. \cite{vmiron98,vgrg05,vmon98,vmon02,vmon06} but have to reconsider his former scientific evaluation on Finsler geometry and
applications in Romania. Beginning 2008, one have been published some files and newspapers in Western Countries and Romania how Ceau\c{s}escu's secret service supervised scientific activity via their agents at the Academy of Science and Universities administration.

Finally, we note that in MathSciNet there are reviewed a number of works published in Romania and other countries by authors belonging/collaborating/related with the so-called R. Miron's "school" on Finser geometry and applications. Nevertheless, such mathematicians published individually, or with other co-authors, certain important works related to geometric methods in modern physics and mechanics. We cite such monographs and articles due to M. Anastasiei \cite{avjmp09,anton93,anton96}, C. Arcu\c{s} \cite{arcus14}, G. Atanasiu \cite{mironatanasiu94,mironatanasiu96},V. Balan and M. Neagu \cite{balan16}, I. Bucataru \cite{bucataru07}, G. Munteanu \cite{munteanu04}, V. Oproiu \cite{oproiu85}, C. Udri\c{s}te \cite{udriste02}, N. Voicu \cite{voicu17} etc.

\subsubsection{Support for researchers from the R. Moldova and geometry and physics}

\label{assrm}

During 1998-2001, the (neo) communist authorities tolerated an attempt to organize at the Institute of Applied Physics, the Academy of Sciences of the R. Moldova, a research group on geometric methods and applications in modern physics (see also comments on Directions 1, 2, 3, and 5; in respective appendices \ref{sssdir01},  \ref{sssdir02},  \ref{sssdir403}, and \ref{sssd05}, on related author's research activity during 1978-1997). More than ten young researchers were involved in that activity in the capital of that former Soviet republic, Chi\c{s}inau/Kishinev city. Main publications of that group were due S.\ Vacaru, S. Ostaf, Yu. Goncharenko, E. Gaburov, D. Gontsa, N. Vicol, I. Chiosa etc. \cite{v94,vog94,vgon95,vrm5,vrm1,vrm2,vrm3,vrm4,vrm5} (such works were published in the R. Moldova and Romania in almost not accessible journals but reviewed in MathSciNet and Zentrallblat), and \cite{vcb1,vgon95,vcb2,vjmp96,vstoch96,ve2,vap97,vnp97,vexsol98,vmon98,vapl00,vapny01,vplbnc01,vncs01,vjhep01,vpcqg01,
vsbd01,vmon02,vtnpb02,vsjmp02,vijmpd03,vijmpd03a,vdgrg03,vcb3,vvicol04}(series of important papers and high impact international journals and two monographs on theoretical and mathematical physics and gravity), see also references therein. In 2001, the new elected communist government of that Country persecuted directly on political grounds a number of researchers of Romanian ethnic origin and deported S. Vacaru's family out of R. Moldova. In
those actions supervised by the Russian secret service, there involved leaders and administrative workers of the Academy of Sciences of R. Moldova like A. M. Andrie\c{s}, V. A. Moscalenko, S. A. Moskalenko, M. K. Bologa, L. Kulyuk (Culiuc), V. Kantser (Can\c{t}er), D. Digor, and others.  During research activity on "Finsler geometry and applications", a series of fundamental human and intellectual rights of young researchers were violated because they accepted (without authorization by the communist regime) some NATO, DAAD and UNESCO fellowships with visits at ICTP Trieste, Italy, I. Newton Institute at Cambridge University, and Konstanz University in Germany.\footnote{The author of this article considers that it is important to inform researchers in Western Countries on the conditions of work and activity of scientists in the former Soviet Union and Eastern Europe. Even his opinions and evaluations could be very subjective, it is important to have evidence and understand how certain research under control of communist officials resulted in new results in Finsler geometry and applications.} A number of ideas, methods and results on elaborating physical theories on nonholonomic manifolds and (co) tangent Lorentz bundles, with various supersymmetric and noncommutative generalizations and using Finsler methods, were originally proposed and realized in the R. Moldova. Those publications were related to more than 20 long term research programs, see \ref{ass20directions} and fellowships (hosted by Universities and Research Institutes in USA, Germany, Canada, Spain, Portugal, Greece, Romania etc. which resulted in more than 150 publications and 3 monographs).

Here we emphasize that a number of professors in Western Countries (P. Stavrinos, Greece; D. Sigleton and S. Rajpoot, USA; N. Mavromatos, UK; H. Dehnen, Germany; J. P. Lemos, Portugal; M. de Leon, Spain; J. Moffat, Canada) provided important recommendations and supported collaborations which were crucial for activity of research on Finsler geometry and applications in R. Moldova and obtaining new positions in Western Countries and Romania. It was also certain financial support from R. M. Santilli and K. Irwin (some private business organizations in the USA with certain interests in mathematics and physics) but unfortunately, those collaborations imposed certain pseudo-scientific flavour contrary to S. Vacaru's research programs.

It should be noted that certain research on Finsler geometry and applications was performed in the R. Moldova by authors not belonging to S. Vacaru's research group. Here we cite a monograph \cite{asanov88} published in Russian (Kishinev /Chisinau, 1988) by S. F. Ponomarenko together with his Ph. D supervisor, G. Asanov, from Moscow. Some works on Finsler methods and systems of nonlinear partial differential equations were published by V. Dryuma and M. Matsumoto, see \cite{dryuma98,dryuma00} and references therein.

The first collaborations between researchers on geometry and physics in the R. Moldova and Romania were established at a Romanian National Conference on Physics (Cluj-Napoca, October 1990), which was possible after the anti-Ceau\c{s}escu revolution. Author's talk was on nonholonomic geometric methods for elaborating a diagram techniques for minisuperspace twistor quantum cosmology was published in Studia Universitatis Babes-Bolyai, Cluj-Napoca, Romania, \cite{v90} (reviewed in MathSciNet). During 1991-1992, the research on gravity and Finsler geometry at the Academy of Sciences of R. Moldova was re-oriented to a more than 10 years collaboration with Romanian professors and senior researchers at Ia\c{s}i and Bucharest (I. Gottlieb, C. Mociu\c{t}chi, M.Vi\c{s}inescu, G. Zet, M. Piso, H. Alexandrescu, R.\ Miron, M. Anastasiei and others). That collaboration was possible due to substantial support by Cleopatra Mociu\c{t}chi at Al. I. Cuza Ia\c{s}i University (she originated from Bessarabia, at present R. Moldova), V. Manu, and M. Vi\c{s}inescu at the Institute of Atomic Physics, Bucharest-Magurele. Having support from Western Countries and Romania, the author of this review was able to elaborate on a series of new directions on generalized Finsler geometry and applications (see next subsection). We cite here a series of works elaborated in the R. Moldova during 1990-2001: \cite{v94t,vgon95,vrm1,vrm2,vrm3,vrm4,vrm5,vjmp96,vstoch96,vcb2,ve2,vcp1,vap97,vnp97,vmiron98,vhsp98,vmon98,anton96,vgon95,
vncs01,vjhep01,vpcqg01,vsbd01,vtnpb02}. Unfortunately, that collaboration had a number of difficulties and stopped because of control of corruption and control of former "security" administrators in Romanian science like Luci Popa and D. Ha\c{s}egan (from the Institute of Space Sciences, Bucharest-Magurele) and Radu Miron (Ia\c{s}i).

\subsection{Twenty "Moldavian" research directions in generalized Finsler geometry and applications}
\label{ass20directions}

We present a synopsis of \textbf{Twenty Main Research Directions, and respective sub-directions,} of author's research activity, collaborations, and main publications. The goal of this subsection is to describe in brief the conditions for scientific activity in the post-Soviet R. Moldova and review twenty main directions of author's research related to Finsler geometry and applications in modern physics. There will be also discussed further important developments, support from Western Countries, and collaborations.

\subsubsection{Nonlinear gravitational - electromagnetic optics, twistors and gauge gravity}
\label{sssdir01}
For Direction 1, there are distinguished such sub-directions and important results (based on publications \cite{v83,v87,v90,v94t}):
\begin{itemize}
\item[a)] Tensor models of locally anisotropic interactions of nuclear matter

\item[b)] Nonholonomic constraints in nonlinear dispersive media and nonlinear optics

\item[c)] Parametric multi-photon graviton processes

\item[d)] Nonlinear interactions in locally anisotropic media

\item[e)] Gauge models of gravity and twistors
\end{itemize}

\textit{Comments:} In 1978, the author of this article studied optionally a series of textbooks and monographs written/ translated in Russian on differential and Finsler geometry, when he began his research activity as an undergraduate student on nuclear physics at Tomsk Polytechnic University in former USSR/ Russia. The goal of that program was to model geometrically some examples of locally anisotropic interactions of nuclear matter and radiation processes in media with dispersion. There were not considered gauge models of strong interactions (quantum chromodynamics, QCD) but only nonlinear locally anisotropic generalizations for scalar, spinor and electromagnetic interactions. Later, certain methods of nonholonomic geometry were applied in his university diploma thesis (an equivalent of a master thesis in Western Countries) performed at the Laboratory of Nuclear Problems, Joint Institute of Nuclear Research in Dubna, Russia. The aim of that graduation work and a further program for a young researcher activity (supervised by habilitation doctor A. F. Pisarev during 1980-1984) was to study multiphoton-graviton resonance processes and parametric amplification in gaze like and locally anisotropic media and with modified dispersion relations. There were also elaborated models of "grasers" (gravitational analogues of lasers) using methods of nonlinear optics, geometry, and quantum theory in condensed matter physics, see \cite{v83} and references in that collection of works (such works are listed in inspirehep.net). Certain models of locally anisotropic cosmic radiation (considered by G. A. Asanov and co-authors at Moscow State University, and published a series of works on such applications of Finsler geometry, see references in \cite{asanov85,asanov88,aringazin88}) were studied by S. Vacaru in 1982 when he began his job as an adjunct engineer at the Astrophysical Observatory (in Ciuciuleni-Lozova, district Nisporeni) of the State University of the R. Moldova. \vskip2pt

Another research program involving nonholonomic geometric methods was on twistor-gauge methods. It was authorised by the faculty administration for S. Vacaru when he became a post graduate student at the Department of Theoretical Physics at
"M. Lomonosov" Moscow State University during 1984-1987. Geometric models of gravity and gauge theories on complex manifolds and spinor and twistor spaces were elaborated using the formalism on nonholonomic distributions in the sense of G. Vr\v{a}nceanu \cite{vranc57} and other works on nonholonomic mechanics and classical and quantum field interactions. A typical such publication was \cite{v87} on a twistor-gauge interpretation of the Einstein-Hilbert equations generalized in order to include bi-metric theories. It was not possible for S. Vacaru to defend his Ph. D thesis at Moscow (supervisor prof. D. Ivanenko and co-supervisor dr. Yu. N. Obukhov) because of various political, human right activity and historical issues related to "perestroika" and "crash" of the Soviet Union. He had to translate his thesis from Russian into Romanian and complete with new results on gauge models of spin glasses, high temperature superconductivity, and minisuperspace twistor quantum cosmology \cite{v90}. Those results were obtained during his work as a young researcher at the Institute of Applied Physics, Academy of Sciences of R. Moldova (during 1988-1992); master student studies (with second university specialization) at the Department of Radiophysics, "T. Shevchenko" Kiev State University (during 1988-1989); and postgraduate student studies at the Physics Department, University "Al. I. Cuza" at Ia\c{s}i, Romania (supervisor prof. Ioan Gottlieb). S. Vacaru's Ph. D thesis \cite{v94t} was defended and validated by the Romanian Ministry of Education in 1994. \vskip2pt

It should be emphasized here that it was organized a specific policy for selecting young researchers in the former USSR who could  be allowed to perform individually certain research activities related to new directions of mathematics and modern physics. Supervision of such activities by senior scientists was quite formal. For S. Vacaru, the priority was given to gravitational waves, gauge theories of gravity, spinor and twistor geometry). In parallel, he became familiar with international advances of Finsler geometry and applications in physics after discussions at Moscow with A. K. Aringazin
\cite{aringazin88} and S. F. Ponomarenko \cite{asanov88} - at that time, they were postgraduate students of G. Asanov. Those young researchers provided copies of a series of monographs on Finsler geometry due to M. Matsumoto \cite{matsumoto86} and A. Asanov \cite{asanov85,asanov88}. In 1992, S. Vacaru studied the Romanian version (published in 1987) of the R. Miron and M. Anastasiei monograph (translated and up-dated for Kluwer in 1994, \cite{miron94}) which was a conventional review (with a communist style political and ideological screening of authors) of a number of works elaborated by authors in Japan, USSR, Romania, Hungary and other countries.%
\vskip2pt

So, the beginning and young research activity of the author of this article was on geometric methods in physics, when he was undergraduate/graduate/ master/postgraduate student in the former USSR and Romania during 1977-1994. The main conclusion for that period with publications \cite{v83,v87,v90,v94t} on the mentioned sub-directions a)- e) was that relativistic models of locally anisotropic continuous media, gauge gravity theories, definitions of twistors for curved spacetime etc. could be
formulated in certain forms closed to GR and standard particle physics but extending the geometric constructions on nonholonomic manifolds and Finsler models on (co) tangent Lorentz bundles.

\subsubsection{Locally anisotropic (non)commutative gauge gravity and perturbative quantization}
\label{sssdir02}

Direction 2 was structured into four sub-directions following a series of publications
\cite{v87,vmon98,vmon02,vmon06,vrm2,vrm3,vgon95,vplbnc01,vjmp05,vcqg10,dvgrg03,vdgrg03,vijgmmp10,vijgmmp10a,vch2416}:
\begin{itemize}
\item[ a)] Affine and de Sitter models of gauge (super) gravity

\item[ b)] Gauge like models of Lagrange--Finsler and Hamilton-Cartan gravity

\item[ c)] Locally anisotropic gauge theories and perturbative quantization

\item[ d)] Noncommutative gauge gravity on (co) tangent Lorentz bundles and supermanifolds.
\end{itemize}

\textit{Comments:} This direction can be considered as a development of research outlined in some chapters and sections in author's Ph. D thesis \cite{v94t}, where affine and de Sitter gauge like models were studied in the framework of the so-called twistor-gauge formulation of gravity. There were elaborated models with one metric structure but with nonholonomic
(non-integrable) twistor configurations and (alternatively) bi-metric structures when a (background) metric allows a definition of integrable twistor structures. Such constructions were elaborated originally in \cite{v87,v90}. The first publications on anisotropic gauge gravity theories were in R. Moldova \cite{vrm2,vrm3} (1994-1996), see also a paper in a peer reviewed journal, together with a graduate student, Yu. Goncharenko (1995) \cite{vgon95}. Authors were allowed by the (neo) communist officials in R. Moldova to present their results for publication in  Western Journals only beginning 1994. The geometric constructions in those works were based on the idea that the Einstein equations can be equivalently reformulated as some Yang-Mills equations for the affine (we used the Popov--Dikhin approach, 1976) \cite{popov75,popov76} and/or de Sitter frame bundles (A. Tseytlin generalization, 1982) \cite{v87,v90}. In those works, there were considered nonlinear realizations of corresponding gauge groups and well-defined projections on base spacetime (non) commutative manifolds, or supermanifolds.
\vskip2pt

To formulate (non) commutative and/or supersymmetric gauge theories of Lagrange-Finsler gravity, we used the Cartan connection in corresponding generalizations of the affine and/or de Sitter bundles. The Cartan connection in such bundle spaces should be not confused with the Cartan d-connection in Finsler geometry. Here we note that the constructions with
the Cartain d-connection on a nonholonomic manifold and/or (co) tangent bundles can be lifted on total spaces of (in general, complex) vector bundle and/or associate frame affine/ de Sitter bundle spaces. In result, we can construct induced affine/ de Sitter d-connectoin structures in total space. As base spaces, there were considered Finsler (super) manifolds and various
anisotropic generalizations including constructions for nonholonomic higher order tangent/vector superbundles. Such results were summarized in a series of chapters of monographs \cite{vmon98,vmon02,vmon06} and presented at a NATO workshop in 2001, see \cite{vncs01}. \vskip2pt

Re-writing the Einstein gravity and generalizations as gauge like models allowed this author to perform one of the most cited his works \cite{vplbnc01}. That paper was devoted to nonholonomic Seiberg--Witten transforms and noncommutative generalizations of Einstein and gauge gravity. The gauge gravitational equations with noncommutative deformations can be
integrated in very general off-diagonal forms \cite{vjmp05}; see also Ref. \cite{vcqg10} on noncommutative Finsler black hole solutions. \vskip2pt

It should be mentioned here an important collaboration with Prof. H. Dehnen (Konstanz University, Germany, 2000-2003) supported by DAAD. The research was on higher order Finsler-gauge theories, generalized conformal and nearly autoparallel maps and conservation laws, see \cite{dvgrg03,vdgrg03}. A recent approach to two-connection perturbative quantization of gauge gravity models \cite{vijgmmp10,vijgmmp10a} had been elaborated. All results outlined in above sub-directions can be considered on base (super/noncommutative) spaces endowed with generalized Finsler-Lagrange-Hamilton structures. It is possible to integrate such generalized gauge gravitational models for Lagrange-Hamilton commutative and noncommutative/nonassociative variables, see a recent work \cite{vch2416}. The two connection renormalization can be compared with certain non-perturbative constructions in loop gravity, asymptotic freedom of GR and MGTs.

\subsubsection{Nonholonomic Clifford structures and spinors for Finsler-Lagrange-Hamilton spaces}
\label{sssdir403}
There were distinguished such sub-directions of Direction 3 (see publications
\cite{vrm5,vcb3,ve2,vjmp96,vhsp98,vvicol04,vmon98,vmon02,vmon06,vjmp09,vpcqg01,vtnpb02,vhthes12,vaaca15}):

\begin{itemize}
\item[a) ] Spinors and Dirac operators on generalized Finsler spaces

\item[b) ] Nonholonomic Clifford structures with nonlinear connections

\item[c) ] Spinors and field interactions in higher order anisotropic spaces.

\item[d) ] Solutions for nonholonomic Einstein--Dirac systems and extra dimension gravity
\end{itemize}

\textit{Comments:} A nonholonomic Lorentz manifold and/or it (co) tangent bundle space are by definition enabled with a nonholonomic (equivalently, anholonomic, or non-integrable) distribution defining a N-connection structure. The problem of definition of spinors and Dirac operators on nonholonomic manifolds and/or Finsler-Lagrange spaces was not solved during
60 years after first E. Cartan's monographs on spinors in curved spaces and Finsler geometry (during 1932--1935). A series of works (beginning 1995, due to S. Vacaru and co-authors) \cite{vrm5,vcb3,ve2,vjmp96,vhsp98,vvicol04} were devoted to the geometric definition of locally anisotropic spinors and Dirac operators and elaboration of physical models with
Finsler-Lagrange-Hamilton spaces. The monographs \cite{vmon98,vmon02,vmon06} and recent papers \cite{vjmp09,vpcqg01,vtnpb02,vhthes12,vaaca15} contain a series of chapters and sections devoted to nonholonomic spinors and twistors and developments for supergravity and noncommutative geometry with applications in modern physics of the so-called nonholonomic Clifford geometry.
\vskip2pt

The first attempts to introduce spinors in Finsler geometry were made in a series of works due to Takano and Ono, in Japan, and Stavrinos, in Greece; see main references and historical remarks in \cite{ono90,ono93,takano83}. The idea was to consider two-dimensional spinor bundles on Finsler spaces defined by generating functions depending on spinor and dual spinor
variables. Nevertheless, those works were not complete because they had not answered the question if and when spinors of arbitrary dimensions could be defined for Finsler spaces. For instance, there were not studied possible relations between Clifford generations (generalized Dirac matrices) and Finsler metrics, or related Hessians. There were not provided self--consistent definitions of Dirac operators for Finsler spaces and there were not studied definitions of N-adapted Clifford algebra structures, and spin d-operators, to Finsler metrics and connections. \vskip2pt

In 1994, the author of this article became interested in elaborating theories of gravitational and matter field interactions on generalized Finsler spaces. The problem was considered in a more general context related to G. Vr\v{a}nceanu's definition of nonholonomic manifolds. The paper \cite{vrm5} contains the first self-consistent definition of Clifford structures
and spinors for Finsler spaces and generalizations. The geometric and physical theories were formulated in a more rigorous form (with developments for complex and real spinor Lagrange--Finsler structures and Dirac operators adapted to N-connections) were published also in 1996, see \cite{vjmp96}. In that year, it was approved a special research grant Romania-R. Moldova, supported by the Ministry of Research and Education of Romania. The (neo) communist government of R. Moldova prohibited that collaboration and the Romanian grant was affiliated with Al. I. Cuza University at Ia\c{s}i. Main
research was performed by S. Vacaru in Chi\c{s}inau, R. Moldova, with an administrative supervision by R. Miron, from Ia\c{s}i, Romania. It was bought a computer for the Moldavian team but 90\% of financial resources were taken by the R. Miron team. In result, any further collaboration was stopped by S. Vacaru in 1998 (who decided to not involve his group in any
corruption academic activities in Romania). \vskip2pt

Having elaborated the concept of nonholonomic Clifford bundles and Clifford-Lie algebroids \cite{vrm5,vcb3,ve2,vjmp96}, it was possible to construct geometric models of gravitational and field interactions on (super) spaces with higher order anisotropy \cite{vhsp98,vvicol04,vmon02}. There were obtained important results on formulating differential spinor
geometry and supergeometry studied possible applications in high energy physics. Two professors from Greece (P. Stavrinos and G. Tsagas) and some young researchers from R. Moldova\footnote{at that time a under-graduate and a postgraduate students, N. Vicol and I. Chiosa}, see \cite{vncs01,vmon98,vvicol04}, and Romania\footnote{postgraduates F. C. Popa and O. \c{T}in\c{t}\v{a}reanu-Mircea} supervised by M. Vi\c{s}inescu, see \cite{vpcqg01,vtnpb02}. Those works were summarized in monographs \cite{vmon02,vmon06} containing a series of new results on locally anisotropic Dirac spinor waves and solitons, spinning particles, in Taub NUT anisotropic spaces, solutions for Einstein--Dirac systems in nonholonomic higher dimension gravity and Finsler modifications of (super) gravity. \vskip2pt

Using N-adapted Dirac operators on generalized Finsler spaces, it was possible to elaborate models of noncommutative Finser geometry and gravity and generalized G. Perelman functionals to describe evolution of noncommutative geometries defined in the approach defined by A. Connes \cite{vjmp09}.

\subsubsection{Nonholonomic gerbes and algebroid structures in MGTs}

For Direction 4, the main results were on (see publications \cite{vmathsc07,vjmp06,vbrasov05,vindia05,valgebroid05}):

\begin{itemize}
\item[ a)] Gerbes and Finsler-Lagrange-Hamilton spaces

\item[ b)] Exact solutions with nonholonomic algebroid structure in gravity
and geometric flow theories

\item[ c)] Lie algebroids and gauge gravity

\item[ d)] Clifford algebroids and noncommutative geometry
\end{itemize}

\textit{Comments:} There were elaborated four sub-directions related to topological nontrivial constructions for nonholonomic and locally anisotropic (Finsler) spinors, exact solutions and (non) commutative Finsler-Lagrange-Hamilton geometry. Papers \cite{vindia05,vbrasov05} (the first one was with participation of J. F. Gonzalez--Hernandez; in 2005, he was a master a student from Madrid, Spain) were devoted to the geometry and applications in physics of nonholonomic gerbes, Clifford-Finsler structures and index theorems. Article \cite{vjmp06} formulated the theory of nonholonomic Clifford and Finsler-Clifford algebroids. It also contain proofs of main theorems on properties of indices of connections in such spaces. There were defined the Dirac operator on Lie and Clifford algebroid spaces. Such constructions can be generalized on (co) tangent Lorentz bundles and/or nonholonomic manifolds. For instance, a perspective sub-direction is that on generalized gauge gravity models with generalized gerbes and/or algebroid symmetries. \vskip2pt

A series of preprints (see references in \cite{valgebroid05}) are devoted to examples of exact solutions with algebroid symmetries in MGTs. That sub-direction has perspectives in constructing locally anisotropic black hole and cosmological solutions with the pattern forming structure. Mechanical models with Lie algebroid symmetries on Lagrange-Hamilton spaces are studied in \cite{vmathsc07}. \vskip2pt

Author's research program on nonholonomic gerbes, Lie algebroids and applications began in 2005 during his visiting professor fellowship at IMAF, CSIC, Madrid, and in a result of certain seminars and discussions with professors. M. de Leon (Madrid) and R. Picken (Lisbon). In 2006, during a visit at Fields Institute, Toronto, Canada, the author of this work established a collaboration and elaborated a further research program for prof. M. Anastasiei and his post-graduate student C. Arcu\c{s} at Al. I. Cuza University at Iasi, Romania. That program on mathematical structures got certain financial support from the Ministry of Educations and Science of Romania (2006-2009). Later, certain research on nonholonomic Lie algebroid
configurations and Finsler geometry was performed partially by (at that time post-graduate student) dr. Constantin Arcu\c{s}. He also collaborated and published a series of works with Dr. E. Peyghan from Iran, see \cite{arcus14,arcusjmp14,arcus15} and references therein. Nevertheless, that research has not been concluded yet with (perspectives) applications in modern
quantum field theory and gravity. This is a perspective for further investigations. \vskip2pt

Finally, in this direction, it should be noted that the constructions for the nonholonomic Diract operators were applied for definition of noncommutative Finsler spaces and Ricci flows in A. Connes sense, see details in Ref. \cite{vjmp09} and in Part III of monograph \cite{vmon06}. Recently, geometric flows of of almost K\"{a}hler and Lie algebrod structures, encoding Finsler-Lagrange-Hamilton theories, were studied in \cite{vmjm15} and references therein.

\subsubsection{ Nearly autoparallel maps, twistors, and conservation laws in Finsler-Lagrange spaces}
\label{sssd05}

For Direction 5, there were developed four sub-directions (see publications \cite{ve2,vcb2,vcb1,v94,v94t,vdgrg03,dvgrg03,vmon98}):

\begin{itemize}
\item[ a)] Generalized geodesic maps and classification of Finsler-Lagrange-Hamilton spaces

\item[ b)] Nearly autoparallel maps, na-maps, and conservation laws in GR and MGTs

\item[ c)] Nearly autoparallel flat spaces and non conformal flat twistor spaces

\item[ d)] Supersymmetric na-maps
\end{itemize}

\textit{Comments:} The geometry of geodesic and conformal maps of Riemannian spaces and some examples of generalizations for Einstein and metric-affine spaces were studied by \ H. Weyl and A. Z. Petrov \cite{petrov71}. In an almost complete form it was formulated in a monograph published in Russian by N. Sinyukov in 1979 \cite{sinyukov79}. The author of this was was
interested in physical applications of generalized geodesic and conformal maps beginning 1992, see \cite{ve2,vcb2,vcb1,v94,v94t,vdgrg03,dvgrg03,vmon98}. Such models can be characterised by certain classes of generalized map
equations and corresponding Thomas and Weyl invariant values. In principle, any two (pseudo) Riemannian spaces (in particular, one of them being an Euclidean / Minkowski space) can be related to a set of nearly autoparallel maps (na-maps). Some authors call such maps also as nearly geodesic, or almost geodesic, maps \cite{mikes08}. \vskip2pt

Some chapters of author's Ph. D thesis \cite{v94t} (see supersymmetric developments in \cite{vmon98}) were devoted to na-transforms and definitions of corresponding invariants and conservation laws for spaces with nontrivial torsion, endowed with spinor/twistor/supersymmetric/noncommutative structures etc. Supersymmetric / complex manifolds and maps can be encoded
into certain map geometries of nonholonomic manifolds. This is an open direction for further geometric research and applications in physics. For instance, the geometry of spinors and twistors for curved spaces was studied in Ref. \cite{ve2} using nearly autoparallel and generalized conformal maps. Local twistors were defined on conformal flat spaces and mapped via
generalized transforms to more general (pseudo) Riemannian and Einstein spaces. The key result was that even the twistor equations are not integrable on general curved spaces such couples of spinors structures can be defined via nonholonomic deformations and generalize nearly autoparallel maps. Following this approach, we can consider analogs of Thomas invariants
and Weyl tensors (in certain generalized forms, with corresponding symmetries and conservation laws). The constructions were generalized for Lagrange and Finsler spaces \cite{vcb2} - it was a collaboration with a former author's student in R. Moldova, S. Ostaf. \vskip2pt

There are relevant certain results published in \cite{vcb1,vcb2} in collaboration with the Ph. D superviser in Romania, Prof. I. Gottlieb, devoted to the A. Moor's tensor integrals \cite{vcb2}. Here we note that an article on tensor integration and conservation laws on nonholonomic spaces was published by S. Vacaru in R. Moldova, see \cite{vrm4}. \vskip2pt

For explicit applications of na-maps in particle physics and field theories, there are cited the articles \cite{vdgrg03,dvgrg03} (a collaboration of S. Vacaru with Prof. H. Dehnen, supported by DAAD). In those works, generalized
geodesic and conformal maps were considered in (higher order) models of Finsler gravity and gauge and Einstein gravity. \vskip2pt

The theory of n-maps was formulated for supermanifolds enabled with N-connection structure. The results on supersymmetric na-maps and applications in supergravity and superstring theories were summarized in two chapters of monograph \cite{vmon98}.

In 2013, S. Vacaru met at a geometric conference prof. J. Mike\v{s} who gave a copy of monograph published in 2008
 due to  J. Mike\v{s}, V. Kiosak and A. Van\v{z}urov\'{a}) on  geodesic mappings of manifolds with affine connection, see \cite{mikes08}. That research began more than 35 years ago under the supervision of N. Sinyukov in Odessa, USSR/ Ukraine and concluded at Faculty of Science, Palack\'{y} University, Olomouc, Czechia. Only the results on na-maps and applications performed by S. Vacaru's group were dubbed in arXiv.org and peer-reviewed Western Journals. Unfortunately, it was not organized any collaboration of the mentioned Czech and Moldo-Romanian teams and they worked and published independently without mutual citations.

\subsubsection{Locally anisotropic (non) commutative gravity in low energy limits of string theories}

There were elaborated such sub-directions of Direction 6 (see publications \cite{vap97,vnp97,vmon98,vencyc03,vcb3,vncs01,
vpcqg01,vtnpb02,vcb3,vijgmmp08,gvvcqg15,bubuianucqg17,vacaruepjc17, vijgmmp09,vcqg11,vgrg12,rajpoot17ap}):

\begin{itemize}
\item[ a)] Nonholonomic background methods and locally anisotropic string configurations

\item[ b)] Supersymmetric generalizations of Finsler-Lagrange-Hamilton spaces

\item[ c)] Finsler branes and Ho\v{r}ava - Witten gravity

\item[ d)] Noncommutative brane black ellipsoid and cosmological solutions
\end{itemize}

\textit{Comments:} Relativistic Finsler-Lagrange-Hamilton geometry models encode naturally theories with MDRs and LIVs. If the goal is to elaborate sub-directions related to standard directions of physics, such theories have to be derived in some low energy limits of (super) string theory and/or as (non) commutative brane configurations. Two papers \cite{vap97,vnp97} published in 1997 in top particle physics journals were devoted to supersymmetric generalizations of theories with nonholonomic structures and local anisotropies. Part I of monograph \cite{vmon98}, see also references
therein, is devoted to the geometry of nonholonomic supermanifolds and possible applications in physics. Here we note that the concept of superspace/superbundle involves a special class of nonholonomic distributions allowing to model curved complex spaces, see author's Supersymmetry Encyclopedia term and explanations for Finsler superspaces in \cite{vencyc03}. \vskip2pt

Perhaps, the first attempt to elaborate a "supergeometric" Finsler like formulation of unified field theories is due to (S. Ikeda, 1978) \cite{ikeda78}. But the term "super" in that work was not related to modern concepts of supersymmetry, supergravity, and superstrings. Then, it should be noted that A. Bejancu introduced N-connections with super-fiber indices,
and super algebra symmetries, in some his preprints published by at Vest University of Timi\c{s}oara. Those results are summarized in his monograph on Finsler geometry and applications, from 1990, \cite{bejancu90}. In those works, the total spaces were not constructed as supermanifolds even the idea of a base B. DeWitt superspace \cite{dewitt84} was discussed. In 1996, there were elaborated nonholonomic models of supergravity and superstring theories (S. Vacaru, 1996-1997) \cite{vap97,vnp97,vmon98} following the N-connection supergeometry determined by nonholonomic distributions in total and base
superspaces modelled a class of supermanifolds. Here we cite some further constructions on supersprays and N-connections by authors from Iran (M. M. Rezaii and E. Azizpour, beginning 2005) \cite{rezaii05}. It was proven that certain models of (super) Finsler-Lagrange-Hamilton spaces, nonholonomic supergauge like theories, and respective superparticle models can be
constructed as low energy limits of superstring theories if corresponding nonholonomic constraints are imposed. \vskip2pt

One of the main problems in elaborating Finsler like generalizations of models of supergeometry and supergravity theories is that there is not a generally accepted definition of supermanifolds and superspaces (the existing ones differ for global constructions). Using nonholonomic super-distributions, N-connections can be defined for all considered concepts of superspace. This allows us to elaborate different types of models of supersymmetric Lagrange-Finsler geometry, or dual Hamilton and almost symplectic theories. Following the background field method with supersymmetric and N--adapted derivatives, and a correspondingly adapted variational principle, locally anisotropic configurations can be derived in low
energy limits of string theory. \vskip2pt

A series of works on supersymmetric models of nonholonomic superspaces and supergravity was elaborated by S. Vacaru's group in R. Moldova, and Romania (after he was deported in 2001 by the pro-Russian communist regime), and communicated at International Conferences \cite{vncs01,vpcqg01,vtnpb02,vcb3}. It was also a collaboration with former S. Vacaru's students, N. Vicol and I. Chiosa, and with young researchers from Bucharest-Magurele, Romania, F. C. Popa and O. T\^{\i}n\c{t}\^{a}reanu-Mircea. Part I of monograph \cite{vmon98} is devoted to the geometry of nonholonomic supermanifolds and possible applications in physics. It should be noted that a series of papers on string, brane and quantum gravity and MDRs were published during last 15 years, in certain alternative ways (not involving N-connections but with various types of Finsler like metrics), by Prof. N. Mavromatos and co-authors from King's College of London, see
\cite{mavromatos10,mavromatos10a} and references therein. \vskip2pt

Further constructions on Finsler-Lagrange-Hamilton modifications in Einstein and string gravity, models of modified dynamical supergravity breaking, and exact solutions in heterotic supergravity, were elaborated in \cite{vijgmmp08,gvvcqg15,bubuianucqg17,vacaruepjc17}. There is a correlation with research on Finsler branes and various (non) commutative, quantum gravity phenomenology and supersymmetric geometric flows \cite{vijgmmp09,vcqg11,vgrg12,rajpoot17ap}.

\subsubsection{Anisotropic Taub--NUT spaces, solitons, pp- and Dirac spin waves, and Ricci flows}
\label{sssdir07}
There were stated conventionally such sub-directions of Direction 7 (see
publications \cite{vpcqg01,vtnpb02,vijmpa06,vvijmpa07,vejtp09}):

\begin{itemize}
\item[ a)] Off-diagonal deformations of Taub-NUT (super) spaces
configurations

\item[ b)] Locally anisotropic Dirac spin waves

\item[ c)] Locally anisotropic solitionic solutions

\item[ d)] Ricci flows, solitonic waves and pp-waves with modified dispersion
\end{itemize}

\textit{Comments:} The research in this direction began in 2001 as a collaboration of S. Vacaru\footnote{%
he had a 6 months senior research position at the Institute of Spaces Sciences, ISS, Bucharest-Magurele, Romania, just after deportation by the communist regime in R. Moldova in March 2001} with M. Vi\c{s}inescu at the Institute of Nuclear Physics and Engineering at Bucharest-Magurele, Romania, and his PhD students also employed at the Insititute of Space Sciences, ISS, F. C. Popa and O. \c{T}\^{\i}n\c{t}\v{a}reanu-Mircea. The director of  ISS, D. Ha\c{s}egan, and M. Vi\c{s}inescu imposed the condition that S. Vacaru would co-supervise the research of two young researchers from Romania and 'help them with publications necessary for Ph.D. There were published two important papers locally anisotropic Taub NUT (super) spaces in Classical and Quantum Gravity and Nuclear Physics B) \cite{vpcqg01,vtnpb02} and some preprints at a University in Timi\c{s}oara. A sub-direction of that collaboration with M. Vi\c{s}inescu was extended in 2006 to Ricci flow solutions and pp-waves which can be related to Taub NUT \cite{vijmpa06,vvijmpa07}, see also results on geometric flows of exact solutions \cite{vejtp09}. \vskip2pt

Taub-NUT (super) space configurations consist an important subclass of solutions in (super) gravity theories. In our works, such prime spaces where used for generating new classes (targets) of generic off-diagonal solutions. We proved that corresponding systems of nonlinear PDEs can be decoupled and integrated in very general forms for nonholonomic deformations of spaces with high symmetries. Such constructions can be performed in MGTs for generalized Einstein-Dirac systems, solitonic waves and pp-waves. The works cited in sub-directions (a) - (d) contain a number of examples of exact solutions generated by applying the N--connection formalism and nonholonomic frame deformations which originated from Finsler geometry and nonholonomic mechanics. \vskip2pt

New sub-directions of research could be correlated to Ricci flows, solitonic waves and pp-waves with MDRs. Such solutions can be constructed in different classes of MGTs. They consist explicit examples of nonlinear evolution and nonholonomic dynamics on Finsler-Lagrange-Hamilton spaces with prescribed Taub-NUT symmetries and deformed into solutions for modified Einstein-Dirac and gauge and scalar field interactions, see \cite{vijgmmp07} and \cite{vpcqg01,vtnpb02}.

\subsubsection{Locally anisotropic diffusion, kinetic and thermodynamical processes, and MGTs}

We emphasize such sub-directions of Direction 8 (see publications \cite{vrm1,vcp1,vstoch96,vmon98,vapl00,vapny01,vrmp09,vcsf12,vsym13,ruchinepjc17}):

\begin{itemize}
\item[ a)] Stochastic processes, diffusion and thermodynamics in nonholonomic curved spaces (super) bundles

\item[ b)] Locally anisotropic kinetic processes and non-equilibrium thermodynamics in curved spaces

\item[ c)] Fractional diffusion and kinetic processes

\item[ d)] Stochastic nonholonomic geometric flows and structure formation

\item[ e)] Relativistic thermodynamic processes and geometric flows
\end{itemize}

\textit{Comments:} The idea to consider Finsler metrics in the theory of stochastic/ diffusion processes and applications in biophysics was exploited in a series of works (1995-2004) due to P. Antonelli, T. Zastawniak and D. Hrimiuc in Alberta, Canada, see references in \cite{anton93,anton96}. It was a collaboration between P. Antonelli's group in Canada and R. Miron's group at Ia\c{s}i University in Romania oriented to applications of Finsler geometry methods in biophysics.  Those teams were not familiar with existing at that time results in relativistic diffusion processes and kinetics, relativistic
thermodynamics. In order to study diffusion processes on locally anisotropic spaces, it was important to define a generalization of Laplace (diffusion) operators for Finsler spaces. That was an ambiguous problem for general nonlinear and
distinguished connections with nonmetricity, for instance, of Chern-Finsler type. The problem was considered by M. Anastasiei during 1992--1994 who explained it (at that time a young researcher) S. Vacaru.

The author of this article was familiar with results due to a famous Soviet (Russian) physicist, A. A. Vlasov, who published in 1966 in a book on statistical distribution functions see \cite{vlasov66}. That book was perhaps the first one where the theory of kinetics in curved spaces was extended to certain classes of Finsler metrics and connections. Most important results in this direction were based on the fact that main equations and conservation laws for anisotropic processes with additional nonholonomic constraints can be adapted to nonholonomic distributions using metric compatible distinguished connections like in Finsler geometry. In general, such constructions can be performed for velocity/momentum type variables
in Finsler-Lagrange-Hamilton spaces. \vskip2pt

S. Vacaru proposed his definitions of Finsler-Laplace metric compatible operators using the canonical distinguished connection and the Cartan distinguished connection and corresponding \^{I}to and Stratonovich types (those books existed in R. Moldova in Russian) of anisotropic stochastic calculus on generalized Finsler space during Ia\c{s}i Academic days in
October 1994. Those results with a study of N-connections, Finsler measures, stochastic and diffusion processes on Finsler--Lagrange spaces and vector bundles enabled with nonlinear connection structure were published later (1995-1996) in the R. Moldova and Proceedings of a Conference in Greece, see Refs. \cite{vrm1,vcp1,vstoch96}. \vskip2pt

During next five years, some sub-directions of S. Vacaru's research activity were oriented to the exploration of locally anisotropic diffusion processes, generalizations of stochastic theories for nonholonomic superspaces, and elaborating applications in modern physics and cosmology. Those results were published in two chapters of monograph \cite{vmon98} (see also references therein) in 1998. In 2001, he was able to publish two important papers in Annals of Physics (Leipzig) and Annals of Physics (New York) on stochastic processes and anisotropic thermodynamics in GR, Einstein-Finsler theories,
and nonholonomic spacetime, respectively, on locally anisotropic kinetic processes and non-equilibrium thermodynamics see Refs. \cite{vapl00,vapny01}. Those works contained also solutions for locally anisotropic black holes and black ellipsoids and related non-equilibrium thermodynamics related to certain applications in MGTs. \vskip2pt

A chapter in monograph \cite{vmon98} was devoted to author's results on generalizations of stochastic and kinetic theories for Lagrange and Hamilton geometries and their higher order anisotropic (including supersymmetric, in general, noncommutative and various quantum deformations) extensions. It was shown that geometric methods with adapting the constructions to nonholonomic distributions, the N-- connection formalism and adapted frames play a substantial role in the definition of anisotropic nonholonomic stochastic and diffusion processes. Similar constructions were developed in kinetics and
geometric thermodynamics of constrained physical systems, locally anisotropic black holes etc. \vskip2pt

Mentioned above sub-directions a) - c) consist a perspective for future investigations related to above-mentioned diffusion geometry and analogous thermodynamics. In papers \cite{vrmp09,vcsf12,vsym13}, there were studied generalizations of Grisha Perelman's entropy and thermodynamical functionals for nonholonomic Ricci flows and Lagrange-Finsler evolution models. The
constructions can be re-defined in dual Hamilton and/or almost symplectic variables, for fractional derivatives, stochastic calculus, noncommutative and/or supersymmetric generalization. In equilibrium, such processes can be described as certain generalized Ricci solitonic systems or effective Einstein spaces with nonholonomic constraints. Various classes of solutions
of such evolution and effective field equations can be modelled by stochastic generating functions. To relate the thermodynamical values for Ricci flows to some analogous diffusion processes and standard kinetic and thermodynamic theory, or to black hole thermodynamic processes, is a difficult mathematical physics problem with less known implications in
modern physics. \vskip2pt

One of the most prospective for further research sub-direction d) is on the theory of relativistic locally anisotropic stochastic and kinetic processes, thermodynamics and generalized Ricci flows. Such mathematical and physical problems are difficult to be elaborated because generalized Ricci operators in certain relativistic cases are related to nonlinear wave (and not to a nonlinear diffusion) like operators. Generalized Perelman's functionals are not of entropy type but determine certain dynamical entropy with nonholonomic mixed dynamical and evolution types, see discussions and references in a recent paper due to (V. Ruchin, O. Vacaru and S. Vacaru, 2017) \cite{ruchinepjc17}. Finally, we note that above sub-directions can be extended to classical and quantum informatics, entropic dynamics, and entanglement which are of interest in modern applied mathematics, informatics and quantum physics.

\subsubsection{Differential fractional derivative geometry, MGTs, and deformation quantization}

We consider such sub-directions of Direction 9 (see \cite{vcsf12,vijtp12,bvnd11,bvcejp11,bvcibf12}):

\begin{itemize}
\item[ a)] Models of differential fractional geometry

\item[ b)] Differential fractional geometric and quantum mechanics

\item[ c)] Modified gravity with fractional derivatives

\item[ d)] Fractional geometric deformation quantization
\end{itemize}

\textit{Comments:} Fractional derivatives $\mathcal{D}$ were introduced in mathematics with the aim to satisfy conditions of type $\mathcal{D} ^{k/s}e^{ax}/dx^{^{k/s}}=a^{^{k/s}}e^{ax},$ where the real constant $a\neq 0;\ k/s$ define a fraction; and $x$ is a real variable. This is not a trivial task, for instance, if $k/s=1/2$. In general, we can change $k/s$ into some general operators, functionals etc. and consider various classes of functions etc. The problem of constructing a fractional derivative and integral calculus was studied in a number of classical works due to Leibnitz, Riemann and other famous mathematicians. We note that fractional derivatives should be not confused with "fractals", fractional dimensions, fractional stochastic processes, fractional spins etc. The main problems in elaborating geometric and gravitational models with fractional derivatives can be related to certain very cumbersome integro-differential relations used, for instance, for the definition of Riemann-Liouville integral operators. In general, such fractional derivatives acting on scalars do not
result in zero. This implies in a series of difficulties for constructing models of fractional differential/ integral geometry and elaborating self-consistent physical theories with fractional derivative interactions. At present, there is an increasing number of publications with applications in modern engineering, economics etc. For instance, there were published a series of works with fractional derivative diffusion by F. Mainardi (last 30 years) \cite{carpinteri97} and a reformulation of physical theories on flat spaces to fractional derivatives was proposed by V. E. Tarasov (beginning 2005),
see \cite{tarasov06,vhthes12,vcsf12,vijtp12} and references therein.
\vskip2pt

The author of this article elaborated his first works on fractional derivative geometric flows (certain versions of nonholonomic Ricci flows), modified gravity and field theories, exact solutions with fractional derivatives, and deformation quantization in 2010, see papers \cite{vcsf12,vijtp12}. The main goals of those works were to formulate a self-consistent model of fractional differential geometry and study possible implications in modern gravity and cosmology. It was shown how exact solutions generalizing black holes and cosmological models can be constructed in explicit form in fractional MGTs by generalizing the AFDM. Those models of Ricci flows and gravity theories were developed for the so-called Caputo's fractional
derivative transforming scalar values in zero. Such constructions can be re-defined for the Riemann-Liouville, or other types, fractional derivatives via corresponding nonholonomic integro-differential transforms. %
\vskip2pt

During 2010-2011, S. Vacaru had a collaboration with D. Baleanu (from Ankara, Turkey, and Bucharest-Magurele, Romania) - one of the most prolific authors in the World on fractional derivative calculus and various applications in engineering, condensed matter physics etc. Together with co-authors, D. Baleanu published more than 300 papers in various directions of fractional calculus, control theory, economy and physics. Their activity is typical for a number of applied mathematics/ physics and engineering journals when, usually, 10 pages papers are signed by many co-authors considering some solutions of PDEs with fractional derivatives and certain applications are considered.  In a series of works \cite{bvnd11,bvcejp11}, see also references therein, there were elaborated fractional models of almost K\"{a}hler - Lagrange geometry, constructed and studied physical implications of exact and parametric solutions in gravity and geometric mechanics, with solitonic hierarchies and deformation quantization of such theories. The results were published in Proceedings of two International Conferences \cite{bvcibf12} and a seminar in Italy hosted by F. Mainardi at Bologna University, Italy, in 2012. In S. Vacaru's approach, the geometric formalism and related fractional partial derivatives depend on certain assumptions on nonholonomic structures modelling respective types of nonlocal and "memory" nonlinear effects we try to study, for instance, in theories of condensed matter or quantum gravity. \vskip2pt

It should be emphasized that there are not standard and unique ways for constructing geometric and physical models with fractional derivatives. For instance, a series of papers elaborated by G. Calcagni (beginning 2011; for review of his results, see \cite{calcagni16}) is based on a quite different approach with the aim to unify fractional dimensions, fractional derivatives, noncommutative and diffusion processes. In a recent paper \cite{calcagni17}, a version of black hole solution for certain multi-fractional theories (with fractional derivative radial coordinate) was studied in details and compared to those constructed in \cite{vijtp12}. Let us discuss in brief this issue. For fractional derivatives (with holonomic or nonholonomic structures), it is not clear how to formulate certain black hole uniqueness theorems. In general, it was elaborated till present only one complete model of fractional differential geometry elaborated in \cite{vcsf12,vijtp12}. For such theories, we can formulate both pure geometric and N-adapted variational models of fractional gravity and matter fields interaction taking an explicit Ricci tensor for a fractional d-connection. The constructions are quite similar to those for GR but with nonholonomically modified rules of fractional derivation and integration.
The exact solutions studied in \cite{calcagni17} are limited to certain fractional modifications of nonlinear systems of ordinary differential equations (for black holes) and do not involve the anholonomic deformation method like in S. Vacaru and D. Baleanu works.
\vskip2pt

For elaborating classical and quantum theories with MDRs and LIVs, the nonholonomic structure can be chosen to involve various types of fractional configurations. In such cases, the systems of fundamental nonlinear PDEs (for nonlinear evolution and/or dynamical interactions of generalized thermodynamical systems and classical/ quantum fields) can be decoupled and integrated in general forms for certain classes of fractional derivatives (for instance, of Caputo type). The new classes of solutions can describe fractional soliton interactions, black ellipsoid/ toroid / wormhole and other type configurations, various types of locally anisotropic cosmological scenarios with generic off-diagonal metrics. In particular cases of nonholonomic configurations with Caputo derivative, the method elaborated in
\cite{vcsf12,vijtp12,bvnd11,bvcejp11,bvcibf12} can reproduce the Calcagni solutions in \cite{calcagni17}. Nevertheless, such constructions can not be equivalent because the geometries and elaborated theories are, in general, different.

\subsubsection{Geometric methods of constructing off-diagonal solutions for (non) commutative / supersymmetric Ricci solitons, GR and MGTs}
\label{sssdir10}

Direction 10 is the most important and general one for author's activity during last 20 years. It splits in a number of mutually related sub-directions (see publications \cite%
{vexsol98,vmon98,vapny01,vjhep01,vpcqg01,vsbd01,vbebt01,vmon02,vsjmp02,vtnpb02,vijmpd03,vijmpd03a, vmurcia04,vjmp05,valgebroid05,vjmp05,vmon06,vijmpa06,vijgmmp07,vijgmmp08,avjgp09,vijtp10,vijtp10a,vcqg10, vijgmmp11,vepl11,vcqg11,vhthes12,vcsf12,vijtp12,vgrg12,vsym13,vjpcs13,vijtp13,vport13,svcqg13,svvijmpd14, vvyijgmmp14a,gvvepjc14,vepjc14a,gvvcqg15,rajpoot15,vmjm15,gheorghiuap16,vacaruplb16,rajpoot17ap, ruchinepjc17,rajpoot17ijgmmp,vacaruepjc17,bubuianucqg17,vbubuianu17,vcosmbc,rajpoot17ap}):

\begin{itemize}
\item[ a)] Decoupling property of (generalized) Einstein equations and
integrability for (modified) theories with commutative and noncommutative
variables

\item[ b)] Generic off--diagonal (off-diagonal) Einstein--Yang--Mills--Higgs
configurations with MDRs

\item[ c)] Exact solutions for nonholonomic (modified) Einstein-Dirac systems

\item[ d)] Nonholonomic and generalized Finsler configurations in (super)
string and brane gravity

\item[ e)] Off-diagonal solutions in noncommutative and/or nonassociative
gravity, and/or nonsymmetric metrics

\item[ f)] Exact solutions with generalized metrics and connections in (non)
commutative and/or supersymmetric Finsler-Lagrange-Hamilton theories

\item[ i)] Off-diagonal solutions and metric-affine configurations in
multi-metrics/ -connections and massive MGTs

\item[ j)] Almost Kaehler configurations, Horava-Witten structure and
off-diagonal black hole and cosmological configurations in (super) string
and (non) commutative brane models

\item[ k)] Exact and parametric solutions for nonholonomic geometric flows
and \ generalized Ricci solitions modeling evolution and/stationary
configurations with:

\item[ l)] Noncommutative A. Connes type and/or almost symplectic structures
associated to exact solutions in MGTs

\item[ m)] Gerbes and algebroid structures

\item[ n)] Fractional derivatives and nonholonomic geometries

\item[ o)] Stochastic / diffusion and nonlinear kinetic and geometric
thermodynamic processes

\item[ p)] Exact solutions with nonlinear waves, curve flows, solitons and
pp--waves;

\item[ q)] Solutions with MDRs and LIVs for generalized
Einstein-Yang-Mills-Higgs-Dirac systems and pseudo (almost symplectic)
Finsler-Lagrange-Hamilton structures

\item[ r)] Locally anisotropic black holes and black ellipsoids in (non)
commutative MGTs and nonholonomic stability

\item[ s)] Generalized Finsler black holes/ ellipsoids / toroids /wormholes
/flux tubes and locally anisotropic collapse

\item[ t)] Exact solutions in nonholonomic jet extended gravity

\item[ u)] Generic off-diagonal and locally anisotropic cosmological
solutions in supersymmetric / noncommutative / massive MGTs

\item[ v)] Diffusion and self-organized criticality in Ricci flow evolution
of Einstein and Finsler spaces

\item[ w)] Deforming black hole and cosmological solutions by quasiperiodic
and/or pattern forming structures in modified and Einstein gravity
\end{itemize}

\textit{Comments:} The gravitational field equations in Einstein gravity and modifications and/or geometric flows evolution equations consist very sophisticated systems of nonlinear partial differential equations, PDEs. For theories with MDRs and quantum corrections such systems are generic off-diagonal and with nonholonomic constraints. Former methods of finding
exact solutions involved procedures of transforming systems of nonlinear PDEs into systems of nonlinear ordinary differential equations, ODEs, which  can be integrated in explicit, or parametric, forms. The most important examples are those of black hole and homogeneous cosmological solutions. Such constructions are possible for some special ansatz, for instance, with
diagonal metrics depending on 1-2 variables and imposed higher symmetries (for spherical, cylindrical, wormhole configurations) and/or asymptotic conditions resulting in physically important solutions. Nevertheless, all methods with simple ansatz and transforms of PDEs into ODEs exclude from the very beginning the possibility to construct (for instance, in GR) exact solutions with generic off-diagonal metrics depending, for instance, on 3-4 variables. Such generalized classes of solutions also have physical importance because they describe, for instance, nonlinear wave and solitonic interactions, off-diagonal nonhomogeneous and locally anisotropic cosmological scenarios, gravitational vacuum and matter structure formation etc. \vskip2pt

The anholonomic deformation method, AFDM, of constructing exact and/or parametric solutions in GR, MGTs and geometric evolution theories was elaborated as a geometric and analytic one involving generic off-diagonal metrics and generalized (non) linear connections. The original purpose was to construct exact solutions with nontrivial N-connection structures in
Finsler like gravity theories. But from the very beginning, it became obvious that the same geometric methods can be applied generating off-diagonal solutions in GR and various MGTs. Such solutions are determined by generating and integration functions depending on various parameters and, in principle, all spacetime/ phase / supersymmetric/ noncommutative/
stochastic / fractional and other type variables in (non)commutative. The AFDM is reviewed and completed with new additional methods and examples in a series of our articles and monographs, see \cite{vmon02,vjmp05,vijgmmp07,vijgmmp08,vijtp10,
vijtp10a,vjpcs13,svvijmpd14,gvvcqg15,vvyijgmmp14a,vacaruepjc17, bubuianucqg17,vbubuianu17}. Parts I and II in the collection of works \cite{vmon06} contain both geometric details and examples for solutions in generalized metric-affine and Lagrange-Finsler-affine gravity theories, noncommutative gravity, extra dimension models etc. \vskip2pt

The idea of existence and first examples of solutions with general decoupling (for off-diagonal metrics and generalized connections depending on all spacetime variables) of gravitational field equations in Einstein, string and Finsler gravity was communicated in 1998 at a conference in Poland \cite{vexsol98}. Solutions for EYMH systems with off-diagonal deformations (for black hole and cosmological configurations) were presented in 2001 in Annals of Physics (NY) and JHEP see \cite{vapny01,vjhep01}.\footnote{The PRD referees considered those papers to be very mathematical in nature and not related to physics because that Finsler like gravity theories were subjected to critics in \cite{bekenstein93} (in that work, there were not analysed important physical effects due to the N-connection structure). Beginning 2005, a series of Finsler gravity and cosmology papers were published in PRD. Nevertheless, none such a paper (up to the present) has been related to exact solutions with nontrivial N-connection structure. Such constructions request a more advanced geometric technique which perhaps is less compatible with the style and purposes of journals like PRL-PRD. Fortunately, a series of other important  theoretical and mathematical physics journals (EPJC, JHEP, NPB,PLB, AP NY, CQG,\ GRG, J. Math. Phys., IJGMMP, etc.) accepted and published tenths of such works  cited for this and other directions.} It should be emphasized here that a substantial moral and technical support this author got from his former students and co-authors (S. Ostaf, E. Gaburov, D. Gontsa and Nadejda Vicol, from R. Moldova). With the  participation of those young researchers, there were elaborated Maple and Mathematica programs and verified by direct analytic calculus all theorems and formulas necessary for elaborating of the AFDM, see references in \cite{vmon98,vmon02,vmon06,vjmp05,vijgmmp07,vijgmmp08,vijtp10,vijtp10a}. That new geometric method of constructing off-diagonal exact solutions in MGTs was completed by 2001. Further developments involved more general technical work and possible physical implications related to supersymmetric/noncommutative/nonassociative/stochastic/fractional etc. distributions (see above sub-directions a)-w)).\vskip2pt

Exact solutions for nonholonomic and Finsler like generalizations for the Einstein-Dirac systems were studied in a series of works \cite{vpcqg01,vtnpb02} elaborated at ISS Bucharest-Magurele in 2001, see also the monograph \cite{vmon02} on (higher) order nonholonomic and Finsler-Lagrange-Hamilton spinors published in Athens in 2002. A collaboration with D. Singleton from the USA allowed to elaborate on new applications of the AFDM for constructing exact solutions for locally anisotropic black holes, ellipsoidal and toroid black holes in MGTs with brane and warped configurations, during a visit in the USA in 2001, see \cite{vsbd01,vbebt01,vsjmp02}. Important proofs on the stability of such locally anisotropic configurations (in explicit form, for black ellipsoids) in GR and MGTs were provided in \cite{vijmpd03,vijmpd03a}. \vskip2pt

A more rigorous mathematical formulation of the AFDM was presented at a conference in Murcia, Spain (2004) in \cite{vmurcia04} - it contained all geometric formalism necessary for constructing exact solutions for theories on (co) tangent bundles. The method was extended for generating new classes of solutions with noncommutative symmetries in GR and gauge gravity theories and disk type solutions with algebroid symmetries \cite{vjmp05,valgebroid05}. New developments and applications of the AFDM for Finsler-Lagrange and string gravity theories, nonholonomic metric-affine generalizations of
gravity were published and reviewed in \cite{vijgmmp07,vijgmmp08,vijtp10,vijtp10a,vijgmmp11}, see also respective
chapters in \cite{vmon06}. \vskip2pt

A series of works were devoted to constructing exact solutions in (non) commutative Finsler--Lagrange theories
\cite{vcqg10,vcqg11,avjgp09,vijtp13,vgrg12} defining Finsler black holes and ellipsoid/toroid configurations, bi-Hamiltonian structures and solitons, locally anisotropic Finsler banes and generic off-diagonal cosmological configurations in GR. There were analyzed possible physical implications of MDRs and locally anisotropic wormhole and toroid Finsler like configurations
\cite{vport13,rajpoot15}. Another class of locally anisotropic black hole solutions can be generated on jet extensions of MGTs \cite{rajpoot17ijgmmp}. Here we note that a number of new classes and possible physically important off-diagonal solutions with ellipsoid/ toroid symmetries and/or wormhole, solitons, Dirac waves and nontrivial Einstein-Yang-Mills-Higgs
configurations, cosmological solutions etc. (allowing formulation of renormalizable theories and Finsler like theories on Lorentz bundles) were studied in Refs. \cite{vepl11,vhthes12,vjpcs13,vvyijgmmp14a,svcqg13,svvijmpd14}.\vskip2pt

All solutions of generalized Einstein equations in MGTs can be redefined to generate certain classes of (relativistic) modified Ricci solitons. Applying the AFDM, it is possible to extend such solutions with dependencies on geometric flow parameter and construct exact solutions for generalized Ricci flow theories
\cite{vijmpa06,vvijmpa07,vcsf12,vijtp12,vsym13,vmjm15,gheorghiuap16,rajpoot17ap,ruchinepjc17}. Such a geometric techniques allows us to compute off-diagonal deformations of Kerr black hole and solitonic configurations in various (super)
symmetric/ string / brane configurations \cite{gvvepjc14,gvvcqg15,vacaruepjc17,bubuianucqg17,vbubuianu17}. For
cosmological applications, a new class of cosmological solutions with nonlinear wave, locally anisotropic pattern structures allows elaborating realistic scenarios of inflationary and accelerating cosmology and dark energy and dark matter models (a series of important works published beginning 2014 in PLB, EPJC, CQG, AP NY, etc.) \cite{vcosmbc,vepjc14a,gvvcqg15,vacaruplb16,rajpoot17ap}.

\subsubsection{Off-diagonal wormhole/ellipsoid/toroid configurations in (non)commutative MGTs}
\label{sssdir11}

We emphasize such sub-directions of Direction 11 (see papers \cite%
{vscqg02,vscqg02a,vsjmp02,vsbd01,vbebt01,vmon06,vport13,vepjc14,rajpoot15}):

\begin{itemize}
\item[ a)] Off-diagonal deformation of black holes with toroid, cylindrical
and planar horizons

\item[ b)] Locally anisotropic wormhole and flux-tube configurations

\item[ c)] Ellipsoid - toroid off-diagonal solutions and propagating black
holes in extra dimensions

\item[d)] Wormhole and black ellipsoid in Finsler-Lagrange-Hamilton geometry

\item[ e)] Noncommutative locally anisotropic wormholes
\end{itemize}

\textit{Comments:} This direction was inspired by J. P. Lemos works on solutions for black holes with toroid, cylindrical and planar horizons in GR and modifications, see a review in \cite{lemos01}, who hosted in 2001-2002 a NATO senior fellowship for the author of this work. That research program was correlated to a fellowship for supporting scientists in the R. Moldova
(hosted by D. Singleton from California State University at Fresno, USA). In result of those collaborations, there were published a series of works related to articles \cite{vscqg02,vscqg02a,vsjmp02,vsbd01} providing explicit applications of the AFDM to generating new classes of exact solutions in brane and extra dimension gravity. \vskip2pt

The locally anisotropic wormhole solutions were defined by generic off-diagonal metrics with a polarization of constants and deformed symmetries, for instance, ellipsoidal and toroid ones and additional solitonic waves. There were constructed in explicit nonholonomic ellipsoid - toroid configurations \cite{vbebt01}. The results of that preprint were developed
in Parts I and II of monograph \cite{vmon06}. Those solutions (generated using ellipsoidal and toroid coordinates, nonholonomic frames and generalized connections) are more general that, for instance, constructions for the so-called black Saturn metrics \cite{elvang07,yazadjiev08}. There new classes of locally anisotropic black holes and black rings/ tori
constructed in our works are different than those in \cite{emparan06}. For extra dimensions, such alternative constructions are not prohibited by black hole uniqueness theorems if extra dimensions are involved \cite{gibbons02}. %
\vskip2pt

New sub-directions of research can be related to (ringed) wormhole and black ellipsoid solutions with MDRs, which exist in Finsler-Lagrange-Hamilton and massive MGTs, in general, with supersymmetric and/or noncommutative variables \cite{vport13,vepjc14,rajpoot15}. Wormhole configurations exist as graphen structures, or various type of pattern forming networks with random processes and geometric flows.

\subsubsection{Solitonic gravitational hierarchies in Einstein and
generalized Finsler gravity}
\label{sssdir12}

We elaborated on such sub-directions of Direction 12 (see papers \cite{vjhep01,vpcqg01,vscqg02a,vjmp05,vmon06,vijmpa06,
vijgmmp07,vijgmmp08,vacap10,avjgp09,vejtp09,bvcejp11,bubuianucqg17,vbubuianu17,gvvepjc14,rajpoot15}):

\begin{itemize}
\item[ a)] Solitonic gravitational hierarchies as solutions of the Einstein
equations in general relativity

\item[ b)] Nonlinear waves with MDRs in Finsler-Lagrange-Hamilton spaces

\item[ c)] Black hole and solitonic backgrounds in MGTs

\item[ d)] Off-diagonal cosmological solutions with solitonic gravitational
and matter field waves

\item[ e)] (Non) commutative Ricci solitons, (super) symmetric nonlinear
wave and soliton solutions in string gravity
\end{itemize}

\textit{Comments:} A various class of solitonic and nonlinear wave solutions have been constructed in GR and MGTs, see reviews of results in monographs \cite{belinski01,kramer03}. Those gravitational solitonic solutions were generated using, for instance, the inverse scattering method and can be parameterized by diagonal metrics depending on two (in some exceptional cases, on three) spacetime coordinates. In certain cases, the Belinski-Verdaguer type solutions can be related to some classes of solitonic coordinates. \vskip2pt

In 1998, the author of this work was interested to develop the AFDM in some forms which would allow finding soltionic type solutions of generalized Einstein type equations generic off-diagonal metrics depending on 3 and more spacetime (and/or phase space like velocity/momentum) coordinates. In general, such solutions would describe nonholonomic solitonic configurations with nontrivial N-connection structure in certain modified Finsler like theories. For elaborating relativistic models, such constructions have to be elaborated (co) tangent Lorentz bundles. The first examples of such anholonomic soliton-dilaton and black hole solution in GR were found in 1999-2000 and published in a prestigious journal on high energy physics \cite{vjhep01}. Other classes of generic off-diagonal solutions were constructed for Dirac spinor waves and solitons in anisotropic Taub-NUT spaces \cite{vpcqg01}.\vskip2pt

Warped solitonic deformations and propagation of black holes in 5D vacuum gravity were studied in a collaboration of S.\ Vacaru and D. Singleton during a fellowship in California in 2001 \cite{vscqg02a} and those classes of off-diagonal soltionic solutions were generalized for noncommutative nonholonomic / Finsler gravity and Finsler-Lagrange-Hamilton metric-affine
generalizations, see original results and summaries of works in \cite{vjmp05,vmon06,vijgmmp07,vijgmmp08}. The geometric methods of constructing off-diagonal solitonic and pp-wave solutions were applied to a study of Ricci type geometric evolution of certain physical models of particle physics and gravity \cite{vijmpa06}.\vskip2pt

For a research collaboration during a visiting professor fellowship at Brock University, Ontario, Canada, in 2005-2005, this author studied a series of work on the geometry of curve flows, bi-Hamilton structures and solitons. A new development of the AFDM was the constructing of such nonholonomic frame transforms and deformations of (non) linear connection structures when the curvature and Ricci tensors are characterized by constant coefficients (with respect to certain classes of N-adapted frames), see \cite{vacap10}. In such cases, all geometric and physical information, for instance, of Einstein metrics can be encoded into solitonic hierarchies characterized by respective invariants, nonlinear symmetries and basic solitonic equations. Such solutions depend, in principle, on all spacetime coordinates via various types of generating and integration solitonic waves and (non) commutative parameters. Imposing respective nonholonomic constraints, we can extract Levi-Civita configurations or elaborate on certain metric-affine generalizations. Those constructions were extended for curve flows in
Lagrange-Finsler geometry and related to corresponding models with bi-Hamiltonian structures and solitons \cite{avjgp09} (a collaboration with S. Anco). \vskip2pt

A conventional N--connection splitting can be considered on (pseudo) Riemannian (Einstein) spaces of arbitrary dimensions and various supersymmetric/ noncommutative generalizations. This allows us to redefine equivalently the geometric/physical data in terms of necessary type auxiliary connections which can be restricted to LC-configurations if it is necessary. Solitonic hierarchies can be derived similarly as in Lagrange-Finsler geometry but mimicking such structures on nonholonomic (pseudo) Riemann and effective Einstein--Cartan manifolds completely determined by the metric structure. This approach provides us with a new scheme of solitonic classification of very general classes of exact solutions in Einstein, Einstein--Finsler, modified Lagrange-Hamilton, and nonholonomic Ricci flow equations. Such sub-directions were elaborated in
series of works with solitonic configurations in pp-wave spacetimes, solitonic propagation of black holes in extra dimensions and in modified theories, solitonic wormholes and metric-affine and/or noncommutative models of solitons in gravity and string/brane models, fractional solitonic hierarchies etc, see a number of examples in Refs. \cite{vjmp05,vejtp09,bvcejp11}, Parts I and II in monograph \cite{vmon06} and references therein.\vskip2pt

Recently, the AFDM was applied for constructing solutions with black ring/ ellipsoid - solitonic configurations in modified Finselr gravity \cite{vijtp13,rajpoot15}. Off-diagonal solitonic deformations of classical black hole and cosmological solutions were studied in some new sub-directions of research related to study of quasiperodic and/or pattern-forming structures in string gravity, Einstein and MGTs, massive gravity \cite{bubuianucqg17,vbubuianu17,gvvepjc14}. Considering extra-dimension coordinates as nonholonomic phase spaces ones (velocity or momentum type), such solutions can be defined for supersymmetric/noncommutative/nonassociative Finsler-Lagrange-Hamilton theories and various MGTs with MDRs.

\subsubsection{Axiomatic formulation of MGTs with MDRs, generalized EDYMH \ systems, and (non) commutative / associative
Finsler-Lagrange-Hamilton geometries}

Direction 13 splits conventionally into such sub-directions (see papers \cite{vijmpd12,vijgmmp12,vjpcs11,vexsol98,
vapny01,vjhep01,vmon98,vmon02,vmon06,vjmp05,vijgmmp07,vijgmmp08,vijtp10,vijtp10a,vrm2,vrm3,vgon95,vplbnc01,vncs01,dvgrg03,
vdgrg03,vijgmmp10,vijgmmp10a}):

\begin{itemize}
\item[ a)] Modeling MGTs with MDRs on (co) tangent Lorentz bundles

\item[ b)] Critical remarks on Finsler like gravity theories

\item[ c)] Classification of generalized Finsler-Lagrange-Hamilton and
nonholonomic metric-affine spaces

\item[ d)] Principles of modeling modified EDYMH with MDR interactions

\item[ e)] Methods for constructing and criteria for selecting physical
solutions in generalized Finsler MGTs and nonholonomic geometric evolution
models

\item[ f)] Axiomatic formulation of theories with MDRs and LIVs

\item[ i)] Axiomatic formulation of gravitational and geometric flow
Finsler-Lagrange-Hamilton theories

\item[ j)] Physical principles and axiomatic for (relativistic) geometric
flow evolution theories and modified Ricci solitons and Finsler like theories

\item[ k)] Modeling nonholonomic and Finsler like geometries for (super)
string/brane theories, noncommutative / nonassociative / stochastic
generalizations, statistical and/or irreversible/ nonequilibrium
thermodynamics and locally anisotropic processes, classical and quantum
information theories etc.

\item[ l)] Nonholonomic and generalized Finsler geometry methods in
quasi-classical and quantum gravity and matter fields theories
\end{itemize}

\textit{Comments:} The sub-directions a) - d) and f) are related to the goals of this work. Other sub-directions outline a series of important results and perspectives. \vskip2pt

Classical gravitational and matter field interactions are modelled geometrically on certain classes of manifolds and (co) bundle spaces enabled with the frame, metric and connection structures. They involve partial derivatives at least of second order which mean that geometrical constructions are elaborated on (co) tangent bundles of second and higher orders. If the quadratic line elements are not quadratic on (co) fiber variables, the elaborated physical models are with MDRs and LIVs even the base manifolds are chosen to be Lorentz ones as in GR. For such MGTs, it is necessary to elaborate on geometric and physical principles and axiomatic formulation. This is very important for investigating physical consequences of generalized Finsler gravity theories and cosmology, see (S. Vacaru, 2012) \cite{vijmpd12} -  this is the main purpose of our work. \vskip2pt

The problem of axiomatic formulation of unified field theories was analysed by D. A. Kilmister and G. Stephenson (1953, 1954) \cite{kilmister53,kilmister54} but there were was not proposed any solutions for models with MDRs. A self-consistent approach was considered for Finsler like gravity theories much later - a complete scheme for locally anisotropic gravity due to (R. I. Pimenov, 1987) \cite{pimenov87}. Those axiomatic concepts were not supported by explicit examples of exact off-diagonal
solutions (for instance, for locally anisotropic black holes and cosmological spaces) till 1998, see our works
\cite{vexsol98,vapny01,vjhep01,vmon98,vmon02,vmon06,vjmp05,vijgmmp07,vijgmmp08,vijtp10,vijtp10a}. To select respective classes of physically viable solutions is important to formulate certain fundamental physical principles generalized for Finsler like theories. \vskip2pt

At present, there are not complete solutions for three very important problems for formulating theories with Finsler like metrics and N-connection and d-connection structures for generalized Hamilton spaces:\  1) to construct in explicit Hamilton and almost symplectic variables exact solutions (for instance, some analogs of Einstein-Finsler-Hamilton black holes and locally anisotropic cosmological solutions); 2) to define Lagrange-Hamilton spinors and construct exact solutions of modified
Einstein-Dirac systems in dual variables; and 3) to elaborate on commutative and noncommutative models of generalized Lagrange-Hamilton gravity related to (super) string/brane and noncommutative geometry/gravity theories. In above sub-directions, there are outlined the main steps how solutions of such problems were performed for MGTs with nonholonomic and Finsler-Lagrange variables , see also references
\cite{vrm2,vrm3,vgon95,vplbnc01,vncs01,dvgrg03,vdgrg03,vijgmmp10,vijgmmp10a} on (non) commutative Finsler - gauge formulations of gravity, with analogous Yang-Mills equations for gravity. It is an important task for future works
to elaborate an axiomatic for theories outlined in points 1) - 3). \vskip2pt

In Part I, of monograph \cite{vmon06}, an important classification of Finsler-Lagrange spaces, and certain nonholonomic generalizations, depending on compatibility of fundamental geometric structures was elaborated. There are summarized the results on various classes of geometric and physical theories with metric compatible and noncompatible Finsler d--connections
with general non-vanishing torsion structure. Using the AFDM (see Direction 10), there were constructed explicit examples of exact solutions in generalized Lagrange-Finsler-affine gravity, certain Hamilton and almost K\"{a}hler configurations, and analyzed possible physical implications. \vskip2pt

The geometric constructions for noncommutative Finsler generalizations of gravity and geometric flow theories \cite{vjmp05,vejtp09,vjmp09,bvcejp11} (see also Parts I and II in monograph \cite{vmon06} and \cite{vcqg10,vrev08}) can be extended in order to study evolution and stationary configurations of Lagrange-Hamilton type. Principles of Finsler gravity and cosmology \cite{vjpcs11,vijmpd12} admit straightforward generalizations to supersymmetric models of nonholonomic gravity \cite{bubuianucqg17,vbubuianu17,gvvepjc14} and various classes of geometric evolution theories
\cite{vijtp12,vcsf12,vjmp09,gheorghiuap16}. Considering the critics on Finsler like theories with noncompatible d-connections \cite{vplb10,vijgmmp12,vjpcs11}, we conclude that most closed to standard theories of physics are the Finsler-Lagrange-Hamilton models with metric compatible d-connections (for instance, the Cartan, or canonical,
d-connection) constructed on (co) tangent bundle to Lorentz manifolds. Such theories allow us to define spinor and fermions in a form similar to GR but on nonholonomic manifolds/bundles \cite{vmon02,vmon06}. Finsler-Ricci evolution models can be introduced via nonholonomic deformations of the (pseudo) Riemannian ones. \vskip2pt

The main conclusion of this direction is that using the canonical and/or Cartan's d-connection it is possible to construct
Einstein - Finsler and various Lagrange-Hamilton like theories of gravity on tangent/vector bundles, or on nonholonomic manifolds, following the same principles as in general relativity and generalizing the Ehlers-Pirani-Schild (EPS) axiomatic \cite{ehlers72}, as we performed in \cite{vjmp09,vjpcs11,vijmpd12,vijgmmp12,vcsf12,gheorghiuap16} and the main part of this article.

\subsubsection{Stability of nonholonomic MGTs and geometric flow evolution models with nonsymmetric metrics, generalized connection structures, and jet extensions}

We emphasize such sub-directions of Direction 14 (see \cite{vijtp09nsg,vsigma08,vijtp09nsf}):

\begin{itemize}
\item[ a)] Nonholonomic geometry with nonsymmetric metrics

\item[ b)] Gravity and matter field theories with nonsymmetric metrics and
MDRs

\item[c)] Stability of gravity theories and solutions with nonsymmetric
metrics

\item[ d)] Nonsymmetric and nonholonomic geometric flows

\item[ e)] Nonholonomic jet extensions in MGTs and exact solutions

\item[ f)] Nonsymmetric metrics and noncommutative / nonassociative gravity
\end{itemize}

\textit{Comments:} The author of this work extended his research to geometric and physical models with nonsymmetric metrics after visits to Perimeter Institute, Canada (2007-2008), hosted by prof. J. W. Moffat as an expert in MGTs and cosmology. 'Nonsymmetric' gravity theories, NSGT, were originally elaborated by A. Einstein and L. P. Eisenhardt (1925-1945 and 1951-1952) \cite{einstein25,einstein45,eisenhardt51,eisenhardt52}. \vskip2pt

NSGTs were intensively studied in details in a series of works due to J. W. Moffat and his co-authors J. L\'{e}gar\'{e} and M. A. Clayton beginning 1979, see \cite{moffat79,moffat91,moffat95,moffat96,clayton96,clayton96a,moffat05} and references therein. The bulk of those works were published before the paradigm of 'dark energy and dark matter' was accepted for accelerating cosmology. Those works and further developments on (non) commutative MGTs presented a series of original ideas on classical and quantum gravity and various applications in modern cosmology, for instance, attempts to explain galaxy rotation curves and cosmology without dark matter.\vskip2pt

There were published in physical literature some papers with critical remarks on NSGT because of un-physical modes and un-stability existing in some models, see (T. Damour, S. Deser and J. McCarthy, 1993) \cite{damour93} and (T. Prokopec and W. Valkenburg, 2006) \cite{janssen06,janssen07,janssen07a}. It should be noted that in 1995 an improved model with nonintegrable constants and additional terms in Lagrangians was elaborated by J. L\'{e}gar\'{e} and J. W. Moffat by analogy with introducing integrable constants as in Lagrange mechanics with singular Hessians. It was proven that certain configurations can be stabilized perturbatively with re-defined geometric variables. Nevertheless, the problem of existence of NSGTs which are stable with respect to non-perturbative nonholonomic deformations was not solved in a general geometric form. \vskip2pt

In \cite{vijtp09nsg}, there were constructed explicit classes of exact solutions for "nonsymmetric" ellipsoids which are stable as deformations of black hole solutions. The spherical symmetric configurations which were found to be unstable in \cite{janssen06,janssen07,janssen07a} could be stabilized by introducing special nonholonomic constraints on nonsymmetric degrees of freedom. Such constructions could be stabilized by adapting the constructions to certain classes of generic off-diagonal metrics and nonholonomic frames with N-connection structure. In a more general context, such theories can be
re-written in almost K\"{a}hler and/or Lagrange-Finsler variables \cite{vsigma08}. Those proofs on the stability of such nonsymmetric metrics can be technically extended for generalized Finsler-Lagrange-Hamilton geometries studied in this work. It should be noted here two works on nonsymmetric generalizations of Finsler geometry \cite{atanasiu85,miron83} published due
to a collaboration of Japanese and Romanian geometers (Gh. Atanasiu, M. Hashiguchi and R. Miron, 1983-1985). Publication of those works was possible because R. Miron was authorized by Ceau\c{s}escu's secret service as a co-author. There were considered only generalizations of the geometries in \cite{eisenhardt51,eisenhardt52} as Riemann-Finsler configurations. The
goals of our works \cite{vijtp09nsg,vsigma08} was to elaborate on generalized NSGTs extended relativistically to Finsler-Lagrange spaces, see also noncommutative models for deformation quantization of Lagrange-Hamilton geometries, and to generalize the AFDM in a form which would allow constructing exact solutions of generalized Einstein-Eisenhardt-Finsler
equations. It was also proven that such solutions can be stabilized by imposing additional nonholonomic constraints. \vskip2pt

The author of this work addressed the problem of stability of nonsymmetric metrics in gravity in the framework of the theory of nonholonomic geometric flows characterized by nonsymmetric Ricci tensors \cite{vijtp09nsf}. For such models, nonsymmetric components of metrics appear naturally during geometric flow evolution, which results also in nonsymmetric Ricci soliton
configurations as certain equilibrium states. Such methods and results have a number of perspectives in studies of noncommutative/ nonassociative metrics in string and brane gravity theories, deformation quantization etc.
\cite{vncs01,vjmp05,vjmp09,avjmp09,vjmp13,vch2416}. Geometric constructions and exact solutions on higher order (co) tangent bundles and nonholonomic jet deformations were studied in \cite{vmon02,rajpoot17ijgmmp}.

\subsubsection{Deformation, A-brane, and loop like quantization of almost K\"{a}hler models of gravity}

We discuss a series of results in such sub-directions of Direction 15 (see
\cite{etayo05,vjmp07,vpla08,vjgp10,avjmp09,vjmp13,vijgmmp09,bvnd11,vch2416,bubuianucqg17}
and references therein):

\begin{itemize}
\item[a)] Almost K\"{a}hler and Lagrange-Finsler variables in geometric mechanics and gravity theories

\item[b)] Fedosov quantization of Einstein gravity and modifications

\item[c)] Deformation quantization of generalized Lagrange-Finsler and Hamilton-Cartan theories

\item[d)] A-brane quantization of gravity

\item[e)] Loop gravity and deformation quantization of MGTs with MDRs and/or noncommutative/fractional variables theories.
\end{itemize}

\textit{Comments:} One of the most important problems in modern physics is to formulate a viable model of quantum gravity, QG. There were elaborated various ideas, approaches and techniques but up till present, we could not overcome the fundamental task to connect the quantum world theories with astrophysical and cosmological scales physics and even solved technique problems arising in each quantization scheme. For any model of classical gravity theory, the fundamental field equations consist nonlinear systems of PDEs. There have been not formulated methods of a nonlinear functional analysis which would provide a rigorous scheme of quantization. %
\vskip2pt

Beginning 2004, a series of our papers were devoted to deformation quantization, DQ, (called also Fedosov quantization) of the Einstein and Finsler-Lagrange-Hamilton MGTs with nonholonomic variables \cite{vjmp07,vpla08,vjgp10,avjmp09,vjmp13}. The original constructions were elaborated for K\"{a}ehler and Poisson structures \cite{fedosov96,kontsevich03} which could not be applied directly to the Einstein and MGTs which are not described by "pure" symplectic forms. Our main idea was to use some methods (due to A. Karabegov and M. Schlichenmeier, 2001) \cite{karabegov01} on DQ of almost K\"{a}hler geometries. The point was to reformulate the Einstein gravity and/or Finsler-Lagrange-Hamilton MGTs in almost symplectic/ K\"{a}hler variables. For such geometric models, the scheme of DQ can be naturally applied. We elaborated on extensions with
noncommutative, supersymmetric, fractional and other type variables for geometric and physical theories admitting formal such parameterizations \cite{vijgmmp09,bvnd11,vch2416,bubuianucqg17}. \vskip2pt

The first constructions on DQ of Finsler-Lagrange spaces deformed into Fedosov type nonholonomic manifolds were obtained in collaboration with F. Etayo and R. Santamaria (University of Cantabria, Santander, Spain; 2005) \cite{etayo05} with an extension for almost K\"{a}hler-Lagrange geometries in \cite{vjmp07}. The main task was to understand how in such a scheme the Einstein gravity could be included \cite{vpla08}. During a visit at Fields Institute at Toronto (Canada), the author of this work got also associate and permanent research professor positions at the University Alexandru Ioan Cuza, UAIC, in Romania. It was developed a research program (2006-2017) on DQ of Einstein-Finsler/ -Lagrange gravity \cite{vjgp10,vjmp13} and Hamilton spaces \cite{avjmp09}. Here we note that for Hamilton configurations on co-tangent bundles the geometry of phase space posses additional symplectic symmetries. This results in a very sophisticated structure of induced N-connections and linear connections. The DQ scheme has to be applied in a quite different form for Lagrange spaces, i.e. on tangent bundles, and for Hamilton spaces, or any other geometries on co-tangent bundles. The GR and MGTs of Einstein gravity can be modelled by constructions on (co) tangent Lorentz manifolds. In the last case, a more advanced geometric technique adapted to the  Legendre transforms, symplectomorphisms, and almost symplectic structure had to be elaborated. Recently, the DQ formalism was generalized fractional derivative geometries and fractional mechanics and gravity \cite{bvnd11}, see also a geometry of almost K\"{a}hler structures for nonsymmetric metrics \cite{vijtp09nsg}.\vskip2pt

The DQ scheme is not considered as a generally accepted quantization formalism with perturbative limits for operators acting on Hilbert spaces, renormalization formalism etc. For instance, S. Gukov and E. Witten (2008) \cite{gukov08} elaborated an alternative brane quantization with A-model complexification (using almost K\"{a}hler geometries). In \cite{vijgmmp09},
it was proved that the Einstein gravity and various modifications re-written in almost K\"{a}hler variables can be quantized following the A-model method. Possible connections to other approaches were analyzed in \cite{vlqg09} (for loop gravity\ with Ashtekar-Barbero variables determined by Finsler-Lagrange-Hamilton like connections, see \cite{rovelli03} for review
and references on loop QG) and in \cite{vijgmmp10a} for the so-called bi-connection formalism and perturbative quantization of gauge gravity models.\vskip2pt

The DQ and geometric quantization formalism have perspectives in elaborating noncommutative and relativistic geometric flow theories and almost K\"{a}hler and Lie algebroid geometries in superstring theory \cite{vmjm15,vch2416,bubuianucqg17}.

\subsubsection{Covariant renormalizable anisotropic theories and two connection gauge like gravity}

We elaborated in such sub-directions of Direction 16 (see \cite{vepl11,vgrg12,vijgmmp14,vdgrg03,vijgmmp10,vijgmmp10a}):

\begin{itemize}
\item[ a)] Generalized pseudo-Finsler and Ho\v{r}ava-Lifshitz theories on (co) tangent bundles

\item[ b)] Covariant renormalizable models for generic off-diagonal spacetimes and anisotropically modified gravity

\item[ c)] Two connection and nonholonomic gauge like quantization of gravity theories

\item[ d)] Renormalizable Finsler-Lagrange-Hamilton structures

\item[ e)] Noncommutative and/or supersymmetric gauge gravity models and QG
\end{itemize}

\textit{Comments:} The Newton gravitational constant for four dimensional interactions results in a generic non-renormalizability of GR theory. In Direction 15, we considered various schemes of geometric, deformation and
non--perturbative quantization. Such constructions do not solve the problem of constructing a realistic model of QG with a perturbative scheme without divergences from the ultraviolet region in momentum space. The so-called Ho\v{r}ava-Lifshitz model (2009) was developed for nonhomogeneous anisotropic scaling of spacetime and Finsler like variables which allow to develop covariant renormalization schemes for Finsler-Lagrange-Hamilton MGTs, see \cite{vepl11} and references therein.\vskip2pt

Various models of QG involve locally anisotropic configurations and result in MDRs. Indicators of such dispersion can be associated with certain classes of Finsler fundamental generating functions. In \cite{vgrg12}, we developed a scheme of perturbative quantization of such Ho\v{r}ava-Finsler models. In both cases (for constructions from above two mentioned papers), a crucial role in the quantization procedure is played by the type of nonholonomic constraints, generating functions and parameters. Such values determine various classes of generic off-diagonal solutions of Einstein equations and generalizations, see Direction 10. In \cite{vijgmmp14} and \cite{vepl11}, we proved that the nonlinear gravitational dynamics, geometric evolution, and corresponding nonholonomic constraints can be parameterized in such ways that certain "renormalizable" configurations are generated in locally anisotropic form. For such configurations, a covariant Ho\v{r}ava-Lifshitz quantization formalism can be elaborated. Considering respective nonholonomic variables, this perturbative quantization method can be extended to massive (non) commutative gravity theories on (co) tangent
Lorentz bundles formulated in terms of Finsler-Lagrange-Hamilton variables. %
\vskip2pt

Perturbative quantization schemes can be elaborated for certain classes of models of quantum gauge gravity. In explicit form, such computations with nonholonomic distributions are provided for the method of two-connection renormalization for gauge models of Einstein gravity and MGRs \cite{vijgmmp10,vijgmmp10a}. This sub-direction is related to a series of works
on gauge Einstein and Finsler like theories published this author and co-authors during 1994-1997 in R. Moldova, see \cite{v94t,vgon95,vrm2,vrm3}. The approach was developed for nonholonomic spinor, twistor, and supersymmetric variables, see monographs \cite{vmon98,vmon02} and references therein. Noncommutative gauge models on base spaces enabled with N-connection
structure were studied in \cite{vplbnc01,vncs01,vjmp05,vdgrg03,vmon06} (a collaboration supported by DAAD hosted by prof. H. Dehnen at Konstanz University with further developments and collaborations in Spain, Portugal, Romania, and Canada, see Direction 2).

\subsubsection{Nonholonomic Ricci flows and thermodynamic characteristics in geometric mechanics and gravity theories}

We emphasize such sub-directions of Direction 17 (see papers \cite{vrmp09,vjmp08,vijtp09nsf,vijmpa06,vvijmpa07,vejtp09,vcsf12,vsym13,vijgmmp12,vmjm15,vjmp09,vhthes12,
rajpoot15,gheorghiuap16,rajpoot17ap,ruchinepjc17}):

\begin{itemize}
\item[a)] Generalized Perelman's functionals and Hamilton's equations for
nonholonomic Ricci flows

\item[ b)] Analogous statistical and thermodynamic values for flow evolution
of Finsler-Lagrange--Hamilton geometries and analogous gravity theories

\item[c)] Nonholonomic Ricci solitons and exact solutions in GR and MGTs

\item[ d)] Nonholonomic geometric evolution and nonsymmetric metrics

\item[ e)] Geometric evolution of pp--wave and Taub NUT spaces

\item[ f)] Nonholonomic Dirac operators, distinguished spectral triples and
evolution of models of noncommutative geometry and gravity theories

\item[ i)] Stoachastic geometric flow evolution

\item[ j)] Fractional geometric flows and gravity

\item[ k)] Relativisic geometric flows as nonholonomic evolution

\item[ l)] Supersymmetric geometric flows and MGTs

\item[ m)] Almost K\"{a}hler Ricci flows and Einstein and Lagrange-Finsler
structures on Lie algebroids

\item[ n)] Modified Ricci flows and curve flows in \ GR and
Finsler-Lagrange-Hamilton geometry, bi-Hamiltonian structures and solitons
\end{itemize}

\textit{Comments:} One of the most remarkable result in modern mathematics with fundamental implications in physics is the proof of the Poincar\'{e} conjecture by Grisha Perelman (2002-2003) following methods of the theory of Ricci flows (1982) \cite{perelman02,perelman03,perelman03a}. Those constructions were originally elaborated for evolution of Riemannian and/or
K\"{a}hler metrics using the Levi--Civita connection, see monographs \cite{cao06,morgan07,kleiner08} for reviews of mathematical results. Here, we note that physical models with renorm group flows and geometric evolution were well known in quantum field theory since the D. Friedan (1980-1985) works on gauge and nonlinear sigma theories
\cite{friedan80,friedan80a,friedan82} inspired by A. M. Polyakov's works (in 75) \cite{polyakov75}. Those works on nonlinear models in $2+\varepsilon $ dimensions contained renorm group equations which are equivalent in certain approximations to R. S. Hamilton's (1982-1995) geometric evolution equations for Ricci flows \cite{hamilton82,hamilton88,hamilton95}.
\vskip2pt

The author of this work became interested in research on geometric flows and applications in physics in 2005 when he had a sabbatical professor fellowship at CSIC, Madrid, in Spain. His research program was on nonholonomic geometric methods in mechanics and physics. A sub-direction in that program was devoted to a study of flow evolution of Lagrange-Hamilton
systems geometrized on (co) tangent bundles, which resulted in a series of works on the nonholonomic geometric evolution of Finsler-Lagrange spaces. Such spaces can be endowed with general nonsymmetric metric and nonlinear and distinguished connection structures, see \cite{vrmp09,vjmp08,vijtp09nsf} (in a dual form, such constructions are valid for almost Kaehler and/or Hamilton spaces etc). Those works provided mathematical proofs and explicit examples when different types of commutative and noncommutative, (non) Riemannian, geometries transform from one to another if corresponding nonholonomic constraints are imposed on geometric flow evolution. Such results can be generalized for geometric and physical models with nonholonomic constraints on Ricci flows of (pseudo) Riemannian and/or vector bundles.
\vskip2pt

One of our goals was to study Ricci flow evolution of systems encoding rich geometric structure and analyze possible applications in modern physics. A series of our works (it was also a collaboration with M. Vi\c{s}inescu, in 2005-2006) were devoted to nonholonomic Ricci flows/ solitons to off-diagonal solitonic pp-wave and parametric deformations of Taub-NUT and
Schwarzschild metrics and running cosmological constants \cite{vijmpa06,vvijmpa07,vejtp09}. It should be noted here that the concept of Ricci soliton is different from that of a solitonic wave, see above references and a research on curve flows and solitonic hierarchies generated by Einstein metrics, Lagrange-Finsler geometry and bi-Hamilton structures (a collaboration with S. Anco, 2006-2007) \cite{vacap10,avjgp09}. \vskip2pt

Two sub-directions of research on non-Riemannian geometric flows are related to fractional nonholonomic Ricci flows \cite{vcsf12} and diffusion and self-organized criticality in the geometric evolution of Einstein and Finsler spaces \cite{vsym13}. Such theories of nonholonomic Ricci flows and Finsler-Lagrange-Hamilton geometries can be formulated in metric compatible and/or noncompatible form \cite{vijgmmp12}. In a more general context, geometric flow evolution theories encode models of almost K\"{a}hler Ricci flows and Einstein and Lagrange-Finsler structures on Lie algebroids \cite{vmjm15}.\vskip2pt

Perhaps, one of the most important developments of the Hamilton-Perelman Ricci flow theory was for noncommutative geometric flows. Here we note the results on nonholonomic Clifford and Finsler structures related to spectral functionals, nonholonomic Dirac operators and noncommutative Ricci flows summarized in \cite{vjmp09,vhthes12}. Such constructions were generalized
recently (2014-2017, collaborations with S. Rajpoot, V. Ruchin and some young researchers) to include supergeometric flows and solitonic configurations in supergravity and modified (super) gravity and cosmology \cite{rajpoot15,gheorghiuap16,rajpoot17ap}. \vskip2pt

Further research is outlined in sub-directions related to geometric flows and renormalizations of QG models, noncommutative and supersymmetric models of geometric evolution with almost Kaehler structure for MDRs, and constructing exact solutions for stationary Ricci soliton configurations and modified gravity theories, with applications in modern cosmology and astrophysics etc. An original contribution due to (V. Ruchin, O. Vacaru and S. Vacaru, 2013-2017) was on statistical and relativistic thermodynamic description of gravitational fields using Perelman's W-entropy \cite{ruchinepjc17}.\vskip2pt

Finally, we note that many constructions elaborated upon by physicists have not been validated by mathematicians - this who request proofs of theorems on hundred pages using sophisticate geometric analysis methods and elaborating new branches of nonlinear functional analysis. In our works, we neither concentrated on pure quantum field theories issues nor considered
how certain geometric models may provide a natural connection between nonlinear sigma models, geometric analysis and topological implications of (modified) Ricci flow theories. Our goals, methods, and results are very different from the ones employed in geometric analysis with relation to quantum theories. Instead, we elaborated a series of new results based on to
exact solutions for nonlinear off-diagonal systems in MGTs.

\subsubsection{Stationary locally anisotropic and/or quasiperiodic solutions in MGTs}
\label{sssdir18}

There were considered such sub-directions of Direction 18 (see papers \cite{vexsol98,vapny01,vjhep01,vbebt01,vmon02,vtnpb02,
vpcqg01,vijmpd03,vijmpd03a,vrev08,vijgmmp08,vmon06,vjmp05,vijtp09nsg,vsigma08,vijtp09nsf,vijgmmp07,vijgmmp11,vijtp10,vijtp13,
vcqg11,vgrg12,svvijmpd14,vvyijgmmp14a,gvvepjc14,rajpoot15,rajpoot17ijgmmp,rajpoot17ijgmmp,vbubuianu17}):
\begin{itemize}
\item[ a)] Exact solutions in locally anisotropic gravity and strings

\item[ b)] Locally anisotropic (2+1)-dimensional black holes

\item[ c)] Thermodynamic geometry and locally anisotropic black holes

\item[ d)] Anholonomic soliton-dilaton and black hole solutions in GR, brane
and gauge gravity

\item[ e)] Dirac spinor waves and solitons in locally anisotropic Taub NUT
spinning spaces

\item[ f)] Off-diagonal 5-dimensional metrics and mass hierarchies with
anisotropies and running constants

\item[ i)] Ellipsoidal black hole - black tori systems in four dimensional
gravity

\item[ j)] Deformed black holes with soliton, quasiperiodic and/or
pattern-forming structures in heterotic supergravity, GR, and MGTs

\item[ k)] Perturbations and stability of black ellipsoids; horizons and
geodesics of black ellipsoids with anholonomic conformal symmetries

\item[ l)] Stationary (non) commutative Finsler like solutions in Einstein,
(super) gauge and string, and metric-affine gravity

\item[ m)] Clifford-Finsler / Einstein--Cartan algebroid and stationary
Einstein-Dirac structures

\item[ n)] Nonholonomic deformations of disk solutions and algebroid
symmetries in Einstein and extra dimension gravity

\item[ o)] Black holes, ellipsoids, and nonlinear waves in pseudo-Finsler
spaces and Einstein gravity

\item[ p)] Nonholonomic black ring and solitonic solutions in Finsler and
extra dimension gravity theories

\item[ q)] Finsler black holes induced by noncommutative anholonomic
distributions in Einstein gravity

\item[ r)] Stochastic and/or fractional dynamics from Einstein-Finsler
gravity, general solutions, and black holes

\item[ s)] Hidden symmetries of ellipsoid-solitonic deformations of Kerr-Sen
black holes and quantum anomalies

\item[ t)] Off-diagonal deformations of Kerr black holes in Einstein and
modified massive gravity in higher dimensions

\item[ u)] Nonholonomic jet deformations and exact solutions for modified
Ricci soliton and Einstein equations

\item[ v)] A geometric method of constructing stationary solutions in
modified f(R,T)-gravity with Yang-Mills and Higgs interactions

\item[ w)] Black ring and Kerr ellipsoid - solitonic configurations in
modified Finsler gravity
\end{itemize}

\textit{Comments:} Sub-directions a) - w) outline more than 20 years research activity on physical implications
of stationary solutions in MGTs constructed following the AFDM, see Direction 10, and (additionally, for wormhole/cylindric/ toroid systems) Direction 11. We emphasize here that such solutions with nontrivial N-connection structure in Finsler-Lagrange-Hamilton theories are generic off-diagonal and cannot be described by static total space configurations.
Stationary solutions are physically important because they describe, for instance, generalizations of black hole solutions for nonholonomic and locally anisotropic configurations. One of the goals was to construct analogues of Finsler black holes which can be modeled by nonholonomic configurations on Lorentz manifolds, in a higher dimension (pseudo) Riemannian
and non-Riemannian spacetimes, on in various Finsler like MGTs. The first examples on nonholomic three dimensional solutions in locally anisotropic gravity and strings were communicated and published in Conference proceedings (1998, Poland) \cite{vexsol98}. Solutions for four dimensional and higher dimension locally anisotropic black holes/ ellipsoid /torus/ \ soliton - dilaton configurations where studied in a series of works published in 2001 \cite{vapny01,vjhep01,vbebt01}. Certain classes of three dimensional nonholonomic black hole and solitonic configurations were analyzed in \cite{vmon02}. \vskip2pt

There were two projects for further researches on nonholonomic stationary configurations in (super) symmetric and/or brane type MGTs. The first one was a collaboration with postgraduate students at the Institute of Space Sciences, Bucharest-Magurele, Romania (S. Vacaru, F. C. Popa and O. \c{T}in\c{t}areanu-Mircea, 2001) on anisotropic Taub NUT spinning spaces and Dirac spinor waves and solitons on such spaces \cite{vtnpb02,vpcqg01}. The second one was based on a series of articles \cite{vsjmp02,vscqg02a,vsbd01} due to (S. Vacaru, D. Singleton and students from R. Moldova, 2001-2002) performed during a visit at California State University at Fresno, USA, and a NATO fellowship at IST, Lisbon, Portugal. Those classes of generic
off-diagonal exact solutions describe locally anisotropic (ellipsoid, cylindrical, bipolar, toroid) wormholes and flux tubes, and propagating black holes, in five dimensional gravity. Such solutions can be re-defined in respective Finsler-Lagrange-Hamilton variables, for instance, on (co) tangent Lorentz manifolds.\vskip2pt

In theories with MDRs and LIVs, there is a subclass of solutions generalizing the Kerr metric in GR to certain (black) ellipsoid configurations. For small off-diagonal parametric deforms, such solutions can be also constructed in GR. Such solutions are stable with respect linear perturbations and, for respective nonholonomic constraints, can be stabilized for respective non-perturbative deformations, see \cite{vijmpd03,vijmpd03a}. Similar classes of solutions were found for  noncommutative/nonsymmetric metric MGTs \cite{vjmp05,vijtp09nsg,vsigma08,vijtp09nsf}, in general with Lie algebroid symmetries \cite{valgebroid05} which allows to provide explicit examples of noncommutative Finsler metric-affine stationary spaces with generalized symmetries. In details, applications of the AFDM for such configurations are reviewed in \cite{vrev08,vijgmmp08,vmon06}.\vskip2pt

Important results for the AFDM in GR and nonholonomic and Finsler like MGTs were formulated and proven in a series of works
\cite{vijgmmp07,vijgmmp11,vijtp10,vijtp13,vvyijgmmp14a,gvvepjc14,rajpoot17ijgmmp,vbubuianu17} and chapters of monographs \cite{vmon02,vmon06} published by S. Vacaru and co-authors beginning 2001. Stationary configurations were generated as
particular vacuum and nonvacuum configurations for which generalized Einstein equations admit a general decoupling and integrability when the coefficients of generic off-diagonal metrics and (non) linear connections possess a Killing symmetry on the time like coordinate. The AFDM and respective formulas for stationary configurations can be extended for nontrivial sources of generalized Einstein-Yang-Mills-Higgs and Finsler-Einstein-Dirac systems, in commutative and noncommutative forms,
with fractional and stochastic variables, supersymmetric modifications etc., see \cite{vijtp12,vsym13}, and for noncommutative Finsler black holes \cite{vcqg10}. The geometric methods for constructing exact solutions for modified Einstein and Finsler theories on tangent Lorentz bundles are summarized in \cite{svvijmpd14}, see also examples of solutions for black holes/ellipsoids and nonlinear waves \cite{vijtp13} and for Finsler branes, QG phenomenology with LIVs violations, MDRs in Horava-Lifshitz gravity etc. \cite{vcqg11,vgrg12}.\vskip2pt

A collaboration (S. Rajpoot and S. Vacaru, 2014-2014) was devoted to constructing exact solutions for modified Ricci solitons and Einstein equations in generalized Finsler gravity which describe nonholonomic jet deformations of black hole solutions and black ring and Kerr ellipsoid-solitonic configurations \cite{rajpoot15,rajpoot17ijgmmp}. Certain projects supported by CERN, DAAD and (recently, by a private organization, QGR-Topanga) were devoted to a study of (stationary)off-diagonal deformations of Kerr metrics and black ellipsoids, instantons and almost-K\"{a}hler internal spaces, in heterotic supergravity \cite{vacaruepjc17,bubuianucqg17}. The most important physical property of such solutions is that they can be generated with nontrivial soliton, quasiperiodic and/or pattern-forming structures which is important in modern astrophysics and cosmology. Here we note that following the AFDM the off-diagonal stationary solutions are dual with respect to locally anisotropic cosmological solutions all generated by quasiperiodic and/or pattern forming structures in modified and Einstein gravity (a recent collaboration by L. Bubuianu and S. Vacaru) \cite{vbubuianu17}.\vskip2pt

Finally (for this direction) we note that a series of works were elaborated recently using methods which are alternative to the AFDM. They also claim to construct some classes of "vacuum field equation in Finsler spacetime", in certain cases with static spherical symmetry and as extensions of the Schwarzschild metric, or for Finslerian wormhole models
\cite{li14,lin14,silagadze15,rahman16,caponio16}. It is not clear how such solutions could be static (for d-metrics on total
Finsler-Lagrange-Hamilton spacetime) and with nontrivial N-connection structure and if such configurations can be extended for nontrivial sources of certain self-consistent Finsler generalizations of the Einstein equations. Following the AFDM, there are possible only stationary configurations describing deformed spherical symmetries in order to have anisotropy and/or
to embed certain high symmetries into nontrivial (non) vacuum local anisotropic backgrounds, for instance, with quasi-periodic structure.

\subsubsection{Locally anisotropic (quasiperiodic) cosmological structures and dark matter / energy}
\label{sssdir19}

We elaborated recently such sub-directions of Direction 19 (see papers \cite{v90,vijtp10a,vijmpd12,vcosmbc,svcqg13,
elizalde15,vacaruplb16,rajpoot17ap,bubuianucqg17,vacaruepjc17,amaral17,vbubuianu17}):

\begin{itemize}
\item[ a)] Off-diagonal metrics and anisotropic brane inflation

\item[ b)] Locally anisotropic cosmological metrics with noncommutative
symmetries in Einstein and gauge gravity

\item[ c)] Noncommutative Finsler cosmology from (super) string / M--theory

\item[ d)] Cosmological Finsler like solutions in Einstein, string and
metric-affine gravity

\item[e)] Cosmological spacetimes with Lie algebroid symmetries

\item[ f)] New classes of off-diagonal cosmological solutions in
Einstein-Finsler gravity

\item[ i)] Cyclic and ekpyrotic universes in modified Finsler osculating
gravity on tangent Lorentz bundles

\item[ j)] Off-diagonal ekpyrotic scenarious and equivalence of modified,
massive and/or Einstein gravity

\item[ k)] Cosmological solutions in biconnection and bimetric gravity
theories

\item[ l)] Effective Einstein cosmological spaces for non-minimal modified
gravity

\item[ m)] Ghost-free massive f(R) theories modelled as effective Einstein
spaces and cosmic acceleration

\item[ n)] Equivalent off-diagonal cosmological models and ekpyrotic
scenarious in f(R)-modified, massive and Einstein gravity

\item[ o)] Supergeometric geometric flows and inflations in modified
supergravity

\item[ p)] Cosmological attractors and anistropies in two measure theories,
effective EYMH systems, and off-diagonal inflation models

\item[ q)] Starobinsky inflation and dark energy and dark matter effects
from quasicrystal like spacetime structures

\item[ r)] Anamorphic quasiperiodic universes in modified and Einstein
gravity with loop quantum gravity corrections

\item[ s)] Deforming cosmological solutions by quasiperiodic and/or pattern
forming structures in modified and Einstein gravity
\end{itemize}

\textit{Comments:} We applied nonholonomic geometric methods for elaborating a twistor diagramm techniques in a work on minisuperspace twistor quantum cosmology \cite{v90}. The first off-diagonal locally anisotropic five dimensional brane cosmological and anisotropic black hole solutions with nontrivial N--connection structure where studied in a series of electronic preprints (arXiv: hep-th/0108065, hep-ph/0106268, hep-th/0109114 in collaboration of S. Vacaru and two students from R. Moldova, E. Gaburov and D. Gontsa) published in hard form as respective sections of monograph \cite{vmon06}. That research was partially supported by a 2000-2001 Legislative Award at California State University at Fresno, USA, hosted by D. Singleton. The goal was to study anisotropic reheating and entropy production on three dimensional branes from a decaying bulk scalar field in the brane-world picture of the Universe by considering off-diagonal metrics and anholonomic
frames.\vskip2pt

Locally anisotropic cosmological spacetimes in GR and MGTs are constructed as exact and/or parametric non-stationary solutions of generalized Einstein equations. Such solutions can be generated using nonholonomic variables and
(generalized) connections, or in off-diagonal form for metrics following the AFDM applied in extra dimension and Finsler MGTs, see a series of works \cite{vcp1,vap97,vnp97,vncs01,vjhep01,vtnpb02,vjmp06,vstoch96,vcb2,ve2} on
various applications of locally anisotropic kinetic/stochastic/wormhole/black ellipsoid solutions in modern astrophysics and cosmology. Geometric methods for constructing exact off-diagonal solutions was reviewed in \cite{vmon06,vijgmmp11}. \vskip2pt

The AFDM was applied for constructing in explicit form new classes of off-diagonal cosmological solutions in Einstein gravity (S. Vacaru, 2010) \cite{vijtp10a}. Those classes of solutions can be re-written and extended for nonholonomic variables on (co) tangent Lorentz bundles for various classes of Finsler like MGTs. For instance, the work \cite{vijmpd12} is devoted to principles of Einstein-Finsler gravity and perspectives in modern cosmology. That work was a "metric compatible" reply to Finsler like cosmological models with metric non-compatible connections, see \cite{vplb10} on critical remarks on Finsler modifications of gravity and cosmology by Zhe Chang and Xin Li. Generic off-diagonal cosmological solutions became
important for elaborating locally anisotropic inflation and cosmological scenarios in biconnection and bimetric gravity theories \cite{vcosmbc} (there were presented a series of author's talks at the Thirteenth Marcel Grossmann Meeting on General Relativity, Stockholm University, Sweden, 1-7 July 2012). Two collaborations (2012-2014) of S. Vacaru with P. Stavrinos \cite{svcqg13}\ and E. Elizalde \cite{elizalde15} were devoted respectively to a study of cyclic and ekpyrotic universes in modified Finsler osculating gravity on tangent Lorentz bundles and locally anisotropic cosmological
configurations in MGTs with nonminimal coupling.\vskip2pt

A new direction of research in accelerating cosmology and locally anisotropic dark energy and dark matter effects was stated in 2013 (see paper due to S. Vacaru published in 2016 \cite{vacaruplb16}). It was devoted to elaborating off-diagonal ekpyrotic scenarios and study of equivalence of modified, massive and/or Einstein gravity. It resulted in a collaboration
(S. Rajpoot and S. Vacaru, 2015-2016, see \cite{rajpoot17ap} and references therein) on inflation models determined by  supersymmetric geometric flows and modified scale invariant supergravity. \vskip2pt

We discuss some recent works on so-called quasiperiodic cosmological models, see \cite{amaral17,vbubuianu17} and references therein. Researchers from Eastern Europe do not have many possibilities to obtain financial support for their work. In certain cases, business companies in Western Countries are interested to provide some assistance but they impose
certain non-professional (even dilettante) conditions for such support like to "discover could fusion with gold rates and consciousness quasicrystals". Such pseudo-scientific directions are promoted for instance, by Quantum Gravity Research, QGR, Topanga, California, USA, founded by Klee Irwin in 2009, see links in internet \cite{irwin01,irwin02} in order to understand how experts in science, and K. Irwin himself, evaluate such activities. That team is supervised by  self-declared "research director" without a university education and/or individual publications on mathematics and physics (this is possible in the USA for certain tax facilities etc).\footnote{One of the goals of this Appendix is to provide objective information on conditions of research and collaboration of scientists from less favored countries. Even such an information may be very subjective it is important a correct evaluation and further development of science.} Nevertheless, even under such constrained conditions, it was possible a one year consulting agreement with QGR when this author was able to perform a research on QG and cosmology and publish in 2016-2017 for that team a series of important publications top peer reviewed journals \cite{bubuianucqg17,vacaruepjc17,amaral17}. Those papers contained new results on anamorphic quasiperiodic universes in modified and Einstein gravity, for Starobinsky inflation and dark energy and dark matter effects from quasicrystal like spacetime structures and loop quantum gravity corrections, solitons and/or pattern-forming structures in heterotic supergravity etc. Such constructions could be performed for cosmological models in MGTs with MDRs on (co) tangent Lorentz bundles. Finally, we cite certain  work on duality and deformations of black hole/ cosmological solutions by quasiperiodic and/or pattern forming structures in modified and Einstein gravity (a recent collaboration due to L. Bubuianu and S. Vacaru), see \cite{vbubuianu17} and references therein.

\subsubsection{Locally anisotropic structure in supergravity and (super) string/ branes theories}

We publish recently in such sub-directions of Direction 20 (see
\cite{vhsp98,vmon98,vmon02,vap97,vnp97,vencyc03,vncs01,vcb3,vtnpb02,vjmp05,vijgmmp08,vcqg10,vijgmmp09,gvvepjc14,gvvcqg15,
rajpoot17ap,vacaruepjc17,bubuianucqg17}):
\begin{itemize}
\item[ a)] Spinors in higher dimensional and locally anisotropic spaces

\item[ b)] Nonlinear connections in superbundles and (higher order) locally anisotropic supergravity

\item[ c)] Superstrings in higher order extensions of Finsler spaces

\item[ d)] Locally anisotropic supergravity and gauge gravity on noncommutative spaces

\item[ e)] Noncommutative Finsler gravity from (super) string / M-theory

\item[ f)] Modified dynamical supergravity breaking and off-diagonal
super-Higgs effects

\item[ i)] Supersymmetric nonholonomic configurations in scale invariant
supergravity

\item[ j)] Off-diagonal deformations of Kerr metrics and black ellipsoids in heterotic supergravity

\item[ k)] Heterotic supergravity with internal almost-K\"{a}hler spaces and deformed black holes with nonlinear waves, soliton, quasiperiodic and/or pattern-forming structures
\end{itemize}

\textit{Comments:} Sub-directions a) - k) are also related to Directions 3 and 6. We outline chronologically such research discussing some connections to a series of recent works on exact solutions with black hole, locally anisotropic and soliton/quasiperiodic/pattern-forming structures in modern supergravity and brane theories. \vskip2pt

Nonholonomic models of supergravity encoding Finsler like gravity theories have been elaborated after there were formulated theories of locally anisotropic spinors and Dirac operators, see Direction 3 and Refs. \cite{vhsp98,vmon98,vmon02}, on higher order anisotropic and Finsler-Lagrange-Hamilton superspaces. Such nonholonomic Clifford and (super) symmetric geometric models were constructed with the goal to extend the theory of locally anistoropic gravity and strings \cite{vap97}  to the
theory of superstrings in higher order extensions of Finsler superspaces \cite{vnp97}. The term "Finsler Superspace" in Concise Encyclopedia of Supersymmetry \cite{vencyc03} was grounded by a research on generalized Finsler superspaces and locally anisotropic supergravity and gauge gravity on noncommutative spaces (a collaboration with two students from R. Moldova, I. A. Chiosa and Nadejda A. Vicol, communicated at a NATO, Kiev, Ukraine, workshop and conferences at some universities in Greece) \cite{vncs01,vcb3}. Those research programs were performed during 1996-2003.\vskip2pt

Supersymmetric anholonomic frames, generalized Killing equations, and anisotropic Taub NUT spinning spaces were studied in \cite{vtnpb02}. The AFDM was generalized for supermanifolds and superstring and noncommutative gravity theories \cite{vjmp05,vijgmmp08,vcqg10}. Further developments concerned Finsler branes and quantum gravity phenomenology with Lorentz
symmetry violations \cite{vcqg11} and nonholonomic brane quantization for an A-model complexification of Einstein gravity in almost K\"{a}hler variables \cite{vijgmmp09}. All constructions were performed on various models of commutative and noncommutative (superbundles) with cosmological and black hole solutions in massive gravity and for Lagrange-Hamilton variables of various dimensions \cite{gvvepjc14} (those sub-directions of research were elaborated during 2004-2014).\vskip2pt

Nonholonomic supersymmetric geometric methods were applied in modern supergravity and supergravity theories (a collaboration during 2015-2017 between L. Bubuianu, responsible for computer support; T. Gheorghiu and O. Vacaru, responsible for computer graphics and editing; K. Irwin, providing some financial support, and S. Rajpoot, an expert on supergravity, and S.
Vacaru, an expert on nonholonomic geometry and author of the AFDM). There were elaborated such sub-directions of nonholonomic supergeometry and flow evolutions:\ a) modified dynamical supergravity breaking and off-diagonal super-Higgs effects \cite{gvvcqg15}; supersymmetric geometric flows and R2 inflation from scale invariant supergravity \cite{rajpoot17ap}. The AFDM was developed for investigating off-diagonal deformations of Kerr metrics and black ellipsoids in heterotic supergravity \cite{vacaruepjc17} and study internal almost-Kaehler spaces; instantons for SO(32), or E8 x E8, gauge groups; and deformed black holes with soliton, quasiperiodic and/or pattern-forming structures \cite{bubuianucqg17}. Such constructions, together with solutions for deforming black hole and cosmological metrics by quasiperiodic and/or pattern forming structures in modified and Einstein gravity \cite{vbubuianu17}, can be redefined for modified supergravity theories constructed on (co) tangent Lorentz superbundles (one can be considered various supersymmeteric generalizations). This provides perspectives for defining Finsler-Lagrane-Hamilton supersymmetric geometries and search for possible applications in modern gravity, cosmology and astrophysics.

\subsection{On some recent results on Finsler MGTs}

New discoveries on accelerating cosmology and related problems on dark energy and dark matter physics, and research on quantum information, requested applications of new advanced geometric methods and elaborating MGTs and QG models with non-Riemannian generalized Finsler spacetimes. During last decade, the number of papers on Finsler geometry and generalized
MGTs published in top physical journals on gravity and particle physics and cosmology, for instance, Phys. Rev. Lett., Phys. Rev. D, Phys. Lett. B etc. increased substantially.  In this subsection, we analyze some recent results published by teams of scholars and young researchers working in some directions related to our axiomatic approach to relativistic
Finsler-Lagrange-Hamilton geometries and applications in physics.

\subsubsection{International teams of scholars and young researchers}

Recently, some new teams of researchers were formed in Western Europe and begin to publish intensively on Finsler geometry and physics:

Let us analyse firstly a series of works published by authors  originating from/ working  in Germany and Estonia. Finsler geometric extensions of Einstein gravity and the possibility to define a causal structure and electrodynamics on Finsler spacetimes were considered by C. Pfeifer and M. N. R. Wohlarth (beginning 2011) \cite{pfeifer11,pfeifer12}. There were
studied Finsler-Cartan extensions of Lorentz geometry, generalized symmetries and certain cosmological models in Refs. \cite{hohmann13,hohmann16}, see references therein. The geometric and physical constructions in those work are for special Finsler connections and metrics when models of "exotic" causal structures are elaborated for a class of theories. That approach is very different from Direction 13 with an axiomatic for nonholonomic deformations of Lorentz manifolds and bundles. Following C. Pfeifer, M. Hohmann and co-authors approaches, it was not possible to elaborate on geometric methods for constructing exact solutions and general methods of quantization like for Directions 10, 15, and 16. Certain examples of Finsler like solutions for "Finslerian extensions of the Schwarzschild metric" \cite{silagadze15} and Finsler pp-waves \cite{fuster16} do not involve the N-connection structure (which is crucial for nontrivial Finsler configurations) and well-defined generalizations of the Einstein and matter field equations in Finsler geometry. Such examples of exact solutions
and approximate methods do not compete with the AFDM and there are not cited former results published in tenths of works on Finsler black holes/ellipsoids/cosmological models Directions 7, 11,12, 18, 19, 20.%
\vskip2pt

Secondly, another series of works is devoted to phase space geometry and theories with MDRs \cite{amelino14,barcaroli15,lobo16,lobo17} (G. Amelino-Camelia, L. Barcaroli, L. K. Brunkhorst, G. Gubitosi, I. P. Lobo, N.
Loret, F. Nettel and co-authors, 2014-2017). Authors investigated Finsler generalizations of Plank-scale-deformed relativity, rainbows metrics, and realization of doubly special relativistic symmetries for Finsler-Hamilton configurations. There are two main issues (problems) of such works: The first one is that the constructions are not based on self-consistent exact or
parametric solutions of correspondingly generalized Einstein equations like in Directions 10, 11, 18, and 19. The second one is that there were not elaborated in such works any complete scheme of quantization of Finsler-Lagrange-Hamilton theories like we attempted in Directions 15 and/or 16. For instance, the dependencies on the energy type variables in the works
with rainbow metrics are pure parametrical and phenomenological and not explained theoretically in a self-consistent form by certain classical or quantum solutions.\vskip2pt

Finally (in this subsection), we note that a series of works \cite{aazami14,minguzzi15a,caponio14,caponio16,caponio17} due to authors from Spain and Italy, and collaborations, (B. Aazami, E. Caponio, M. A. Javaloyes, E. Minguzzi, M. S\'{a}nchez, and G. Stancarone, 2014-2017). The results on Zermelo's navigation and causality of Finsler like spacetimes are connected and important for developing our Direction 13. The results on singularity theorems in certain models of Finsler spacetime are of special interest and related to Directions 18 and 19. Nevertheless, it should be noted that such constructions are very dependent on the type of (non) linear connection and metric structures and that mentioned authors have not supported their results with examples of exact solutions for well-defined Finsler generalizations of the Einstein equations. For instance, it is not clear how ideas like "standard static Finsler structures" could be realized physically for nontrivial N-connections. Any exact or parametric solution in a model of Einstein-Finsler gravity results in generic off-diagonal terms and generic
locally anisotropic stationary and non-stationary configurations as we proved in Directions 10, 18, and 19. The Raychaudhuri equations can be generalized/ lifted on (co) tangent bundles using geometric methods but the constructions, and results on singularity theorems, are strongly dependent on the type of Finsler nonlinear and distinguished connection structures are
used.

\subsubsection{Finsler geometry and applications in China, Egypt, India, Iran, Pakistan etc.}

In this subsection, we restrict our attention to research on Finsler geometry and MGTs in some countries outside North and South America, Europe and Japan. 

We begin with a series of important works published during 2004-2009 by researchers with affiliations in Iran (E. Azizpour, D. Latify, A. Razavi, M. M. Rezaii and others) when some authors  collaborated with researchers in the UK, see \cite{latifi06,esfrafilian04,rezaii05,tavakol86,tavakol09} and references therein. Such papers were devoted to a study of the spacetime geometry and Finsler geometry, on homogeneous Finsler spaces. The works on nonlinear connections and supersprays in Finsler and Lagrange superspaces develops and cite our results outlined in Direction 20. Hidden connections between general relativity and Finsler geometry were studied by M. Panahi \cite{panahi03} and a series of works were devoted to constructing K\"{a}hler and Riemannian metrics arising from Finsler structures (A. Tayebi and E. Peyghan) \cite{tayebi10,tayebi11}. The results of such works are alternative (it is not clear how they can be related to almost K\"{a}hler-Finsler structures and relativistic configurations) to those obtained in Directions 15 and 10. A recent paper \cite{nekouee17} studies the relation between non-commutative and Finsler geometry in Horava-Lifshitz black holes (there are not constructed exact solutions with nontrivial N-connection structure) and dubs in certain sense (but not cite) former similar results in Directions \ 2, 10, 13. \vskip2pt

An influent school on global Finsler geometry and non-Riemannian geometries was created in Egypt (and collaborations with French mathematicians), see works \cite{youssef09,youssef12} due to N. L. Youssef and co-authors. \vskip2pt

One of the most prolific team of young researchers (Zhe Chang, Sai Wang, Xin Li and others, beginning 2008) publishing important works on Finsler MGTs and cosmology was created in China, see \cite{chang08,chang12,li14,li14a,li16} and references therein. Their research covers a number of directions related to Finsler modifications of Newton's gravity, locally anisotropic variations of fundamental constants and interactions, Finslerian perturbations of the $\Lambda $CDM model and alternatives to the dark energy and dark matter physics. Such research is supported by a collaboration with mathematicians in the USA (Z. Shen and co-authors) \cite{bao00,shen01,shen01a}. The bulk of such applications were based on geometric constructions with the Chern-Finsler connections which is metric noncompatible, see critical remarks in \cite{vplb10,vijgmmp12,vijmpd12}. Here we also note that the attempts to derive exact solutions of vacuum field equations and spherical symmetrical spacetime solutions in Finsler gravity seem to be not complete and not self-consistent. It is not clear what type of anisotropy would describe spherical configurations with the highest degree of symmetry and with trivial N-connection structures. \vskip2pt

Recently, Finsler geometric methods were applied by a team researchers in India in order to elaborate on locally anisotropic star and wormhole models \cite{rahman15,rahman16}. Finally, we note a very original research involving Finsler geometry and particle entanglement was performed in Pakistan \cite{wasay17}.

\end{document}